\definecolor{materialPurple}{RGB}{98,0,238}
\definecolor{materialRed}{RGB}{176,0,32}
\lstdefinestyle{condensed}{
  language=Java,
  basicstyle=\footnotesize\ttfamily,
  keywordstyle=\color{teal}\ttfamily,
  commentstyle=\color{gray}\ttfamily,
  tabsize=2,
  morekeywords={NULL,CAS,shared,threadlocal,struct,atomic,delete,inline,CAS2,2CAS,retry,bool,assert,beginAtomic,endAtomic,assume},
  moredelim=**[is][\color{materialPurple}]{~}{~},
  moredelim=**[is][\color{materialRed}]{\#}{\#},
  moredelim=**[is][\color{gray}]{@}{@},
  belowskip=0pt,
  keepspaces=true,
  numbers=left,
  numberstyle=\color{gray}\scriptsize\ttfamily,
  numbersep=6pt,
  mathescape=true,
  escapebegin=\color{gray},
  xleftmargin=10pt,
  literate={\\@}{@}1,
  escapechar=|
}
\newtheorem{corollary}{Corollary}
\newtheorem{definition}{Definition}
\DeclareMathSymbol{\dv}{\mathbin}{operators}{"3A}
\newcommand{\leqnomode}{\tagsleft@true}
\newcommand{\reqnomode}{\tagsleft@false}
\newcommand{\baseCase}[1]{\textbf{Base Case:}}
\newcommand{\inductionStep}[1]{\textbf{Induction Step:}}
\newcommand{\definingEquals}{:=}
\newcommand{\postfunction}[1]{\mathit{post}}
\newcommand{\prefunction}[1]{\mathit{pre}}
\newcommand{\skipCommand}[1]{\code{skip}}
\newcommand{\leftHoareParenthesis}[1]{\boldsymbol{\langle}}
\newcommand{\rightHoareParenthesis}[1]{\boldsymbol{\rangle}}
\newcommand{\inHoareParenthesis}[1]{\leftHoareParenthesis{}#1\rightHoareParenthesis{}}
\newcommand{\hoaretriplet}[3]{\inHoareParenthesis{#1} #2 \inHoareParenthesis{#3}}
\newcommand{\leftHoareParenthesisDemonic}[1]{\{}
\newcommand{\rightHoareParenthesisDemonic}[1]{\}}
\newcommand{\inHoareParenthesisDemonic}[1]{\leftHoareParenthesisDemonic{}#1\rightHoareParenthesisDemonic{}}
\newcommand{\hoaretripletDemonic}[3]{\inHoareParenthesisDemonic{#1} #2 \inHoareParenthesisDemonic{#3}}
\newcommand{\code}[1]{\texttt{#1}}
\newcommand{\sizeof}[1]{|#1|}
\newcommand{\powerset}[1]{\mathcal{P}}
\newcommand{\powersetOf}[1]{\powerset{}(#1)}
\newcommand{\vars}[1]{\textit{Vars}}
\newcommand{\join}[1]{\sqcup}
\newcommand{\joinOf}[1]{\sqcup(#1)}
\newcommand{\meet}[1]{\sqcap}
\newcommand{\meetOf}[1]{\sqcap(#1)}
\newcommand{\bottom}[1]{\bot}
\newcommand{\fail}[1]{\mathit{fail}}
\newcommand{\typeEnvironments}[1]{\mathit{TypeEnvironments}}
\newcommand{\typeEnvironmentsP}[1]{\mathit{TypeEnvironments_f}}
\newcommand{\strongestPostSem}[1]{\mathit{sp}}
\newcommand{\strongestPostSemOf}[2]{\strongestPostSem{}(#1, #2)}
\newcommand{\leftSemBracket}[1]{\llbracket}
\newcommand{\rightSemBracket}[1]{\rrbracket}
\newcommand{\synToSem}[1]{\leftSemBracket{} #1 \rightSemBracket{}}
\newcommand{\set}[1]{\lbrace #1 \rbrace}
\newcommand{\setCond}[2]{\set{#1 \mid #2}}
\newcommand{\angelicPrograms}[1]{\mathit{AngelicPrograms}}
\newcommand{\concFunc}[1]{\mathit{conc}}
\newcommand{\demonConcFunc}[1]{\mathit{exec}}
\newcommand{\demonConcFuncOf}[1]{\demonConcFunc{}(#1)}
\newcommand{\angelConcFunc}[1]{\mathit{drv}}
\newcommand{\angelConcFuncOf}[1]{\angelConcFunc{}(#1)}
\newcommand{\states}[1]{\mathit{States}}
\newcommand{\funcDef}[2]{#1: #2}
\newcommand{\demonicAssertions}[1]{\mathit{Predicates}}
\newcommand{\angelicAssertions}[1]{\mathit{Selections}}
\newcommand{\straightLineProgramLang}[1]{\code{Execs}}
\newcommand{\demonicProgramLang}[1]{\code{Programs}}
\newcommand{\angelicProgramLang}[1]{\code{Sketches}}
\newcommand{\annotatedAngelicProgramLang}[1]{\angelicProgramLang{}}
\newcommand{\typeEnvStronger}[1]{\preceq}
\newcommand{\demonicAssertionStronger}[1]{\typeEnvStronger{}_{d}}
\newcommand{\demonicAssertionStrongerOf}[2]{#1 \demonicAssertionStronger{} #2}
\newcommand{\angelicAssertionStronger}[1]{\typeEnvStronger{}_{a}}
\newcommand{\angelicAssertionStrongerOf}[2]{#1 \angelicAssertionStronger{} #2}
\newcommand{\angelicAssertionAbsStronger}[1]{\angelicAssertionStronger{}^{\#}}
\newcommand{\angelicAssertionAbsStrongerOf}[2]{#1 \angelicAssertionAbsStronger{} #2}
\newcommand{\demonicAssertionAbsStronger}[1]{\demonicAssertionStronger{}^{\#}}
\newcommand{\demonicAssertionAbsStrongerOf}[2]{#1 \demonicAssertionAbsStronger{} #2}
\newcommand{\angelicAssertionFont}[1]{#1}
\newcommand{\progSemFunc}[1]{\synToSem{#1}}
\newcommand{\progSemFuncOf}[2]{\progSemFunc{#1}(#2)}
\newcommand{\progSemFuncAbstract}[1]{\progSemFunc{#1}^{\#}}
\newcommand{\progSemFuncAbstractOf}[2]{\progSemFuncAbstract{#1}(#2)}
\newcommand{\progSemFuncOrig}[1]{\progSemFunc{#1}}
\newcommand{\progSemFuncOrigOf}[2]{\progSemFuncOrig{#1}(#2)}
\newcommand{\progSemFuncTypes}[1]{\progSemFunc{#1}_{t}}
\newcommand{\progSemFuncTypesOf}[2]{\progSemFuncTypes{#1}(#2)}
\newcommand{\ruleLabelSmall}[1]{{\scriptsize(\textbf{#1})}}
\newcommand{\ruleLabel}[1]{{\footnotesize(\textbf{#1})}}
\newcommand{\demonicHoareTripletHolds}[3]{\vdash_{d} \hoaretripletDemonic{#1}{#2}{#3}}
\newcommand{\angelicHoareTripletHolds}[3]{\vdash_{a} \hoaretriplet{#1}{#2}{#3}}
\newcommand{\demonicHoareTripletHoldsSemantically}[3]{\models_{d} \hoaretripletDemonic{#1}{#2}{#3}}
\newcommand{\angelicHoareTripletHoldsSemantically}[3]{\models_{a} \hoaretriplet{#1}{#2}{#3}}
\newcommand{\verificationConditionsFunc}[1]{\mathit{vc}}
\newcommand{\verificationConditionsFuncOf}[1]{\verificationConditionsFunc{}(#1)}
\newcommand{\assertionSet}[1]{\mathit{VerifConds}}
\newcommand{\assertionSetAbs}[1]{\assertionSet^{\#}}
\newcommand{\vcAngelicHoareTripletSet}[1]{\mathit{annRealizabilityTriples}}
\newcommand{\vcAngelicHoareTripletSetAbs}[1]{\mathit{annRealizabilityTriples}^{\#}}
\newcommand{\vcStrongestPostSem}[1]{\strongestPostSem{}}
\newcommand{\vcStrongestPostSemOf}[2]{\vcStrongestPostSem{}(#1, #2)}
\newcommand{\demonicAssertionsAbs}[1]{\demonicAssertions{}^{\#}}
\newcommand{\angelicAssertionsAbs}[1]{\angelicAssertions{}^{\#}}
\newcommand{\demonicAbsFunc}[1]{\alpha_{d}}
\newcommand{\demonicAbsFuncOf}[1]{\demonicAbsFunc{}(#1)}
\newcommand{\demonicConcFunc}[1]{\gamma_{d}}
\newcommand{\demonicConcFuncOf}[1]{\demonicConcFunc{}(#1)}
\newcommand{\angelicAbsFunc}[1]{\alpha_{a}}
\newcommand{\angelicAbsFuncOf}[1]{\angelicAbsFunc{}(#1)}
\newcommand{\angelicConcFunc}[1]{\gamma_{a}}
\newcommand{\angelicConcFuncOf}[1]{\angelicConcFunc{}(#1)}
\newcommand{\strongestPostSemAbs}[1]{\strongestPostSem{}^{\#}}
\newcommand{\strongestPostSemAbsOf}[2]{\strongestPostSemAbs{}(#1, #2)}
\newcommand{\absAssertionSet}[1]{\assertionSet{}^{\#}}
\newcommand{\vcStrongestPostSemAbs}[1]{\vcStrongestPostSem{}^{\#}}
\newcommand{\vcStrongestPostSemAbsOf}[2]{\vcStrongestPostSemAbs{}(#1, #2)}
\newcommand{\absAnnotatedAngelicProgramLang}[1]{\mathit{annAngelProgs}^{\#}}
\newcommand{\absVerificationConditionsFunc}[1]{\verificationConditionsFunc{}^{\#}}
\newcommand{\absVerificationConditionsFuncOf}[1]{\absVerificationConditionsFunc{}(#1)}
\newcommand{\rewriteFunc}[1]{\mathit{sc}}
\newcommand{\absAnnotatedDemonicProgramLang}[1]{\mathit{annDemonProgs}^{\#}}
\newcommand{\toSebastianFunc}[1]{\mathit{toSeb}}
\newcommand{\angelProgramProofLang}[1]{\mathit{prfOutls}}
\newcommand{\angelProgramProofLangAbs}[1]{\angelProgramProofLang{}^{\#}}
\newcommand{\proofToProgFunc}[1]{\mathit{sketch}}
\newcommand{\proofToProgFuncOf}[1]{\proofToProgFunc{}(#1)}
\newcommand{\gatherFunc}[1]{\mathit{gather}}
\newcommand{\gatherFuncOf}[2]{\gatherFunc{}(#1, #2)}
\newcommand{\angelicProofHoareTripletHolds}[3]{\vdash_{p} \hoaretriplet{#1}{#2}{#3}}
\newcommand{\vsim}{\mathrel{\scalebox{1}[1.5]{$\shortmid$}\mkern-3.1mu\raisebox{0.15ex}{$\sim$}}}
\newcommand{\angelicProofRewriteHolds}[2]{#1 \vsim #2}
\newcommand{\proofStronger}[1]{\typeEnvStronger{}_p}
\newcommand{\proofStrongerOf}[2]{#1 \proofStronger{} #2}
\newcommand{\partialFuncArrow}[1]{\rightharpoonup}
\newcommand{\predicates}[1]{\mathit{Preds}}
\newcommand{\choice}[1]{+}
\newcommand{\choiceOf}[2]{#1 \choice{} #2}
\newcommand{\angelicChoice}[1]{\;|\;}
\newcommand{\angelicChoiceOf}[2]{#1 \angelicChoice{} #2}
\newcommand{\reduceStrengthFunc}[1]{\mathit{jl}}
\newcommand{\restrictStrengthFunc}[1]{\mathit{jr}}
\newcommand{\rewriteToProgFunc}[1]{\mathit{sc}}
\newcommand{\rewriteToDemonProgFunc}[1]{\rewriteToProgFunc{}}
\newcommand{\automaton}[1]{\mathcal{O}}
\newcommand{\activeGuarantee}[1]{\mathbb{A}}
\newcommand{\localGuarantee}[1]{\mathbb{L}}
\newcommand{\safeGuarantee}[1]{\mathbb{S}}
\newcommand{\ELGuarantee}[1]{\mathbb{E}_L}
\newcommand{\EisuGuarantee}[1]{\mathbb{E}_{\mathit{isu}}}
\newcommand{\EinvGuarantee}[1]{\mathbb{E}_{\mathit{inv}}}
\newcommand{\nothingGuarantee}[1]{\automaton{}}
\newcommand{\sebHoareTripletDash}[1]{\vdash_{t}}
\newcommand{\sebHoareTripletHolds}[3]{\sebHoareTripletDash{}\hoaretriplet{#1}{#2}{#3}}
\newcommand{\killb}[1]{\mathit{kill}}
\newcommand{\gen}[1]{\mathit{gen}}
\newcommand{\makeBlue}[1]{{\color{blue}#1}}
\newcommand{\unroll}[1]{\mathit{ro}}
\newcommand{\strongestPostDem}[1]{\mathit{sp}_d}
\begin{document}

\title{Realizability in Semantics-Guided Synthesis Done Eagerly}

\author{Roland Meyer}
\email{roland.meyer@tu-braunschweig.de}
\orcid{0000-0001-8495-671X}
\affiliation{%
  \institution{TU Braunschweig}
  \country{Germany}
}

\author{Jakob Tepe}
\email{j.tepe@tu-braunschweig.de}
\orcid{0009-0002-8177-4675}
\affiliation{%
  \institution{TU Braunschweig}
  \country{Germany}
}

\author{Sebastian Wolff}
\email{sebastian.wolff@nyu.edu}
\orcid{0000-0002-3974-7713}
\affiliation{%
 \institution{New York University}
 \country{USA}
}


\begin{abstract}
We present realizability and realization logic, two program logics that jointly address the problem of finding solutions in semantics-guided synthesis. 
What is new is that we proceed eagerly and not only analyze a single candidate program but a whole set.
Realizability logic computes information about the set of candidate programs in a forward fashion. 
Realization logic uses this information as guidance to identify a suitable candidate in a backward fashion.
Realizability logic is able to analyze a set of programs due to a new form of assertions that tracks synthesis alternatives. 
Realizability logic then picks alternatives to arrive at a program, and we give the guarantee that this process will not need backtracking. 
We show how to implement the program logics using verification conditions, and report on experiments with a prototype in the context of safe memory reclamation for lock-free data structures.
\end{abstract}

\maketitle


\newcommand{\semgus}{\textsf{SemGuS}\xspace}
\newcommand{\sygus}{\textsf{SyGuS}\xspace}
\newcommand{\rosette}{\textsf{Rosette}\xspace}
\newcommand{\sketch}{\textsf{Sketch}\xspace}
\newcommand{\messy}{\textsf{Messy}\xspace}
\newcommand{\messyenum}{\textsf{Messy-Enum}\xspace}

\newcommand{\hoareof}[3]{\{#1\}#2\{#3\}}
\newcommand{\apred}{\mathit{r}}
\newcommand{\apredp}{\mathit{s}}
\newcommand{\apredpp}{\mathit{t}}
\newcommand{\anonterm}{\code{N}}
\newcommand{\anontermp}{\code{M}}
\newcommand{\xzero}{\code{x=0}}
\newcommand{\yzero}{\code{y=0}}
\newcommand{\xone}{\code{x=1}}
\newcommand{\yone}{\code{y=1}}
\newcommand{\xvar}{\code{x}}
\newcommand{\yvar}{\code{y}}
\newcommand{\apredset}{\mathit{R}}
\newcommand{\apredsetp}{\mathit{S}}
\newcommand{\apredsetpp}{\mathit{T}}
\newcommand{\aprogset}{\mathit{P}}
\newcommand{\aprogsetp}{\mathit{Q}}
\newcommand{\aprog}{\code{prog}}
\newcommand{\aprogp}{\aprog'}
\newcommand{\anexec}{\code{ex}}
\newcommand{\anexecp}{\anexec'}
\newcommand{\concat}{;}
\newcommand{\concatof}[2]{#1\concat#2}
\newcommand{\kleeneof}[1]{#1^{*}}

\newcommand{\bnf}{\; | \;}

\newcommand{\apo}{\code{po}}
\newcommand{\apop}{\apo'}

\newcommand{\mytrue}{\mathit{true}}

\newcommand{\productions}{\mathit{prod}}
\newcommand{\productionsof}[1]{\productions(#1)}

\section{Introduction}\label{Section:Introduction}
The syntax-guided synthesis (\sygus) initiative~\cite{sygus,SearchBased18} has been instrumental in pushing the development of program synthesis technology. 
Key to this success has been the definition of a standardized input format that is \emph{solver independent}: the format only refers to the synthesis task but does not constrain the synthesis technology. 
This means every \sygus solver can, in principle, be applied to the entire \sygus benchmark set, and the community can focus on comparing and improving the synthesis technology.  
A \sygus task consists of a specification of the desired program behavior and a grammar for the programs that may be chosen. 
The inherent limitation of \sygus is that the grammar should refer to SMT expressions, expressions from logical theories that are supported by SMT solvers. 
\sygus cannot handle expressions that fall outside the known logical theories, and for the expressions it can handle it assumes the standard semantics.

Semantics-guided synthesis (\semgus)~\cite{dantoni:semgus} has recently been proposed as a successor of \sygus that is meant to overcome the dependence on SMT expressions and describe synthesis problems in a \emph{domain-independent} way. 
The key to domain independence is to add a third parameter to the definition of the synthesis problem, namely a definition of the program semantics. 
Giving the user the ability to define the program semantics dramatically increases the reach of \semgus over \sygus.  
The user can now define loop constructs and synthesize programs with complex control flow. 
In fact, synthesizing full programs rather than logical expressions has been one of the goals behind \semgus. 
It is a goal that \semgus shares with solver-aided languages like \rosette~\cite{torlak:rosette:onward} or \sketch~\cite{sketch:phd}. 
Note, however, that \rosette and \sketch are neither solver nor domain independent. 
\rosette will not be able to read \sketch input and vice versa, and both can only draw conclusions about the standard semantics of the language.
\semgus allows the user to provide approximate semantics, and applications abound.

Giving the user the ability to define the program semantics also dramatically increases the computational effort of solving \semgus tasks over \sygus tasks.  
To come up with efficient synthesis algorithms, it has turned out helpful to decompose the problem and develop algorithms dedicated to proving synthesis tasks unrealizable and algorithms for synthesizing solutions. 
For unrealizability, the state-of-the-art approach, implemented in the tool \messy~\cite{dantoni:semgus} and its \sygus\ precursors \textsf{nay}~\cite{dantoni:nay} and \textsf{nope}~\cite{dantoni:nope}, is  \emph{unrealizability logic}~\cite{dantoni:UnrealizabilityLogic}. 
Unrealizability logic tries to prove Hoare triples of the form $\hoareof{\apred}{\anonterm}{\apredp}$. 
The novelty is that $\anonterm$ denotes a set of programs, and by proving the triple one shows that all programs in this set satisfy the pre-post specification. 
The pre-post specification is set-up in a way that validity of the triple means no program in the set can solve the \semgus task at hand, hence the name unrealizability.

Our contribution is a new algorithm for realizability, for synthesizing solutions to \semgus tasks. 
The state-of-the-art tool for realizability, called \messyenum~\cite{dantoni:semgus}, implements the CEGIS algorithm~\cite{cegis} algorithm for search-based synthesis~\cite{SearchBased18}. 
Guided by knowledge about failed synthesis attempts, \messyenum iteratively constructs candidate programs and checks every single one of them for validity, i.e., for whether it solves the synthesis task. 
There is, however, an important difference between \sygus and \semgus. 
In \sygus, checking the validity of a candidate program is cheap, it is an SMT query. 
In \semgus, checking the validity of a candidate program is expensive: \messyenum reduces it to a constraint Horn clause (CHC) query, and solving them may turn out as hard as conducting a full verification run.
This is not unexpected, after all \semgus aims to synthesize full programs. 
But it means that search-based solvers have no chance to scale without new strategies to reduce (i) the number of iterations and (ii) the cost of each iteration.

\looseness=-1
Our algorithm reduces the number of iterations by analyzing sets of candidate programs for validity, in an \emph{eager} fashion, rather than a single one. 
What makes this scale is compositionality.   
We construct the sets of programs bottom-up, starting from sets of commands to sets of more and more complex programs. 
The key idea is to abstract the sets of programs to their input-output behavior. 
Abstracting single programs to their input-output behavior to compute over equivalence classes is an idea that has been widely used in synthesis~\cite{FlashFill11,AGK13,Transit13,FlashMETA15,dillig:synthAbstractionRefinement}.
The novelty in our work is eagerness: we abstract sets of programs whose input-output behavior does not match. 
Compositionality requires that the input-output is rich enough (i) to only reason with this abstraction, without having to resort to the underlying programs, and (ii) to answer the realizability problem.

\looseness=-1
We develop this idea in a new program logic~\cite{Hoare69}. 
Our \emph{realizability logic} reasons over triples of the form $\hoaretriplet
{\apredset}{\aprogset}{\apredsetp}$, where $\aprogset$ is a set of programs whose input-output behavior is given in the form of a precondition $\apredset$ and a postcondition $\apredsetp$. 
What is new is that the pre- and postcondition are not single predicates, but sets of predicates. 
This reflects the fact that the programs in $\aprogset$ are synthesis alternatives. 
To achieve compositionality, it is important to get the notion of validity right. 
Given $\hoaretriplet
{\apredset}{\aprogset}{\apredsetp}$ and $\hoaretriplet
{\apredsetp}{\aprogsetp}{\apredsetpp}$, we want to be able to conclude $\hoaretriplet
{\apredset}{\aprogset;\aprogsetp}{\apredsetpp}$, where $\aprogset;\aprogsetp$ contains all programs $\aprog;\aprogp$ with $\aprog\in\aprogset$ and $\aprogp\in\aprogsetp$. 
The right choice is to reason backwards and define validity 
\begin{align*}
\models_a \hoaretriplet
{\apredset}{\aprogset}{\apredsetp}\quad\text{by}\quad \forall \apredp\in\apredsetp.\ \exists\apred\in\apredset.\ \exists\aprog\in\aprogset.\ \models_d \hoareof{\apred}{\aprog}{\apredp}\ . 
\end{align*}
It is worth contrasting this definition  with the notion of validity in the recent unrealizability logic~\cite{dantoni:UnrealizabilityLogic}.
Their triples $\hoareof{\apred}{\aprogset}{\apredp}$ are valid, if $\models_d \hoareof{\apred}{\aprog}{\apredp}$ holds for all programs $\aprog\in\aprogset$. 
Moreover, note that $\apred$ and $\apredp$ are single predicates, and not sets of predicates as in our case.
In fact, the paper explicitly asks for a realizability analogue of their unrealizability logic, and we named our program logic after that proposal.

The relation $\models_d\hoareof{\apred}{\aprog}{\apredp}$ is validity in classical Hoare logic. 
The index $d$ stands for \emph{demonic}, and indicates that the choice of the initial state in $\apred$ as well as the choice of the execution in program $\aprog$ are made demonically, and the postcondition $\apredp$ has to over-approximate all the resulting states.   
The index $a$ for validity $\models_a \hoaretriplet
{\apredset}{\aprogset}{\apredsetp}$ in our synthesis logic stands for \emph{angelic}, and indicates that the choice of the predicate $\apred$ in the precondition $\apredset$ as well as the choice of the program $\aprog\in\aprogset$ are made \emph{angelically}, and the postcondition $\apredsetp$ is an under-approximation of all predicates $\apredp$ that can be guaranteed. 
Assertions in our program logic are thus angelic choices over predicates, and these predicates are demonic choices over states. 
To the best of our knowledge, there is no program logic that would reason over such alternations in related work.

Consider the program proof given in \Cref{code:introa}. 
The program has two Boolean variables $\xvar$ and~$\yvar$ and two non-terminals, $\anontermp$ that may be rewritten to $\xzero$ or $\xone$ and $\anonterm$ that may be rewritten to $\yzero$ or $\yone$. 
Starting from the set of all states, represented by the singleton predicate $\mytrue$, the two programs represented by $\anontermp$ give us two synthesis alternatives: we can guarantee $\xvar=0$ or we can guarantee $\xvar = 1$. 
In both cases, we still do not have any knowledge about the value of $\yvar$.
We do not yet make a decision, but use the two alternatives to consider the programs that can be derived from $\anonterm$.
This yields four synthesis alternatives, namely $\xzero\wedge\yzero$ to $\xone\wedge \yone$. 
Only the last of these alternatives passes the assertion, the others lead to a failure of the execution.

\begin{figure}
    \begin{minipage}[t]{0.4\textwidth}
\begin{lstlisting}
$\inHoareParenthesis{\mytrue}$
M($\inHoareParenthesis{\mytrue}$x = 0$\inHoareParenthesis{\xvar = 0}$ $\color{black}\angelicChoice{}$ 
  $\inHoareParenthesis{\mytrue}$x = 1$\inHoareParenthesis{\xvar = 1}$);
$\inHoareParenthesis{\xvar = 0, \xvar = 1}$
N($\inHoareParenthesis{\xvar = 0, \xvar = 1}$y = 0$\inHoareParenthesis{\xvar = 0\wedge \yvar=0, \xvar = 1\wedge \yvar=0}$ $\color{black}\angelicChoice{}$ 
  $\inHoareParenthesis{\xvar = 0, \xvar = 1}$y = 1$\inHoareParenthesis{\xvar = 0\wedge \yvar=1, \xvar=1\wedge \yvar= 1}$);
$\inHoareParenthesis{x = 0\wedge \yvar=0,\ldots, \xvar = 1\wedge \yvar=1}$ |\label{line:introa:assertionAfterNonterm}|
assert($\xvar=1\wedge \yvar=1$)
$\inHoareParenthesis{\xvar = 1\wedge \yvar=1, \fail{}}$
\end{lstlisting}
    \caption{Proof outline in realizability logic.}
    \label{code:introa}
    \end{minipage}\hfill
    \begin{minipage}[t]{0.4\textwidth}
\begin{lstlisting}
$\inHoareParenthesis{\mytrue}$
M($\inHoareParenthesis{\mytrue}$x = 0$\inHoareParenthesis{\xvar = 0}$ $\color{black}\angelicChoice{}$ 
  $\inHoareParenthesis{\mytrue}$x = 1$\inHoareParenthesis{\xvar = 1}$);
$\inHoareParenthesis{\xvar = 1}$
N($\msout{\inHoareParenthesis{\xvar = 0, \xvar = 1}\code{y = 0}\inHoareParenthesis{\xvar = 1\wedge \yvar=1}}$ $\color{black}\angelicChoice{}$ 
  $\inHoareParenthesis{\xvar = 1}$y = 1$\inHoareParenthesis{\xvar=1\wedge \yvar= 1}$);
$\inHoareParenthesis{\xvar = 1\wedge \yvar=1}$
assert($\xvar=1\wedge \yvar=1$)
$\inHoareParenthesis{\xvar = 1\wedge \yvar=1}$
\end{lstlisting}
    \caption{Proof outline derived from \Cref{code:introa}.}
    \label{code:introb}
    \end{minipage}
\end{figure}

A first point of criticism one may have about realizability logic is that the alternation between angelic and demonic choice will explode too quickly. 
In the end, all we have done is to replace the choice of a program by a powerset construction over predicates, a technique pioneered in automata theory~\cite{RabinScott59}. 
This merely translates a complex search into complex assertions. 
We argue that, for \semgus where validity checks are expensive, the assertions may be the right place to keep the complexity. 
Formal methods has developed expressive logical languages and abstract domains that can denote complex sets of states with very concise assertions. 
In our experiments, we have worked with Cartesian abstraction~\cite[Chapter 9]{cousotBook}.

A more severe point of criticism is that our realizability triples $\hoaretriplet
{\apredset}{\aprogset}{\apredsetp}$ drop the relationship between the single programs $\aprog\in \aprogset$ and the pre- and postconditions $\apred\in\apredset$ and $\apredp\in\apredsetp$ that this program can achieve.
Hence, when we have given a proof in realizability logic, like the one in \Cref{code:introa}, all we know is that the program sketch $\anontermp;\anonterm;\code{assert($\xvar=1\wedge \yvar=1$)}$ can be completed to a program that passes the assertion, but \emph{we do not know the program}. 
What we have, however, is a proof outline in realizability logic that annotates the program with intermediary assertions.  
The idea is to use this proof outline as guidance of how to instantiate the non-terminals.

Our second contribution is \emph{realization logic},  
a program logic to derive rewriting steps $\apo \vsim \apop$ between proof outlines in realizability logic. 
The guarantee given by realization logic is that valid proof outlines are rewritten to valid proof outlines. 
The rewriting process will not only eliminate alternatives from the definition of non-terminals until a program that satisfies the specification has been found.
An equally important step is to eliminate predicates that will not be helpful to satisfy the desired postcondition, or may even fail in the future.
The elimination of predicates is done backwards, starting from the postcondition, and we illustrate it on our example:
\begin{align}
&\hoaretriplet
{\xvar=0\wedge\yvar=0,\ldots, \xvar=1\wedge \yvar=1}{\code{assert($\xvar=1\wedge\yvar=1$)}}{\xvar=1\wedge\yvar=1, \fail{}} \label{eq:ruleApplicationCsq} \\
\vsim & \hoaretriplet
{\xvar=0\wedge\yvar=0,\ldots, \xvar=1\wedge \yvar=1}{\code{assert($\xvar=1\wedge\yvar=1$)}}{\xvar=1\wedge\yvar=1} \label{eq:ruleApplicationBase}\\
\vsim& \hoaretriplet
{\xvar=1\wedge \yvar=1}{\code{assert($\xvar=1\wedge\yvar=1$)}}{\xvar=1\wedge\yvar=1}\ . \nonumber
\end{align}
The first step drops the alternative $\fail{}$.
As there are less synthesis options to choose from, this weakens the postcondition and the step is thus sound by the standard rule of consequence in Hoare logic. 
The second step propagates this elimination backwards by \emph{weakening} the precondition, which is rather uncommon in program logics. 
We can weaken the precondition, because $\xvar=1\wedge \yvar=1$ is sufficient to obtain the postcondition.
For the following rewriting steps, it will be helpful to consider \Cref{code:introb}. 
We propagate the precondition $\xvar=1\wedge \yvar=1$ to the postconditions of the alternatives for non-terminal~$\anonterm$. 
Note that the proof outline remains valid although the alternative $\hoareof{\xvar=0, \xvar=1}{\yzero}{\xvar=1\wedge \yvar=1}$ is invalid. 
The point is that there is still the alternative~$\yone$.   
We eliminate the incorrect alternative in the next rewriting step. 
The backwards reasoning thus revealed that the alternative $\yzero$ will not help to satisfy the specification. 
We then weaken the precondition of the alternative $\yone$ and obtain the proof outline shown in the figure.
The logic will proceed along the same lines and identify the program $\xone;\yone;\code{assert($\xvar=1\wedge\yvar=1$)}$ as a solution to the synthesis task.

Our third contribution is an algorithm to construct valid proof outlines in realizability logic. 
The ambition is to avoid expensive CHC queries and get away with standard SMT queries, very much like \sygus.  
To achieve this, we are willing to rely on user guidance, namely the loop invariants that are known to be crucial for verification. 
Our algorithm is then a generalization of deductive verification to realizability logic. 
We start from a realizability triple whose program sketch comes with invariant annotations. 
We show how to compute the missing intermediary assertions with the help of strongest postconditions. 
From the resulting proof outline, we compute verification conditions: a set of constraints whose validity entails the validity of the proof outline (and the annotations).
Given that our assertions are new, we had to adapt the strongest postconditions and the verification conditions.  
The new definitions are made such that they can be handled by standard technology: the postconditions can be computed in a symbolic way  and the verification conditions can be  discharged by an SMT solver.

\looseness=-1
Our next contribution is to automate realization logic: we give an algorithm that rewrites proof outlines until a solution to the synthesis problem has been found. 
Our algorithm is again based on deductive verification and solver support. 
The idea is to derive from the given proof outline (ordinary) verification conditions (over predicates rather than sets of predicates) whose validity proves that a certain program is a solution to the synthesis problem.
What is new is that the verification condition checks have to be interleaved with their construction, because they control the choice of the program and thus the future verification conditions. 
The choice of verification conditions is not unique but depends on the program as much as the argument why this program solves the synthesis problem. 
Interestingly, we can show that there is a \emph{backtracking freedom} guarantee: no matter the choice, the synthesis is guaranteed to succeed.
At a high level, this is a consequence of the notion of validity in realzability logic.

We implemented our algorithms for realizability and realization logic in a new tool. 
Given that we have not yet developed generic assertion languages to deal with sets of sets of predicates, our implementation is tied to one application domain: memory management in lock-free data structures with the help of a safe memory reclamation algorithm. 
Due to the lock-free processing, protecting a memory cell is not a mere call to the safe memory reclamation algorithm,  
but a complicated sequence of calls followed by checks of global invariants that indicate whether a call has been successful.
The data structures include tricky ones like the ORVYY set and the DGLM queue, the challenging protection is with ordered hazard pointers (the first has priority over the second), but we can also handle epochs.
This set of case studies is interesting in several respects. 
It works over an abstract data domain, namely the SMR types introduced in~\cite{POPL2020}, and therefore cannot be handled by solver-aided languages out-of-the-box (an encoding would be possible).  
The programs contain unbounded loops, and therefore cannot be handled by \sygus solvers.
The places in which to synthesize information are far apart, which means we have to capture the influence of complex code on the to-be-synthesized information.
Our approach handles all instances in a matter of seconds. 
Moreover, we did not need any user annotations but were able to infer the required loop invariants automatically.

It may have become clear that we develop our algorithms in the context of an imperative programming language that is parameterized in the data domain, the set of commands, and the semantics of commands. 
We fix the semantics of choice, sequential composition, and Kleene star. 
In its original formulation, \semgus is more liberal and would also take the operators for building programs as a parameter. 
To generalize our work to that setting, the user would have to come up with appropriate proof rules for the new operators.
This is also the approach taken by unrealizability logic~\cite{dantoni:UnrealizabilityLogic}.
We believe, however, that this should only be a second step. 
The parameterization we work with is the standard assumption in program logics~\cite{AbstractSL07}, and it has proven rich enough to handle a large variety of benchmarks. 
In fact, it is rich enough to handle all examples that are typically given to motivate \semgus (C, Python, regular expressions, and bounded loops). 

To sum up, our contributions are the following.
\begin{itemize}
\item Realizability logic (Section~\ref{ch:pls}), the first program logic to reason about realizability in \semgus. The new idea is to have sets of predicates as assertions to represent synthesis alternatives.
\item Realization logic (Section~\ref{ch:proofsystem}), the first program logic to compute a solution to a \semgus problem from a proof outline in realizability logic. 
The key idea of realization logic is to propagate failing synthesis attempts backwards, and eliminate unsuitable alternatives.
\item An algorithm to find proofs in realizability logic (Section~\ref{ch:vc}). 
We use deductive verification and develop appropriate strongest postconditions and verification conditions.
\item An algorithm to find program derivations in realization logic (Section~\ref{ch:adAlgo}).
Also here we rely on deductive verification, and we can give a backtracking freedom guarantee. 
\item We implemented our algorithms and applied them to synthesize the protection of memory accesses in a lock-free data structure by a safe memory reclamation algorithm (Section~\ref{sec:evaluation}). 
The synthesis works over an abstract semantics of protection types, and the problem cannot be handled by state-of-the-art synthesizers.
\end{itemize}

We start with a recapitulation of program logic, which forms the foundation of our work. 
\section{Preliminaries}
We introduce an imperative programming language that is parameterized in the data domain and in the set of commands, and give a Hoare logic for it that will form the basis of our development. 
\subsection{Parameters}
The development in this paper is parameterized in a triple $(\states{}, \code{COM}, \progSemFuncOrig{-})$ which is given as part of the synthesis task. 
The set $\states{}$ is the data domain, the set of states programs operate over.
There are no requirements, the set may be finite or infinite. 
The set $\code{COM}$ contains the commands $\code{com}$ that can be used in programs. 
The function $\progSemFuncOrig{-}:\code{COM}\rightarrow \states{} \rightarrow \demonicAssertions{}$ assigns to each command a function from states to the set of predicates we define in a moment.

\subsection{Programs}
We use a classical while-language whose commands stem from the given set $\code{COM}$. Apart from that, we have sequential composition, choice, and Kleene star:
\begin{align*}
    \aprog\quad ::=\quad 
    \code{com} 
    \bnf \concatof{\aprog}{\aprog}
    \bnf \choiceOf{\aprog}{\aprog} 
    \bnf \kleeneof{\aprog}\ .
\end{align*}
We use $\demonicProgramLang{}$ for the set of all programs.  
We use $\straightLineProgramLang{}$ for the set of all executions $\anexec$, programs that neither contain choices nor loops.  
The function $\demonConcFunc{}$ yields all executions that can originate from a given program. 
For example, $\demonConcFuncOf{\concatof{(\choiceOf{\code{com}_1}{\code{com}_2})}{\code{com}}}$ is the set $\set{\concatof{\code{com}_1}{\code{com}}, \concatof{\code{com}_2}{\code{com}}}$.  
Loops are unrolled. 
Formally, we see the program as a regular expression and take the language. 

To give the program semantics, we first define $\demonicAssertions{} = \powersetOf{\states{}} \cup \set{\fail{}}$. 
Predicates are sets of states or a distinguished element $\fail{}$, and we typically use $\apred, \apredp, \apredpp$ for predicates. 
We have sets of states to support non-determinism, and failure to track program crashes, due to problems like segfaults or failing assertions. 
The predicates form a complete lattice with inclusion as the ordering and $\fail{}$ as the top element:  $\demonicAssertionStrongerOf{r}{s}$ is defined by $s = \fail{} \vee (r \neq \fail{} \wedge r \subseteq s)$, and we say that $r$ is \emph{more precise} than~$s$. 
We also write $\bigsqcup_d$ for the join. 
The index~$d$ reminds us that the non-determinism in programs is resolved demonically. 
Note that we formulated predicates in a semantic way, there is no assertion language to denote sets of states. 
Nevertheless, we emphasize that the inclusion in the formulation of $\demonicAssertionStrongerOf{r}{s}$ corresponds to implication in logic, meaning the definition can be implemented with solver technology right away. 

To give a semantics to executions, we first lift the semantics of commands from states to predicates. 
We
define $\progSemFuncOrigOf{\code{com}}{\fail{}} = \fail{}$ and $\progSemFuncOrigOf{\code{com}}{r} = \bigsqcup_d\setCond{\progSemFuncOrigOf{\code{com}}{s}}{s\in r}$, where $r\neq\fail{}$. 
We then set $\progSemFuncOf{\concatof{\anexec_1}{\anexec_2}}{r} = \progSemFuncOf{\anexec_2}{\progSemFuncOf{\anexec_1}{r}}$. 
A consequence is that programs cannot recover from failure.

\subsection{Hoare Logic}
We specify the correctness of programs by Hoare triples of the form $\hoareof{r}{\aprog}{s}$. 
The triple is valid, if every possible execution of $\aprog$ that starts in a state from $r$ ends in a state from $s$. 
We define 
    \begin{align*}
        \models_d \hoareof{r}{\aprog}{s} 
        \quad \text{by} \quad 
        \forall \anexec \in \demonConcFuncOf{\aprog}.\ \demonicAssertionStrongerOf{\progSemFuncOf{\code{ex}}{r}}{s}\ . 
    \end{align*}

To derive valid triples in a compositional way, Hoare logic offers proof rules similar to the black ones in Figure~\ref{def:pls}, we just have to replace all angle brackets by set brackets, the indices $a$ by $d$, and assume $R, S, T$ are predicates as we have defined them above. 
Rule~\ruleLabel{COM} requires that the postcondition over-approximates the effect of the command on the precondition. 
Rule~\ruleLabel{SEQ} composes programs sequentially if the intermediary assertion matches. 
Rule~\ruleLabel{LOOP} checks that the predicate is invariant under the loop body. 
Rule~\ruleLabel{DEM} is commonly called choice and lets the demon choose the alternative with which to continue the execution.
The rule of consequence \ruleLabel{CSQ} lets us strengthen the precondition and weaken the postcondition.
We use $\vdash_d \hoareof{r}{\aprog}{s}$ to indicate that a Hoare triple can be derived with these proof rules. 
It is readily checked that these rules are sound in that  $\vdash_d \hoareof{\apred}{\aprog}{\apredp}$  implies $\models_d \hoareof{\apred}{\aprog}{\apredp}$. 

One can also show that the rules are complete, $\models_d \hoareof{\apred}{\aprog}{\apredp}$ implies $\vdash_d \hoareof{\apred}{\aprog}{\apredp}$. 
As we will need it later, we explain the proof. 
The idea is to consider the full state space of program $\aprog$ from states in $\apred$. 
Now for each line of code, we collect all states from the state space that decorate this line, and let them form a predicate. 
We construct a proof outline in which every line of code carries the predicate we have just constructed as a precondition.
One can now check that this proof outline can be derived with the above rules. 
\section{Realizability Logic}\label{ch:pls}
We extend programs to program sketches that contain yet to be resolved non-terminals, and present our program logic to reason about realizability. 
\newcommand{\asketch}{\code{sketch}}
\newcommand{\asketchp}{\asketch'}
\newcommand{\anontermset}{\mathcal{N}}
\newcommand{\aproductionset}{\mathcal{P}}

\subsection{Sketches}
We extend the grammar of $\demonicProgramLang{}$ by non-terminals $\code{N}$ from a finite set $\anontermset$. 
The resulting \emph{sketches} are given by the grammar:
\begin{align*}
    \asketch\quad ::=\quad 
    \code{com} 
    \bnf \concatof{\asketch}{\asketch}
    \bnf \choiceOf{\asketch}{\asketch} 
    \bnf \kleeneof{\asketch}
    \bnf \anonterm 
    \ .
\end{align*}
We use $\angelicProgramLang{}$ for the set of all sketches. 
We assume each non-terminal $\code{N}\in\anontermset$ comes with a set of productions $\emptyset\neq \productionsof{\code{N}}\subseteq\angelicProgramLang{}$ and usually write $\code{N} ::=  \angelicChoiceOf{\code{sketch}_1}{\code{sketch}_2}$ rather than $\productionsof{\code{N}}=\set{\code{sketch}_1, \code{sketch}_2}$. 
There may be more than two alternatives, and the sketches may themselves contain non-terminals so that we have proper recursion. 
The pair  $(\anontermset, \productions)$ of non-terminals and their productions belong to the synthesis task and are given by the user. 

To lift the semantics from programs to sketches, we let the function $\angelConcFunc{}$ determine all programs (without non-terminals) that can be derived from a sketch by rewriting the non-terminals. 
For example, with $\code{N} ::=  \angelicChoiceOf{\code{com}_1}{\code{com}_2}$ we have $\angelConcFuncOf{\concatof{\code{N}}{\code{com}}}=\set{\concatof{\code{com}_1}{\code{com}}, \concatof{\code{com}_2}{\code{com}}}$. 
Formally, we see the sketch as a sentential form in a context-free grammar and take the language. 
\subsection{Realizability Logic}
Realizability logic reasons over correctness specification of the form $\hoaretriplet{\apredset}{\asketch}{\apredsetp}$. 
Here, $\apredset, \apredsetp$ are so-called \emph{selections} from the set $\angelicAssertions{} = \powerset{}_{\mathit{fin}}(\demonicAssertions{})$. 
A selection is a finite set of predicates as we have defined them above.
The idea is this.  
When we encounter a non-terminal in a sketch, we do not yet determine how to resolve it to a program with the function $\angelConcFunc{}$. 
Instead, we collect in a selection the predicates that these programs may justify as a postcondition. 
Since also the precondition is a selection, we can not only choose the program but also the precondition to justify the postcondition.
We thus define validity of realizability triples via validity in Hoare logic:
    \begin{equation*}
        \angelicHoareTripletHoldsSemantically{\angelicAssertionFont{\apredset}}{\asketch}{\angelicAssertionFont{\apredsetp}} 
        \qquad \text{by}\qquad 
        \forall \apredp \in \angelicAssertionFont{\apredsetp}.\ 
        \exists \apred \in \angelicAssertionFont{\apredset}. \
        \exists \aprog \in \angelConcFuncOf{\asketch}. \ 
        \demonicHoareTripletHoldsSemantically{\apred}{\aprog}{\apredp} \ .
    \end{equation*}
\begin{figure}
    \begin{mathpar}
        \footnotesize
        \inferrule[\ruleLabelSmall{COM}]{
\demonicAssertionStrongerOf{\progSemFuncOf{\code{com}}{\apred}}{\apredp}
        }{
\angelicHoareTripletHolds{\set{\apred}}{\code{com}}{\set{\apredp}}
}
\and
\inferrule[\ruleLabelSmall{SEQ}]{
\angelicHoareTripletHolds{\apredset}{\asketch_1}{\apredsetp}
\\\\
\, \angelicHoareTripletHolds{\apredsetp}{\asketch_2}{\apredsetpp}
        }{
\angelicHoareTripletHolds{\apredset}{\concatof{\asketch_1}{\asketch_2}}{\apredsetpp}}
\and
\inferrule[\ruleLabelSmall{LOOP}]{
\makeBlue{\apredset = \set{\apred}}
\\\\
\angelicHoareTripletHolds{\apredset}{\asketch}{\apredset}
        }{
\angelicHoareTripletHolds{\apredset}{\kleeneof{\asketch}}{\apredset}
}
\and
\inferrule[\ruleLabelSmall{CSQ}]{
\angelicAssertionStrongerOf{\apredset}{\apredset'}
\\
\angelicAssertionStrongerOf{\apredsetp'}{\apredsetp}
\\\\
\angelicHoareTripletHolds{\apredset'}{\asketch}{\apredsetp'}
        }{
\angelicHoareTripletHolds{\apredset}{\asketch}{\apredsetp}
}
\and
\inferrule[\ruleLabelSmall{DEM}]{
\makeBlue{\apredset = \set{\apred}}
\\\\
\angelicHoareTripletHolds{\apredset}{\asketch_1}{\apredsetp}
\\\\
\angelicHoareTripletHolds{\apredset}{\asketch_2}{\apredsetp}
        }{
\angelicHoareTripletHolds{\apredset}{\choiceOf{\asketch_1}{\asketch_2}}{\apredsetp}
}
\and
\makeBlue{
\inferrule[\ruleLabelSmall{ANG}]{
\anonterm ::= \angelicChoiceOf{\asketch}{\dots}
\\\\
\angelicHoareTripletHolds{\apredset}{\asketch}{\apredsetp}
        }{
\angelicHoareTripletHolds{\apredset}{\anonterm}{\apredsetp}
}
}
\and
\makeBlue{
\inferrule[\ruleLabelSmall{GATHER}]{
\angelicHoareTripletHolds{\apredset_1}{\asketch}{\apredsetp_1}
\\\\
\angelicHoareTripletHolds{\apredset_2}{\asketch}{\apredsetp_2}
        }{
\angelicHoareTripletHolds{\apredset_1 \cup \apredset_2}{\asketch}{\apredsetp_1 \cup \apredsetp_2}
}
}
    \end{mathpar}
    \caption{Rules of realizability logic. Extensions of Hoare logic are highlighted in \makeBlue{blue}.}
    \label{def:pls}
\end{figure}

\begin{figure}
 \begin{minipage}[b]{0.45\textwidth}
\begin{lstlisting}[belowskip=0pt]
$\inHoareParenthesis{\mytrue}$  N(x = 0 $\color{black}\angelicChoice{}$ x = 1);  $\inHoareParenthesis{x = 0, x = 1}$
(assert(x==0) $\color{black}\choice{}$ assert(x==1))
$\inHoareParenthesis{\fail{}}$
\end{lstlisting}
    \caption{Demonic choice.}
    \label{code:exampleTwo}
\end{minipage}
\hfill
    \begin{minipage}[b]{0.45\textwidth}
        \centering
    \begin{mathpar}
        \footnotesize
\makeBlue{
\inferrule[\ruleLabelSmall{SHARE}]{
\angelicHoareTripletHolds{\angelicAssertionFont{T}}{\anonterm}{\angelicAssertionFont{U}}
\\
\angelicHoareTripletHolds{\apredset}{\asketch[\anonterm/\inHoareParenthesis{T}\inHoareParenthesis{U}]}{\apredsetp} }{
\angelicHoareTripletHolds{\apredset}{\asketch}{\apredsetp}
}
}
    \end{mathpar}
    \caption{Subproof sharing.}
    \label{fig:nonterminalsIntuition}
    \end{minipage}
\end{figure}

To derive valid realizability triples, we use the proof rules in \Cref{def:pls}. 
We write $\angelicHoareTripletHolds{\apredset}{\asketch}{\apredsetp}$ if the realizability triple can be derived with the help of these rules.
Rule~\ruleLabel{COM} checks whether the effect of a command $\code{com}$ on a predicate $\apred$ is captured by $\apredp$. 
The rule allows us to derive the realizability triple $\hoaretriplet{\set{\apred}}{\code{com}}{\set{\apredp}}$ in which the selections are singleton sets. 
The rule plays together with~\ruleLabel{GATHER}: if we can also derive $\hoaretriplet{\set{\apred'}}{\code{com}}{\set{\apredp'}}$ for another pre- and postcondition, then we can join the selections and obtain $\hoaretriplet{\set{\apred, \apred'}}{\code{com}}{\set{\apredp, \apredp'}}$.    
Rule~\ruleLabel{GATHER} is also important to join the pre- and postconditions for the various programs that can be derived from a non-terminal. 
To obtain a realizability triple $\hoaretriplet{\apredset}{\anonterm}{\apredsetp}$, Rule~\ruleLabel{ANG} unwinds the non-terminal to a sketch $\asketch$ that is given by a production and then proves $\hoaretriplet{\apredset}{\asketch}{\apredsetp}$. 
Note that the sketch may again contain $\anonterm$, and may contain more than one occurrence.  
We explain at the end of the section how to avoid the repetition of proof trees and handle non-terminals in an efficient way.   

\looseness=-1
Rule \ruleLabel{DEM} looks like the standard rule for choice operators in Hoare logic. 
In realizability logic, the precondition has to be a singleton selection,  and the rule is unsound without this side condition. 
If there were multiple predicates in $\apredset$, then the synthesis from $\asketch_1$
and the synthesis from $\asketch_2$ could use different predicates. 
This would be incorrect, because the synthesis is done at compile-time while the branch is chosen at runtime, and therefore the synthesis is not allowed to react on the branch.
To see this, consider the sketch in \Cref{code:exampleTwo} 
and let $\code{choice}$ stand for $\code{assert(x==0)} \choice{} \code{assert(x==1)}$. 
Without the side condition, we could derive the realizability triple $\angelicHoareTripletHolds{x=0, x=1}{\code{choice}}{\xvar=0, \xvar=1}$, because we could enter the left branch with the predicate $x=0$ and the right branch with the predicate $x=1$. 
However, the programs we can synthesize, namely $\concatof{\xzero}{\code{choice}}$ and $\concatof{\xone}{\code{choice}}$, have to commit to one of the assignments, and therefore  $\code{choice}$ is doomed to fail. 
A similar reasoning applies for Rule \ruleLabel{LOOP}. 
The way to handle multiple predicates in a precondition is to consider them one by one and join the results, again using \ruleLabel{GATHER}. 

Rule~\ruleLabel{SEQ} reinforces the definition of validity for realizability triples.   
Consider $\angelicHoareTripletHoldsSemantically{\apredset}{\asketch_1}{\apredsetp}$ and $\angelicHoareTripletHoldsSemantically{\apredsetp}{\asketch_2}{\apredsetpp}$. 
If validity was defined by an existential instead of a universal quantifier over the postcondition, then it would not be sound to derive $\angelicHoareTripletHoldsSemantically{\apredset}{\concatof{\asketch_1}{\asketch_2}}{\apredsetpp}$. 
The point is that the existentially quantified predicate $\apredp_1\in\apredsetp$ that is chosen to justify the validity of the first triple could differ from the predicate $\apredp_2\in\apredsetp$ needed for the second.
There is an alternative definition of validity that admits compositionality. 
We discuss it at the end of the following section when we have developed the overall synthesis technique.  

It remains to discuss \ruleLabel{CSQ}. 
The rule of consequence needs an ordering on selections. 
We say that selection $\apredset$ is \emph{more versatile} than selection $\apredsetp$, if for every predicate in $\apredsetp$ there is a more precise predicate in $\apredset$, so we define
\begin{align*}
\apredset\angelicAssertionStronger{}\apredsetp\qquad\text{by}\qquad \forall \apredp\in\apredsetp.\ \exists \apred\in\apredset.\ \demonicAssertionStrongerOf{\apred}{\apredp}\ .
\end{align*}
Having more predicates available makes a selection more versatile and having less predicates makes a selection less versatile. 
It is also possible to have a more versatile selection with less predicates if they are more precise.  
With this, $\emptyset$ is the least versatile and $\set{\emptyset}$ is the most versatile selection.    

The rules are sound and, together with a rule for the edge case of an empty precondition that we give in the appendix, also complete. 
\begin{theorem}[Sound-And-Complete]\label{th:angelicHoareSoundnessCompleteness}
$\angelicHoareTripletHolds{\apredset}{\asketch}{\apredsetp}$ if and only if  
$\angelicHoareTripletHoldsSemantically{\apredset}{\asketch}{\apredsetp}$\ .  
\end{theorem}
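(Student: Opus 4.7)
The plan is to prove the two directions separately. For soundness ($\vdash_a$ implies $\models_a$) I would induct on the derivation and check that each rule preserves semantic validity. The routine cases are \ruleLabel{COM}, \ruleLabel{CSQ}, and \ruleLabel{GATHER}; the latter dispatches any $\apredp \in \apredsetp_1 \cup \apredsetp_2$ to whichever sub-triple contains it, keeping the supplied precondition witness inside $\apredset_1 \cup \apredset_2$. Rule \ruleLabel{SEQ} chains witnesses: a choice $\apredpp \in \apredsetpp$ first yields some $\apredp \in \apredsetp$ and $\aprogp \in \angelConcFuncOf{\asketch_2}$ from the second triple, and that same $\apredp$ then yields $\apred \in \apredset$ and $\aprog \in \angelConcFuncOf{\asketch_1}$ from the first, so $\concatof{\aprog}{\aprogp} \in \angelConcFuncOf{\concatof{\asketch_1}{\asketch_2}}$ realizes the combined triple. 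Rules \ruleLabel{DEM} and \ruleLabel{LOOP} rely essentially on the singleton-precondition side condition: the same $\apred$ must be usable in both branches, respectively across every iteration, because synthesis commits to one program at compile time. Rule \ruleLabel{ANG} is immediate from $\angelConcFuncOf{\asketchp} \subseteq \angelConcFuncOf{\anonterm}$ whenever $\asketchp$ is a production of $\anonterm$.

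For completeness ($\models_a$ implies $\vdash_a$) the key ingredient is a single-witness lemma: if $\aprog \in \angelConcFuncOf{\asketch}$ and $\models_d \hoareof{\apred}{\aprog}{\apredp}$, then $\vdash_a \hoaretriplet{\set{\apred}}{\asketch}{\set{\apredp}}$. I would prove the lemma by induction on the derivation tree of $\aprog$ from $\asketch$ in the context-free grammar. The cases $\asketch = \code{com}$, $\asketch = \concatof{\asketch_1}{\asketch_2}$, and $\asketch = \choiceOf{\asketch_1}{\asketch_2}$ mirror the classical Hoare completeness argument recalled in the preliminaries, using the strongest post $\progSemFuncOf{\aprog_1}{\apred}$ as intermediate assertion for \ruleLabel{SEQ}. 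For $\asketch = \kleeneof{\asketch_1}$ with $\aprog = \kleeneof{\aprog_1}$ the natural invariant is $\apredpp = \bigsqcup_d \setCond{\progSemFuncOf{\aprog_1^n}{\apred}}{n \geq 0}$: validity of the loop triple bounds $\apredpp$ above by $\apredp$, the summand $n = 0$ bounds $\apred$ below by $\apredpp$, and monotonicity of $\progSemFunc{}$ shows $\apredpp$ is preserved by $\aprog_1$; the induction hypothesis, \ruleLabel{LOOP}, and \ruleLabel{CSQ} then close the case. For $\asketch = \anonterm$ I peel off the first step $\anonterm \Rightarrow \asketchp$ of the derivation, apply the induction hypothesis to the strictly smaller sub-derivation $\asketchp \Rightarrow^* \aprog$, and conclude with \ruleLabel{ANG}.

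The main theorem then follows routinely. For each $\apredp \in \apredsetp$ validity of the angelic triple supplies witnesses $\apred_\apredp \in \apredset$ and $\aprog_\apredp \in \angelConcFuncOf{\asketch}$; the lemma produces $\vdash_a \hoaretriplet{\set{\apred_\apredp}}{\asketch}{\set{\apredp}}$; iterated \ruleLabel{GATHER} (well-defined because $\apredsetp$ is finite by definition of selections) combines these into $\vdash_a \hoaretriplet{\setCond{\apred_\apredp}{\apredp \in \apredsetp}}{\asketch}{\apredsetp}$; finally \ruleLabel{CSQ} weakens the precondition back to $\apredset$, using $\apredset \supseteq \setCond{\apred_\apredp}{\apredp \in \apredsetp}$ which makes $\apredset$ at least as versatile. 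The edge case $\apredsetp = \emptyset$ has no witnesses to gather and is handled separately by the appendix rule that axiomatizes $\vdash_a \hoaretriplet{\emptyset}{\asketch}{\emptyset}$, after which \ruleLabel{CSQ} yields $\vdash_a \hoaretriplet{\apredset}{\asketch}{\emptyset}$ vacuously.

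The principal obstacle is organizing the lemma's induction so the $\anonterm$ case is well-founded: a naive structural induction on $\asketch$ fails because a production of $\anonterm$ may be syntactically larger than $\anonterm$ itself, hence the need to induct on the derivation tree of $\aprog$ from $\asketch$ rather than on $\asketch$ alone. A secondary, more subtle point is that the \ruleLabel{LOOP} side condition forces the loop invariant to be a \emph{single} predicate, which is exactly why the lemma must be phrased for singleton pre- and postconditions and why the passage to arbitrary selections has to happen externally through \ruleLabel{GATHER} and \ruleLabel{CSQ}.
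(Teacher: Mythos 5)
Your proposal is correct and follows essentially the same route as the paper: soundness by induction on the derivation, and completeness by reducing each $\apredp \in \apredsetp$ to a singleton triple $\hoaretriplet{\set{\apred}}{\asketch}{\set{\apredp}}$ for a witnessing program, then recombining with finitely many applications of \ruleLabel{GATHER}, a final \ruleLabel{CSQ} to restore the full precondition, and the \ruleLabel{EMPTY} rule for the edge case. Your single-witness lemma, proved by induction on the grammar derivation of $\aprog$ from $\asketch$, just makes explicit (and well-founded) what the paper does more tersely by invoking completeness of classical Hoare logic and mimicking that derivation in realizability logic interleaved with \ruleLabel{ANG}.
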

\begin{proof}
Soundness holds by an induction on the height of the proof tree.  
We argue for completeness and consider $\angelicHoareTripletHoldsSemantically{\apredset}{\asketch}{\apredsetp}$. 
By the definition of validity, this means for every predicate $\apredp\in\apredsetp$ there is a predicate $\apred\in\apredset$ and a program $\aprog\in\angelConcFuncOf{\asketch}$ so that $\demonicHoareTripletHoldsSemantically{\apred}{\aprog}{\apredp}$ holds. 
We invoke the completeness of Hoare logic and obtain $\demonicHoareTripletHolds{\apred}{\aprog}{\apredp}$. 
The derivation can be mimicked in realizability logic and yields $\angelicHoareTripletHolds{\apred}{\aprog}{\apredp}$. 
Since the program has been derived from $\asketch$, we can also obtain $\angelicHoareTripletHolds{\apred}{\asketch}{\apredp}$ with finitely many applications of \ruleLabel{ANG}. 
Rule \ruleLabel{GATHER} allows us to join the triples $\hoaretriplet{\apred}{\asketch}{\apredp}$ and obtain the desired $\angelicHoareTripletHolds{\apredset}{\asketch}{\apredsetp}$. 
\end{proof}
\subsection{Discussion}
\paragraph*{On \semgus}
The synthesis tasks we consider have the following input: 
the states, the commands, and the semantics $(\states{}, \code{COM}, \progSemFuncOrig{-})$ of the programming language, the non-terminals with their production rules $(\anontermset, \productions)$, and the realizability triple $\hoaretriplet{\apredset}{\asketch}{\apredsetp}$ of interest. 
In its original formulation~\cite{dantoni:semgus}, \semgus\ would be more liberal and allow the user to also define the operators (their syntax and their semantics) from which programs can be built.  
We fix those operators to concatenation, choice, and Kleene star, instead. 
This allows us to work with an extension of Hoare logic.
To lift our approach to the more general setting, the user would have to specify the proof rules that are sound for the new operators, which would be in-line with the approach in unrealizability logic~\cite{dantoni:UnrealizabilityLogic}. 

The reader may also note that we have not made assumptions about the correctness specification. 
In synthesis, it is common to work with sets of examples. 
The work on unrealizability logic has shown  that such sets can be captured by so-called vector assertions~\cite{dantoni:UnrealizabilityLogic}. 
Vector assertions are predicates of a particular form, and our construction of selections readily applies to them. 
\paragraph*{On search-based synthesis}
Realizability logic is formulated such that it can be readily combined with search-based synthesis. 
The point is that Rule~\ruleLabel{ANG} does not force us to consider all programs a non-terminal can be rewritten to. 
Instead, we can consider a set of programs that appear most promising, as can be judged from a probabilistic grammar~\cite{ProbGrammar18}. 
\paragraph*{On the implementation of non-terminals}
A non-terminal may occur multiple times in a proof in realizability logic, it may even occur recursively.  
Rather than duplicating the subproofs for this non-terminal, we would like to share them in the various places the non-terminal is used. 
This is made possible by Rule \ruleLabel{SHARE} given in \Cref{fig:nonterminalsIntuition}. 
If we have already derived the triple $\hoaretriplet{\angelicAssertionFont{T}}{\anonterm}{\angelicAssertionFont{U}}$, then we can eliminate the non-terminal from a program sketch and replace it by the pair $\inHoareParenthesis{\angelicAssertionFont{T}}\inHoareParenthesis{\angelicAssertionFont{U}}$.  
The pair stands for the fact that we can synthesize a program that transforms $\angelicAssertionFont{T}$ to $\angelicAssertionFont{U}$.  
Rule~\ruleLabel{SHARE} is readily checked to be sound. 
However, it is a derived rule whose application can easily be mimicked by Rule~\ruleLabel{ANG}, at the cost of blowing-up the proof. 
An implementation would refine the rule so that different occurrences of the non-terminal can be replaced in different ways. 
Moreover, one would maintain a pointer to the subproof that should be inserted, to be able to derive the program with the technique given next. 

\section{Realization Logic} \label{ch:proofsystem}
We define realization logic, a program logic to derive programs from sketches.
The key insight is that a proof outline in realizability logic for the sketch of interest is helpful guidance to find a program that solves the synthesis task. 
To make use of this guidance, realization logic rewrites the given proof outline until a suitable program is found. 
Realization logic thus reasons over rewriting steps of the form $\apo \vsim \apop$ between proof outlines. 
The strategy behind the rewriting is to propagate information about failed synthesis attempts backwards, and thereby iteratively eliminate predicates from selections and productions from the definition of non-terminals.

As we need a precise understanding of proof outlines, we give the definition:
\begin{align*}
    \apo\quad ::=\quad 
    \inHoareParenthesis{\apredset} \code{com} \inHoareParenthesis{\apredsetp} 
    \bnf \concatof{\apo}{\apo}
    \bnf \choiceOf{\apo}{\apo}
    \bnf \kleeneof{\apo}
    \bnf \anonterm(\apo)
    \bnf (\apo \angelicChoice{} \apo)
    \ .
\end{align*}
We write $\vdash_{a} \apo$ to indicate that the proof outline has been derived with  realizability logic. 
The assertions we track are the ones that have been used in the application of \ruleLabel{SEQ}, \ruleLabel{LOOP}, and \ruleLabel{ANG}.  
The formal definition is in the appendix. 
We abuse the notation a bit and write $\hoaretriplet{\apredset}{\apo}{\apredsetp}$ to indicate that $\apredset$ is the precondition and $\apredsetp$ is the postcondition of the proof outline $\apo$.
This is well-defined, because $\vdash_a \choiceOf{\apo_1}{\apo_2}$ implies that  $\apo_1$ and $\apo_2$ have the same pre- and postcondition.

\newcommand{\weakerof}[1]{#1^{\mathsf{w}}}
\newcommand{\wweakerof}[1]{#1^{\mathsf{w}\mathsf{w}}}
The proof rules of realization logic are listed in \Cref{def:adc}, we discuss them below. 
The rules give the following soundness guarantee: if we start from a proof outline in realizability logic and rewrite it to another proof outline, then also this proof outline can be derived with realizability logic. 
\begin{theorem}[Soundness]\label{th:decisionCalculusSoundenss}
$\vdash_a \apo$ and $\angelicProofRewriteHolds{\apo}{\apop}$ together imply 
$\vdash_a \apop$. 
\end{theorem}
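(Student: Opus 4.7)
The plan is to proceed by structural induction on the derivation of the rewriting relation $\angelicProofRewriteHolds{\apo}{\apop}$ in realization logic. For each rewriting rule, assuming $\vdash_a \apo$, I would exhibit a concrete derivation of $\vdash_a \apop$ by re-using the original proof tree and modifying it locally at the rewriting site. Since the rewriting relation is defined over annotated proof outlines rather than bare sketches, each step has access to enough information (the selections decorating every program point) to witness that the modified outline is still derivable.

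For the congruence rules, which push a rewrite into a sub-outline (for example, when a rewriting occurs inside the left component of $\concatof{\apo_1}{\apo_2}$, inside the body of $\kleeneof{\apo}$, or inside a production of $\anonterm(\apo)$), the argument is direct: the inductive hypothesis yields derivability of the rewritten sub-outline, and the corresponding rule of realizability logic (\ruleLabel{SEQ}, \ruleLabel{DEM}, \ruleLabel{LOOP}, \ruleLabel{ANG}) reassembles the surrounding derivation unchanged.

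The real work lies in the base rewriting rules, which come in two flavours as illustrated in the introduction. First, weakening a postcondition by dropping a predicate, as in the step from~(\ref{eq:ruleApplicationCsq}) to~(\ref{eq:ruleApplicationBase}): here the smaller selection is $\angelicAssertionStronger{}$-below the larger one, so the rule of consequence \ruleLabel{CSQ} applies directly. Second, weakening a precondition, which is uncommon because \ruleLabel{CSQ} goes the other way on the left. For this case I would argue that the original derivation must already contain a sub-derivation of the stronger precondition: by the structure of \ruleLabel{GATHER}, every triple $\hoaretriplet{\apredset}{\asketch}{\apredsetp}$ in the original proof decomposes into triples $\hoaretriplet{\apred}{\asketch}{\apredp}$ joined together, and dropping a predicate $\apred$ from $\apredset$ amounts to discarding the corresponding branch of this join. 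The rewriting rule is sound precisely because it is only permitted when every postcondition predicate that $\apred$ was responsible for has itself already been dropped, so the remaining GATHER-tree still covers the new postcondition.

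The main obstacle is the backward propagation of a dropped intermediate selection across a sequential composition: eliminating $\apredp \in \apredsetp$ in an outline with intermediate selection $\apredsetp$ must trigger the elimination of exactly those predicates of $\apredset$ that were consumed to justify $\apredp$ and nothing else. To make this precise I would state a bookkeeping lemma that, from a given realizability-logic derivation, extracts a provenance map recording which precondition predicates witness which postcondition predicates via \ruleLabel{GATHER}. Given this provenance, each base rewriting rule can be matched to a pruning of the GATHER-tree that preserves derivability, and the inductive step for \ruleLabel{SEQ} then reassembles the two pruned sub-derivations with the correctly reduced intermediate selection. Once the provenance lemma is in place, the soundness of the remaining base rules (eliminating an alternative from $\anonterm$ or from an angelic choice $\apo \angelicChoice{} \apo$) follows by a direct appeal to \ruleLabel{ANG} restricted to the surviving production.
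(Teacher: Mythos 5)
Your overall skeleton --- induction on the derivation of $\angelicProofRewriteHolds{\apo}{\apop}$, with the congruence cases handled by the induction hypothesis plus reassembly via the corresponding rules of realizability logic (using, implicitly, the inversion fact of \Cref{lem:proofProperties} that sub-outlines of a derivable outline are themselves derivable) --- is exactly the paper's. The gap is in your treatment of the base case. You justify dropping predicates from a precondition by the claim that the rewrite ``is only permitted when every postcondition predicate that $\apred$ was responsible for has itself already been dropped,'' and you propose a provenance lemma matching each base rewrite to a pruning of the original \ruleLabel{GATHER}-tree. That mischaracterizes the calculus, and the provenance claim is false for it: Rule \ruleLabel{RCOM} carries the explicit semantic premise $\demonicAssertionStrongerOf{\progSemFuncOf{\code{com}}{\apred}}{\apredp}$ (together with $\angelicAssertionStrongerOf{\apredset}{\set{\apred}}$ and $\angelicAssertionStrongerOf{\apredsetp}{\set{\apredp}}$) precisely because realizability triples do not record which precondition predicate justified which postcondition predicate. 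A rewrite may therefore pair $\apred$ and $\apredp$ in a way that occurs nowhere in the original derivation --- if the original proof gathered $\hoaretriplet{\set{\apred_1}}{\code{com}}{\set{\apredp_1}}$ and $\hoaretriplet{\set{\apred_2}}{\code{com}}{\set{\apredp_2}}$, \ruleLabel{RCOM} may legitimately select the pair $(\apred_2,\apredp_1)$ --- so no pruning of the \ruleLabel{GATHER}-tree produces the rewritten outline, and your bookkeeping lemma cannot be proved in the form you state.

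The correct, and much shorter, argument is the one the explicit side conditions were designed for: the premise of \ruleLabel{RCOM} is exactly the premise of \ruleLabel{COM}, so $\vdash_a\hoaretriplet{\set{\apred}}{\code{com}}{\set{\apredp}}$ is derived afresh in a single step, with no provenance information needed. For the same reason your worry about backward propagation of a dropped intermediate selection across sequential composition dissolves: \ruleLabel{RSEQR} is the only rule that changes the intermediate selection, and its soundness case just weakens the postcondition of the left sub-outline by the rule of consequence (weakening a postcondition is always permitted) before reassembling with the sequencing rule, while \ruleLabel{RSEQL} requires the intermediate selection to be preserved, so reassembly there is immediate. Once you replace the provenance machinery by these direct appeals to the rules' own premises, your proof coincides with the paper's.
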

\noindent This allows us to invoke the soundness of realizability logic from~\Cref{th:angelicHoareSoundnessCompleteness} and obtain the desired semantic guarantee for the program we have derived. 

A feature all rules share is that they are guided by the original proof outline: the selections in the proof outline that results from rewriting are limited to the selections in the original proof outline. 
To ease the notation, we use $\weakerof{\apredset}$ to denote a selection that is known to be less versatile than $\apredset$, meaning $\angelicAssertionStrongerOf{\apredset}{\weakerof{\apredset}}$ holds. 
This may be used repeatedly, so $\apredset\typeEnvStronger{}_{a}\weakerof{\apredset}\typeEnvStronger{}_{a}\wweakerof{\apredset}$. 

The rules in \Cref{def:adc} rewrite the given proof outline in a compositional fashion. 
The base case \ruleLabel{RCOM} allows us to pick a pre- and postcondition from a selection. 
This is the moment we realize that predicates from the precondition can be dropped because they are not needed to obtain the postcondition, as shown in the introduction. 
We have to explicitly check the over-approximation because realizability triples do not maintain the interplay between the predicates in the pre- and in the postcondition. 
Using \ruleLabel{RANG}, we can rewrite proofs for the right-hand sides of non-terminals. 
The heart of the calculus is \ruleLabel{RSELECT}, which allows us to replace non-terminals by their right-hand sides. 
The Rules~\ruleLabel{RSEQL} and \ruleLabel{RSEQR} rewrite the left resp. the right part of a sequence.
In \ruleLabel{RSEQL}, it is important that the new intermediary assertion matches the original intermediary assertion.
This does not have to be the case in \ruleLabel{RSEQR}, which uses the rule of consequence to automatically weaken the postcondition on the left.
As in realizability logic, Rule~\ruleLabel{RDEM} forces us to commit to a predicate when reasoning about demonic choices. 
It also produces the same postcondition~$\weakerof{S}$ in both branches. 
Rule \ruleLabel{RLOOP} for rewriting loop bodies has a similar behavior.  
The rule of consequence \ruleLabel{RCSQ} is used like in realizability logic.  
The same is true for  \ruleLabel{RGATHER}, which uses function $\gatherFunc{}$ to recursively join the intermediary assertions in the proof outlines.
The definition is in the appendix. 
The need for this function stems from the fact that \ruleLabel{RDEM} can only work with single predicates.  
Rule \ruleLabel{TRANS} is the transitivity of rewriting. 
This rule is needed to combine rewrites, for example when rewriting the left and the right part of a sequence.
\begin{figure}
\begin{mathpar}
\footnotesize
\stackMath
\inferrule[\ruleLabelSmall{RCOM}]{
\demonicAssertionStrongerOf{\progSemFuncOf{\code{com}}{\apred}}{\apredp}
\\
\angelicAssertionStrongerOf{\apredset}{\set{\apred}}
\\
\angelicAssertionStrongerOf{\apredsetp}{\set{\apredp}}
		}{
\angelicProofRewriteHolds{\hoaretriplet{\apredset}{\code{com}}{\apredsetp}}
{\hoaretriplet{\set{\apred}}{\code{com}}{\set{\apredp}}}
}
\and
\inferrule[\ruleLabelSmall{RANG}]{
\angelicProofRewriteHolds{\apo}{\apop}
        }{
        \angelicProofRewriteHolds{\anonterm(\apo_1\bnf\apo\bnf\apo_2)}{\anonterm(\apo_1\bnf\apop\bnf\apo_2)}
}
\and
\inferrule[\ruleLabelSmall{RSELECT}]{
\phantom{i}
        }{
\angelicProofRewriteHolds{\anonterm(\apo_1\bnf\apo\bnf\apo_2)}{\apo}
}
\and
\inferrule[\ruleLabelSmall{RSEQL}]{
\angelicProofRewriteHolds{\hoaretriplet{\apredset}{\apo_1}{\apredsetp}}{\hoaretriplet{\weakerof{\apredset}}{\apop_1}{\apredsetp}}
		}{
\angelicProofRewriteHolds{\concatof{\hoaretriplet{\apredset}{\apo_1}{\apredsetp}}{\apo_2}}{\concatof{\hoaretriplet{\weakerof{\apredset}}{\apop_1}{\apredsetp}}}{\apo_2}
}
\and
\inferrule[\ruleLabelSmall{RSEQR}]{
\angelicProofRewriteHolds{\hoaretriplet{\apredsetp}{\apo_2}{\apredsetpp}}{\hoaretriplet{\weakerof{\apredsetp}}{\apo_2'}{\weakerof{\apredsetpp}}}
		}{
\angelicProofRewriteHolds{
\concatof{\hoaretriplet{\apredset}{\apo_1}{\apredsetp}}
{\hoaretriplet{\apredsetp}{\apo_2}{\apredsetpp}}
}{\concatof{\hoaretriplet{\apredset}{\apo_1}{\weakerof{\apredsetp}}}
{\hoaretriplet{\weakerof{\apredsetp}}{\apo_2'}{\weakerof{\apredsetpp}}}}
}
\and
\inferrule[\ruleLabelSmall{RDEM}]{
\weakerof{\apredset} = \set{\apred} 
\\
\angelicProofRewriteHolds{
\hoaretriplet{\apredset}{\apo_i}{\apredsetp}}{
\hoaretriplet{\weakerof{\apredset}}{\apo_i'}{\weakerof{\apredsetp}}}
\quad\text{for }i=1, 2
        }{
\angelicProofRewriteHolds{
\choiceOf{\hoaretriplet{\apredset}{\apo_1}{\apredsetp}}
{\hoaretriplet{\apredset}{\apo_2}{\apredsetp}}
}{\choiceOf{\hoaretriplet{\weakerof{\apredset}}{\apo_1'}{\weakerof{\apredsetp}}}
{\hoaretriplet{\weakerof{\apredset}}{\apo_2'}{\weakerof{\apredsetp}}}}
}
\and
\inferrule[\ruleLabelSmall{RLOOP}]{
\angelicProofRewriteHolds{\hoaretriplet{I}{\apo}{I}}{\hoaretriplet{\weakerof{I}}{\apop}{\weakerof{I}}}
\\
\weakerof{I} = \set{i}
        }{
\angelicProofRewriteHolds{
        \hoaretriplet{I}{\kleeneof{\apo}}{I}
}{
        \hoaretriplet{\weakerof{I}}{\kleeneof{\apo}}{\weakerof{I}}
}
}
\and
\inferrule[\ruleLabelSmall{RTRANS}]{
\angelicProofRewriteHolds{\apo_1}{\apo_2}
\\\\
\angelicProofRewriteHolds{\apo_2}{\apo_3}
        }{
\angelicProofRewriteHolds{\apo_1}{\apo_3}
}
\and
\inferrule[\ruleLabelSmall{RCSQ}]{
\angelicProofRewriteHolds{\hoaretriplet{\apredset}{\apo}{\apredsetp}}{\hoaretriplet{\wweakerof{\apredset}}{\apop}{\weakerof{\apredsetp}}}
        }{
\angelicProofRewriteHolds{\hoaretriplet{\apredset}{\apo}{\apredsetp}}{\hoaretriplet{\weakerof{\apredset}}{\apop}{\wweakerof{\apredsetp}}}
}
\and
\inferrule[\ruleLabelSmall{RGATHER}]{
\angelicAssertionStrongerOf{\apredset}{(\apredset_1 \cup \apredset_2)}
\\
\angelicAssertionStrongerOf{\apredsetp}{(\apredsetp_1 \cup \apredsetp_2)}
\\
\proofToProgFuncOf{\apop_1} = \proofToProgFuncOf{\apop_2}
\\\\
\angelicProofRewriteHolds{\hoaretriplet{\apredset}{\apo}{\apredsetp}}{\hoaretriplet{\apredset_1}{\apop_1}{\apredsetp_1}}
\\
\angelicProofRewriteHolds{\hoaretriplet{\apredset}{\apo}{\apredsetp}}{\hoaretriplet{\apredset_2}{\apop_2}{\apredsetp_2}}
		}{
\angelicProofRewriteHolds{\hoaretriplet{\apredset}{\apo}{\apredsetp}}
{\hoaretriplet{\apredset_1 \cup \apredset_2}{\gatherFuncOf{\apop_1}{\apop_2}}{\apredsetp_1 \cup \apredsetp_2}}
}
\end{mathpar}
        \caption{Rules of realization logic.
        }
        \label{def:adc}
\end{figure}

Realization logic always rewrites a proof outline into one that is weaker in the sense that it has less synthesis options or, phrased differently, is closer to a program. 
\begin{lemma} \label{lem:rewriteWeaker}
        $\angelicProofRewriteHolds{\apo}{\apop}$ 
        implies
        $\proofStrongerOf{\apo}{\apop}$\ .
\end{lemma}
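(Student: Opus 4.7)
The plan is to proceed by structural induction on the derivation of $\angelicProofRewriteHolds{\apo}{\apop}$, with one case for each rule of realization logic. The ordering $\proofStronger{}$ on proof outlines is the congruence closure of the versatility ordering $\angelicAssertionStronger{}$ lifted to the selections decorating each position, combined with the additional requirement that at every non-terminal node $\anonterm(\apo_1 \bnf \dots \bnf \apo_k)$ the set of alternatives may only shrink (or each individual alternative may be replaced by a $\proofStronger{}$-weaker one). Before the induction, I need two preparatory facts that are routine: reflexivity of $\proofStronger{}$ (so that unchanged sub-outlines in e.g.\ \ruleLabel{RSEQL} stay comparable) and monotonicity of each proof-outline constructor with respect to $\proofStronger{}$ (so that weakening propagates through sequencing, choice, Kleene star, and the non-terminal constructor).

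For the base case \ruleLabel{RCOM}, the premises $\angelicAssertionStrongerOf{\apredset}{\set{\apred}}$ and $\angelicAssertionStrongerOf{\apredsetp}{\set{\apredp}}$ are exactly the pointwise versatility comparisons needed at the two endpoints. For \ruleLabel{RSELECT}, the right-hand side is one of the alternatives already present in $\anonterm(\apo_1 \bnf \apo \bnf \apo_2)$, so shrinking the alternative set to $\{\apo\}$ yields a $\proofStronger{}$-weaker outline by definition. For \ruleLabel{RANG}, the induction hypothesis gives that the replaced alternative becomes weaker, and monotonicity of $\anonterm(\cdot)$ extends this to the full outline. The cases \ruleLabel{RSEQL}, \ruleLabel{RSEQR}, \ruleLabel{RDEM}, \ruleLabel{RLOOP}, and \ruleLabel{RCSQ} are uniform: the rules are written so that every new selection is labelled $\weakerof{(\cdot)}$ or $\wweakerof{(\cdot)}$, which by our convention directly records a $\angelicAssertionStronger{}$-step, and combining with the inductive hypotheses on the sub-outlines closes each case via constructor monotonicity.

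The two remaining cases are slightly subtler. For \ruleLabel{RTRANS}, the conclusion follows once I establish transitivity of $\proofStronger{}$, which reduces to transitivity of $\angelicAssertionStronger{}$ on selections (straightforward from $\demonicAssertionStronger{}$ being transitive on predicates). For \ruleLabel{RGATHER}, the premises $\angelicAssertionStrongerOf{\apredset}{(\apredset_1 \cup \apredset_2)}$ and $\angelicAssertionStrongerOf{\apredsetp}{(\apredsetp_1 \cup \apredsetp_2)}$, together with the side condition $\proofToProgFuncOf{\apop_1} = \proofToProgFuncOf{\apop_2}$ ensuring $\gatherFunc{}$ is well-defined, let me compare the endpoints in the required direction, and an inner induction on the recursion of $\gatherFunc{}$ propagates the comparison through the intermediary assertions.

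The main obstacle I anticipate is pinning down $\proofStronger{}$ precisely enough that it behaves as a congruence on all proof-outline constructors, and in particular handling the non-terminal case $\anonterm(\cdot)$ consistently with the set-of-alternatives semantics used by \ruleLabel{RSELECT} and \ruleLabel{RANG}. Once that definition is in place and reflexivity, transitivity, and monotonicity are verified, the case analysis over the rewriting rules is essentially a check that each rule's premises witness precisely the $\angelicAssertionStronger{}$-steps at the positions where the selections change.
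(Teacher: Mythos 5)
The paper itself contains no proof of this lemma: it is stated in the main text, the definition of $\proofStronger{}$ is given immediately afterwards, and the appendix has no corresponding argument, so the authors evidently regard it as routine. Judged on its own, your induction over the derivation of $\angelicProofRewriteHolds{\apo}{\apop}$ is exactly the argument one would expect, and it goes through; the only adjustments concern aligning your reconstructed ordering with the paper's actual definition of $\proofStronger{}$. The paper does not phrase the non-terminal case as shrinking the set of alternatives; it gives two clauses, $\proofStrongerOf{\anonterm(\apo_1\bnf\apo\bnf\apo_2)}{\anonterm(\apo_1\bnf\apop\bnf\apo_2)}$ whenever $\proofStrongerOf{\apo}{\apop}$, and $\proofStrongerOf{\anonterm(\apo_1\bnf\apo\bnf\apo_2)}{\apo}$. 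The second clause, which drops the $\anonterm(\cdot)$ wrapper entirely, is what the \ruleLabel{RSELECT} case needs, since that rule's target is the bare alternative $\apo$ rather than $\anonterm(\apo)$; your "shrink the alternative set to $\set{\apo}$" keeps the wrapper and would not literally match. Likewise, the paper's definition is not stated as closed under transitivity, so, as you correctly anticipate, the \ruleLabel{RTRANS} case (and compositions of the two non-terminal clauses) requires either building transitivity into the definition or deriving it from transitivity of $\angelicAssertionStronger{}$, and reflexivity must similarly be checked for the untouched sub-outlines in \ruleLabel{RSEQL}, \ruleLabel{RSEQR}, and \ruleLabel{RANG}. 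For \ruleLabel{RGATHER}, the inner induction over $\gatherFunc{}$ rests on the fact that $\angelicAssertionStrongerOf{\apredset}{\apredset_1}$ and $\angelicAssertionStrongerOf{\apredset}{\apredset_2}$ imply $\angelicAssertionStrongerOf{\apredset}{(\apredset_1\cup\apredset_2)}$, which holds since every predicate of the union already has a more precise counterpart in $\apredset$; with that observation your plan is complete and matches the level of rigor the paper itself applies to this lemma.
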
 
\noindent The ordering $\proofStrongerOf{\apo}{\apop}$ states that $\apop$ has less versatile assertions than $\apo$ and a sketch or even program that can be derived from the sketch in $\apo$.  
The definition is by induction on the structure of proof outlines.
The base case is $\proofStrongerOf{\hoaretriplet{\apredset}{\code{com}}{\apredsetp}}{\hoaretriplet{\weakerof{\apredset}}{\code{com}}{\weakerof{\apredsetp}}}$. 
The step cases are similar to the one for choice:
$\proofStrongerOf{
        \choiceOf{\apo_1}{\apo_2}
}{\choiceOf{\apop_1}{\apop_2}}
$ 
if $\proofStrongerOf{\apo_1}{\apop_1}$ and $\proofStrongerOf{\apo_2}{\apop_2}$.
Non-terminals are an exception, where we use \(
        \proofStrongerOf{
                \anonterm(\apo_1\bnf\apo\bnf\apo_2)}
                {
                \anonterm(\apo_1\bnf\apop\bnf\apo_2)} 
\) if \(
        \proofStrongerOf{\apo}{\apop}\,
\).
A proof outline over a non-terminal is also stronger than a proof outline over a derivative, $\proofStrongerOf{\anonterm(\apo_1\bnf\apo\bnf\apo_2)}{\apo}$.

Realization logic also has a completeness property: 
proof outlines can be rewritten into all weaker proof outlines that are derivable in realizability logic. 
\begin{theorem}[Completeness]\label{th:angelicDecisionCalcComplete}
$\vdash_a \apo$ and $\vdash_a \apop$ and $\proofStrongerOf{\apo}{\apop}$ together imply $\angelicProofRewriteHolds{\apo}{\apop}$. 
\end{theorem}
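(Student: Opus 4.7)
The plan is to prove completeness by induction on the derivation witnessing $\proofStrongerOf{\apo}{\apop}$, which is essentially structural in the shared shape of $\apo$ and $\apop$ (apart from the non-terminal elimination case). At each case the corresponding realization rule is the natural candidate; the work lies in verifying its side conditions, where the assumption $\vdash_a \apop$ supplies the required witnesses, much as completeness of Hoare logic was used in the proof of \Cref{th:angelicHoareSoundnessCompleteness}.

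For the base case $\hoaretriplet{\apredset}{\code{com}}{\apredsetp} \proofStronger{} \hoaretriplet{\weakerof{\apredset}}{\code{com}}{\weakerof{\apredsetp}}$, soundness of realizability logic applied to the hypothesis $\vdash_a \hoaretriplet{\weakerof{\apredset}}{\code{com}}{\weakerof{\apredsetp}}$ delivers, for every $\apredp\in\weakerof{\apredsetp}$, a witness $\apred_\apredp\in\weakerof{\apredset}$ with $\demonicAssertionStrongerOf{\progSemFuncOf{\code{com}}{\apred_\apredp}}{\apredp}$. A single application of \ruleLabel{RCOM} then gives $\angelicProofRewriteHolds{\hoaretriplet{\apredset}{\code{com}}{\apredsetp}}{\hoaretriplet{\set{\apred_\apredp}}{\code{com}}{\set{\apredp}}}$ for each $\apredp$, and iterated \ruleLabel{RGATHER} (whose compatibility side condition is trivial as all underlying sketches are $\code{com}$) assembles these into a rewrite into $\hoaretriplet{\weakerof{\apredset}}{\code{com}}{\weakerof{\apredsetp}}$, possibly followed by \ruleLabel{RCSQ} to account for superfluous predicates gathered in the precondition. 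For sequential composition, the induction hypothesis applied to each half yields $\apo_i\vsim\apop_i$; one first weakens the right half by \ruleLabel{RSEQR} (which allows the intermediary and final assertions to be replaced in one go), then weakens the left half by \ruleLabel{RSEQL} to match, and chains everything through \ruleLabel{RTRANS}. The structural non-terminal case uses \ruleLabel{RANG}, the elimination case uses \ruleLabel{RSELECT}, and consequence steps in $\proofStronger{}$ are mirrored by \ruleLabel{RCSQ}.

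The main obstacle is the cases of demonic choice and loops, because \ruleLabel{RDEM} and \ruleLabel{RLOOP} only fire when the target precondition is a singleton, whereas $\weakerof{\apredset}$ may contain several predicates. The remedy is to split $\weakerof{\apredset}=\set{\apred_1,\dots,\apred_n}$ and, for each $\apred_k$, invoke the induction hypothesis on the sub-outlines to produce rewrites from the original sub-outlines into versions with singleton precondition $\set{\apred_k}$ and a common singleton postcondition taken from $\weakerof{\apredsetp}$. These singleton cases can be fed into \ruleLabel{RDEM} (respectively \ruleLabel{RLOOP}), and the results are recombined by \ruleLabel{RGATHER}. The compatibility condition $\proofToProgFuncOf{\apop_1}=\proofToProgFuncOf{\apop_2}$ for \ruleLabel{RGATHER} must be checked, and it is met because the singleton versions are obtained from the same target proof outline $\apop$ and therefore share their underlying sketches; a final \ruleLabel{RCSQ} cleans up any mismatch introduced by splitting. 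The loop case is entirely analogous, using the invariant shape enforced by \ruleLabel{RLOOP}. Once this singleton-decomposition technique is in place, all remaining cases are routine, and the induction closes.
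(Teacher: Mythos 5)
Your handling of the command, sequence, consequence, and non-terminal cases is essentially the paper's argument (the paper settles the command case by an inner induction on the derivation of $\vdash_a \apop$ rather than by appealing to semantic soundness, but your variant is fine). The genuine gap sits exactly in what you call the main obstacle: the demonic-choice and loop cases. Your remedy presupposes what has to be proved. To ``invoke the induction hypothesis on the sub-outlines to produce rewrites into versions with singleton precondition $\set{\apred_k}$'', the induction hypothesis first needs, for every $\apred_k$, a proof outline for each branch that is valid ($\vdash_a$), below the original branch with respect to $\proofStronger{}$, and has that singleton precondition --- and nothing in your argument supplies these outlines. Their existence is non-trivial; in the paper it is essentially the content of \Cref{lem:rewriteDemProgSingletons} (proved by induction on the derivation of the target outline), and in the completeness proof the paper avoids constructing them ad hoc by performing an inner induction on the derivation of $\vdash_a \apop$: if the last rule is \ruleLabel{PDEM}, the precondition is already a singleton and the outer induction hypothesis plus \ruleLabel{RDEM} applies; if it is \ruleLabel{PGATHER}, it is mirrored by \ruleLabel{RGATHER}; if it is \ruleLabel{PCSQ}, it is mirrored by \ruleLabel{RCSQ}.

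Even granting the singleton pieces, your recombination does not obviously reach the required target. \ruleLabel{RGATHER} rewrites into $\hoaretriplet{\apredset_1\cup\apredset_2}{\gatherFuncOf{\apop_1}{\apop_2}}{\apredsetp_1\cup\apredsetp_2}$, whose internal intermediate selections are unions of those of the pieces; an arbitrary predicate-wise split of $\weakerof{\apredset}$ need not reassemble into the \emph{specific} outline $\apop$ with its specific internal annotations, and your final \ruleLabel{RCSQ} ``clean-up'' only adjusts the outermost pre- and postcondition, not annotations deeper inside the outline. Your compatibility claim (``the singleton versions are obtained from the same target proof outline $\apop$'') is precisely the unproven step: the pieces were hypothesized, not extracted from $\apop$. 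The decomposition has to follow the actual derivation of $\apop$ --- its \ruleLabel{PDEM}/\ruleLabel{PGATHER}/\ruleLabel{PCSQ} structure --- which is what the paper's inner induction does; with that replacement, the rest of your outline goes through. (Minor point: \ruleLabel{RDEM} only forces the target \emph{pre}condition to be a singleton, so insisting on a singleton postcondition is unnecessary.)
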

A final guarantee that is interesting from an algorithmic point of view is that realization logic provably does not need backtracking: from every proof outline and for every predicate in the postcondition, one can derive a program that satisfies this postcondition.
By \Cref{th:decisionCalculusSoundenss}, the guarantee continues to hold after a rewriting step (that should not remove the predicate of interest), hence the name. 
\begin{theorem}[Backtracking Freedom]
    \label{th:angelicChoiceElimination}
    Let $\angelicHoareTripletHolds{\apredset}{\apo}{\apredsetp}$
    and $\apredp \in \apredsetp$. 
    Then there are $\apred$ and $\apop$ so that 
    $\angelicProofRewriteHolds{
        \hoaretriplet{\apredset}{\apo}{\apredsetp}
    }{
        \hoaretriplet{\set{\apred}}{\apop}{\set{\apredp}}
    }$, 
    $\apred \in \apredset$, 
    and $\proofToProgFuncOf{\apop} \in \angelConcFuncOf{\proofToProgFuncOf{\apo}}$.
\end{theorem}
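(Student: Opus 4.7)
The plan is structural induction on the proof outline $\apo$, using the semantic characterization from \Cref{th:angelicHoareSoundnessCompleteness} to pick witnesses at each level and stitching partial rewritings together with \ruleLabel{RTRANS}. In the base case $\apo = \inHoareParenthesis{\apredset}\code{com}\inHoareParenthesis{\apredsetp}$, soundness of realizability logic gives $\angelicHoareTripletHoldsSemantically{\apredset}{\code{com}}{\apredsetp}$, so for the given $\apredp \in \apredsetp$ there is some $\apred \in \apredset$ with $\demonicAssertionStrongerOf{\progSemFuncOf{\code{com}}{\apred}}{\apredp}$. A single application of \ruleLabel{RCOM} then yields $\angelicProofRewriteHolds{\apo}{\hoaretriplet{\set{\apred}}{\code{com}}{\set{\apredp}}}$, discharging the base case.

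For each composite construct, the idea is to use local validity of the sub-outlines to supply intermediate witnesses and invoke the induction hypothesis on subproofs. For sequencing $\apo = \concatof{\apo_1}{\apo_2}$ with the tracked intermediate selection, validity of $\apo_2$ delivers an intermediate predicate within that selection for which a Hoare triple to $\apredp$ holds; the induction hypothesis on $\apo_2$ then produces a rewriting with singleton pre and post, \ruleLabel{RSEQR} propagates this intermediate through the composition, and a further induction hypothesis on $\apo_1$ with the intermediate as target, followed by \ruleLabel{RSEQL}, finishes the case. Demonic choice $\choiceOf{\apo_1}{\apo_2}$ is handled by noting that the \ruleLabel{DEM} side condition already forces the precondition to be a singleton, applying the induction hypothesis to both branches with the same target $\apredp$, and closing with \ruleLabel{RDEM}. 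Loops are analogous via \ruleLabel{RLOOP} on the singleton invariant forced by \ruleLabel{LOOP}. For a non-terminal node $\anonterm(\apo_1\bnf\cdots\bnf\apo_n)$, I would identify a subproof $\apo_{i_0}$ whose postcondition contributes a refinement of $\apredp$, apply the induction hypothesis there, and then commit to that production via \ruleLabel{RSELECT}, possibly followed by \ruleLabel{RCSQ} to weaken the result back to the target $\apredp$. Soundness of realization logic (\Cref{th:decisionCalculusSoundenss}) guarantees that every intermediate outline remains derivable in realizability logic, which is what allows the induction hypothesis to be invoked on sub-outlines produced by earlier rewriting steps.

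The main obstacle is the non-terminal case: one must argue that whenever $\angelicHoareTripletHolds{\apredset}{\anonterm(\apo_1\bnf\cdots\bnf\apo_n)}{\apredsetp}$ and $\apredp \in \apredsetp$, at least one of the tracked subproofs already witnesses a refinement of $\apredp$. This relies on how \ruleLabel{ANG} and \ruleLabel{GATHER} assemble the postcondition as essentially a union over the subproofs' postconditions, possibly weakened by \ruleLabel{CSQ}; consequently there is some $\apred' \in \bigcup_i \apredsetp_i$ with $\demonicAssertionStrongerOf{\apred'}{\apredp}$, the induction hypothesis applies to the subproof that contributes $\apred'$, and \ruleLabel{RCSQ} then weakens back to $\apredp$. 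A related bookkeeping subtlety is that the sketch extracted by $\proofToProgFunc{}$ from $\anonterm(\apo_1\bnf\cdots\bnf\apo_n)$ must be defined so that selecting any tracked production produces a program in $\angelConcFuncOf{\proofToProgFuncOf{\apo}}$, which is consistent with the intended reading of $\proofToProgFunc{}$ as erasing the proof annotations to an ordinary sketch but is the place where the argument has to be carried out most carefully.
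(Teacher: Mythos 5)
Your base case, sequence, and non-terminal cases essentially match the paper's proof (structural induction on $\apo$, picking the witness predicate via validity, stitching with \ruleLabel{RSEQL}/\ruleLabel{RSEQR}/\ruleLabel{RTRANS}, and in the non-terminal case locating the subproof whose postcondition contains $\apredp$ and committing via \ruleLabel{RSELECT}). However, there is a genuine gap in your treatment of demonic choice and loops. You claim that the \ruleLabel{DEM} side condition "already forces the precondition to be a singleton," but that is not true of the proof \emph{outline} you are inducting on: after \ruleLabel{DEM} has been applied with a singleton, \ruleLabel{GATHER} and \ruleLabel{CSQ} may have been applied on top, so the outline $\choiceOf{\hoaretriplet{\apredset}{\apo_1}{\apredsetp}}{\hoaretriplet{\apredset}{\apo_2}{\apredsetp}}$ can carry a non-singleton $\apredset$. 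Moreover, even granting singletons, applying the induction hypothesis independently to the two branches with target $\apredp$ may return two \emph{different} precondition predicates $\apred_1 \neq \apred_2$, and \ruleLabel{RDEM} requires both rewritten branches to share the same singleton precondition $\set{\apred}$ (and the same singleton postcondition). So your plan of "IH on both branches, then close with \ruleLabel{RDEM}" does not go through as stated; the same issue recurs for loops, where the tracked invariant selection $I$ need not be a singleton.

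The paper closes exactly this hole with dedicated lemmas (\Cref{lem:rewriteDemProgSingletons} and \Cref{lem:rewriteLoopProgSingletons}): by an inner induction over the proof tree of the choice/loop outline (peeling off the \ruleLabel{GATHER} and \ruleLabel{CSQ} applications down to the underlying \ruleLabel{DEM}/\ruleLabel{LOOP} instance), one extracts a single predicate $\apred \in \apredset$ together with valid outlines $\apop_1, \apop_2$ over the same sketches with pre $\set{\apred}$ and post $\set{\apredp}$ for \emph{both} branches (respectively a single invariant predicate $i$ for the loop body). Completeness of realization logic (\Cref{th:angelicDecisionCalcComplete}) then turns $\proofStrongerOf{\apo_i}{\apop_i}$ into rewriting steps, after which \ruleLabel{RDEM} (resp.\ \ruleLabel{RLOOP} followed by \ruleLabel{RCSQ} to move between $\apredset$, $i$, and $\apredp$) applies. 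Without this extraction argument your induction cannot be completed, so you need to add these auxiliary lemmas (or an equivalent analysis of how \ruleLabel{GATHER}/\ruleLabel{CSQ} interact with \ruleLabel{DEM} and \ruleLabel{LOOP}) to make the proof sound.
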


It is tempting to try to prove backtracking freedom with the help of completeness in \Cref{th:angelicHoareSoundnessCompleteness}. 
To understand why this does not work, consider the valid proof outline given in \Cref{code:unsynthExample} and suppose both commands have the same semantics, namely the identity function. 
When we want to derive a program for the postcondition $\set{\xvar = 0}$, realization logic can only propose~$\code{skip}_1$. 
The notion of validity in realizability logic, however, may justify the postcondition by $\demonicHoareTripletHoldsSemantically{\xvar = 0}{\code{skip}_2}{\xvar = 0}$.
Completeness in  \Cref{th:angelicHoareSoundnessCompleteness} now shows that there is a proof outline for $\angelicHoareTripletHolds{\xvar=0}{\code{skip}_2}{\xvar=0}$. 
The point is that this proof outline is not weaker than the one in \Cref{code:unsynthExample}. 
This also means that there is no contradiction to \Cref{th:angelicDecisionCalcComplete}. 
In short, \Cref{th:angelicHoareSoundnessCompleteness} reasons about sketches and \Cref{th:angelicChoiceElimination} reasons about proof outlines, and the two are incomparable.

\begin{wrapfigure}[7]{O}{0.32\textwidth}
\vspace{-5mm}
\begin{lstlisting}
$\inHoareParenthesis{x = 0}$
N(($\inHoareParenthesis{x = 0}$ skip$\color{black}{}_1$ $\inHoareParenthesis{x = 0}$) $\color{black}\angelicChoice{}$
  ($\inHoareParenthesis{x = 0}$ skip$\color{black}{}_2$ $\inHoareParenthesis{\emptyset}$))
$\inHoareParenthesis{x = 0}$
\end{lstlisting}
    \caption{Guidance.}
    \label{code:unsynthExample}
\end{wrapfigure}

For an example application of realization logic, we refer to \Cref{Section:Introduction}.
Eliminating the predicate $\fail{}$ in \Cref{eq:ruleApplicationCsq} is an application of Rule \ruleLabel{RCSQ}.
Weakening the precondition in \Cref{eq:ruleApplicationBase} is an application of Rule \ruleLabel{RCOM}.
Resolving the non-terminal to the left production is an application of rule \ruleLabel{RSELECT}.
The resulting proof outline fragments are put together using Rules \ruleLabel{RSEQL}, \ruleLabel{RSEQR}, and \ruleLabel{RTRANS}.

\paragraph*{Overall approach and the notion of validity}
The overall approach to synthesizing a program from a sketch $\asketch$, a precondition~$\apredset$, and postcondition~$\apredsetp$ is this. 
We prove the triple $\hoaretriplet{\apredset}{\asketch}{\apredsetp}$ in realizability logic. 
One may see this proof as a symbolic execution that annotates the code by assertions in a \emph{forward} fashion, starting from the precondition and ending at the postcondition.  
When we have proven the triple, we in particular have a proof outline at hand. 
We rewrite this proof outline to a program using the realization logic we have just developed. 
The rewriting proceeds \emph{backwards}: starting from failures and predicates that do not guarantee the postcondition, we eliminate predicates from selections and productions from the definition of non-terminals.

We need the rewriting phase because realizability logic abstracts away the link between the programs that can be synthesized from the sketch and the pre- and postconditions that these programs can guarantee. 
This abstraction is precisely what makes realizability logic scale, and the link is precisely what realization logic recovers. 
What makes realization logic efficient is that it is guided by the proof outline in realizability logic.
Moreover, realizability logic provably does not need backtracking (the rewriting cannot go wrong).  

It would be possible to define the notion of validity in realizability logic differently, namely by universally quantifying over the precondition and existentially quantifying over the postcondition. 
Our synthesis approach can be adapted to this definition as follows. 
We imagine the proof in realizability logic as being constructed \emph{backwards}, the symbolic execution starts from the postcondition and proceeds to the precondition.
Realization logic would then start from unsuitable preconditions and proceed \emph{forwards}. 
Our approach has the advantage that forward reasoning tends to be more deterministic than backwards reasoning, which leads to smaller assertions in realizability logic. 
\section{Implementing Realizability Logic} \label{ch:vc}
We give an algorithm that checks the validity of triples in realizability logic. 
The algorithm also yields a proof outline that will be handed over to the implementation of realization logic developed in the next section. 
The algorithm is a deductive verification: we compute a set of verification conditions whose validity entails the validity of the given realizability triple. 
Importantly, although the assertions in realizability logic are selections rather than mere predicates, the validity check can be discharged with an off-the-shelf SMT solver. 
At the technical level, our contribution lies in the definition of suitable verification conditions. 
We proceed forwards, using a tailor-made strongest post. 
Deductive verification needs loop invariants, and this is no different in our case.
We assume these invariants are given by the user. 
Moreover, we assume the user gives us information about the non-terminals, either in the form of an annotated assertion transformer, or in the form of a recursion depth to which the non-terminal should be unwound in the search for a program. 
Taking these user annotations as an input is a compromise: it is a burden on the user, but it improves the applicability of the method.
Having said that, we remark that in our experiments we were able to automatically determine the loop invariants and handle the non-terminals.
\newcommand{\checkAssertionFunc}{\mathit{ca}}
\newcommand{\checkAssertionFuncOfL}[2]{\checkAssertionFunc(#1, #2)}
\newcommand{\checkAssertionFuncOfN}[3]{\checkAssertionFunc(#1, #2, #3)}
\newcommand{\oracleFunc}{\mathbb{O}}
\newcommand{\oracleFuncOf}[2]{\oracleFunc(#1,#2)}
\newcommand{\ashortsketch}{\code{sk}}
\newcommand{\ashortprog}{\code{p}}
\newcommand{\ashortprogp}{\code{p}'}
\subsection{Verification Conditions and Strongest Post}

In \Cref{def:vc}, we define the function  $\funcDef{\verificationConditionsFunc{}}{\vcAngelicHoareTripletSet{} \rightarrow \powersetOf{\assertionSet{}}}$ which takes an annotated realizability triple and produces a set of verification conditions. 
The annotation we assume is an invariant $\angelicAssertionFont{I}$ in the case of loops $\kleeneof{\asketch}$, denoted by $\kleeneof{\asketch}[\angelicAssertionFont{I}]$, and a selection transformer $\Gamma:\angelicAssertions{}\rightarrow\angelicAssertions{}$ in the case of non-terminals~$\anonterm$, written as~$\anonterm[\Gamma]$. 
A verification condition is an inequality between selections from the set $\assertionSet{} = \angelicAssertions{} \times \angelicAssertions{}$.

We rely on the correctness of the annotations. 
A loop invariant is sound, if $\angelicHoareTripletHoldsSemantically{\set{i}}{\asketch}{\set{i}}$ holds for all $i\in I$. 
It is complete if $\angelicHoareTripletHoldsSemantically{\set{j}}{\asketch}{\set{j}}$ implies $j\in\angelicAssertionFont{I}$. 
A selection transformer is sound, if $\Gamma(\apredset)=\apredsetp$ implies $\angelicHoareTripletHoldsSemantically{\angelicAssertionFont{\apredset}}{\anonterm}{\apredsetp}$. 
It is complete, if $\angelicHoareTripletHoldsSemantically{\angelicAssertionFont{\apredset}}{\anonterm}{\apredsetp}$ implies $\angelicAssertionStrongerOf{\Gamma(\apredset)}{\apredsetp}$.

Function $\verificationConditionsFunc{}$ is sound: when all verification conditions hold, then the realizability triple is valid. 
Note that we do not need to assume the soundness of the annotations, but will check this as part of the verification conditions. 
For completeness, however, we need to make this assumption. 
\begin{theorem}[VC-Sound-And-Complete]\label{th:vcSoundness}
$\models \verificationConditionsFuncOf{\hoaretriplet{\apredset}{\asketch}{\apredsetp}}$ implies $\angelicHoareTripletHoldsSemantically{\apredset}{\asketch}{\apredsetp}$.
If all annotations are complete, $\angelicHoareTripletHoldsSemantically{\apredset}{\asketch}{\apredsetp}$ implies $\models \verificationConditionsFuncOf{\hoaretriplet{\apredset}{\asketch}{\apredsetp}}$.
\end{theorem}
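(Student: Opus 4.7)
The plan is to prove both directions by structural induction on the sketch $\asketch$, with each inductive case mirroring the corresponding rule of realizability logic and relying on \Cref{th:angelicHoareSoundnessCompleteness} to convert between syntactic derivability and semantic validity. First I would fix precisely what $\verificationConditionsFuncOf{\hoaretriplet{\apredset}{\asketch}{\apredsetp}}$ generates on each syntactic form: for a command, an inequality of the shape $\angelicAssertionStrongerOf{\setCond{\progSemFuncOrigOf{\code{com}}{\apred}}{\apred\in\apredset}}{\apredsetp}$; for $\concatof{\asketch_1}{\asketch_2}$, the VCs of the two parts glued by the tailor-made strongest post as intermediate selection; for $\kleeneof{\asketch}[\angelicAssertionFont{I}]$, the VCs obtained by checking $\angelicAssertionFont{I}$ is a fixed point under $\asketch$ plus an inequality against $\apredset,\apredsetp$; for $\anonterm[\Gamma]$, the VCs expressing that $\Gamma(\apredset)$ covers $\apredsetp$; and for choice, a split per singleton precondition to respect the side condition of \ruleLabel{DEM}.

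For the soundness direction I would assume $\models \verificationConditionsFuncOf{\hoaretriplet{\apredset}{\asketch}{\apredsetp}}$ and show $\angelicHoareTripletHolds{\apredset}{\asketch}{\apredsetp}$ by induction, since the semantic statement then follows from \Cref{th:angelicHoareSoundnessCompleteness}. In the command case, the VC hands me the premise of \ruleLabel{COM} for each $\apred\in\apredset$; applying \ruleLabel{COM} pointwise and then \ruleLabel{GATHER} reconstructs the full selections. Sequencing follows from the IH and \ruleLabel{SEQ} (with \ruleLabel{CSQ} to align the intermediate assertion to the strongest post). The loop case uses the soundness of $\angelicAssertionFont{I}$ certified by the VC: derive $\angelicHoareTripletHolds{\set{i}}{\kleeneof{\asketch}}{\set{i}}$ for each $i\in\angelicAssertionFont{I}$ by \ruleLabel{LOOP} and merge via \ruleLabel{GATHER}, then frame with \ruleLabel{CSQ}. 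Non-terminals are analogous using \ruleLabel{ANG} and soundness of $\Gamma$, and choice uses \ruleLabel{DEM} exactly because the VC split gave us singleton preconditions.

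For completeness I would assume $\angelicHoareTripletHoldsSemantically{\apredset}{\asketch}{\apredsetp}$ and that all annotations are complete, and again induct. In the command case the semantic definition supplies, for every $\apredp\in\apredsetp$, some $\apred\in\apredset$ with $\demonicAssertionStrongerOf{\progSemFuncOrigOf{\code{com}}{\apred}}{\apredp}$, which is exactly the VC inequality. In sequential composition I would first show that the tailor-made strongest post is the most versatile selection satisfying $\angelicHoareTripletHoldsSemantically{\apredset}{\asketch_1}{-}$, so that the IH applies to both halves. The loop and non-terminal cases are where the completeness assumption on annotations does the real work: completeness of $\angelicAssertionFont{I}$ and of $\Gamma$ guarantees that $\angelicAssertionFont{I}$ (resp.\ $\Gamma(\apredset)$) is at least as versatile as any selection witnessing the semantic triple, which is precisely what the VC demands. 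Choice is handled by splitting along predicates of $\apredset$ and invoking the IH per singleton.

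The main obstacle I expect is reconciling the forward, selection-valued strongest post used to glue sequential VCs with the pointwise, singleton-preconditioned reasoning forced by \ruleLabel{DEM} and \ruleLabel{LOOP}. Making the induction go through requires showing that the strongest post distributes over the relevant splits (so that a VC stated on $\apredset$ factors cleanly into VCs stated on each $\set{\apred}$) and that \ruleLabel{GATHER} recombines the resulting singleton-based derivations without losing information. The non-terminal case is structurally light but conceptually leans entirely on the assumed completeness of $\Gamma$, so care is needed in the statement of what "complete annotation" means at the VC level — I would make sure our definition of completeness of $\Gamma$ and of $\angelicAssertionFont{I}$ is strong enough to rule out any hidden over-approximation introduced by $\verificationConditionsFunc{}$ itself, which is the subtle point where the completeness direction could otherwise fail.
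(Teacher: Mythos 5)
Your overall strategy is the paper's: both directions go by structural induction on the sketch, soundness by reconstructing a derivation with \ruleLabel{COM}, \ruleLabel{SEQ}, \ruleLabel{DEM}, \ruleLabel{LOOP}, \ruleLabel{ANG}, \ruleLabel{GATHER}, \ruleLabel{CSQ} and then invoking \Cref{th:angelicHoareSoundnessCompleteness}, completeness by leaning on soundness, monotonicity and completeness of the tailor-made strongest post (the paper isolates these as \Cref{th:spComplete} and auxiliary lemmas, which you propose to re-establish inline). However, there is a concrete gap in the non-terminal case, and it stems from your reconstruction of the verification conditions: for $\anonterm[\Gamma]$ you only generate the covering inequality $\angelicAssertionStrongerOf{\Gamma(\apredset)}{\apredsetp}$, whereas the paper's $\verificationConditionsFunc{}$ additionally emits $\checkAssertionFuncOfN{\apredset}{\anonterm}{\Gamma}$, the oracle-generated conditions that certify soundness of $\Gamma$ itself. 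With only the covering inequality, your soundness case must \emph{assume} that $\Gamma$ is sound, which the theorem does not grant (its first implication has no hypothesis on annotations; the paper explicitly discharges annotation soundness from the $\checkAssertionFunc$ conditions by applying the induction hypothesis to the oracle's programs and combining with \ruleLabel{ANG}, \ruleLabel{GATHER}, \ruleLabel{CSQ}). Symmetrically, your completeness case for non-terminals only addresses the covering inequality via completeness of $\Gamma$; it misses the obligation to show that the $\checkAssertionFunc$ conditions hold, which in the paper requires extracting, from semantic validity, finitely many witness programs (finiteness of $\apredsetp$) and arguing that some $\oracleFuncOf{\anonterm}{j}$ contains them.

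A second, smaller slip is in your loop soundness step: after deriving $\angelicHoareTripletHolds{\set{i}}{\kleeneof{\asketch}}{\set{i}}$ for the invariant elements, you propose to merge over \emph{all} $i\in\angelicAssertionFont{I}$ and then close with \ruleLabel{CSQ}. That final strengthening needs $\angelicAssertionStrongerOf{\apredset}{\angelicAssertionFont{I}}$, which is not guaranteed, since $\angelicAssertionFont{I}$ may contain invariants with no more precise counterpart in $\apredset$. The paper instead gathers only over $\vcStrongestPostSemOf{\apredset}{\kleeneof{\asketch}[\angelicAssertionFont{I}]}=\setCond{i\in\angelicAssertionFont{I}}{\angelicAssertionStrongerOf{\apredset}{\set{i}}}$, for which the required strengthening holds by definition and whose entailment of $\apredsetp$ is exactly the emitted verification condition; restricting your \ruleLabel{GATHER} to this subset repairs the step. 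The remaining cases (command, sequence, choice) match the paper's argument.
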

The benefit of using verification conditions $\angelicAssertionStrongerOf{\apredset}{\apredsetp}$ as the proof obligation is that they can be discharged with off-the-shelf SMT technology. 
For every predicate $\apredp\in \apredsetp$, we have to go through the predicates $\apred\in\apredset$ until we find $\apred\subseteq\apredp$. 
This means at most quadratically many SMT queries. 
Future assertion languages for selections may suggest a different strategy. 

Function $\verificationConditionsFunc{}$ makes use of the strongest post function $\strongestPostSem{}$ which is also defined in \Cref{def:vc}. 
It takes as input a selection and a sketch and outputs a selection, 
$\funcDef{\strongestPostSem{}}{(\angelicAssertions{} \times \angelicProgramLang{}) \rightarrow \angelicAssertions{}}$. 
The output is a selection that can be guaranteed when running the sketch on the input. 
Moreover, it is the most versatile selection, if the annotations are complete.
Since the result does not refer to the verification conditions, we have to require soundness. 
\begin{theorem}[SP-Sound-And-Complete]\label{th:spComplete}
If the annotations are sound, $\angelicAssertionStrongerOf{\strongestPostSemOf{\apredset}{\asketch}}{\apredsetp}$ implies $\angelicHoareTripletHoldsSemantically{\apredset}{\code{\asketch}}{\apredsetp}$. 
If the annotations are sound and complete, $\angelicHoareTripletHoldsSemantically{\apredset}{\code{\asketch}}{\apredsetp}$ implies $\angelicAssertionStrongerOf{\strongestPostSemOf{\apredset}{\asketch}}{\apredsetp}$. 
\end{theorem}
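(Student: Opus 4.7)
The plan is to prove both directions simultaneously by structural induction on the sketch $\asketch$, using the respective direction of the soundness/completeness assumptions on annotations only when the structural case introduces them (loops and non-terminals). At the top level, $\strongestPostSem{}$ is the most-versatile selection that can be angelically guaranteed after running $\asketch$ on $\apredset$, so the claim is essentially that $\strongestPostSem{}$ characterizes $\models_a$ up to the ordering $\angelicAssertionStronger{}$.

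For the base case $\asketch = \code{com}$, the definition should have $\strongestPostSemOf{\apredset}{\code{com}}$ collect the images $\progSemFuncOf{\code{com}}{\apred}$ for $\apred \in \apredset$. Given $\angelicAssertionStrongerOf{\strongestPostSemOf{\apredset}{\code{com}}}{\apredsetp}$, for each $\apredp \in \apredsetp$ I pick $\apred \in \apredset$ with $\progSemFuncOf{\code{com}}{\apred}\demonicAssertionStronger{}\apredp$; this is exactly the witness demanded by $\models_a \hoaretriplet{\apredset}{\code{com}}{\apredsetp}$, since the singleton derivation $\demonConcFuncOf{\code{com}}=\set{\code{com}}$ is trivial and $\models_d\hoareof{\apred}{\code{com}}{\apredp}$ reduces to this inclusion. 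Completeness is symmetric: $\models_a$ supplies a witness $\apred$ for every $\apredp \in \apredsetp$, and $\models_d\hoareof{\apred}{\code{com}}{\apredp}$ gives $\progSemFuncOf{\code{com}}{\apred}\demonicAssertionStronger{}\apredp$, so that $\apredp$ is over-approximated by an element of $\strongestPostSemOf{\apredset}{\code{com}}$.

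For sequence, I chain the induction hypothesis: $\strongestPostSemOf{\apredset}{\asketch_1;\asketch_2} = \strongestPostSemOf{\strongestPostSemOf{\apredset}{\asketch_1}}{\asketch_2}$, and the validity definition composes because the middle selection can serve as the witness postcondition for $\asketch_1$ and precondition for $\asketch_2$ simultaneously (this is exactly the design reason for the universal-over-postcondition definition of $\models_a$). Demonic choice needs the Rule \ruleLabel{DEM} constraint: the strongest post must take each singleton in $\apredset$ separately, run both branches, and intersect (meet) the resulting selections; soundness then mirrors the rule, and completeness follows because any valid triple admits a per-singleton decomposition. For a non-terminal $\anonterm[\Gamma]$, both directions are one-line reductions to the respective soundness/completeness of $\Gamma$.

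The main obstacle will be the Kleene-star case. Soundness: given a sound invariant $I$ annotating $\kleeneof{\asketch}$, I need that $\strongestPostSemOf{\apredset}{\kleeneof{\asketch}[I]}$ (which presumably is $I$ itself after first checking $\apredset \angelicAssertionStronger{} I$ via a verification-style precondition check) dominates any valid postcondition. Since the invariant is sound, $\models_a\hoaretriplet{\set{i}}{\asketch}{\set{i}}$ holds per $i\in I$, hence by iteration $\models_a\hoaretriplet{I}{\kleeneof{\asketch}}{I}$, and any $\apredsetp$ with $I \angelicAssertionStronger{} \apredsetp$ is handled by consequence. Completeness needs the invariant to be complete: if $\models_a\hoaretriplet{\apredset}{\kleeneof{\asketch}}{\apredsetp}$ holds, then for each $\apredp\in\apredsetp$ there is a witness $\apred\in\apredset$ and a program $\aprog\in\angelConcFuncOf{\kleeneof{\asketch}}$ (some finite unrolling) validating the triple; the completeness of $I$ forces the corresponding loop-invariant predicate to lie in $I$, which is what we need. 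The delicate point is avoiding circularity between the invariant assumptions and the inductive hypothesis on $\asketch$; I would state the IH over all \emph{sub-sketches} including annotated loop bodies, and invoke soundness of annotations uniformly rather than re-deriving it.
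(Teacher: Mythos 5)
Your overall route is a legitimate alternative to the paper's: the paper proves soundness by first deriving $\angelicHoareTripletHolds{\apredset}{\asketch}{\strongestPostSemOf{\apredset}{\asketch}}$ in realizability logic (then \ruleLabel{CSQ} plus Theorem~\ref{th:angelicHoareSoundnessCompleteness}), and proves completeness by induction over the proof tree of the derivation, using monotonicity of $\strongestPostSem{}$; you instead propose a direct semantic structural induction on the sketch for both directions. That can be made to work, but your treatment of demonic choice contains a genuine error. You say the strongest post should, per singleton $\apred\in\apredset$, run both branches and ``intersect (meet) the resulting selections.'' The correct combination is the opposite: for each pair $\apredp_1\in\strongestPostSemOf{\set{\apred}}{\asketch_1}$, $\apredp_2\in\strongestPostSemOf{\set{\apred}}{\asketch_2}$ one takes the predicate-level \emph{least upper bound} $\apredp_1\join{}_d\apredp_2$ (union of the state sets), because the branch is resolved demonically at runtime: whichever branch the demon takes only guarantees that branch's postcondition, so the synthesized program can only promise the join. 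With a meet/intersection your soundness direction fails already for a single command per branch: $\models_d\hoareof{\apred}{\choiceOf{\aprog_1}{\aprog_2}}{\apredp_1\meet{}\apredp_2}$ does not hold when $\aprog_1$ produces states outside $\apredp_2$. This is exactly the point the paper flags when motivating Rule~\ruleLabel{DEM}.

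Two smaller gaps in the completeness direction: since you argue semantically rather than over the proof tree, the sequence case needs an explicit intermediate predicate (e.g.\ the join of $\progSemFunc{\anexec}(\apred)$ over all executions $\anexec$ of the first program) and the loop case needs an explicit construction of the semantic invariant (the join of all states reachable from $\apred$ by iterating the chosen body program) before completeness of the annotation can be invoked to place it in $\angelicAssertionFont{I}$; both steps, and the monotonicity of $\strongestPostSem{}$ in its first argument that they rely on (Lemma~\ref{lem:spMonotonic} in the paper), are only gestured at in your sketch. Also note that $\vcStrongestPostSemOf{\apredset}{\kleeneof{\asketch}[\angelicAssertionFont{I}]}$ filters the invariant predicate-by-predicate, $\setCond{i\in\angelicAssertionFont{I}}{\angelicAssertionStrongerOf{\apredset}{\set{i}}}$, rather than returning all of $\angelicAssertionFont{I}$ after a single check $\angelicAssertionStrongerOf{\apredset}{\angelicAssertionFont{I}}$; your soundness argument should be phrased against that definition.
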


    \begin{figure}
        \footnotesize
    \begin{align*}
        \strongestPostSemOf{\apredset}{\code{com}} &\definingEquals 
        \setCond{\progSemFuncOf{\code{com}}{\apred}}{\apred \in \apredset} 
        &
        \strongestPostSemOf{\apredset}{\code{\anonterm}[\Gamma]} &\definingEquals 
        \Gamma(\apredset)
        \\
        \strongestPostSemOf{\apredset}{\concatof{\asketch_1}{\asketch_2}} & \definingEquals 
        \strongestPostSemOf{\strongestPostSemOf{\apredset}{\asketch_1}}{\asketch_2}
        &
        \vcStrongestPostSemOf{\apredset}{\kleeneof{\asketch}[\angelicAssertionFont{I}]} & \definingEquals
        \setCond{i \in \angelicAssertionFont{I}}{\angelicAssertionStrongerOf{\angelicAssertionFont{\apredset}}{\set{i}}}
    \end{align*}
    \vspace{-0.5cm}
    \begin{align*}
       \strongestPostSemOf{\apredset}{\choiceOf{\asketch_1}{\asketch_2}} & \definingEquals 
       \setCond{\apredp_1 \join{}_d \apredp_2}{ \apredp_k \in \strongestPostSemOf{\set{\apred}}{\asketch_k} 
       \wedge 
       \apred \in \apredset
       \wedge k \in \set{1,2}
       }
    \end{align*}
    \begin{align*}
        \verificationConditionsFuncOf{
            \hoaretriplet{\apredset}{\code{com}}{\apredsetp}} 
             \definingEquals& \, \set{\angelicAssertionStrongerOf{\vcStrongestPostSemOf{\apredset}{\code{com}}}{\apredsetp}}
            &
            \verificationConditionsFuncOf{
            \hoaretriplet{\apredset}{\anonterm[\Gamma]}{\apredsetp}} 
             \definingEquals
            & 
            \, 
            \set{\angelicAssertionStrongerOf{\vcStrongestPostSemOf{\apredset}{\anonterm[\Gamma]}}{\apredsetp}} 
            \cup
            \checkAssertionFuncOfN{\apredset}{\anonterm}{\Gamma}
    \end{align*}
    \vspace{-0.5cm}
    \begin{align*}
        \verificationConditionsFuncOf{
            \hoaretriplet{\apredset}{\asketch_1\code{;}\asketch_2}{\angelicAssertionFont{S}}} 
             \definingEquals& \, \verificationConditionsFuncOf{            \hoaretriplet{\angelicAssertionFont{\apredset}}{\asketch_1}{\vcStrongestPostSemOf{\apredset}{\asketch_1}}} \cup
            \verificationConditionsFuncOf{\hoaretriplet{\vcStrongestPostSemOf{\apredset}{\asketch_1}}{\asketch_2}{\apredsetp}}
            \\
    \verificationConditionsFuncOf{
            \hoaretriplet{\apredset}{\kleeneof{\asketch}[\angelicAssertionFont{I}]}{\apredsetp}} 
             \definingEquals            
            & \,
            \set{\angelicAssertionStrongerOf{\vcStrongestPostSemOf{\apredset}{\kleeneof{\asketch}[\angelicAssertionFont{I}]}}{\angelicAssertionFont{S}}}
            \cup
            \checkAssertionFuncOfL{\angelicAssertionFont{I}}{\kleeneof{\asketch}}
            \\
        \verificationConditionsFuncOf{
            \hoaretriplet{\apredset}{\choiceOf{\asketch_1}{\asketch_2}}{\apredsetp}} 
             \definingEquals& \, 
            (
            \cup
            \setCond{
            \verificationConditionsFuncOf{            \hoaretriplet{\set{\apred}}{\asketch_k}{\vcStrongestPostSemOf{\set{\apred}}{\asketch_k}}}              
            }{
             \apred \in \apredset               
             \wedge
             k \in \set{1, 2}
            }
            )
            \cup
            \\ 
            & \, 
            \set{\angelicAssertionStrongerOf{\vcStrongestPostSemOf{\apredset}{\choiceOf{\asketch_1}{\asketch_2}}}{\apredsetp}}
    \end{align*}
    \begin{align*}
        \checkAssertionFuncOfN{\apredset}{\anonterm}{\Gamma}  \definingEquals
        & \,
        \cup
        \setCond{\verificationConditionsFuncOf{\hoaretriplet{\apredset}{\aprog}{\vcStrongestPostSemOf{\apredset}{\aprog}}}}
        {\min j \dv 
            \angelicAssertionStrongerOf{(\cup \setCond{\vcStrongestPostSemOf{\apredset}{\aprogp}}{\aprogp \in \oracleFuncOf{\anonterm}{j}})}{\Gamma(\apredset)}
            \wedge
            \aprog \in \oracleFuncOf{\anonterm}{j}
        }
    \end{align*}
    \vspace{-0.5cm}
    \begin{align*}
        \checkAssertionFuncOfL{\angelicAssertionFont{I}}{\kleeneof{\asketch}}  \definingEquals
        & \,
        \setCond{\verificationConditionsFuncOf{\hoaretriplet{\set{i}}{\asketch}{\set{i}}}}{i \in \angelicAssertionFont{I}}        
    \end{align*}
        \caption{Definitions of strongest post, verification conditions, and check annotation functions.}
        \label{def:vc}
\end{figure}

\looseness=-1
The functions $\verificationConditionsFunc{}$ and $\strongestPostSem{}$ are defined inductively over the structure of sketches.
The novelty is in the cases for demonic choices, non-terminals, and loops.
For demonic choices, recall that Rule~\ruleLabel{DEM} reasons over single predicates.
The strongest post thus iterates through the predicates in the given precondition. 
For each $\apred\in\apredset$ , it computes the strongest post for both branches.
If we have $\apredp_1\in\strongestPostSemOf{\set{\apred}}{\asketch_1}$, then we can guarantee the predicates $\apredp_1$ when running $\asketch_1$ on $\apred$.
The same holds for $\apredp_2\in \strongestPostSemOf{\set{\apred}}{\asketch_2}$.
But as the branch will be chosen demonically, we can only enforce the least upper bound of $\apredp_1$ and $\apredp_2$.
However, we are free to combine all $\apredp_1$ with all $\apredp_2$. 
This explains the definition. 
The verification conditions check each branch individually, again with a singleton precondition.
Moreover, they check whether the strongest post is sufficient to prove the postcondition.

For non-terminals, the strongest post is given by the annotated transformer. 
The verification conditions check whether the strongest post entails the postcondition.
Moreover, they check the soundness of the annotation with the function $\checkAssertionFuncOfN{\apredset}{\anonterm}{\Gamma}$. 
We elaborate on it in a moment.
For loops, we rely on an annotated invariant. 
The strongest post keeps only those predicates from the invariant that follow from the precondition. 
The verification conditions again check whether the strongest post entails the postcondition.
Moreover, they check whether the invariant annotation is sound via $\checkAssertionFuncOfL{\angelicAssertionFont{I}}{\kleeneof{\asketch}}$.

Soundness of a loop invariant is easy to check. 
Since Rule \ruleLabel{LOOP} can only enter a loop with a singleton, we check each predicate in $\angelicAssertionFont{I}$ for being an invariant. 
The soundness of transformers is more difficult to check. 
Function $\checkAssertionFuncOfN{\apredset}{\anonterm}{\Gamma}$ tries to derive a set of programs from $\anonterm$ which justifies the postcondition $\Gamma(\apredset)$.
To this end, it repeatedly invokes an oracle $\oracleFunc$ which returns programs from $\angelConcFuncOf{N}$.
The function $\oracleFuncOf{\anonterm}{j}$ returns the first $j$ programs the oracle produces for the non-terminal $\anonterm$.
If the set of programs that have been returned so far is able to produce the postcondition, 
 $\checkAssertionFuncOfN{\apredset}{\anonterm}{\Gamma}$ collects the corresponding verification conditions and terminates. 
The only guarantee we need about the oracle is that every program will eventually be returned.
If the transformer is sound, we will eventually obtain a large enough set of programs. 
Otherwise, the verification condition generation will not terminate.
A simple implementation for an oracle is to return the programs that can be obtained by unrolling each non-terminal at most $k$ times, for larger and larger $k$.

\subsection{Example: Checking Realizability of the Factorial Function}
We illustrate the verification condition computation on an example.
Consider the sketch depicted in \Cref{code:exampleVCFactorial}.
We check whether we can derive from it a program that computes the factorial of~42:  
the precondition of interest is  $\apredset=\set{\xvar=42 \wedge \yvar = 1}$ and the postcondition is $\apredsetp=\set{\yvar = 42!}$. 
The sketch is already annotated by the results of the strongest post in grey, the loop is annotated by the invariant $I=\set{\yvar * \xvar! = 42! \wedge \xvar \geq 0}$, and the non-terminal $\code{\anonterm ::= y=y+x} \mid \code{y=y*x}$ is annotated by the  transformer $\Gamma$ with
$\Gamma(\apredset) = \strongestPostSemOf{\apredset}{\code{y=y+x}} \cup \strongestPostSemOf{\apredset}{\code{y=y*x}}$. 
We refer to the sketch as $\asketch$, the loop body as $\code{body}$, and the commands by their line number, so $\code{x-{}-}$ is $\code{17}$.
We abbreviate the non-terminal including Lines~\ref{line:fac:nontermStart} to \ref{line:fac:nontermEnd} by $\anonterm$.

We explain the computation of the verification conditions, which returns
\begin{align}
  \verificationConditionsFuncOf{\hoaretriplet{\apredset}{\asketch}{\apredsetp}} =
\{
  &\angelicAssertionStrongerOf{\vcStrongestPostSemOf{I}{\code{22}}}{\apredsetp},
    &\angelicAssertionStrongerOf{I}{I} \ \  (I =  \vcStrongestPostSemOf{\apredset}{\code{body}^*[I]}), \label{eq:vcExample2}
    \\
    &\angelicAssertionStrongerOf{\vcStrongestPostSemOf{\set{i}}{\code{4;\anonterm;17}}}{\set{i}},
    &\angelicAssertionStrongerOf{\vcStrongestPostSemOf{\set{i}}{\code{4}}}{\vcStrongestPostSemOf{\set{i}}{\code{4}}}, \label{eq:vcExample4}
    \\
    &\angelicAssertionStrongerOf{\vcStrongestPostSemOf{\set{i}}{\code{4;\anonterm}}}{\vcStrongestPostSemOf{\set{i}}{\code{4;\anonterm}}},
    &\angelicAssertionStrongerOf{\vcStrongestPostSemOf{\set{i}}{\code{4;8}}}{\vcStrongestPostSemOf{\set{i}}{\code{4;8}}}, \label{eq:vcExample6}
    \\
    &\angelicAssertionStrongerOf{\vcStrongestPostSemOf{\set{i}}{\code{4;12}}}{\vcStrongestPostSemOf{\set{i}}{\code{4;12}}}\label{eq:vcExample7}
\}
\ .
\end{align}
Executing $\verificationConditionsFuncOf{\hoaretriplet{\apredset}{\asketch}{\apredsetp}}$ yields two recursive calls: 
\begin{equation*}
  \verificationConditionsFuncOf{\hoaretriplet{\apredset}{\asketch}{\apredsetp}} = 
  \verificationConditionsFuncOf{\hoaretriplet{\apredset}{\kleeneof{\code{body}}[I]}{\vcStrongestPostSemOf{\apredset}{\kleeneof{\code{body}}[I]}}}
\cup
  \verificationConditionsFuncOf{\hoaretriplet{\vcStrongestPostSemOf{\apredset}{\kleeneof{\code{body}}[I]}}{\code{22}}{\apredsetp}}
\ .
\end{equation*}
To calculate the strongest post of the loop, $\apredset$ and each individual element of the loop invariant have to be compared.
Since $I=\set{i}$ is a singleton and $\angelicAssertionStrongerOf{\apredset}{\set{i}}$, we have that $I$ itself is the strongest post of the loop.
Therefore, the second recursive call boils down to checking 
$\angelicAssertionStrongerOf{\vcStrongestPostSemOf{I}{\code{assume x = 0}}}{\apredsetp}$, Inequality~\eqref{eq:vcExample2}(left).
Next, we invoke
$\verificationConditionsFuncOf{\hoaretriplet{\apredset}{\kleeneof{\code{body}}[I]}{\vcStrongestPostSemOf{\apredset}{\code{body}^*[I]}}}$.
We already argued that the strongest post of the loop is the loop invariant. 
Thus, $\angelicAssertionStrongerOf{I}{I}$ is added as Inequality~\eqref{eq:vcExample2}(right). 
Checking the loop invariant annotation with $\checkAssertionFuncOfL{I}{\kleeneof{\code{body}}}$
returns $\verificationConditionsFuncOf{\hoaretriplet{\set{i}}{\code{body}}{\set{i}}}$.
This, in turn, yields two calls to the verification conditions function:
\begin{equation*}
  \verificationConditionsFuncOf{\hoaretriplet{\set{i}}{\code{body}}{\set{i}}} = 
    \verificationConditionsFuncOf{\hoaretriplet{i}{\code{4;\anonterm}}{\vcStrongestPostSemOf{\set{i}}{\code{4;\anonterm}}}}
    \cup
    \verificationConditionsFuncOf{\hoaretriplet{\vcStrongestPostSemOf{\set{i}}{\code{4;\anonterm}}}{\code{17}}{i}}
    \ .
\end{equation*}
For the second recursive call, the base case of function $\verificationConditionsFuncOf{-}$ applies and adds $
\angelicAssertionStrongerOf{\vcStrongestPostSemOf{\set{i}}{\code{4;\anonterm;17}}}{\set{i}}$ to the output, Inequality~\eqref{eq:vcExample4}(left). 
The first recursive call yields
\begin{equation*}
  \verificationConditionsFuncOf{\hoaretriplet{i}{\code{4;\anonterm}}{\vcStrongestPostSemOf{\set{i}}{\code{4;\anonterm}}}} = 
    \verificationConditionsFuncOf{\hoaretriplet{i}{\code{4}}{\vcStrongestPostSemOf{\set{i}}{\code{4}}}}
    \cup
    \verificationConditionsFuncOf{\hoaretriplet{\vcStrongestPostSemOf{\set{i}}{\code{4}}}{\code{\anonterm}}{\vcStrongestPostSemOf{\set{i}}{\code{4;\anonterm}}}}
    \ .
\end{equation*}
The first of these calls adds $\angelicAssertionStrongerOf{\vcStrongestPostSemOf{\set{i}}{\code{4}}}{\vcStrongestPostSemOf{\set{i}}{\code{4}}}$,  Inequality~\eqref{eq:vcExample4}(right). 
The second recursive call is for the non-terminal.
Inequality $\angelicAssertionStrongerOf{\vcStrongestPostSemOf{\set{i}}{\code{4;\anonterm}}}{\vcStrongestPostSemOf{\set{i}}{\code{4;\anonterm}}}$ checks entailment of the postcondition, Inequality~\eqref{eq:vcExample6}(left).
For soundness of the annotation, $\checkAssertionFuncOfN{\vcStrongestPostSemOf{\set{i}}{\code{4}}}{\anonterm}{\Gamma}$ is called.
For $j = 2$,  the oracle proposes the programs $\code{y=y+x}$ and $\code{y=y*x}$.
See that by our definition of the selection transformer $\Gamma$,
the inequality
$\angelicAssertionStrongerOf{\strongestPostSemOf{\apredset}{\code{y=y+x}} \cup \strongestPostSemOf{\apredset}{\code{y=y*x}}}{\Gamma(\apredset)}$ holds.
Thus, the function $\checkAssertionFunc$ outputs
two more inequalities (after resolving the recursive calls to the base case of $\verificationConditionsFunc{}$):
the inequality $\angelicAssertionStrongerOf{\vcStrongestPostSemOf{\set{i}}{4;8}}{\vcStrongestPostSemOf{\set{i}}{4;8}}$
considers the first production resolving the nonterminal
and
the inequality 
$\angelicAssertionStrongerOf{\vcStrongestPostSemOf{\set{i}}{4;12}}{\vcStrongestPostSemOf{\set{i}}{4;12}}$
considers the second production.
Both inequalities are added to the output, Inequality~\eqref{eq:vcExample6}(right) and Inequality~\eqref{eq:vcExample7}.

All inequalities of 
$\verificationConditionsFuncOf{\hoaretriplet{\apredset}{\asketch}{\apredsetp}}$ hold and so $\angelicHoareTripletHoldsSemantically{\apredset}{\asketch}{\apredsetp}$ by \Cref{th:vcSoundness}.
This means, the non-terminal can be resolved such that the resulting program will compute the factorial of $42$. 
We present next an algorithm to automatically choose the correct branch.

\begin{figure}
\begin{lstlisting}[multicols=2]
$\inHoareParenthesis{\xvar = 42 \wedge \yvar = 1}$
(
  $\inHoareParenthesis{\yvar * \xvar! = 42! \wedge \xvar \geq 0}$
  assume(x > 0);
  $\inHoareParenthesis{\yvar * \xvar! = 42! \wedge \xvar > 0}$
  N$\color{black}[\Gamma]$( |\label{line:fac:nontermStart}|
    $\inHoareParenthesis{\yvar * \xvar! = 42! \wedge \xvar > 0}$
    y = y + x;
    $\inHoareParenthesis{(\yvar - \xvar) * \xvar! = 42! \wedge \xvar > 0}$
    $\color{black}\angelicChoice{}$ 
    $\inHoareParenthesis{\yvar * \xvar! = 42! \wedge \xvar > 0}$
    y = y * x; 
    $\inHoareParenthesis{(\yvar / \xvar) * \xvar! = 42! \wedge \xvar > 0}$
  )|\label{line:fac:nontermEnd}|
  $\leftHoareParenthesis{}(\yvar - \xvar) * \xvar! = 42! \wedge \xvar > 0,$
    $(\yvar / \xvar) * \xvar! = 42! \wedge \xvar > 0\rightHoareParenthesis{}$
  x--; |\label{line:fac:xDec}|
  $\leftHoareParenthesis{}(\yvar - (\xvar + 1)) * (\xvar + 1)! = 42! \wedge \xvar + 1 > 0,$
    $(\yvar / (\xvar + 1)) * (\xvar + 1)! = 42! \wedge \xvar + 1 > 0\rightHoareParenthesis{}$
)$\color{black}^*[\set{\yvar * \xvar! = 42! \wedge \xvar \geq 0}]$
$\inHoareParenthesis{\yvar * \xvar! = 42! \wedge \xvar \geq 0}$
assume(x = 0);
$\inHoareParenthesis{\yvar * \xvar! = 42! \wedge \xvar = 0}$
\end{lstlisting}
\vspace{-2mm}
    \caption{Proof outline of a sketch for computing the factorial function.}
    \label{code:exampleVCFactorial}
\end{figure}
\newcommand{\setofineq}{\mathit{VC}}
\newcommand{\synthconds}{\mathit{SynthConds}}
\newcommand{\syn}{\mathit{syn}}
\newcommand{\synof}[2]{\syn(#1, #2)}
\newcommand{\myin}{\;\mathop{\texttt{\textcolor{blue}{in}}}\;}
\newcommand{\mylet}{\;\mathop{\texttt{\textcolor{blue}{let}}}\;}
\newcommand{\myand}{\;\mathop{\texttt{\textcolor{blue}{and}}}\;}
\newcommand{\myfor}{\;\mathop{\texttt{\textcolor{blue}{for}}}\;}
\newcommand{\ite}[3]{#1\; \mathop{\texttt{\bfseries \textcolor{blue}?}}\; #2 \;\mathop{\texttt{\bfseries \textcolor{blue}:}}\; #3}

\section{Implementing Realization Logic}\label{ch:adAlgo}
We give an algorithm to compute a program from a proof outline in realizability logic. 
It is a deductive verification that collects a set of verification conditions whose validity shows that the program is a solution to the synthesis task. 
What is unconventional is that the validity has to be checked in the course of the verification condition computation. 
The reason is that the validity checks steer the program construction, and also the verification conditions that will be collected in the future. 
Despite these dynamics, we can show that a small number of verification condition checks will be sufficient to derive the program.
Moreover, the verification conditions compare ordinary predicates (rather than selections in the previous section), and therefore can be implemented as single SMT solver queries.
All this makes our algorithm efficient in practice.

Our algorithm implements the proof rules in realization logic, more precisely the non-backtracking strategy stated in \Cref{th:angelicChoiceElimination}. 
Given a proof outline and a predicate in the postcondition, it traverses the proof backwards to determine suitable preconditions and programs that justify the postcondition.
Our algorithm is thus a function of type $\syn:\angelProgramProofLang{}\times \demonicAssertions{}\ \rightarrow\ \demonicAssertions{}\times\demonicProgramLang{}$.  
It is defined in \Cref{def:add} and we discuss it in a moment. 
An important feature of the function is to discard predicates from selections, in which case it may return $(\fail{}, -)$ and an arbitrary program.

The function gives strong guarantees: when we start from a sketch that has been derived in realizability logic and a predicate in the postcondition, then the function is guaranteed not to fail but return a precondition and a program that, together with the given postcondition, will form a valid Hoare triple. 
It is also very efficient: the number of verification conditions that have to be checked is linear, actually bounded by the size of the proof outline.
\begin{theorem}[$\syn$-Sound-And-Complete]\label{Theorem:syn}
Consider $\vdash_{a}\hoaretriplet{\apredset}{\apo}{\apredsetp}$ and $\apredp\in\apredsetp$ with $\apredp\neq\fail{}$. 
Then $\synof{\hoaretriplet{\apredset}{\apo}{\apredsetp}}{\apredp}=(\apred, \aprog)$ 
with $\apred\in\apredset$, $\apred\neq\fail{}$, $\aprog\in\angelConcFuncOf{\proofToProgFuncOf{\apo}}$, and $\demonicHoareTripletHoldsSemantically{\apred}{\aprog}{\apredp}$. 
The number of SMT solver calls is at most $\sizeof{\apo}$. 
\end{theorem}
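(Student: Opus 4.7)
The plan is to proceed by structural induction on the proof outline $\apo$, following the case analysis of the definition of $\syn$. The semantic backbone of the argument is \Cref{th:angelicChoiceElimination} (backtracking freedom): from a derivable proof outline $\hoaretriplet{\apredset}{\apo}{\apredsetp}$ and any $\apredp \in \apredsetp$ with $\apredp \neq \fail{}$, there already exist a precondition witness $\apred \in \apredset$ and a program $\aprog \in \angelConcFuncOf{\proofToProgFuncOf{\apo}}$ together with a rewrite $\angelicProofRewriteHolds{\hoaretriplet{\apredset}{\apo}{\apredsetp}}{\hoaretriplet{\set{\apred}}{\apop}{\set{\apredp}}}$ with $\aprog = \proofToProgFuncOf{\apop}$. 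Combined with soundness of realization logic (\Cref{th:decisionCalculusSoundenss}) and of realizability logic (\Cref{th:angelicHoareSoundnessCompleteness}), this yields $\demonicHoareTripletHoldsSemantically{\apred}{\aprog}{\apredp}$. The correctness part of the theorem thus reduces to showing that $\syn$ implements this existence result constructively, one syntactic case at a time.

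For each case of $\apo$, I would verify two invariants: first, that every recursive call $\syn$ performs is on a strict sub-outline paired with a postcondition predicate that still lies in the corresponding intermediate selection (so the induction hypothesis applies); and second, that the local decision made by $\syn$ is witnessed by a single inequality $\demonicAssertionStrongerOf{\apred'}{\apredp'}$ over ordinary predicates, which is discharged by one SMT query. In the base case $\hoaretriplet{\apredset}{\code{com}}{\apredsetp}$, the algorithm searches $\apredset$ for $\apred$ with $\demonicAssertionStrongerOf{\progSemFuncOf{\code{com}}{\apred}}{\apredp}$; such a predicate must exist because the triple was derived via \ruleLabelSmall{COM} and \ruleLabelSmall{GATHER}. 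Sequential composition $\apo_1;\apo_2$ proceeds right-to-left: a recursive call on $\apo_2$ produces an intermediate predicate $\apredpp$, then a recursive call on $\apo_1$ targets $\apredpp$. For $\choiceOf{\apo_1}{\apo_2}$, Rule \ruleLabelSmall{DEM} forces a singleton precondition that must serve both branches; non-terminals and angelic choice commit to one right-hand side, whose existence is again guaranteed by backtracking freedom. The SMT budget then follows from a simple charging argument: each non-recursive step spends at most one query, and recursive calls strictly shrink the sub-outline, so the total is bounded by $\sizeof{\apo}$.

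The main obstacle is the entanglement of synthesis decisions and verification along the traversal: the predicate or production chosen in one sub-outline constrains the postcondition predicate handed to its sibling, so the correctness argument cannot be factored as ``first decide, then verify.'' Backtracking freedom is precisely what rescues the single-pass, no-retry strategy --- once an inequality check succeeds and $\syn$ commits to a predicate, \Cref{th:angelicChoiceElimination} (applied to the residual sub-outline) guarantees the remainder of the synthesis can still be completed without revisiting that choice. Making this hereditary invariant precise --- essentially, that after each local commitment the sub-problem handed to the recursive call is again a derivable realizability triple whose postcondition contains the predicate demanded --- is the technical heart of the proof. The linear SMT bound drops out once this invariant is in place, because each structural node of $\apo$ contributes at most one solver call before being discarded.
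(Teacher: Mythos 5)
Your inductive skeleton --- structural induction on $\apo$ mirroring the case analysis in the definition of $\syn$, with the hereditary invariant that every recursive call receives a derivable sub-outline whose postcondition contains the demanded predicate --- is exactly how the paper proves \Cref{Theorem:syn}. The main divergence is your advertised backbone: the paper never invokes backtracking freedom (\Cref{th:angelicChoiceElimination}). Instead, in each case it works directly from validity of the constituent triples of the outline (every sub-outline of a derivable outline is itself derivable, with matching intermediary selections), and in the \ruleLabel{DEM} and \ruleLabel{RLOOP}-shaped cases it uses completeness of realizability logic to argue that the singleton-precondition outlines $\mathit{outl}(\hoaretriplet{\set{\apred}}{\proofToProgFuncOf{\apo_i}}{\set{\apredp}})$ exist for the right $\apred$, and that they are byproducts of the original proof construction that can simply be looked up. Your detour through \Cref{th:angelicChoiceElimination} is sound (it is proved independently of $\syn$) but it only yields existence of \emph{some} witness rewrite; it does not certify the greedy commitments $\syn$ actually makes, so, as you concede, the per-case invariant still carries all the weight --- at which point the appeal to backtracking freedom buys you little beyond intuition, and you might as well argue from validity of the sub-triples as the paper does.

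The one place your sketch would not survive being made precise is the complexity accounting. The claim that ``each structural node of $\apo$ contributes at most one solver call'' is false as stated: at a command node $\syn$ may test up to $\sizeof{\apredset}$ predicates before finding one with $\demonicAssertionStrongerOf{\progSemFuncOf{\code{com}}{\apred}}{\apredp}$, and at a loop node it may test up to $\sizeof{I}$ invariant candidates; the bound $\sizeof{\apo}$ only works because the selections annotating the outline count toward $\sizeof{\apo}$, so these searches are charged against the assertions rather than against the node. Moreover, your charging argument silently assumes the $\mathit{outl}(\cdot)$ calls in the demonic-choice and loop cases are free; if they were recomputed (e.g.\ via the completeness construction) they would incur further solver calls and break the linear bound. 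The paper makes this explicit: those singleton-precondition outlines were already built when the original proof outline was constructed (Rules \ruleLabel{DEM} and \ruleLabel{LOOP} only admit singleton preconditions), and the implementation stores them so that $\mathit{outl}$ is a lookup. You should either adopt that storage argument or state it as an assumption; without it the ``at most $\sizeof{\apo}$ SMT calls'' part of the theorem is not established by your proposal.
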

\looseness=-1
Function $\syn$ is given in \Cref{def:add}. 
If several cases apply, the topmost one will be taken.
This means the invocation will always return $(\fail{}, -)$ on an empty precondition, and the $\fail{}$ predicate in a precondition will always be skipped. 
When we have a command, we go through the predicates $\apred$ in the precondition 
until we find one that justifies the given postcondition $\apredp$. 
Since $\apredp\in\apredsetp$ and we started from a proof outline in realizability logic, the search for $\apred$ is guaranteed to be successful.
Moreover, for the guarantees given by \Cref{Theorem:syn} it does not matter which predicate $\apred$ we take. 
The notion of soundness in realizability logic gives the guarantee that the earlier (in the program text) assertions can handle every $\apred$. 
But of course the choice has an influence on the shape of the program. 

For a sequential composition $\concatof{\apo_1}{\apo_2}$, we take the given predicates $\apredp$, propagate it through $\apo_2$ to obtain $(\apredpp, \aprog_2)$, propagate $\apredpp$ through $\apo_1$ to get $(\apred, \aprog_1)$, and return $(\apred, \concatof{\aprog_1}{\aprog_2})$. 
Importantly, also here we have the guarantee that both calls will be successful. 
When we have a non-terminal, we consult the proof outlines for the right-hand sides.
For an angelic choice among two right-hand sides, we first try to synthesize a program from the left proof outline, and if we fail we know that we will be successful on the right.
Whether the synthesis with the left proof outline will succeed can be foreseen by checking if the target post condition $\apredp$ is in the post condition of the left proof outline.

The involved cases are for demonic choices and loops. 
These are the cases in which the program logics had to consider single predicates. 
This is mimicked here, and we first discuss the demonic choice.  
We go through all predicates $\apred$ in the given precondition.
We try to construct proof outlines for both branches in which $\apred$ is the precondition and $\apredp$ is the postcondition.
This is the task of the calls $\mathit{outl}(\hoaretriplet{\set{\apred}}{\proofToProgFuncOf{\apo_1}}{\set{\apredp}})$ and $\mathit{outl}(\hoaretriplet{\set{\apred}}{\proofToProgFuncOf{\apo_2}}{\set{\apredp}})$. 
We argue in a moment why these calls can actually be looked up in the given proof outline, and therefore mean no effort. 
If one of these calls aborts, we continue with the next $\apred$.
Otherwise, we obtain the proof outlines $\apo_1'$ and $\apo_2'$. 
We invoke $\synof{\apo_1'}{\apredp}$ and $\synof{\apo_2'}{\apredp}$. 
They will for sure return $\apred$ as it is the only precondition, but we need the programs to be able to return $(\apred, \choiceOf{\aprog_1}{\aprog_2})$. 

For loops, we have to make an assumption on the form of the proof outline: the loop invariant should not be lost through weakening. 
If assertions have to be weakened, we assume the proof outline takes the form $\hoaretriplet{\apredset}{\hoaretriplet{I}{\kleeneof{\apo}}{I}}{\apredsetp}$. 
Like in the previous case, we go through the predicates $i\in I$ to find one that is stronger than the given postcondition $\apredp$.
If it has been found, it remains to synthesize the program for the loop body. 
We reconstruct the proof outline with $i$ as the pre and postcondition, and rely on $\syn$ to determine the program.
If the proof outline takes the form 
$\hoaretriplet{\apredset}{\hoaretriplet{I}{\kleeneof{\apo}}{I}}{\apredsetp}$,
instead of returning an $i$ of $I$, we return an $\apred$ of $\apredset$ with $\demonicAssertionStrongerOf{\apred}{i}$.

We elaborate on why the function calls $\mathit{outl}(\hoaretriplet{\set{\apred}}{\proofToProgFuncOf{\apo}}{\set{\apredp}})$ mean no overhead. 
The point is that the Rules \ruleLabel{DEM} and \ruleLabel{LOOP} can only deal with single predicates in the precondition. 
This means when we constructed the proof outline of interest, 
we have constructed the proof outlines that we now need as a byproduct.
We just have to store them explicitly to be able to look them up now.  
This explains the case of loops.

In the case of demonic choices, the proof outline construction may abort, and we explain how to handle this with hashing. 
Since Rule \ruleLabel{DEM} requires a singleton as the precondition, we must have done proofs
of the form
$\hoaretriplet{\set{\apred_i}}{\choiceOf{\proofToProgFuncOf{\apop_1}}{\proofToProgFuncOf{\apop_2}}}{\apredsetp_i}$.
From the requirements of Rule \ruleLabel{DEM}, we also have the proof outlines
$\hoaretriplet{\set{\apred_i}}{\apop_j}{\apredsetp_i}$ for $j = 1,2$.
The proofs might have been gathered to form the proof outline
$\hoaretriplet{\apredset}{\choiceOf{\apo_1}{\apo_2}}{\apredsetp}$.
Since the predicate $\apredp$ is in $\apredsetp$, there must be a selection $\apredsetp_i$ with $\apredp \in \apredsetp_i$.
We can reuse the corresponding proof outlines
$\hoaretriplet{\set{\apred_i}}{\apop_j}{\apredsetp_i}$
by weakening the postcondition $\apredsetp_i$ to the singleton $\set{\apredp}$.

\begin{figure}
    \footnotesize
    \begin{align*}
        \synof{\hoaretriplet{\emptyset}{\apo}{\apredsetp}}{\apredp}\;\definingEquals\;\;&(\fail{}, -)\\
        \synof{\hoaretriplet{\set{\fail{}}\cup\apredset}{\apo}{\apredsetp}}{\apredp}\;\definingEquals\;\;&\synof{\hoaretriplet{\apredset}{\apo}{\apredsetp}}{\apredp}\\
        \synof{\hoaretriplet{\set{\apred}\cup \apredset}{\code{com}}{\apredsetp}}{\apredp} 
        \;\definingEquals\;\; &  
        \ite{\demonicAssertionStrongerOf{\progSemFuncOrigOf{\code{com}}{\apred}}{\apredp}}
        {(\apred, \code{com})}
        {\synof{\hoaretriplet{\apredset}{\code{com}}{\apredsetp}}{\apredp}}\\
        \synof{\concatof{\apo_1}{\apo_2}}{\apredp} 
        \;\definingEquals\;\; &  \mylet\ (\apredpp, \aprog_2) = \synof{\apo_2}{\apredp}\myand (\apred, \aprog_1) = \synof{\apo_1}{\apredpp}\myin (\apred, \concatof{\aprog_1}{\aprog_2}) 
        \\
         \synof{\anonterm(\apo)}{\apredp} 
        \;\definingEquals\;\; & \synof{\apo}{\apredp} 
        \\
        \synof{\apo_1\bnf\apo_2}{\apredp} 
        \;\definingEquals\;\; & \mylet (\apred, \aprog)\in \synof{\apo_1}{\apredp}\myin 
        \ite{\apred\neq \fail{}}{(\apred, \aprog)}{\synof{\apo_2}{\apredp}}
        \\
    \end{align*}
\vspace{-0.9cm}
    \begin{align*}
            \synof{\choiceOf{\hoaretriplet{\set{\apred}\cup \apredset}{\apo_1}{\apredsetp}}{\hoaretriplet{\set{\apred}\cup\apredset}}{\apo_2}{\apredsetp}}{\apredp} 
        \;\definingEquals\;\; & \ite{\mathit{outl}(\hoaretriplet{\set{\apred}}{\proofToProgFuncOf{\apo_i}}{\set{\apredp}})=\mathsf{abort}}{\synof{\choiceOf{\hoaretriplet{\apredset}{\apo_1}{\apredsetp}}{\hoaretriplet{\apredset}}{\apo_2}{\apredsetp}}{\apredp}}{}\\
        &\hspace{-3cm}\mylet\ \apo_i' = \mathit{outl}(\hoaretriplet{\set{\apred}}{\proofToProgFuncOf{\apo_i}}{\set{\apredp}})
\myand (\apred, \aprog_i) = \synof{\apo_i'}{\apredp} \myin\ (\apred, \choiceOf{\aprog_1}{\aprog_2})\\
        \synof{\hoaretriplet{\set{i}\cup I}{\kleeneof{\apo}}{\set{i} \cup I}}{\apredp} 
        \;\definingEquals\;\; &  \ite{i\not \demonicAssertionStronger{}\apredp}{\synof{\hoaretriplet{I}{\kleeneof{\apo}}{I}}{\apredp}}{}\\
        &\hspace{-3cm}\mylet\ \apo' = \mathit{outl}(\hoaretriplet{\set{i}}{\proofToProgFuncOf{\apo}}{\set{i}})\myand (i, \aprog) = \synof{\apo'}{i}\myin(i, \kleeneof{\aprog})
    \end{align*}
    \caption{Definition of the synthesis function.}
    \label{def:add}
\end{figure}

\subsection{Example: Synthesizing Factorial Function}
\looseness=-1
Recall the sketch $\code{sketch}$ from last section's example, \Cref{code:exampleVCFactorial}.
In the last section, we have proven $\angelicHoareTripletHoldsSemantically{\apredset}{\asketch}{\apredsetp}$ with $\apredset = \set{\xvar = 42 \wedge \yvar = 1}$ and $\apredsetp = \set{\yvar = 42!}$.
In this example, we will concretize the sketch to a program $\aprog$ such that $\demonicHoareTripletHoldsSemantically{\apred}{\aprog}{\apredp}$ holds.
To do so, we use the function $\syn$.
The whole proof outline is called $\apo$.
We use the same  abbreviations as in the last example.
Additionally, with $\code{T8}$ we abbreviate the full realizability triple of Line $8$.
That means $\code{T8}$ stands for Lines $7$ through $9$.
Other realizability triples are abbreviated similarly.
The realizability triple around the nonterminal is abbreviated by $\code{TN}$, i.e.\ $\code{TN}$ stands for Lines $5$ to $16$.
We abbreviate the selection of Lines $15$ and $16$ with $A_{15}$.
Other selections are abbreviated similarly.
Let $\set{\apred} = \apredset$ and $\set{\apredp} = \apredsetp$.
We start by calling 
$\synof{\hoaretriplet{\apredset}{\asketch}{\apredsetp}}{\apredp}$.
More specifically, the call is 
$\synof{\hoaretriplet{\apredset}{\kleeneof{\code{body}}}{I}\code{;T22}}{\apredp}$.
The selection $I$ is the singleton $\set{i}$.
The function invokes the call $\synof{\code{T22}}{\apredp}$ which returns $(i, \code{assume(x = 0)})$.
Next, it invokes $\synof{\hoaretriplet{\apredset}{\kleeneof{\code{body}}}{I}}{i}$ which, in turn, invokes $\synof{\hoaretriplet{\set{i}}{\code{body}}{\set{i}}}{i}$ because $\demonicAssertionStrongerOf{i}{i}$ holds trivially.
We can reuse the already done proof $\hoaretriplet{\set{i}}{\code{body}}{\set{i}}$ as $\apop$.
The call $\synof{\hoaretriplet{\set{i}}{\code{body}}{\set{i}}}{i}$
invokes $\synof{\hoaretriplet{A_{15}}{\code{x-{}-}}{\set{i}}}{i}$.
The call $\synof{\hoaretriplet{\set{A_{15}[0]} \cup \set{A_{15}[1]}}{\code{x-{}-}}{\set{i}}}{i}$ will invoke 
$\synof{\hoaretriplet{\set{A_{15}[1]}}{\code{x-{}-}}{\set{i}}}{i}$
because the result of $\progSemFuncOf{\code{x-{}-}}{A_{15}[0]}$ is not more precise than $i$.
However, the call 
$\synof{\hoaretriplet{\set{A_{15}[1]}}{\code{x-{}-}}{\set{i}}}{i}$
returns $(A_{15}[1], \code{x-{}-})$.
Here, we witness a foreshadowing on how the wrong angelic choice will be eliminated.
The eliminated predicate stems from the left production of the nonterminal which will not be able to produce the now selected predicate $A_{15}[1]$.
We continue the next recursive call of $\synof{\hoaretriplet{\set{i}}{\code{body}}{\set{i}}}{i}$, i.e. 
$\synof{\code{T4;TN}}{A_{15}[1]}$.
It invokes $\synof{\code{TN}}{A_{15}[1]}$
This call first checks whether the left branch can produce the target predicate.
This is not the case, so 
eventually $(\fail{}, -)$ is returned, eliminating the left branch.
The right branch is now considered, where $\synof{\code{T12}}{A_{15}[1]}$ resolves to $(A_5[0], \code{y = y*x})$.
Therefore, the next call is 
$\synof{\code{T4}}{A_{5}[0]}$ which resolves to the pair $(i, \code{assume (x>0)})$.
Getting back to the recursive call of the loop, $\synof{\hoaretriplet{\apredset}{\kleeneof{\code{body}}}{I}}{i}$
yields the pair $(\apred, \kleeneof{(\code{4;12;17})})$
because $\demonicAssertionStrongerOf{\apred}{i}$
with $\set{\apred} = \apredset$.
In total, we get $(\apred, \aprog) = (\apred, \concatof{\kleeneof{(4;12;17)}}{\code{22}})$ as output.
We know that 
$\demonicHoareTripletHoldsSemantically{\apred}{\aprog}{\apredp}$
holds.
This means, the program indeed calculates $42$ factorial.
\section{Application: Memory Management in Lock-Free Data Structures}\label{ch:implementation}
We show how to employ realizability and realization logic to automatically generate code for the memory management in lock-free data structures. 
This will also be the setting for our experiments. 
There are several aspects that make this setting interesting for benchmarks: the states are type assignments and the semantics of commands is an abstract one. 
So \semgus\ \cite{dantoni:semgus} is an ideal choice, while \sygus\ \cite{sygus} and solver-aided languages\ \cite{sketch:phd,torlak:rosette:onward} need an encoding to capture the synthesis tasks.   
The programs contain loops, which is difficult for \sygus\ solvers.  
Finally, there are many synthesis options, and the right ones may be far apart. 
This creates a large space of potential solutions \cite{SearchBased18} that our compositional method manages to explore in a matter of seconds. 
\paragraph{Safe Memory Reclamation}
The asynchronous nature of lock-free data structures makes manual memory management difficult: 
if the threads do not synchronize, how does a thread know that it is safe to free a memory cell, or safe to access it? 
The solution is to add to the data structure a safe memory reclamation algorithm (SMR) which acts as a central instance that is informed about the intentions of all threads. 
If a thread wants to access a memory cell, it asks the reclamation algorithm to protect the cell.
If it wants to free a memory cell, it asks the SMR to do so. 
As the reclamation algorithm manages the protections, it can defer the free until it is safe. 
Garbage collection is an SMR, but there are more efficient solution.  
Epochs are a simple model~\cite{EBRPhd,HarrisList}, we consider here the hazard pointers that have been proposed for addition to the next C++ standard~\cite{HPProposal}. 
To be precise, we support the elaborate version in which the first hazard pointer is stronger than the second. 

Reclamation algorithms tend to implement lock-free data structures.
This means protection calls are non-atomic, and may be successful or fail due to interference. 
The reclamation algorithm will not be able to determine success.
Instead, the thread that issued the protection call will have to find out by checking global invariants:
if, for example, a sentinal node has not changed when returning from the call, then the call was successful.
Unfortunately, finding the right combination of protection calls and checks has turned out error-prone.

\looseness=-1
\Citet{POPL2019,POPL2020} gave a type system that checks memory safety. 
It can be instantiated to different reclamation algorithms and applies to a variety of data structures.
We report on experiments with the queues MS and DGLM, and the sets ORVYY, Michael, VechevCAS, and VechevDCAS.
The type system is control-flow sensitive, which means the typing can be written as a proof outline in Hoare logic.
A type check on a program $\aprog$ is successful, if the triple $\sebHoareTripletHolds{\neg \fail{}}{\aprog}{\neg \fail{}}$ can be derived: starting from no assumptions about the protection of pointers, the program will not fail. 
This means all pointers were protected before they were dereferenced.
\Cref{Section:BackgroundSMR} provides more background on \cite{POPL2019,POPL2020}.  
An important detail is that they use code annotations to inform the type system about protections that can otherwise otherwise only be inferred with a shape analysis. 
These annotations have to be discharged separately. 
We omit these checks in our experiments, and therefore our synthesis is only valid relative to the validity of the annotations.
\newcommand{\mytypes}{\mathbb{T}}
\newcommand{\atype}{\mathit{t}}
\paragraph{Instantiation}
We instantiate the parameters of our development. 
\Citet{POPL2019,POPL2020} track the protection of memory cells by assigning types from a set~$\mytypes$ to the pointers in the program $\vars{}$. 
Our states, predicates, and selections are thus 
$\states{} = \vars{} \rightarrow \mytypes$, $\demonicAssertions{} = \powersetOf{\vars{} \rightarrow \mytypes} \cup \set{\fail{}}$, and $\angelicAssertions{}=\powersetOf{\powersetOf{\vars{} \rightarrow \mytypes} \cup \set{\fail{}}}$. 

Our implementation needs an assertion language to represent predicates and selections. 
We use $\demonicAssertionsAbs{} = (\vars{} \rightarrow \mytypes) \cup \set{\fail{}}$ and 
$\angelicAssertionsAbs{}=\powersetOf{\demonicAssertionsAbs{}}$.  
The definition of the abstract predicates is due to \cite{POPL2019,POPL2020}. 
The first step is to apply a Cartesian abstraction on the predicates, which yields $\vars{}\rightarrow \powersetOf{\mytypes}$. 
A Cartesian predicate $\set{\xvar:\set{\atype_1, \atype_2}}\times \set{\yvar:\set{\atype}}$ represents the concrete predicate $\set{\xvar:\atype_1\wedge \yvar:\atype, \xvar:\atype_2\wedge\yvar:\atype}$. 
The second step is to exploit the fact that the types form a lattice, and represent 
$\xvar:\set{\atype_1, \atype_2}$ by $\xvar:\atype_1\sqcup \atype_2$. 
The details are in \Cref{Section:AbsInt}. 

The commands are the usual ones in \textsf{C}. 
The semantics is from~\cite{POPL2020} and defined in a way that works with the above abstraction.
\paragraph*{Synthesis}\label{sec:implementation:generalApproach} 
Given a lock-free data structure and a reclamation algorithm, our goal is to synthesize in the data structure calls to the reclamation algorithm and code annotations that, together, guarantee memory safety, more precisely, make the above type check go through. 

First, we enrich every line of code with a non-terminal from which calls to the SMR algorithm and invariant annotations may be generated. In the case of a single hazard pointer, for example, the \\[0.05cm]
    \begin{minipage}{0.46\textwidth}
\begin{lstlisting}
AC ::= skip; $\color{black}\angelicChoice{}$
atomic {\@inv active(v);} $\color{black}\angelicChoice{}$
(in:protect(v);re:protect(v);)
\end{lstlisting}
    \end{minipage}
    \begin{minipage}{0.54\textwidth}
the non-terminal has the definition to the left.   
For each variable $\code{v}$, we may issue a protection, add an annotation saying that the variable has not been freed, or simply skip the annotation. 
For global invariants,
    \end{minipage}

\vspace{0.1cm}\noindent  we use heuristics of where to put them, which improves the scalability.
The result of this phase is a sketch $\asketch$. 
Next, we try to prove $\angelicHoareTripletHolds{\neg\fail{}}{\asketch}{\neg\fail{}}$ in realizability logic. 
We use the algorithm from Section~\ref{ch:vc} and check the validity of the corresponding verification conditions. 
If successful, we also get a proof outline that we feed to realization logic, implemented in the synthesis algorithm from Section~\ref{ch:adAlgo}. 
The algorithm has the guarantee to be successful, and returns a program that corresponds to the original code with suitable annotations added. 
What is interesting is that we can control the synthesis function to avoid reclamation calls and code annotations whenever possible. 
We give details on the implementation and its behavior on experiments. 

\section{Evaluation}\label{sec:evaluation}
We implemented our synthesis algorithm in a \textsf{C++} program. 
As we do not yet have an assertion language for selections, 
our implementation targets the memory reclamation problem discussed above.
It can handle different reclamation algorithms and challenging data structures, though. 
We give details on the implementation and elaborate on the performance, which was surprisingly good.

\paragraph*{Implementation}
The most important technique we implemented to improve the scalability is an optimistic variant of our approach. 
The pessimistic variant is the one from \Cref{ch:implementation} that first validates the proof outline and then derives a program. 
The optimistic variants derives the program without having validated the proof outline. 
The point is that function $\syn$ checks the verification conditions it needs to construct the program anyhow, and therefore the output will be sound.  
The verification conditions for proof outlines have to compare selections, and each such comparison may create quadratically many solver calls.  

We moreover implemented the following application domain specific heuristics. 
We assume local pointers do not need to be protected. 
We only insert annotations $\code{@inv active(v)}$ for sentinel nodes. 
The point is that sentinel nodes will never be freed.
As discussed above, this would have to be confirmed by a check that we omit.
We automatically compute loop invariants from the predicates that are available when entering a loop. 
There is no saturation nor widening. 
Still, the approach worked for all data structures we examined. 
For resolving non-terminals, we prefer $\code{skip}$ over SMR calls and SMR calls over annotations.

\paragraph{Results}
We experimented with the seven lock-free data structures and two hazard pointer variants discussed in \Cref{ch:implementation}. 
The environment is an Apple M2 with 8GB RAM.  
We do not have a solver backend but hand the bitvectors we need within our implementation. 
The results are given in \Cref{table:resultsPessimistic}.
In some methods we used hints that can be deduced by a lightweight shape analysis.

\newcolumntype{a}{>{\columncolor{teal!9}}r}
\newcolumntype{s}{>{\columncolor{teal!9}}l}
\newcolumntype{t}{>{\columncolor{teal!9}}c}
\newcolumntype{u}{>{\columncolor{teal!4}}r}
\newcolumntype{v}{>{\columncolor{teal!4}}c}
\begin{table}
    \footnotesize
    \centering
    \caption{Results of experiments of the Pessimistic and Optimistic Approach conducted on an Apple M2.}
    \label{table:resultsPessimistic}
    \begin{tabular}{ccc|ccrr|rr}
        &                                  &                              & Max/Avg/Med & Max/Avg/Med
        & \multicolumn{2}{c|}{Pessimistic}& \multicolumn{2}{c}{Optimistic} \\ 
        &                                  &                              & $\lvert\apredset\rvert$ & $\demonicAssertionStronger{}$ per $\angelicAssertionStronger{}$ & \faStopwatch{} $\verificationConditionsFunc{}$ & \faStopwatch{} $\syn$ & \faStopwatch{} $\syn$ & \faStopwatch{} $\verificationConditionsFunc{}$ \\ 
        \hline
        \multirow{4}{*}{HP1}  & \multirow{2}{*}{Treiber's \cite{treibersStack}} & Push    & 6 / 1.57 / 1       & 12 / 1.85 / 1                                                   & $ < 0.1s$                                     & $ < 0.1s$                              & $ < 0.1s$                              & $ < 0.1s$                                                          \\ 
         & & Pop &  6 / 1.43 / 1 & 18 / 1.71 / 1 & $<0.1s$ & $<0.1s$ & $<0.1s$ & $<0.1s$ \\
         & MS \cite{msqueue} & EnQ & 5 / 1.27 / 1 & 12 / 1.40 / 1 & $<0.1s$ & $<0.1s$ & $<0.1s$ & $<0.1s$ \\
         & DGLM \cite{DGLMQueue} & EnQ & 17 / 1.3 / 1 & 64 / 1.9 / 1 & $0.7s$ & $<0.1s$ & $<0.1s$ & $<0.1s$ \\
         \hline
        \multirow{10}{*}{HP2} & MS \cite{msqueue} & DeQ & 90 / 1.6 / 1 & 1K / 2.92 / 1 & $17s$ & $0.6s$ & $0.6s$ & $<0.1s$ \\
         & DGLM \cite{DGLMQueue} & DeQ & 29 / 1.74 / 1 & 5K / 18.9 / 1 & $249s$ & $6.3s$ & $5.9s$ & $<0.1s$ \\
         & \multirow{2}{*}{ORVYY \cite{ORVYY}} & Add & 158 / 3.85 / 1 & 276 / 3.6 / 1 & $1.4s$ & $<0.1s$ & $<0.1s$ & $<0.1s$ \\
         &  & Rm & 32 / 1.29 / 1 & 276 / 1.96 / 1 & $0.6s$ & $0.1s$ & $0.1s$ & $<0.1s$ \\
         & \multirow{2}{*}{Michael \cite{MichaelSet}} & Add & 127 / 3.06 / 1 & 5K / 13.4 / 1 & $3.1s$ & $2.1s$ & $2.2s$ & $<0.1s$ \\
         &  & Rm & 127 / 3.15 / 1 & 5K / 13.7 / 1 & $1.5s$ & $2.1s$ & $2.1s$ & $<0.1s$ \\
         & \multirow{2}{*}{VechevCAS \cite{VechevDCAS}} & Add & 25 / 1.7 / 1 & 148 / 3.1 / 1 & $0.6s$ & $<0.1s$ & $<0.1s$ & $<0.1s$ \\
         &  & Rm & 21 / 1.7 / 1 & 178 / 3.1 / 1 & $0.1s$ & $<0.1s$ & $<0.1s$ & $<0.1s$ \\
         & \multirow{2}{*}{VechevDCAS \cite{VechevDCAS}} & Add & 45 / 2.64 / 1 & 341 / 6.68 / 1 & $0.4s$ & $<0.1s$ & $<0.1s$ & $<0.1s$ \\
         &  & Rm & 21 / 1.7 / 1 & 145 / 2.9 / 1 & $0.1s$ & $<0.1s$ & $<0.1s$ & $<0.1s$ \\
        \end{tabular}
       \end{table}

The first column shows the maximum, average, and median size of a selection.
The next shows the number of invocations of the comparison function on predicates per comparison of selection. 
For example, $\angelicAssertionStrongerOf{\set{a}}{\set{b_1, b_2, b_3}}$ may yield three comparisons $\demonicAssertionStrongerOf{a}{b_i}$. 
This only applies to the pessimistic approach.
The next two columns display the time for the functions $\verificationConditionsFunc{}$ and $\syn$ in the pessimistic approach. 
The last two columns are for the optimistic approach. 
It clearly outperforms the pessimistic approach, but both are quite fast. 

The key insight of our experiments is that there are two classes of synthesis options. 
The options that are handled well by our approach lead to similar predicates and thus small selections, or they fail early.
The options that are difficult for our approach lead to new predicates that fail late.
This can be seen in the benchmarks with two hazard pointers.
In the set implementations, the order among the hazard pointers matters, and if we try to synthesize a protection with the wrong order we fail early.
For DGLM, the order does not matter and the synthesis time increases dramatically.

Taking a step back, the experiments indicate that our realizability and realization logics should be combined with an outer search~\cite{SearchBased18}. 
The search would concentrate on the synthesis options that are difficult for our approach and only fix them. 
Our approach would handle the remaining, easier options.
We see this as a promising direction for future work.

\section{Related Work} 
We have already discussed the related work on \semgus, \sygus, and solver-aided programming in the introduction.  
There is a body of work on programming models with angelic non-determinism, dating back to~\citet{floyd}. 
The common case, however, is that the angelic choices are made at runtime, and therefore can react to the demonic choices. 
These models are related to two-player zero-sum graph games~\cite{LNCS2500} typically used in reactive synthesis, as originally proposed by \citet*{church} and later developed by \citet*{pnueliRosner89}. 
In \semgus, the angelic choices have to be made up-front, which adds a form of imperfect information~\cite{reif} that the common case does not have to deal with. 
This imperfect information shines through in Rules~\ruleLabel{DEM} and \ruleLabel{LOOP}, where the angelic player cannot react. 
Two works, however, are closer to our development. 

\citet*{mamouras:lmcs:2017} presents a Hoare logic to reason about the correctness of programs with (runtime) angelic non-determinism. 
His assertions are classical predicates, which means the proof construction is forced to decide on a synthesis option early on, may later fail due to incompatible decisions, and in this case has to backtrack. 
To avoid precisely this backtracking, we proposed selections as assertions in realizability logic.  
As a consequence, the resulting theories are largely different.

\citet{celiku-von-wright} consider the same class of programs as Mamouras, but try to refine angelic choices to conditionals and loops.
The refinement is guided by a proof outline, and may be compared to our realization logic. 
An important difference is that they work with ordinary predicates, and try to synthesize appropriate ones. 
We work with the new selections, instead, in which we have gathered the relevant predicates.

Also related is the recent~\cite{dillig:popl:burst} that uses angelic non-determinism to synthesize recursive functions.  
During recursive calls, they select an output value for the function and later check whether the choice can be justified. 
The work \cite{SynthesisConditions10,OracleGuided10} infers a program from verification conditions that contain unknown program parts which, in the case of~\cite{OracleGuided10}, should be filled by library components. 
The crucial difference to realization logic is that we have eagerly analyzed the program space, and  the task of $\syn$ amounts to a concretization rather than a search. 
We have not seen a backtracking freedom guarantee in related work. 

The deductive approach decomposes the task of synthesizing a program into synthesis tasks for subprograms~\cite{MW80}. 
A popular idea is to factorize the subprograms along input-output equivalence so that the computation can proceed with equivalence classes~\cite{FlashFill11,AGK13,Transit13,FlashMETA15,Dillig15,OseraZ15}. 
What is different in our work is that we try to be eager in the bottom-up construction, and consider a set of synthesis options whose behavior does not have to match.
Related to our approach are~\cite{dillig:synthAbstractionRefinement,DC17} that weaken the input-output equivalence by considering an abstract domain. 
A solution for the abstract domain is then a restricted class of programs that can be searched more efficiently.
Realizability and realization logic may prove useful for this task.

We introduced our eager approach to reduce the number of costly verification queries, a  
prominent endeavor in search-based synthesis~\cite{SearchBased18}. 
An important development is to inform the solver about semantic properties of the program sought~\cite{exampleclosure20}. 
In the context of example-based specifications, a semantic property is the invariance to perturbations of the example set, which can be incorporated by enriching the example set.


\bibliographystyle{ACM-Reference-Format}
\bibliography{sample-base,bibliography}

\newpage
\appendix
\section{Proofs}\label{ch:proofs}
We present omitted proofs and details of this paper.
\subsection{Proofs for \Cref{ch:pls}}
\begin{lemma}\label{lem:comMonotonic}
    The function $\progSemFunc{\code{com}}$ is monotonic for all commands $\code{com}$, i.e.\ the following equation holds:
    \begin{equation*}
        \demonicAssertionStrongerOf{\apred}{\apredp} 
        \ \implies \ 
        \demonicAssertionStrongerOf{\progSemFuncOf{\code{com}}{\apred}}{\progSemFuncOf{\code{com}}{\apredp}}  \ .
    \end{equation*}
\end{lemma}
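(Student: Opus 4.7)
The plan is to prove monotonicity by case analysis on whether the larger element $\apredp$ equals $\fail{}$ or not, appealing directly to the definitions of $\demonicAssertionStronger{}$ and of $\progSemFuncOf{\code{com}}{-}$.

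First I would treat the easy case $\apredp = \fail{}$. By the definition of the lifted semantics, $\progSemFuncOf{\code{com}}{\fail{}} = \fail{}$, and since $\fail{}$ is the top element of $\demonicAssertions{}$, we get $\demonicAssertionStrongerOf{\progSemFuncOf{\code{com}}{\apred}}{\fail{}}$ for free, irrespective of $\apred$.

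Next I would handle the main case $\apredp \neq \fail{}$. By unfolding the definition of $\demonicAssertionStrongerOf{\apred}{\apredp}$, the hypothesis forces $\apred \neq \fail{}$ as well, and moreover $\apred \subseteq \apredp$ viewed as subsets of $\states{}$. Consequently the indexing set in the definition of $\progSemFuncOf{\code{com}}{\apred}$ is contained in that of $\progSemFuncOf{\code{com}}{\apredp}$, namely $\setCond{\progSemFuncOf{\code{com}}{s}}{s \in \apred} \subseteq \setCond{\progSemFuncOf{\code{com}}{s}}{s \in \apredp}$. Since $\bigsqcup_d$ is monotone in its argument (a join over a larger set is at least the join over a smaller one, using that $\demonicAssertions{}$ is a complete lattice), we conclude $\demonicAssertionStrongerOf{\progSemFuncOf{\code{com}}{\apred}}{\progSemFuncOf{\code{com}}{\apredp}}$.

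I do not expect a genuine obstacle here: the lemma is a direct consequence of how the lift from states to predicates is defined (as a pointwise join), combined with the fact that joins are monotone in the indexing set. The only place where care is needed is the boundary behavior of $\fail{}$, which is precisely why I would separate the $\apredp = \fail{}$ case at the outset rather than trying to handle both cases uniformly.
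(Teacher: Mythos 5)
Your proof is correct, and it takes a mildly but genuinely different route from the paper's. The paper case-splits on whether the \emph{output} $\progSemFuncOf{\code{com}}{\apredp}$ equals $\fail{}$; in the non-$\fail{}$ case it first notes that no single state of $\apredp$ (hence none of $\apred$) is mapped to $\fail{}$, and then chases elements to show $\progSemFuncOf{\code{com}}{\apred} \subseteq \progSemFuncOf{\code{com}}{\apredp}$ as plain sets. You instead split on whether the \emph{input} $\apredp$ equals $\fail{}$, and in the main case argue at the lattice level: $\apred \subseteq \apredp$ gives containment of the indexing sets, and $\bigsqcup_d$ is monotone in its indexing set because $(\demonicAssertions{}, \demonicAssertionStronger{})$ is a complete lattice (\Cref{lem:demonsLattice}). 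Your version is slightly more uniform, since join-monotonicity silently covers the situation where some state of $\apred$ or $\apredp$ is mapped to $\fail{}$ by the command (then both lifted results are $\fail{}$), a situation the paper must exclude explicitly before it can reason element-wise; the paper's version is more elementary in that it only uses the description of the join as a union when $\fail{}$ is absent, never the abstract completeness fact. Both arguments are sound and of comparable length, and your handling of the base case via $\progSemFuncOf{\code{com}}{\fail{}} = \fail{}$ being the top element is exactly right.
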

\begin{proof}
    Let $\apred$ and $\apredp$ be elements of $\demonicAssertions{}$ with $\demonicAssertionStrongerOf{\apred}{\apredp}$.

Case 1: 
    $\progSemFuncOf{\code{com}}{\apredp} = \fail{}$. 
    Then $\demonicAssertionStrongerOf{\progSemFuncOf{\code{com}}{\apred}}{\progSemFuncOf{\code{com}}{\apredp}}$.

Case 2: 
    $\progSemFuncOf{\code{com}}{\apredp} \neq \fail{}$. 
Then, $\apredp \neq \fail{}$ and there is no $p$ in $\apredp$ with $\progSemFuncOrigOf{\code{com}}{p} = \fail{}$.
Because $\apredp \neq \fail{}$, we have $\apred \neq \fail{}$ and $\apred \subseteq s$.
Therefore, there also is no $p$ in $\apred$ with $\progSemFuncOrigOf{\code{com}}{p} = \fail{}$.
    Now, let $p'$ be an element of $\progSemFuncOf{\code{com}}{\apred}$.
Thus, there exists a $p$ in $\apred$ with $p' \in \progSemFuncOrigOf{\code{com}}{p}$.
    Since $\apred \subseteq s$, this $p$ also is in $\apredp$ and therefore, $p'$ also is in $\progSemFuncOf{\code{com}}{\apredp}$, making $\progSemFuncOf{\code{com}}{\apred}$ a subset of $\progSemFuncOf{\code{com}}{\apredp}$.
    In total, $\demonicAssertionStrongerOf{\progSemFuncOf{\code{com}}{\apred}}{\progSemFuncOf{\code{com}}{\apredp}}$ holds.
\end{proof}

\begin{lemma}\label{lem:demonsPartialOrder}
    The relation $\demonicAssertionStronger{}$ is a partial order relation on $\demonicAssertions{}$.
    This means, the order is reflexive, transitive and antisymmetric, i.e.\ the following equations hold:
    \begin{align*}
        \forall \, \apred \in \demonicAssertions{} & \dv \demonicAssertionStrongerOf{\apred}{\apred} \\
        \forall \, \apred, \apredp, \apredpp \in \demonicAssertions{} & \dv \demonicAssertionStrongerOf{\apred}{\apredp} \wedge \demonicAssertionStrongerOf{\apredp}{\apredpp} \implies \demonicAssertionStrongerOf{\apred}{\apredpp} \\
        \forall \, \apred, \apredp \in \demonicAssertions{} & \dv 
        \demonicAssertionStrongerOf{\apred}{\apredp} \wedge \demonicAssertionStrongerOf{\apredp}{\apred} \implies \apred = \apredp
    \end{align*}
\end{lemma}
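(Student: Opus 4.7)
The plan is to unfold the definition $\demonicAssertionStrongerOf{r}{s} \iff s = \fail{} \vee (r \neq \fail{} \wedge r \subseteq s)$ in each of the three obligations and dispatch them by case analysis on whether the relevant elements equal $\fail{}$. The key observation is that $\fail{}$ functions as a top element: nothing but $\fail{}$ itself can lie above it in the order sense that is blocked by the conjunct $r \neq \fail{}$ on the left side, while anything lies below it via the disjunct $s = \fail{}$.

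For reflexivity, the case $\apred = \fail{}$ is immediate from the left disjunct, and the case $\apred \neq \fail{}$ follows from the reflexivity of $\subseteq$. For transitivity, I would case-split on $\apredpp$: if $\apredpp = \fail{}$ the conclusion is immediate; otherwise $\demonicAssertionStrongerOf{\apredp}{\apredpp}$ forces $\apredp \neq \fail{}$ and $\apredp \subseteq \apredpp$, and then $\demonicAssertionStrongerOf{\apred}{\apredp}$ in turn forces $\apred \neq \fail{}$ and $\apred \subseteq \apredp$, so transitivity of $\subseteq$ closes the case. For antisymmetry, if $\apred = \fail{}$ then $\demonicAssertionStrongerOf{\apred}{\apredp}$ reduces to $\apredp = \fail{}$ (since the other disjunct is ruled out by $\apred = \fail{}$), and symmetrically if $\apredp = \fail{}$; otherwise both elements are non-$\fail{}$ and the two inclusions $\apred \subseteq \apredp$ and $\apredp \subseteq \apred$ give equality.

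There is no real obstacle here: the proof is essentially a mechanical unfolding of the definition plus appeal to the fact that $\subseteq$ is a partial order on $\powersetOf{\states{}}$. The only care needed is to remember the asymmetric role of $\fail{}$ in the definition, which is why the case analysis splits on the right-hand argument (for reflexivity and transitivity) and on either argument being $\fail{}$ (for antisymmetry). Once one has noticed this, each case is a one-liner.
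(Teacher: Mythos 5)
Your proof is correct and follows essentially the same route as the paper's: a case split on whether the relevant predicate is $\fail{}$, with the non-$\fail{}$ cases discharged by reflexivity, transitivity, and antisymmetry of $\subseteq$ on $\powersetOf{\states{}}$. No gaps.
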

\begin{proof}
    Let $\apred, s, t$ be elements of $\demonicAssertions{}$.

    Reflexivity ($\demonicAssertionStrongerOf{\apred}{\apred}$): 

    Case 1: $\apred = \fail{}$. We get $\demonicAssertionStrongerOf{\apred}{\apred}$ immediately.
    Case 2: $\apred \neq \fail{}$. So $\apred$ is a set of states. 
    Then, $\apred \subseteq r$ holds.
    
    Transitivity ($\demonicAssertionStrongerOf{\apred}{\apredp} \wedge \demonicAssertionStrongerOf{\apredp}{t} \implies \demonicAssertionStrongerOf{\apred}{t})$:

    Case 1: $\apredpp = \fail{}$. Then $\demonicAssertionStrongerOf{\apred}{\apredpp}$ holds.
    Case 2: $\apredpp \neq \fail{}$. Then $\apredp \neq \fail{}$ and $\apred \neq \fail{}$.
    By transitivity of the relation$\subseteq$, we get $\demonicAssertionStrongerOf{\apred}{\apredpp}$.
    
    Anti-symmetry ($\demonicAssertionStrongerOf{\apred}{\apredp} \wedge \demonicAssertionStrongerOf{\apredp}{\apred} \implies \apred = \apredp$):

    If either one of $\apred$ and $\apredp$ is $\fail{}$ the other one also has to be fail for both inequalities to hold.
    If neither one is $\fail{}$, anti-symmetry follows from the anti-symmetry of $\subseteq$.
\end{proof}

\begin{lemma} \label{lem:demonsLattice}
    The partial order $(\demonicAssertions{}, \demonicAssertionStronger{})$ is a complete lattice.
    That means for any subset $\apredset$ of $\demonicAssertions{}$ there exists a join and a meet.
    In fact, the join and meet can be computed by the following equations: 
    Let $\apredset$ be a subset of $\demonicAssertions{}$.
    \begin{align*}
        \joinOf{\apredset} &=
        \begin{cases}
            \fail{} \hphantom{\qquad\qquad}&, \fail{} \in \apredset \\
            \bigcup_{\apred \in \apredset} \apred &, \text{else}
        \end{cases} 
        \\
        \meetOf{\apredset} &=
        \begin{cases}
            \fail{} &, \apredset = \set{\fail{}} \\
            \bigcap_{\apred \in (\apredset \setminus \set{\fail{}})} r &, \text{else}
        \end{cases}
    \end{align*}
\end{lemma}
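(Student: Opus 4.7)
The plan is to verify directly that the displayed formulas give the least upper bound and the greatest lower bound in the order $\demonicAssertionStronger{}$. Since \Cref{lem:demonsPartialOrder} has already established that $(\demonicAssertions{}, \demonicAssertionStronger{})$ is a partial order, it suffices to check, for each $\apredset \subseteq \demonicAssertions{}$, that the claimed expressions are (i) an upper/lower bound of $\apredset$ and (ii) $\demonicAssertionStronger{}$-below/above every other upper/lower bound. I would organise the proof as two case splits, one for $\join{}$ and one for $\meet{}$, driven by how $\fail{}$ interacts with $\apredset$.

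For the join, the crucial observation is that $\fail{}$ is the top of $\demonicAssertionStronger{}$: the only $s$ with $\fail{} \demonicAssertionStronger{} s$ is $s = \fail{}$. Hence if $\fail{} \in \apredset$, every upper bound must equal $\fail{}$, and the formula is forced. Otherwise, set $t := \bigcup_{r \in \apredset} r \in \powersetOf{\states{}}$. Each $r \in \apredset$ is a non-$\fail{}$ subset of $t$, so $r \demonicAssertionStronger{} t$, giving an upper bound. For minimality, any upper bound $s$ is either $\fail{}$ (and $t \demonicAssertionStronger{} \fail{}$ holds trivially) or $s \neq \fail{}$ with $r \subseteq s$ for every $r \in \apredset$, so that $t = \bigcup_{r\in\apredset} r \subseteq s$, i.e. $t \demonicAssertionStronger{} s$.

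For the meet, the roles are flipped. If $\apredset = \set{\fail{}}$ every element of $\demonicAssertions{}$ is a lower bound, and the greatest (weakest) of these is $\fail{}$ itself. Otherwise let $t := \bigcap_{r \in \apredset \setminus \set{\fail{}}} r$. For any $r \in \apredset \setminus \set{\fail{}}$ we have $t \neq \fail{}$ and $t \subseteq r$, while for $r = \fail{}$ we have $t \demonicAssertionStronger{} \fail{}$ trivially, so $t$ is a lower bound. Conversely, any lower bound $s$ satisfies $s \demonicAssertionStronger{} r$ for every $r \in \apredset \setminus \set{\fail{}}$, which forces $s \neq \fail{}$ and $s \subseteq r$, whence $s \subseteq t$ and $s \demonicAssertionStronger{} t$.

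The main obstacle I anticipate is purely notational: keeping straight that $\demonicAssertionStronger{}$ orders by precision with $\fail{}$ at the top, so that joins correspond to unions (not intersections) and meets to intersections of the non-$\fail{}$ part. A minor subtlety is the edge case $\apredset = \emptyset$, where the stated meet formula should be read with the convention that the empty intersection yields the top element $\fail{}$; I would flag this explicitly. Once the case analysis on the presence of $\fail{}$ is in place, everything collapses to the classical fact that $\powersetOf{\states{}}$ is a complete lattice under $\subseteq$.
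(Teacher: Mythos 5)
Your proof is correct and takes essentially the same route as the paper: a direct verification that the displayed formulas are the least upper bound and greatest lower bound, driven by a case analysis on whether $\fail{}$ occurs in $\apredset$ and reducing the remaining work to $\subseteq$ on $\powersetOf{\states{}}$. Your explicit note on the edge case $\apredset=\emptyset$, reading the empty intersection as the top element $\fail{}$, is a small refinement that the paper's own proof leaves implicit.
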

\begin{proof}
    Let $\apredset \subseteq \demonicAssertions{}$
    
    We show that $\joinOf{\apredset}$ is an upper bound of $\apredset$:
    Let $\apred$ be in $\apredset$.
    Case 1: Assume $\fail{} \in \apredset$.
    Then $\joinOf{\apredset} = \fail{}$.
    Thus, $\demonicAssertionStrongerOf{\apred}{\joinOf{\apredset}}$.
    Case 2: Assume $\fail{}$ is not in $\apredset$.
    Then $\joinOf{\apredset} = \bigcup_{\apred \in \apredset} \apred$.
    Thus, $\apred \subseteq \joinOf{\apredset}$ and $\demonicAssertionStrongerOf{\apred}{\joinOf{\apredset}}$.

    We show that $\joinOf{\apredset}$ is the least upper bound of $\apredset$:
    Let $u$ be an upper bound of $\apredset$.
    Case 1: Assume $\fail{} \in \apredset$.
    Then $\demonicAssertionStrongerOf{\fail{}}{u}$.
    Thus, $\demonicAssertionStrongerOf{\joinOf{\apredset}}{u}$.
    Case 2: Assume $\fail{}$ is not in $\apredset$.
    Then $\joinOf{\apredset} = \bigcup_{\apred \in \apredset} \apred$.
    Since $u$ is an upper bound, $\forall \apred \in \apredset: \apred \subseteq u$.
    Thus, $(\bigcup_{\apred \in \apredset} \apred) \subseteq u$.
    So, $\demonicAssertionStrongerOf{\joinOf{\apredset}}{u}$.
    
    We show that $\meetOf{\apredset}$ is a lower bound of $\apredset$:
    Let $\apred$ be in $\apredset$. 
    Case 1: Assume $\apredset = \set{\fail{}}$. Then $\apred = \fail{} = \meetOf{\apredset}$.
    Thus $\demonicAssertionStrongerOf{\meetOf{\apredset}}{\apred}$.
    Case 2: Assume $\apredset \neq \set{\fail{}}$.
    If $\apred = \fail{}$, we get $\demonicAssertionStrongerOf{\meetOf{\apredset}}{\apred}$.
    Otherwise, $\meetOf{\apredset} = \cap_{\apred \in \apredset \setminus \set{\fail{}}} \apred$. 
    Therefore, $\meetOf{\apredset} \subseteq \apred$ and thus $\demonicAssertionStrongerOf{\meetOf{\apredset}}{\apred}$.
    
    We show that $\meetOf{\apredset}$ is the greatest lower bound of $\apredset$:
    Let $l$ be a lower bound of $\apredset$.
    Case 1: Assume $\apredset = \set{\fail{}}$. Then $\meetOf{\apredset} = \fail{}$.
    Thus $\demonicAssertionStrongerOf{l}{\meetOf{\apredset}}$.
    Case 2: Assume $\apredset \neq \set{\fail{}}$. Let $\apred$ be in $\apredset \setminus \set{\fail{}}$.
    Then $l \subseteq \apred$.
    This holds for every $\apred$.
    Thus, $l \subseteq \meetOf{\apredset}$.
    Therefore, $\demonicAssertionStrongerOf{l}{\meetOf{\apredset}}$.  
\end{proof}

\subsection*{The Missing EMPTY Rule:}
For completeness, we need to add the following rule to the calculus:
\begin{mathpar}
\makeBlue{
\inferrule[\ruleLabelSmall{EMPTY}]{
		}{
\angelicHoareTripletHolds{\apredset}{\asketch}{\emptyset}
}
}
\end{mathpar}

\subsubsection*{Proof of \Cref{th:angelicHoareSoundnessCompleteness}}
\begin{proof}[\unskip\nopunct]
We prove soundness by structural induction over the proof tree.

\baseCase{} We have $\angelicHoareTripletHolds{\set{\apred}}{\code{com}}{\set{\apredp}}$ through the rule \ruleLabel{COM}.
Thus, we have $\demonicAssertionStrongerOf{\progSemFuncOf{\code{com}}{\apred}}{\apredp}$.
Therefore, the realizability triple $\angelicHoareTripletHoldsSemantically{\set{\apred}}{\code{com}}{\set{\apredp}}$ holds.

\inductionStep{}
In the first case, we have $\angelicHoareTripletHolds{\apredset}{\concatof{\asketch_1}{\asketch_2}}{\apredsetpp}$ through the rule \ruleLabel{SEQ}.
    Thus, we have the realizability triples $\angelicHoareTripletHolds{\apredset}{\asketch_1}{\apredsetp}$ and $\angelicHoareTripletHolds{\apredsetp}{\asketch_2}{\apredsetpp}$.
    By applying the induction hypothesis, we get $\angelicHoareTripletHoldsSemantically{\apredset}{\asketch_1}{\apredsetp}$ and $\angelicHoareTripletHoldsSemantically{\apredsetp}{\asketch_2}{\apredsetpp}$.
Let $\apredpp$ be an element of $\apredsetpp$.
Then, there exists a predicate $\apredp \in \apredsetp$ and a program $\aprog_2 \in \angelConcFuncOf{\asketch_2}$ with $\demonicHoareTripletHoldsSemantically{\apredp}{\aprog_2}{\apredpp}$.
And, there exists $\apred \in \apredset$ and $\aprog_1 \in \angelConcFuncOf{\asketch_1}$ with $\demonicHoareTripletHoldsSemantically{\apred}{\aprog_1}{\apredp}$.
By completeness of Hoare logic, we get 
$\demonicHoareTripletHolds{\apred}{\aprog_1}{\apredp}$ and 
$\demonicHoareTripletHolds{\apredp}{\aprog_1}{\apredpp}$.
Using rule \ruleLabel{SEQ}, we get 
$\demonicHoareTripletHolds{\apred}{\concatof{\aprog_1}{\aprog_2}}{\apredpp}$.

    In the next case we have the triple $\angelicHoareTripletHolds{\apredset}{\asketch}{\apredsetp}$ through the rule \ruleLabel{CSQ}.
    Thus we have $\angelicHoareTripletHolds{\apredset'}{\asketch}{\apredsetp'}$ with $\angelicAssertionStrongerOf{\apredset}{\apredset'}$ and $\angelicAssertionStrongerOf{\apredsetp'}{\apredsetp}$.
Through the induction hypothesis, we get $\angelicHoareTripletHoldsSemantically{\apredset'}{\asketch}{\apredsetp'}$.
Now, let $\apredp$ be an element of $\apredsetp$.
    Since $\angelicAssertionStrongerOf{\apredsetp'}{\apredsetp}$, there is an $\apredp'$ in $\apredsetp'$ with $\demonicAssertionStrongerOf{\apredp'}{\apredp}$.
From $\angelicHoareTripletHoldsSemantically{\apredset'}{\asketch}{\apredsetp'}$ we know, that there is an $\apred'$ in $\apredset'$ and a program $\aprog \in \angelConcFuncOf{\asketch}$ such that $\demonicHoareTripletHoldsSemantically{\apred'}{\aprog}{\apredp'}$ holds.
Through transitivity of the relation $\demonicAssertionStronger{}$, we also get that $\demonicHoareTripletHoldsSemantically{\apred'}{\aprog}{\apredp}$ holds.
Due to monotonicity of executions, and since $\angelicAssertionStrongerOf{\apredset}{\apredset'}$ holds, there is an $\apred$ in $\apredset$ for which $\demonicHoareTripletHoldsSemantically{\apred}{\aprog}{\apredp}$ is true.
That means, $\angelicHoareTripletHoldsSemantically{\apredset}{\asketch}{\apredsetp}$ holds.

In the next case, we have $\angelicHoareTripletHolds{\apredset}{\choiceOf{\asketch_1}{\asketch_2}}{\apredsetp}$ through the rule \ruleLabel{DEM}.
Therefore, we know $\apredset = \set{\apred}$ and the realizability triples $\angelicHoareTripletHolds{\apredset}{\asketch_1}{\apredsetp}$ and $\angelicHoareTripletHolds{\apredset}{\asketch_2}{\apredsetp}$ hold.
From the induction hypothesis we know that $\angelicHoareTripletHoldsSemantically{\apredset}{\asketch_1}{\apredsetp}$ and $\angelicHoareTripletHoldsSemantically{\apredset}{\asketch_2}{\apredsetp}$ hold.
Now, let $\apredp$ be an element of $\apredsetp$. 
Then, there exists $\aprog_1 \in \angelConcFuncOf{\asketch_1}$ for which $\demonicHoareTripletHoldsSemantically{\apred}{\aprog_1}{\apredp}$ is true.
    Analogously, there exists $\aprog_2 \in \angelConcFuncOf{\asketch_2}$ for which $\demonicHoareTripletHoldsSemantically{\apred}{\aprog_2}{\apredp}$ is true.
We now show that $\demonicHoareTripletHoldsSemantically{\apred}{\choiceOf{\aprog_1}{\aprog_2}}{\apredp}$ is true.
This directly implies that
    $\angelicHoareTripletHoldsSemantically{\apredset}{\choiceOf{\asketch_1}{\asketch_2}}{\apredsetp}$
holds.
Let $\anexec$ be an execution of $\choiceOf{\aprog_1}{\aprog_2}$.
Thus, it is either an execution of $\aprog_1$ or $\aprog_2$.
W.l.o.g.\ let $\anexec$ be in $\demonConcFuncOf{\aprog_1}$.
Since $\demonicHoareTripletHoldsSemantically{\apred}{\aprog_1}{\apredp}$ is true, we also get that $\progSemFuncOf{\code{ex}}{\apred}$ is more precise than $\apredp$.
Therefore, the Hoare triple $\demonicHoareTripletHoldsSemantically{\apred}{\choiceOf{\aprog_1}{\aprog_2}}{\apredp}$ is true and thus 
$\angelicHoareTripletHoldsSemantically{\apredset}{\choiceOf{\asketch_1}{\asketch_2}}{\apredsetp}$ holds.

In the next case, we have $\angelicHoareTripletHolds{\apredset}{\asketch^*}{\apredset}$ through the rule \ruleLabel{LOOP}.
Thus we know the realizability triple $\angelicHoareTripletHolds{\apredset}{\asketch}{\apredset}$ holds and $\apredset = \set{\apred}$.
The induction hypothesis yields $\angelicHoareTripletHoldsSemantically{\apredset}{\asketch}{\apredset}$.
From this we know that there is a $\aprog \in \angelConcFuncOf{\asketch}$ with $\demonicHoareTripletHoldsSemantically{\apred}{\aprog}{\apred}$.
We now show that $\demonicHoareTripletHoldsSemantically{\apred}{\aprog^*}{\apred}$ is true.
For this, we inductively show $\demonicHoareTripletHoldsSemantically{\apred}{\aprog^i}{\apred}$ for every $i \in \mathbb{N}$.
Since $\aprog^0$ is $\code{skip}$, $\demonicHoareTripletHoldsSemantically{\apred}{\aprog^0}{\apred}$ immediately follows.
In the induction step, we show $\demonicHoareTripletHoldsSemantically{\apred}{\aprog^i\code{;prog}}{\apred}$.
From the hypothesis, we know that $\demonicHoareTripletHoldsSemantically{\apred}{\aprog^i}{\apred}$ and $\demonicHoareTripletHoldsSemantically{\apred}{\aprog}{\apred}$ hold.
Referring to the case where we have proven sequences in this proof, this implies that $\demonicHoareTripletHoldsSemantically{\apred}{\aprog^{i+1}}{\apred}$ is true.
Thus, we know that $\demonicHoareTripletHoldsSemantically{\apred}{\aprog^*}{\apred}$ holds.
Therefore, $\angelicHoareTripletHoldsSemantically{\apredset}{\asketch^*}{\apredset}$ is true.

In the next case, we have $\angelicHoareTripletHolds{\apredset}{\anonterm}{\apredsetp}$ through the rule \ruleLabel{ANG}.
From this, we know $\angelicHoareTripletHolds{\apredset}{\code{rhs}}{\apredsetp}$ holds with $\anonterm \code{::= rhs} \mid \dots$.
From the induction hypothesis we get $\angelicHoareTripletHoldsSemantically{\apredset}{\code{rhs}}{\apredsetp}$ \ .
Since $\angelConcFuncOf{\code{rhs}}$ is a subset of $\angelConcFuncOf{\anonterm}$, we get $\angelicHoareTripletHoldsSemantically{\apredset}{\anonterm}{\apredsetp}$ immediately.

In the last case, we have $\angelicHoareTripletHolds{\apredset_1 \cup \apredset_2}{\asketch}{\apredsetp_1 \cup \apredsetp_2}$ through the rule \ruleLabel{GATHER}.
From this, we know $\angelicHoareTripletHolds{\apredset_1}{\asketch}{\apredsetp_1}$ and $\angelicHoareTripletHolds{\apredset_2}{\asketch}{\apredsetp_2}$ hold.
Let $\apredp$ be an element of $\apredsetp_1 \cup \apredsetp_2$.
W.l.o.g.\ let $\apredp$ be an element of $\apredsetp_1$.
From the induction hypothesis we know that there is an $\apred$ in $\apredset_1 \subseteq \apredset_1 \cup \apredset_2$ and a program $\aprog \in \angelConcFuncOf{\asketch}$ with $\demonicHoareTripletHoldsSemantically{\apred}{\aprog}{\apredp}$.
Thus, we get $\angelicHoareTripletHoldsSemantically{\apredset_1 \cup \apredset_2}{\asketch}{\apredsetp_1 \cup \apredsetp_2}$.
This concludes the soundness proof.

We proceed with the proof for completeness.
In the first case, the postcondition is empty.
We have $\angelicHoareTripletHoldsSemantically{\apredset}{\asketch}{\emptyset}$.
Using the rule \ruleLabel{EMPTY}, we get 
$\angelicHoareTripletHolds{\apredset}{\asketch}{\emptyset}$.

In the second case, the postcondition is not empty.
We have $\angelicHoareTripletHoldsSemantically{\apredset}{\asketch}{\apredsetp}$.
Thus, for every $\apredp \in \apredsetp$ there is a program $\aprog \in \angelConcFuncOf{\asketch}$ and a predicate $\apred \in \apredset$ for which
$\demonicHoareTripletHoldsSemantically{\apred}{\aprog}{\apredp}$.
Since Hoare logic is complete, we have
$\demonicHoareTripletHolds{\apred}{\aprog}{\apredp}$.
We can mimic this proof in realizability logic for $\asketch$ by using rule \ruleLabel{ANG} mimicking the derivation of $\aprog$ from $\asketch$.
Thus, we have $\angelicHoareTripletHolds{\apred}{\asketch}{\apredp}$.
Since $\apredsetp$ is finite, we only need to do the proof for finitely many programs.
These proofs are then combined using the rule \ruleLabel{GATHER} multiple times to derive 
$\angelicHoareTripletHolds{\apredset'}{\asketch}{\apredsetp}$ for the subset $\apredset'$ of $\apredset$ containing the used predicates $\apred$ of $\apredset$.
Then, using rule \ruleLabel{CSQ}, we get 
$\angelicHoareTripletHolds{\apredset}{\asketch}{\apredsetp}$.
\end{proof}

\newcommand{\posound}{\vdash_p}
\subsection{Proofs for \Cref{ch:proofsystem}}
To properly define the notation of a valid proof outline, we state a 
a proof system for that ranges over proof outlines in \Cref{def:plsProofs}.
When we write $\posound \anonterm(\apo)$ we mean 
$\posound \hoaretriplet{\apredset}{\anonterm(\apo)}{\apredsetp}$ where $\apredset$ resp.\ $\apredsetp$ are the union of all pre- resp.\ postconditions of the subproofs in $\apo$.
The function $\proofToProgFunc{}$ extracts the program sketch out of the proof outline:

\begin{figure}
    \caption{Definition of the Program Logic for Synthesis over proof outlines. Extensions of traditional Hoare logic are \makeBlue{blue}.}
    \label{def:plsProofs}
	\begin{mathpar}
        \footnotesize
		\inferrule[\ruleLabelSmall{PCOM}]{
\demonicAssertionStrongerOf{\progSemFuncOf{\code{com}}{r}}{s}
		}{
\angelicProofHoareTripletHolds{\set{r}}{\code{com}}{\set{s}}
}
\and
		\inferrule[\ruleLabelSmall{PSEQ}]{
\angelicProofHoareTripletHolds{\apredset}{\apo_1}{\apredsetp}
\\
\angelicProofHoareTripletHolds{\apredsetp}{\apo_2}{\apredsetpp}
		}{
            \posound
\hoaretriplet{\apredset}{\apo_1}{\apredsetp}
\code{;}
\hoaretriplet{\apredsetp}{\apo_2}{\apredsetpp}
}
\and
\inferrule[\ruleLabelSmall{PLOOP}]{
\angelicProofHoareTripletHolds{\apredset}{\apo}{\apredset}
\\
\makeBlue{\apredset = \set{\apred}}
		}{
\angelicProofHoareTripletHolds{\apredset}{(\hoaretriplet{\apredset}{\apo}{\apredset})^*}{\apredset}
}
\and
\inferrule[\ruleLabelSmall{PDEM}]{
\angelicProofHoareTripletHolds{\apredset}{\apo_1}{\apredsetp}
\\
\angelicProofHoareTripletHolds{\apredset}{\apo_2}{\apredsetp}
\\
\makeBlue{\apredset = \set{r}}
		}{
\vdash_p \choiceOf{\hoaretriplet{\apredset}{\apo_1}{\apredsetp}}{\hoaretriplet{\apredset}{\apo_2}{\apredsetp}}
}
\and
\inferrule[\ruleLabelSmall{PCSQ}]{
\angelicProofHoareTripletHolds{\angelicAssertionFont{R'}}{\apo}{\angelicAssertionFont{S'}}
\\
\angelicAssertionStrongerOf{\apredset}{\apredset'}
\\
\angelicAssertionStrongerOf{\apredsetp'}{\apredsetp}
		}{
\angelicProofHoareTripletHolds{\apredset}{\apo}{\apredsetp}
}
\and
\makeBlue{
\inferrule[\ruleLabelSmall{PANG}]{
\angelicProofHoareTripletHolds{\apredset}{\code{rhs}}{\apredsetp}
\\
\anonterm ::= \angelicChoiceOf{\code{rhs}}{\dots}
		}{
\posound \anonterm(\hoaretriplet{\apredset}{\code{rhs}}{\apredsetp})}
}
\and
\makeBlue{
\inferrule[\ruleLabelSmall{PEMPTY}]{
		}{
\angelicProofHoareTripletHolds{\apredset}{\apo}{\emptyset}
}
}
\and
\makeBlue{
\inferrule[\ruleLabelSmall{PGATHER}]{
\angelicProofHoareTripletHolds{\apredset_1}{\apo_1}{\apredsetp_1}
\\
\angelicProofHoareTripletHolds{\apredset_2}{\apo_2}{\apredsetp_2}
\\
\proofToProgFuncOf{\apo_1} = \proofToProgFuncOf{\apo_2}
		}{
\angelicProofHoareTripletHolds{\apredset_1 \cup \apredset_2}{\gatherFuncOf{\apo_1}{\apo_2}}{\apredsetp_1 \cup \apredsetp_2}
}
}
    \end{mathpar}
\end{figure}

\begin{definition}
    The function $\gatherFunc{}$ combines two proofs over the same sketch.
    \begin{equation*}
        \funcDef{\gatherFunc{}}{(\angelProgramProofLang{} \times \angelProgramProofLang{}) \partialFuncArrow{} \angelProgramProofLang{}} 
    \end{equation*}
    \begin{align*}
        \gatherFuncOf{\hoaretriplet{\apredset_1}{\code{com}}{\apredsetp_1}}{\hoaretriplet{\apredset_2}{\code{com}}{\apredsetp_2}} = &
        \, \hoaretriplet{\apredset_1 \cup \apredset_2}{\code{com}}{\apredsetp_1 \cup \apredsetp_2} \\
        \gatherFuncOf{\hoaretriplet{\apredset_1}{\code{p1}^*}{\apredsetp_1}}
        {\hoaretriplet{\apredset_2}{\code{p2}^*}{\apredsetp_2}} = &
        \, \hoaretriplet{\apredset_1 \cup \apredset_2}{
            \gatherFuncOf{\code{p1}}{\code{p2}}^*
        }{\apredsetp_1 \cup \apredsetp_2} \\
        \gatherFuncOf{\apo_1\code{;}\apo_2}
        {\apop_1\code{;}\apop_2} = &
        \, \gatherFuncOf{\apo_1}{\apop_1}\code{;}
        \gatherFuncOf{\apo_2}{\apop_2} \\
        \gatherFuncOf{\choiceOf{\apo_1}{\apo_2}}
        {\choiceOf{\apop_1}{\apop_2}} = &
        \, \choiceOf{\gatherFuncOf{\apo_1}{\apop_1}}{\gatherFuncOf{\apo_2}{\apop_2}}
        \\
        \gatherFuncOf{\apo_1 \bnf \apo_2}{\apop_1 \bnf \apop_2} = & \,
        \gatherFuncOf{\apo_1}{\apop_1} \bnf \gatherFuncOf{\apo_2}{\apop_2}
        \\
        \gatherFuncOf{\anonterm(\apo_1 \bnf \apo_2)}{\anonterm(\apop_1 \bnf \apop_2)} = & \,
        \anonterm(\gatherFuncOf{\apo_1}{\apop_1} \bnf \apo_2 \bnf \apop_2)
    \end{align*}
    In the last case, $\apo_1$ and $\apop_1$ are the productions where the right hand side in both proof outlines are the same.
    The proof outlines $\apo_2$ and $\apop_2$ reason over different sketches and their proof outlines therefore cannot be combined.

    In order for the function $\gatherFuncOf{\apo_1}{\apo_2}$ to be defined, it is required that $\apo_1$ and $\apo_2$ are proofs over the same sketch, i.e.\ $\proofToProgFuncOf{\apo_1} = \proofToProgFuncOf{\apo_2}$.
\end{definition}

We need the following theorem:
\begin{theorem}[Equivalence]\label{th:proofsystemEquivalence}
    The following implication holds for any proof $\apo$ of $\angelProgramProofLang{}$ and any angels $\apredset$ and $\apredsetp$:
    \begin{equation*}
        \angelicProofHoareTripletHolds{\apredset}{\apo}{\apredsetp} \implies 
        \angelicHoareTripletHolds{\apredset}{\proofToProgFuncOf{\apo}}{\apredsetp} 
        \ .
    \end{equation*}

    The following implication holds for any sketch $\asketch$ of $\angelicProgramLang{}$ and any angels $\apredset$ and $\apredsetp$:
    \begin{equation*}
        \angelicHoareTripletHolds{\apredset}{\asketch}{\apredsetp}
        \implies
        \exists \, \apo \in \angelProgramProofLang{}: 
        \proofToProgFuncOf{\apo} = \asketch \,
        \wedge \angelicProofHoareTripletHolds{\apredset}{\apo}{\apredsetp}
        \ .
    \end{equation*}
\end{theorem}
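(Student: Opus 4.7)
The plan is to prove both implications by structural induction on the given derivation, exploiting the fact that the two proof systems are designed to mirror each other: every sketch-level rule in \Cref{def:pls} (COM, SEQ, LOOP, DEM, CSQ, ANG, EMPTY, GATHER) has a direct outline-level counterpart in \Cref{def:plsProofs} (PCOM, PSEQ, \dots, PGATHER), with the same side conditions and the same premises up to annotation. The function $\proofToProgFunc{}$ is just the projection that forgets these annotations, so the correspondence between the two systems reduces to a rule-by-rule matching.

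For the forward direction, I would induct on the derivation of $\posound \hoaretriplet{\apredset}{\apo}{\apredsetp}$. In each case, the rule used to derive the outline has an analogue in realizability logic: from PCOM one obtains COM, from PSEQ and the induction hypothesis one obtains SEQ applied to $\proofToProgFuncOf{\apo_1}$ and $\proofToProgFuncOf{\apo_2}$, and similarly for PDEM, PLOOP, PANG, PCSQ, PGATHER, and PEMPTY. The only routine check is that $\proofToProgFunc{}$ commutes with the program constructors, e.g.\ $\proofToProgFuncOf{\apo_1\code{;}\apo_2}=\concatof{\proofToProgFuncOf{\apo_1}}{\proofToProgFuncOf{\apo_2}}$ and $\proofToProgFuncOf{\anonterm(\apo)}=\anonterm$, which is immediate from the definition.

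For the backward direction, I would induct on the derivation of $\angelicHoareTripletHolds{\apredset}{\asketch}{\apredsetp}$ and build the outline by wrapping every inductively obtained subproof with the corresponding pre- and postcondition triples. For SEQ, LOOP, DEM, ANG, EMPTY and CSQ the construction is direct: the intermediary assertions that appear as premises of the sketch rule are precisely the annotations placed on the constructed outline. The GATHER case requires slightly more care: the induction hypothesis yields two outlines $\apo_1,\apo_2$ with $\posound\hoaretriplet{\apredset_i}{\apo_i}{\apredsetp_i}$, both over the same underlying sketch $\asketch$, so $\proofToProgFuncOf{\apo_1}=\asketch=\proofToProgFuncOf{\apo_2}$, and I can invoke $\gatherFunc{}$ and Rule~\ruleLabel{PGATHER} to form the combined outline. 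The CSQ case never introduces a new outline constructor; the strengthened pre- and postconditions are simply attached to the outline returned by the IH via PCSQ.

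The main obstacle is making sure the backward construction is compatible with $\gatherFunc{}$ in the presence of non-terminals and demonic choices, where the shapes of $\apo_1$ and $\apo_2$ must line up for the definition of $\gatherFunc{}$ to apply; this holds because the structure of an outline is determined by the underlying sketch, which is the same on both sides. A secondary point is handling the edge case of the EMPTY rule cleanly, since PEMPTY produces an outline whose projection is an arbitrary sketch; this matches the sketch-level EMPTY rule exactly. Once these bookkeeping issues are settled, both directions follow by straightforward induction.
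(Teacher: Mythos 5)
Your proposal is correct and takes essentially the same route as the paper: both directions are proved by structural induction on the given derivation, replaying each rule of one system by its direct counterpart in the other (\ruleLabel{PCOM}/\ruleLabel{COM}, \ruleLabel{PSEQ}/\ruleLabel{SEQ}, etc.), with $\proofToProgFunc{}$ commuting with the constructors; the paper merely spells out the base case and the sequence case and declares the remaining rules analogous. Your extra remarks on \ruleLabel{PGATHER} (both outlines projecting to the same sketch so that $\gatherFunc{}$ applies) and on \ruleLabel{PEMPTY} are consistent with the paper's definitions and do not change the argument.
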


\subsubsection*{Proof of \Cref{th:proofsystemEquivalence}}
\begin{proof}[\unskip\nopunct]
We start with the first implication.
We show the implication by structural induction over the proof tree.

\baseCase{}: 
We have $\angelicProofHoareTripletHolds{\set{\apred}}{\code{com}}{\set{\apredp}}$ through \ruleLabel{PCOM}.
Thus, we know $\demonicAssertionStrongerOf{\progSemFuncOf{\code{com}}{\apred}}{\apredp}$ is true.
With rule \ruleLabel{COM} we directly get $\angelicHoareTripletHolds{\set{\apred}}{\code{com}}{\set{\apredp}}$.

\inductionStep{}
Since all induction steps are only applying the induction hypothesis and rebuilding the proof in the other proof system, we demonstrate it on one case only.
We have the proof outline $\angelicProofHoareTripletHolds{\apredset}{\apo_1 \code{;} \apo_2}{\apredsetpp}$ through \ruleLabel{PSEQ}.
With the preconditions and the induction hypothesis we get 
$\angelicHoareTripletHolds{R}{\asketch_1}{S}$ and 
$\angelicHoareTripletHolds{S}{\asketch_2}{T}$ where the sketches are of the corresponding proof.
Using the rule \ruleLabel{SEQ}, we get
$\angelicHoareTripletHolds{R}{\asketch_1\code{;}\asketch_2}{T}$

We continue with the second implication.
Again, we show the implication by structural induction over the proof tree.

\baseCase{}:
We have $\angelicHoareTripletHolds{\set{\apred}}{\code{com}}{\set{\apredp}}$ through \ruleLabel{COM}.
Thus, we know $\demonicAssertionStrongerOf{\progSemFuncOf{\code{com}}{\apred}}{\apredp}$ is true.
With rule \ruleLabel{PCOM} we directly get $\angelicProofHoareTripletHolds{\set{\apred}}{\code{com}}{\set{\apredp}}$.

\inductionStep{}
Since all induction steps are only applying the induction hypothesis and rebuilding the proof in the other proof system, we demonstrate it on one case only.
We have the realizability triple $\angelicHoareTripletHolds{\apredset}{\asketch_1\code{;}\asketch_2}{\apredsetpp}$ through \ruleLabel{SEQ}.
With the preconditions and the induction hypothesis we get two proofs $\apo_1$ and $\apo_2$ for which the following holds:
$\angelicProofHoareTripletHolds{\apredset}{\apo_1}{\apredsetp}$ and $\angelicProofHoareTripletHolds{\apredsetp}{\apo_2}{\apredsetpp}$.
Furthermore, $\proofToProgFuncOf{\apo_i} = \asketch_i$.
With rule \ruleLabel{PSEQ} we get $\angelicProofHoareTripletHolds{R}{\apo_1\code{;}\apo_2}{T}$.
\end{proof}

We show several properties of a valid proof outline:
When we write $\posound \apo$ with $\apo = \apo_1 \bnf \ldots \bnf \apo_n$ we mean $\posound \apo_1 \wedge \ldots \wedge \posound \apo_n$.
\begin{lemma}\label{lem:proofProperties}
    In a sketch proof outline, all realizability triples hold, i.e.\
    the following implications are true:
    \begin{equation*}
        \posound \apo_1\code{;}\apo_2
        \implies 
        \angelicProofHoareTripletHolds{\apredset}{\apo_1}{\apredsetp} \wedge
        \angelicProofHoareTripletHolds{\apredsetp'}{\apo_2}{\apredsetpp}
        \wedge \apredsetp = \apredsetp'
    \end{equation*}
    \begin{equation*}
        \posound \choiceOf{\apo_1}{\apo_2}
        \implies
         \angelicProofHoareTripletHolds{\apredset}{\apo_1}{\apredsetp} \wedge
        \angelicProofHoareTripletHolds{\apredset'}{\apo_2}{\apredsetp'}
        \wedge 
        \apredset = \apredset'
        \wedge
        \apredsetp = \apredsetp'
    \end{equation*}
    \begin{equation*}
        \posound \anonterm(\apo)
        \implies 
        \posound \apo
    \end{equation*}
    \begin{equation*}
        \angelicProofHoareTripletHolds{\apredset}{(\hoaretriplet{I}{\apo}{I'})^*}{\apredsetp}
        \implies 
        \angelicProofHoareTripletHolds{I}{\apo}{I'} 
        \wedge I = I' \,
        \wedge
        \angelicAssertionStrongerOf{\apredset}{I}
        \wedge \angelicAssertionStrongerOf{I}{\apredsetp}
        \ .
    \end{equation*}
\end{lemma}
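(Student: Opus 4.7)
The proof proceeds by a straightforward induction on the derivation $\posound \apo$ in the proof system of \Cref{def:plsProofs}, handling each of the four implications simultaneously and performing a case analysis on the last rule applied. The observation that makes the whole argument work is that the syntactic shape of the proof outline on the left-hand side of each implication severely restricts which rules can be the final one: only \ruleLabel{PSEQ}, \ruleLabel{PDEM}, \ruleLabel{PANG}, and \ruleLabel{PLOOP} introduce the four respective syntactic forms directly, and in each case the premises of the rule essentially read off the claimed conjuncts. For instance, for $\posound \apo_1\code{;}\apo_2$ via \ruleLabel{PSEQ}, the two premises are exactly $\angelicProofHoareTripletHolds{\apredset}{\apo_1}{\apredsetp}$ and $\angelicProofHoareTripletHolds{\apredsetp}{\apo_2}{\apredsetpp}$, where the single symbol $\apredsetp$ already forces $\apredsetp = \apredsetp'$. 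The cases for choice, non-terminal, and loop are analogous: \ruleLabel{PDEM} forces both branches to share precondition and postcondition (and it is the only rule that yields the choice shape), \ruleLabel{PANG} directly gives $\posound \apo$ as a premise, and \ruleLabel{PLOOP} forces $I = I'$ with $\apredset = \apredsetp = I$ and thus the trivial instances of $\angelicAssertionStronger{}$.

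The remaining work is to handle the two rules that can appear as last rule no matter the shape of the outline, namely \ruleLabel{PCSQ} and \ruleLabel{PGATHER}. For \ruleLabel{PCSQ}, the premise has the same proof outline but with a stronger precondition $\apredset'$ and a weaker postcondition $\apredsetp'$. The internal annotations of the proof outline are not touched, so the induction hypothesis immediately transports the inner conjuncts (equality of the intermediate assertions, validity of the subproofs, etc.) unchanged. The only subtle subcase is the loop: from the induction hypothesis we obtain $\angelicAssertionStrongerOf{\apredset'}{I}$ and $\angelicAssertionStrongerOf{I}{\apredsetp'}$, and we combine these with the side conditions $\angelicAssertionStrongerOf{\apredset}{\apredset'}$ and $\angelicAssertionStrongerOf{\apredsetp'}{\apredsetp}$ of \ruleLabel{PCSQ} by transitivity of $\angelicAssertionStronger{}$ to obtain the desired $\angelicAssertionStrongerOf{\apredset}{I}$ and $\angelicAssertionStrongerOf{I}{\apredsetp}$.

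The case of \ruleLabel{PGATHER} is the main obstacle and deserves the most care. Here we must exploit the fact that the function $\gatherFunc{}$ preserves the outer syntactic structure of a proof outline: inspection of its defining equations shows that $\gatherFuncOf{\apop}{\apop'}$ produces a sequence, choice, loop, or non-terminal outline only when both $\apop$ and $\apop'$ already have the corresponding outer shape. Hence, if the conclusion of \ruleLabel{PGATHER} has the form $\apo_1\code{;}\apo_2$, then necessarily $\apop = \apop^1_1\code{;}\apop^1_2$ and $\apop' = \apop^2_1\code{;}\apop^2_2$ with $\apo_i = \gatherFuncOf{\apop^1_i}{\apop^2_i}$. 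The induction hypothesis applied to $\posound \apop$ and $\posound \apop'$ yields matching intermediate assertions $U$ and $U'$ for each outline, and a further application of \ruleLabel{PGATHER} to the component pairs then produces $\posound \apo_1$ with postcondition $U \cup U'$ and $\posound \apo_2$ with precondition $U \cup U'$, which is precisely the required matching intermediate. The analogous argument applies to choice, loop, and non-terminal outlines; for the loop case one additionally uses that $\gatherFunc{}$ unions the invariants, so if both components yield singleton invariants agreeing under the inductive hypothesis, the combined invariant still witnesses the required ordering conditions by transitivity.
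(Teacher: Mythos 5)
Your proposal is correct and takes essentially the same route as the paper: structural induction over the derivation with the structural rules (\ruleLabel{PSEQ}, \ruleLabel{PDEM}, \ruleLabel{PANG}, \ruleLabel{PLOOP}) immediate, \ruleLabel{PCSQ} handled by the induction hypothesis plus transitivity of $\angelicAssertionStronger{}$ for the loop conjuncts, and \ruleLabel{PGATHER} handled by observing that $\gatherFunc{}$ preserves the outer shape, applying the hypothesis to both premises, and re-applying \ruleLabel{PGATHER} to the components. The only slip is the phrase ``singleton invariants'' in the gathered-loop case (after repeated gathers the invariants need not be singletons), but the argument you give --- union the invariants and recheck the ordering conditions componentwise --- goes through unchanged, so this does not affect correctness.
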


\begin{proof}
We use structural induction to prove the implications.
We proof one implication at a time.

For the first one, the only rules that need to be considered are \ruleLabel{PSEQ}, \ruleLabel{PCSQ} and \ruleLabel{PGATHER}.
The base case holds trivially.
Rule \ruleLabel{PSEQ} immediately yields the sought after result.
Through transitivity of $\angelicAssertionStronger{}$, \ruleLabel{PCSQ} also yields the implication after applying the induction hypothesis.
When rule \ruleLabel{PGATHER} is applied, the proof has the following shape:
\begin{equation*}
    \posound 
    \hoaretriplet{\apredset \cup \apredset'}{\gatherFuncOf{\apo_1}{\apop_1}}{\apredsetp \cup \apredsetp'} 
    \code{;}
    \hoaretriplet{\apredsetp \cup \apredsetp'}{\gatherFuncOf{\apo_2}{\apop_2}}{\apredsetpp \cup \apredsetpp'}
\end{equation*}
We can use the gather rule to show $\posound \gatherFuncOf{\apo_i}{\apo_i'}$ because the individual proofs hold according to the induction hypothesis.
That fact that the intermediary assertions match is a result of the induction hypothesis.

The other cases all follow the same pattern:
The proofs for the rule corresponding to the program (here \ruleLabel{PSEQ} for a program sequence) and rule \ruleLabel{PCSQ} hold trivially.
For rule \ruleLabel{PGATHER}, the rule itself has to be applied to the result of the induction hypothesis, as seen here.
\end{proof}

\subsubsection*{Proof of \Cref{th:decisionCalculusSoundenss}}
\begin{proof}[\unskip\nopunct]
With the newly introduced notation, the theorem is the following:
\begin{theorem}[Soundness]
$\posound \apo$ and $\angelicProofRewriteHolds{\apo}{\apop}$ together imply 
$\posound \apop$. 
\end{theorem}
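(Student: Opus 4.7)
The plan is to proceed by structural induction on the derivation of $\angelicProofRewriteHolds{\apo}{\apop}$, considering each rule of realization logic in turn. In every case I would extract from the hypothesis $\posound \apo$ the validity of the relevant subproofs using \Cref{lem:proofProperties}, invoke the induction hypothesis on premises that are themselves rewriting judgments, and reassemble the result using the corresponding proof rule of realizability logic from \Cref{def:plsProofs}, followed where necessary by \ruleLabel{PCSQ} to absorb the weakenings tracked by the notation $\weakerof{\apredset}$.

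The base case \ruleLabel{RCOM} is immediate, since the side condition $\demonicAssertionStrongerOf{\progSemFuncOf{\code{com}}{\apred}}{\apredp}$ is precisely what \ruleLabel{PCOM} requires. The cases \ruleLabel{RSEQL}, \ruleLabel{RANG}, \ruleLabel{RSELECT}, and \ruleLabel{RGATHER} follow the same recipe: decompose $\posound \apo$ via \Cref{lem:proofProperties}, apply the induction hypothesis to the subderivations, and rebuild with \ruleLabel{PSEQ}, \ruleLabel{PANG}, projection, or \ruleLabel{PGATHER}, respectively; for \ruleLabel{RGATHER} the side condition $\proofToProgFuncOf{\apop_1}=\proofToProgFuncOf{\apop_2}$ is precisely what $\gatherFunc{}$ and \ruleLabel{PGATHER} demand. \ruleLabel{RTRANS} is handled by two applications of the induction hypothesis. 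For \ruleLabel{RCSQ}, the two required inequalities $\angelicAssertionStrongerOf{\weakerof{\apredset}}{\wweakerof{\apredset}}$ and $\angelicAssertionStrongerOf{\weakerof{\apredsetp}}{\wweakerof{\apredsetp}}$ hold by definition of the weakening notation, so \ruleLabel{PCSQ} applies directly to the proof outline obtained from the induction hypothesis.

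The cases that require more care are \ruleLabel{RDEM}, \ruleLabel{RLOOP}, and \ruleLabel{RSEQR}. For \ruleLabel{RDEM} and \ruleLabel{RLOOP}, the explicit side condition $\weakerof{\apredset}=\set{\apred}$ is exactly the singleton restriction required by \ruleLabel{PDEM} and \ruleLabel{PLOOP}, so once the induction hypothesis has been applied to each branch respectively to the loop body, the corresponding proof rule closes the case. The main obstacle is \ruleLabel{RSEQR}, where the rewrite weakens both the intermediate assertion (from $\apredsetp$ to $\weakerof{\apredsetp}$) and the postcondition (from $\apredsetpp$ to $\weakerof{\apredsetpp}$), while leaving the left half $\apo_1$ structurally unchanged. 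To reconcile this, I would first extract $\posound \hoaretriplet{\apredset}{\apo_1}{\apredsetp}$ from the hypothesis using the first implication of \Cref{lem:proofProperties}, apply \ruleLabel{PCSQ} to weaken its postcondition to $\weakerof{\apredsetp}$, then invoke the induction hypothesis on the right half to obtain $\posound \hoaretriplet{\weakerof{\apredsetp}}{\apo_2'}{\weakerof{\apredsetpp}}$, and finally compose the two with \ruleLabel{PSEQ}. Throughout the induction, the main bookkeeping challenge is tracking which selections are less versatile than which, since several rules shuffle the weakening annotation between the pre- and postcondition and the matching of intermediary assertions has to be preserved at every reassembly step.
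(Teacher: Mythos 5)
Your proposal is correct and takes essentially the same route as the paper's proof: structural induction on the rewriting derivation, extracting validity of the subproofs via \Cref{lem:proofProperties}, applying the induction hypothesis, and reassembling with the rules of \Cref{def:plsProofs}, including exactly the paper's handling of \ruleLabel{RSEQR} by first weakening the left triple's postcondition with \ruleLabel{PCSQ} and then composing with \ruleLabel{PSEQ}. The only detail left implicit in your sketch is that in the \ruleLabel{RANG} case the paper, after applying \ruleLabel{PANG} to each production's subproof, recombines them with repeated applications of \ruleLabel{PGATHER} to recover the judgment for the whole non-terminal.
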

We conduct the proof using structural induction over the proof tree in realization logic.

\baseCase{}
We know that $\angelicProofHoareTripletHolds{R}{\code{com}}{S}$ and $\angelicProofRewriteHolds{\hoaretriplet{R}{\code{com}}{S}}{\hoaretriplet{\set{r}}{\code{com}}{\set{s}}}$
hold.
From the rule \ruleLabel{RCOM} of the realization logic we know that $\demonicAssertionStrongerOf{\progSemFuncOf{\code{com}}{r}}{s}$ is true.
This directly implies that $\angelicProofHoareTripletHolds{\set{r}}{\code{com}}{\set{s}}$ holds.

\inductionStep{}
Consider the rule  \ruleLabel{RSELECT}.
The implication immediately holds because the nested proofs are sound per \Cref{lem:proofProperties}.

Consider the rule \ruleLabel{RAC}.
Again, since the nested proofs are sound, we get 
$\posound \apo_1$ and $\posound \apo$ and $\posound \apo_2$.
Applying the induction hypothesis on the precondition of the rule yields the triple
$\posound \apop$
Applying rule \ruleLabel{PANG} on each subproof, yields
$\posound N(\apo_1)$ and $\posound N(\apop)$ and $\posound N(\apo_2)$.
Applying rule \ruleLabel{PGATHER} multiple times yields the sought after result:
    $\posound N(\apo_1 \bnf \apop \bnf \apo_2)$ .

Consider rule \ruleLabel{RSEQL}. 
Since the nested proofs are sound, we get $\angelicProofHoareTripletHolds{\apredset}{\apo_1}{\apredsetp}$ and also $\posound \apo_2$.
The induction hypothesis yields the realizability triple
$\angelicProofHoareTripletHolds{\weakerof{\apredset}}{\apop_1}{\apredsetp}$.
Using rule \ruleLabel{PSEQ}, we get 
$\posound \hoaretriplet{\weakerof{\apredset}}{\apop_1}{\apredsetp} \code{;}\apo_2$.

Consider rule \ruleLabel{RSEQR}.
Again, by soundness of the subproofs we get $\angelicProofHoareTripletHolds{\apredset}{\apo_1}{\apredsetp}$ and also 
$\angelicProofHoareTripletHolds{\apredsetp}{\apo_2}{\apredsetpp}$.
Applying the induction hypothesis we get $\angelicProofHoareTripletHolds{\weakerof{\apredsetp}}{\apop_2}{\weakerof{\apredsetpp}}$.
Since $\angelicAssertionStrongerOf{\apredsetp}{\weakerof{\apredsetp}}$, the infer rule can be applied to the first realizability triple resulting in 
$\angelicProofHoareTripletHolds{\apredset}{\apo_1}{\weakerof{\apredsetp}}$
Then applying rule \ruleLabel{PSEQ} yields: 
$\angelicProofHoareTripletHolds{\apredset}{\apo_1\code{;}\apo_2}{\weakerof{\apredsetpp}}$.

Consider rule \ruleLabel{RDEM}.
Again, through the soundness of the sub proofs (\Cref{lem:proofProperties}), we get $\angelicProofHoareTripletHolds{\apredset}{\apo_1}{\apredsetp}$ and $\angelicProofHoareTripletHolds{\apredset}{\apo_2}{\apredset_p}$.
The induction hypothesis yields  $\angelicProofHoareTripletHolds{\weakerof{\apredset}}{\apop_i}{\weakerof{\apredsetp}}$ with $\weakerof{\apredset} = \set{\apred}$.
Applying rule \ruleLabel{PDEM} results in 
\begin{equation*}
    \angelicProofHoareTripletHolds{\weakerof{\apredset}}{\choiceOf{\apo_1}{\apo_2}}{\weakerof{\apredsetp}}
    \ .
\end{equation*}

Consider rule \ruleLabel{RLOOP}.
Through the soundness of the sub proofs (\Cref{lem:proofProperties}), we get the realizability triple $\angelicProofHoareTripletHolds{I}{\apo}{I}$ .
Applying the induction hypothesis, we get the realizability triple $\angelicProofHoareTripletHolds{\weakerof{I}}{\apop}{\weakerof{I}}$ with $\weakerof{I} = \set{i}$.
Applying rule \ruleLabel{PLOOP} yields
the realizability triple
\begin{equation*}
    \angelicProofHoareTripletHolds{\weakerof{I}}{\kleeneof{\apop}}{\weakerof{I}}
    \ .
\end{equation*}

Consider rule \ruleLabel{RTRANS}.
The induction hypothesis yields soundness of $\posound \apo_2$.
Applying the induction hypothesis again yields the sought after result: $\posound \apo_3$.

Consider rule \ruleLabel{RCSQ}.
The induction hypothesis yields
$\angelicProofHoareTripletHolds{\wweakerof{\apredset}}{\apop}{\weakerof{\apredsetp}}$.
Using rule \ruleLabel{PCSQ}, we get 
$\angelicProofHoareTripletHolds{\weakerof{\apredset}}{\apop}{\wweakerof{\apredsetp}}$.

Lastly, consider rule \ruleLabel{RGATHER}.
Applying the induction hypothesis yields $\angelicProofHoareTripletHolds{\apredset_1}{\apop_1}{\apredsetp_1}$ and $\angelicProofHoareTripletHolds{\apredset_2}{\apop_2}{\apredsetp_2}$ with $\proofToProgFuncOf{\apop_1} = \proofToProgFuncOf{\apop_2}$.
Applying the rule \ruleLabel{PGATHER}, we get 
\begin{equation*}
    \angelicProofHoareTripletHolds{\apredset_1 \cup \apredset_2}{\gatherFuncOf{\apop_1}{\apop_2}}{\apredsetp_1 \cup \apredsetp_2}
\end{equation*}
\end{proof}

\subsubsection*{Proof of \Cref{th:angelicDecisionCalcComplete}}
Again, we state theorem with the new notation:
\begin{theorem}
$\posound \apo$ and $\posound \apop$ and $\proofStrongerOf{\apo}{\apop}$ together imply $\angelicProofRewriteHolds{\apo}{\apop}$. 
\end{theorem}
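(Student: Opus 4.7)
The plan is to do structural induction along the definition of $\proofStronger$, i.e.\ a simultaneous induction on the matching shapes of $\apo$ and $\apop$. In each case we reconstruct an explicit rewrite derivation in realization logic out of the rewrite derivations guaranteed by the induction hypothesis on subproofs, adjusting assertions at the end with \ruleLabel{RCSQ} and \ruleLabel{RTRANS}. The crucial auxiliary move, which is used in essentially every case, is to handle non-singleton preconditions by \emph{splitting-and-regathering}: we prove the statement first for singletons $\set{r}\subseteq \apredset$ and then combine the resulting rewrites via \ruleLabel{RGATHER}, relying on the fact that $\gatherFunc{}$ composes proof outlines with the same underlying sketch.

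For the base case of a command, from $\posound \hoaretriplet{\apredset}{\code{com}}{\apredsetp}$ and $\posound\hoaretriplet{\weakerof{\apredset}}{\code{com}}{\weakerof{\apredsetp}}$ together with $\proofStrongerOf{\hoaretriplet{\apredset}{\code{com}}{\apredsetp}}{\hoaretriplet{\weakerof{\apredset}}{\code{com}}{\weakerof{\apredsetp}}}$, I extract, for every $s\in\weakerof{\apredsetp}$, some $r\in\weakerof{\apredset}$ with $\demonicAssertionStrongerOf{\progSemFuncOf{\code{com}}{r}}{s}$ (this is what the validity of $\posound\hoaretriplet{\weakerof{\apredset}}{\code{com}}{\weakerof{\apredsetp}}$ provides, invoking completeness of realizability logic from \Cref{th:angelicHoareSoundnessCompleteness} if needed). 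Each such $(r,s)$ lets me fire \ruleLabel{RCOM} to rewrite the original triple into $\hoaretriplet{\set{r}}{\code{com}}{\set{s}}$. I then chain these rewrites with \ruleLabel{RGATHER}, using that every $\apop_i$ has the same underlying command, to obtain a rewrite into $\hoaretriplet{\weakerof{\apredset}'}{\code{com}}{\weakerof{\apredsetp}}$ for some $\weakerof{\apredset}'\subseteq \weakerof{\apredset}$, and close with \ruleLabel{RCSQ}.

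The inductive cases mirror the shape of $\proofStronger$. For $\concatof{\apo_1}{\apo_2}$ versus $\concatof{\apop_1}{\apop_2}$, I apply the induction hypothesis on each factor, fire \ruleLabel{RSEQR} to rewrite the right half, then \ruleLabel{RSEQL} to rewrite the left half against the new intermediary assertion, and combine with \ruleLabel{RTRANS}. For $\choiceOf{\apo_1}{\apo_2}$, I apply the induction hypothesis branch-wise, but since \ruleLabel{RDEM} requires a singleton precondition and a uniform postcondition, I first specialize the hypothesis to each singleton $\set{r}\subseteq \weakerof{\apredset}$ and each predicate in $\weakerof{\apredsetp}$, then reassemble via \ruleLabel{RGATHER} and \ruleLabel{RCSQ}. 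The loop case is analogous, using \ruleLabel{RLOOP} on each singleton invariant $\set{i}\subseteq\weakerof{I}$. For non-terminals, two subcases arise from the definition of $\proofStronger$: if both proofs have shape $\anonterm(\ldots)$, I use \ruleLabel{RANG} on the corresponding production and the induction hypothesis; if $\apop$ is a derivative of $\anonterm(\ldots)$, I invoke \ruleLabel{RSELECT} to descend into the chosen production and then the induction hypothesis on the inner subproof.

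The main obstacle I foresee is the discrepancy between the proof-outline constructors (which allow arbitrary selections in pre- and post-conditions via \ruleLabel{PGATHER}) and the realization-logic rules for demonic choice and loops (which insist on singletons). The remedy is the splitting-and-regathering pattern sketched above, but making it precise requires showing that, whenever $\posound\choiceOf{\apo_1}{\apo_2}$ and $\posound\choiceOf{\apop_1}{\apop_2}$ both hold with $\proofStrongerOf{\apo_i}{\apop_i}$, the branchwise proofs can be decomposed into singleton-precondition witnesses whose $\gatherFunc{}$-combination has the correct underlying sketch. This is where \Cref{lem:proofProperties} is used to extract matching intermediary assertions and singleton branches, and where the hypothesis $\proofToProgFuncOf{\apop_1}=\proofToProgFuncOf{\apop_2}$ needed by \ruleLabel{RGATHER} has to be verified by tracking that $\proofToProgFunc{}$ is invariant under all our rewriting steps.
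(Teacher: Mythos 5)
Your overall plan coincides with the paper's proof: an induction on the (matching) shapes of $\apo$ and $\apop$, with \ruleLabel{RCOM} in the base case, \ruleLabel{RSEQR}+\ruleLabel{RSEQL}+\ruleLabel{RTRANS} for sequences, \ruleLabel{RDEM}/\ruleLabel{RLOOP} for choices and loops, \ruleLabel{RANG}/\ruleLabel{RSELECT} for the two non-terminal subcases, and \ruleLabel{RGATHER}/\ruleLabel{RCSQ} to absorb gathering and consequence. The command case works essentially as you describe (modulo the slip that extracting the witnesses from $\vdash_p$ uses \emph{soundness}, not completeness, of the outline calculus), because a command triple carries no internal assertions, so regathering the singleton rewrites and one application of \ruleLabel{RCSQ} reproduces the target triple exactly.

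The gap is in the choice and loop cases when the precondition of $\apop$ is not a singleton, i.e.\ when its derivation ends in \ruleLabel{PGATHER} or \ruleLabel{PCSQ} rather than \ruleLabel{PDEM}/\ruleLabel{PLOOP}. The rewrite you construct must end in the \emph{exact} outline $\apop$, including all of its internal assertions: \ruleLabel{RGATHER} produces $\gatherFuncOf{\apop_1}{\apop_2}$ and \ruleLabel{RCSQ} only adjusts the outermost selections, so the singleton pieces you gather have to be a decomposition of $\apop$ itself. Semantic extraction of witnesses, and likewise \Cref{lem:proofProperties}, only yield \emph{some} singleton-precondition outlines over the right sketches; their gather need not carry the internal assertions recorded in $\apop$, so you would only obtain $\angelicProofRewriteHolds{\apo}{\apop'}$ for an $\apop'$ with the same sketch but possibly different annotations --- your own adequacy condition, ``correct underlying sketch,'' is too weak. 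The paper closes exactly this point by a nested induction over the derivation of $\apop$ in the outline calculus, so that the splitting mirrors the actual \ruleLabel{PGATHER}/\ruleLabel{PCSQ}/\ruleLabel{PDEM} applications and the regathered target is $\apop$ on the nose; you would need the same nested induction, or an exact decomposition lemma in the spirit of \Cref{lem:rewriteDemProgSingletons} strengthened to recover $\apop$ itself. Relatedly, your remark that $\proofToProgFunc{}$ is invariant under rewriting is false (\ruleLabel{RSELECT} changes the sketch); the sketch equality needed for \ruleLabel{RGATHER} is not obtained from the rewriting side but from $\apop$'s own derivation, where \ruleLabel{PGATHER} already enforces it.
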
    
\begin{proof}
For this proof, we conduct a structural induction over the syntactic structure of the proof $\apo$.
Then, some cases require their own additional structural induction over the proof tree of $\apop$ in the proof outline calculus.
In the following, we also have $\angelicAssertionStrongerOf{X}{\weakerof{X}}$ for any selections $X$ and $\weakerof{X}$.

\baseCase{}
$\apo = \code{com}$. 
Then, $\apop = \code{com}$.
We show $\angelicProofRewriteHolds{\hoaretriplet{\apredset}{\apo}{\apredsetp}}{\hoaretriplet{\weakerof{\apredset}}{\apop}{\weakerof{\apredsetp}}}$ by induction over the proof tree of $\angelicProofHoareTripletHolds{\weakerof{\apredset}}{\apop}{\weakerof{\apredsetp}}$.
In the base case, we have the realizability triple $\angelicProofHoareTripletHolds{\apred}{\code{com}}{\apredp}$.
Thus, we know that $\demonicAssertionStrongerOf{\progSemFuncOf{\code{com}}{\apred}}{\apredp}$
holds.
Therefore, $\angelicProofRewriteHolds{\hoaretriplet{\apredset}{\code{com}}{\apredsetp}}{\hoaretriplet{\set{\apred}}{\code{com}}{\set{\apredp}}}$ holds.
In the induction step, there are two cases. Either the rule \ruleLabel{PGATHER} was used, or the rule \ruleLabel{PCSQ} was used.
If \ruleLabel{PGATHER} was used, we have $\angelicProofHoareTripletHolds{\apredset_1 \cup \apredset_2}{\code{com}}{\apredsetp_1 \cup \apredsetp_2}$.
From this, we know that $\angelicProofHoareTripletHolds{\apredset_i}{\code{com}}{\apredsetp_i}$ hold.
Applying the induction hypothesis yields $\angelicProofRewriteHolds{\hoaretriplet{\apredset}{\code{com}}{\apredsetp}}{\hoaretriplet{\apredset_i}{\code{com}}{\apredsetp_i}}$.
Using the rule \ruleLabel{RGA}, we get $\angelicProofRewriteHolds{\hoaretriplet{\apredset}{\code{com}}{\apredsetp}}{\hoaretriplet{\cup_i \apredset_i}{\code{com}}{\cup_i \apredset_i}}$.
If \ruleLabel{PCSQ} was used, we have $\angelicProofHoareTripletHolds{\weakerof{\apredset}}{\code{com}}{\wweakerof{\apredsetp}}$.
From the precondition, we know $\angelicProofHoareTripletHolds{\wweakerof{\apredset}}{\code{com}}{\weakerof{\apredsetp}}$ holds.
Using the induction hypothesis, we get $\angelicProofRewriteHolds{\hoaretriplet{\apredset}{\code{com}}{\apredsetp}}{\hoaretriplet{\wweakerof{\apredset}}{\code{com}}{\weakerof{\apredsetp}}}$.
Applying rule \ruleLabel{RCSQ} yields $\angelicProofRewriteHolds{\hoaretriplet{\apredset}{\code{com}}{\apredsetp}}{\hoaretriplet{\weakerof{\apredset}}{\code{com}}{\wweakerof{\apredsetp}}}$.

\inductionStep{}

If the proof is a sequence, i.e.\
$
    \hoaretriplet{\apredset}{\apo}{\apredsetpp} = 
        \hoaretriplet{\apredset}{\apo_1}{\apredsetp}
        \code{;}
        \hoaretriplet{\apredsetp}{\apo_2}{\apredsetpp}
    \ ,
$
then the other proof is
$
    \hoaretriplet{\weakerof{\apredset}}{\apop}{\weakerof{\apredsetpp}} = 
        \hoaretriplet{\weakerof{\apredset}}{\apop_1}{\weakerof{\apredsetp}}
        \code{;}
        \hoaretriplet{\weakerof{\apredsetp}}{\apop_2}{\weakerof{\apredsetpp}}
    \ .
$

Thus we know that, the realizability triples
$\angelicProofHoareTripletHolds{\apredsetp}{\apo_2}{\apredsetpp}$ and $\angelicProofHoareTripletHolds{\weakerof{\apredsetp}}{\apop_2}{\weakerof{\apredsetpp}}$ hold.
Applying the induction hypothesis, we get $\angelicProofRewriteHolds{\apo_2}{\apop_2}$.
Applying the rule \ruleLabel{RSEQR} yields
\begin{equation*}
    \angelicProofRewriteHolds{
            \hoaretriplet{\apredset}{\apo_1}{\apredsetp} 
            \code{;}
            \hoaretriplet{\apredsetp}{\apo_2}{\apredsetpp}
    }
    {       
            \hoaretriplet{\apredset}{\apo_1}{\weakerof{\apredsetp}} 
            \code{;}
            \hoaretriplet{\weakerof{\apredsetp}}{\apop_2}{\weakerof{\apredsetpp}}
    }
            \ .
\end{equation*}

Because $\angelicProofHoareTripletHolds{\apredset}{\apo_1}{\apredsetp}$ holds, the realizability triple $\angelicProofHoareTripletHolds{\apredset}{\apo_1}{\weakerof{\apredsetp}}$ also holds due to the consequence rule.
We also have that $\angelicProofHoareTripletHolds{\weakerof{\apredset}}{\apop_1}{\weakerof{\apredsetp}}$ is true.
Applying the induction hypothesis, we get $\angelicProofRewriteHolds{\hoaretriplet{\apredset}{\apo_1}{\weakerof{\apredsetp}}}{\hoaretriplet{\weakerof{\apredset}}{\apop_1}{\weakerof{\apredsetp}}}$.
Using rule \ruleLabel{RSEQL} yields 
\begin{equation*}
    \angelicProofRewriteHolds{
            \hoaretriplet{\apredset}{\apop_1}{\weakerof{\apredsetp}} 
            \code{;}
            \hoaretriplet{\weakerof{\apredsetp}}{\apop_2}{\weakerof{\apredsetpp}}
    }
    {
            \hoaretriplet{\weakerof{\apredset}}{\apop_1}{\weakerof{\apredsetp}} 
            \code{;}
            \hoaretriplet{\weakerof{\apredsetp}}{\apop_2}{\weakerof{\apredsetpp}}
    }
            \ .
\end{equation*}
Combined with the rewrite from before and rule \ruleLabel{RTRANS}, we have 
\begin{equation*}
    \angelicProofRewriteHolds{
            \hoaretriplet{\apredset}{\apo_1}{\apredsetp} 
            \code{;}
            \hoaretriplet{\apredsetp}{\apo_2}{\apredsetpp}
            }
    {
            \hoaretriplet{\weakerof{\apredset}}{\apop_1}{\weakerof{\apredsetp}} 
            \code{;}
            \hoaretriplet{\weakerof{\apredsetp}}{\apop_2}{\weakerof{\apredsetpp}}
    }
            \ .
\end{equation*}

In the next case, the proof is a demonic choice, i.e.\ 
\begin{equation*}
    \hoaretriplet{\apredset}{\apo}{\apredsetp} = 
    \choiceOf{
        \hoaretriplet{\apredset}{\apo_1}{\apredsetp} 
    }{
        \hoaretriplet{\apredset}{\apo_2}{\apredsetp} 
    }
    \ .
\end{equation*}
Then, the other proof is 
\begin{equation*}
    \hoaretriplet{\weakerof{\apredset}}{\apop}{\weakerof{\apredsetp}} = 
    \choiceOf{
        \hoaretriplet{\weakerof{\apredset}}{\apop_1}{\weakerof{\apredsetp}} 
    }{
        \hoaretriplet{\weakerof{\apredset}}{\apop_2}{\weakerof{\apredsetp}} 
    }
    \ .
\end{equation*}
We proceed by induction over the proof tree of 
\begin{equation*}
    \choiceOf{
        \hoaretriplet{\weakerof{\apredset}}{\apop_1}{\weakerof{\apredsetp}} 
    }{
        \hoaretriplet{\weakerof{\apredset}}{\apop_2}{\weakerof{\apredsetp}} 
    }
    \ .
\end{equation*}
If the proof was done with the rule \ruleLabel{PDEM}, it has the following shape: 
\begin{equation*}
    \choiceOf{
        \hoaretriplet{\weakerof{\apredset}}{\apop_1}{\weakerof{\apredsetp}} 
    }{
        \hoaretriplet{\weakerof{\apredset}}{\apop_2}{\weakerof{\apredsetp}} 
    }
\end{equation*}
with $\weakerof{\apredset} = \set{\apred}$.
With the overall induction hypothesis and because the subproofs are sound, we know that 
\begin{equation*}
\angelicProofRewriteHolds{\hoaretriplet{\apredset}{\apo_1}{\apredsetp}}{\hoaretriplet{\weakerof{\apredset}}{\apop_1}{\weakerof{\apredsetp}}}
\text{ and } 
\angelicProofRewriteHolds{\hoaretriplet{\apredset}{\apo_2}{\apredsetp}}{\hoaretriplet{\weakerof{\apredset}}{\apop_2}{\weakerof{\apredsetp}}}
\end{equation*}
hold.
Using the rule \ruleLabel{RDEM}, we get
\begin{equation*}
    \angelicProofRewriteHolds{
    \choiceOf{
        \hoaretriplet{\apredset}{\apo_1}{\apredsetp} 
    }{
        \hoaretriplet{\apredset}{\apo_2}{\apredsetp} 
    }
    }{
    \choiceOf{
        \hoaretriplet{\weakerof{\apredset}}{\apop_1}{\weakerof{\apredsetp}} 
    }{
        \hoaretriplet{\weakerof{\apredset}}{\apop_2}{\weakerof{\apredsetp}} 
    }
    }
    \ .
\end{equation*}
Next, we, again, have two possibilities.
The proof was conducted using \ruleLabel{PGATHER} or \ruleLabel{PCSQ}.
Starting with rule \ruleLabel{PGATHER}, the proof has the following shape:
\begin{equation*}
    \posound
        \choiceOf{
            \hoaretriplet{\weakerof{\apredset_1} \cup \weakerof{\apredset_2}}{\apop_1}{\weakerof{\apredsetp_1} \cup \weakerof{\apredsetp_2}}
        }{
            \hoaretriplet{\weakerof{\apredset_1} \cup \weakerof{\apredset_2}}{\apop_2}{\weakerof{\apredsetp_1} \cup \weakerof{\apredsetp_2}}
        }
\end{equation*}
From preconditions of the gather rule we get 
\begin{equation*}
    \posound
        \choiceOf{
            \hoaretriplet{\weakerof{\apredset_1}}{\apop_1}{\weakerof{\apredsetp_1}}
        }{
            \hoaretriplet{\weakerof{\apredset_1}}{\apop_2}{\weakerof{\apredsetp_1}}
        }
\end{equation*}
\begin{equation*}
    \posound
        \choiceOf{
            \hoaretriplet{\weakerof{\apredset_2}}{\apo_1''}{\weakerof{\apredsetp_2}}
        }{
            \hoaretriplet{\weakerof{\apredset_2}}{\apo_2''}{\weakerof{\apredsetp_2}}
        }
\end{equation*}
and 
Applying the induction hypothesis then yields
\begin{equation*}
    \angelicProofRewriteHolds{
    \choiceOf{
        \hoaretriplet{\apredset}{\apo_1}{\apredsetp} 
    }{
        \hoaretriplet{\apredset}{\apo_2}{\apredsetp} 
    }
    }{
        \choiceOf{
            \hoaretriplet{\weakerof{\apredset_i}}{\apop_1}{\weakerof{\apredsetp_i}}
        }{
            \hoaretriplet{\weakerof{\apredset_i}}{\apop_2}{\weakerof{\apredsetp_i}}
        }
    }
    \ .   
\end{equation*}
Similarly for $\apo_1''$ and $\apo_2''$.
Using the rule \ruleLabel{RGATHER} we get 
\begin{equation*}
    \angelicProofRewriteHolds{
        \choiceOf{
        \hoaretriplet{\apredset}{\apo_1}{\apredsetp} 
    }{
        \hoaretriplet{\apredset}{\apo_2}{\apredsetp} 
    }    
    }{
        \choiceOf{
            \hoaretriplet{\cup_i \weakerof{\apredset_i}}{\gatherFuncOf{\apop_1}{\apo_1''}}{\cup_i \weakerof{\apredsetp_i}}
        }{
            \hoaretriplet{\cup_i \weakerof{\apredset_i}}{\gatherFuncOf{\apop_1}{\apo_1''}}{\cup_i \weakerof{\apredsetp_i}}
        }
    }
    \ .   
\end{equation*}
Closing with the rule \ruleLabel{PCSQ}, the proof has the following shape:
\begin{equation*}
        \choiceOf{
            \hoaretriplet{\weakerof{\apredset}}{\apop_1}{\wweakerof{\apredsetp}}
        }{
            \hoaretriplet{\weakerof{\apredset}}{\apop_2}{\wweakerof{\apredsetp}}
        }
    \ .
\end{equation*}
From the precondition of the infer rule, we have
\begin{equation*}
        \choiceOf{
            \hoaretriplet{\wweakerof{\apredset}}{\apop_1}{\weakerof{\apredsetp}}
        }{
            \hoaretriplet{\wweakerof{\apredset}}{\apop_2}{\weakerof{\apredsetp}}
        }
    \ .
\end{equation*}
Using the induction hypothesis, we get the following rewrite:
\begin{equation*}
    \angelicProofRewriteHolds{
        \choiceOf{
        \hoaretriplet{\apredset}{\apo_1}{\apredsetp} 
    }{
        \hoaretriplet{\apredset}{\apo_2}{\apredsetp} 
    }    
    }{
        \choiceOf{
            \hoaretriplet{\wweakerof{\apredset}}{\apop_1}{\weakerof{\apredsetp}}
        }{
            \hoaretriplet{\wweakerof{\apredset}}{\apop_2}{\weakerof{\apredsetp}}
        }
    }
    \ .
\end{equation*}
Now, the rule \ruleLabel{RCSQ} can be applied to get the sought after result:
\begin{equation*}
    \angelicProofRewriteHolds{
        \choiceOf{
        \hoaretriplet{\apredset}{\apo_1}{\apredsetp} 
    }{
        \hoaretriplet{\apredset}{\apo_2}{\apredsetp} 
    }    
    }{
        \choiceOf{
            \hoaretriplet{\weakerof{\apredset}}{\apop_1}{\wweakerof{\apredsetp}}
        }{
            \hoaretriplet{\weakerof{\apredset}}{\apop_2}{\wweakerof{\apredsetp}}
        }
    }
    \ .
\end{equation*}
This concludes the proof for demonic choices.

In the next case, the proof is a loop.
This case is analogous to demonic choices.

Finally, consider the proof is over a nonterminal.
So, the first proof is
\begin{equation*}
    \angelicProofHoareTripletHolds{\apredset}{
        \code{N}(
            \hoaretriplet{\apredset_1}{\apo_1}{\apredsetp_1}
            \bnf
            \ldots
            \bnf
            \hoaretriplet{\apredset_n}{\apo_n}{\apredsetp_n}
            \bnf
            \apo_r
        )
    }{\apredsetp}
    \ .
\end{equation*}
Then, the other proof can either also be a weaker proof over a nonterminal or it is only one production of the nonterminal.
We start with the first case.
The second proof is then 
\begin{equation*}
    \angelicProofHoareTripletHolds{\weakerof{\apredset}}{
        \code{N}(
            \hoaretriplet{\weakerof{\apredset_1}}{\apop_1}{\weakerof{\apredsetp_1}}
            \bnf
            \ldots
            \bnf
            \hoaretriplet{\weakerof{\apredset_n}}{\apop_n}{\weakerof{\apredsetp_n}}
        )
    }{\weakerof{\apredsetp}}
    \ .
\end{equation*}
We know that all the subproofs hold.
Applying the induction hypothesis then yields the following rewrites for all $i$:
\begin{equation*}
    \angelicProofRewriteHolds{
        \hoaretriplet{\apredset_i}{\apo_i}{\apredsetp_i}
    }{
        \hoaretriplet{\weakerof{\apredset_i}}{\apop_i}{\weakerof{\apredsetp_i}}
    }
    \ .
\end{equation*}
We apply the rule \ruleLabel{RANG} on the rewritten part several times.
Together with rule \ruleLabel{TRANS}, we combine the rewrites and get:
\begin{multline*}
    \angelicProofRewriteHolds{
    \hoaretriplet{\apredset}{
        \code{N}(
            \hoaretriplet{\apredset_1}{\apo_1}{\apredsetp_1}
            \angelicChoice{}
            \ldots
            \angelicChoice{}
            \hoaretriplet{\apredset_n}{\apo_n}{\apredsetp_n}
            \bnf
            \apo_r
        )
    }{\apredsetp}
    }
    {
        \\
    \hoaretriplet{\cup_i \weakerof{\apredset_i}}{
        \code{N}(
            \hoaretriplet{\weakerof{\apredset_1}}{\apop_1}{\weakerof{\apredsetp_1}}
            \angelicChoice{}
            \ldots
            \angelicChoice{}
            \hoaretriplet{\weakerof{\apredset_n}}{\apop_n}{\weakerof{\apredsetp_n}}
        )
    }{\cup_i \weakerof{\apredsetp_i}}
    }
    \ .
\end{multline*}

Moving on to the case where one production is chosen.
Without loss of generality, let this be the first production.
The proof therefore is
$\angelicProofHoareTripletHolds{\weakerof{\apredset_1}}{\apop_1}{\weakerof{\apredsetp_1}}$.
Since we know that 
$\angelicProofHoareTripletHolds{\apredset_1}{\apo_1}{\apredsetp_1}$ holds, we can apply the induction hypothesis and get the rewrite 
$\angelicProofRewriteHolds{\hoaretriplet{\apredset_1}{\apo_1}{\apredsetp_1}}{\hoaretriplet{\weakerof{\apredset}}{\apop_1}{\weakerof{\apredsetp_1}}}$.
Then, we apply rule \ruleLabel{RAC} and get the following:
\begin{multline*}
     \angelicProofRewriteHolds{
    \hoaretriplet{\apredset}{
        \code{N}(
            \hoaretriplet{\apredset_1}{\apo_1}{\apredsetp_1}
            \angelicChoice{}
            \ldots
            \angelicChoice{}
            \hoaretriplet{\apredset_n}{\apo_n}{\apredsetp_n}
        )
    }{\apredsetp}
    }
    {
    \\
    \hoaretriplet{\weakerof{\apredset_1} \cup (\cup_i \apredset_i)}{
        \code{N}(
            \hoaretriplet{\weakerof{\apredset_1}}{\apop_1}{\weakerof{\apredsetp_1}}
            \angelicChoice{}
            \ldots
            \angelicChoice{}
            \hoaretriplet{\apredset_n}{\apo_n}{\apredsetp_n}
        )
    }{\weakerof{\apredsetp_1} \cup (\cup_i \apredsetp_i)}
    }
    \ . 
\end{multline*}
Furthermore, the rule \ruleLabel{RSELECT} can be applied to get 
\begin{equation*}
     \angelicProofRewriteHolds
{
    \hoaretriplet{\weakerof{\apredset_1} \cup (\cup_i \apredset_i)}{
        \code{N}(
            \hoaretriplet{\weakerof{\apredset_1}}{\apop_1}{\weakerof{\apredsetp_1}}
            \angelicChoice{}
            \ldots
            \angelicChoice{}
            \hoaretriplet{\apredset_n}{\apo_n}{\apredsetp_n}
        )
    }{\weakerof{\apredsetp_1} \cup (\cup_i \apredsetp_i)}
    }
    {
        \\
        \hoaretriplet{\weakerof{\apredset_1}}{\apop_1}{\weakerof{\apredsetp_1}}
    }
    \ . 
\end{equation*}
Lastly, applying the rule \ruleLabel{RTRANS} yields the sought after result:
\begin{equation*}
     \angelicProofRewriteHolds{
    \hoaretriplet{\apredset}{
        \code{N}(
            \hoaretriplet{\apredset_1}{\apo_1}{\apredsetp_1}
            \angelicChoice{}
            \ldots
            \angelicChoice{}
            \hoaretriplet{\apredset_n}{\apo_n}{\apredsetp_n}
        )
    }{\apredsetp}
    }
    {
        \\
        \hoaretriplet{\weakerof{\apredset_1}}{\apop_1}{\weakerof{\apredsetp_1}}
    }
    \ . 
\end{equation*}
This concludes the proof.
\end{proof}

We proof the following lemma:
\begin{lemma}\label{lem:rewriteDemProgSingletons}
    The following implication holds:
    \begin{multline*}
        \posound \choiceOf{\hoaretriplet{\apredset}{\apo_1}{\apredsetp}}{\hoaretriplet{\apredset}{\apo_2}{\apredsetp}}
        \wedge
        \angelicAssertionStrongerOf{\apredsetp}{\set{\apredp}}
        \implies 
        \exists \, \apred, \apop_1, \apop_2 \dv 
        \apred \in \apredset 
        \wedge \angelicProofHoareTripletHolds{\set{\apred}}{\apop_1}{\set{\apredp}}
        \wedge
        \\
        \angelicProofHoareTripletHolds{\set{\apred}}{\apop_2}{\set{\apredp}}
        \wedge \proofStrongerOf{\apo_1}{\apop_1}
        \wedge \proofStrongerOf{\apo_2}{\apop_2} \,
        \wedge
        \\
        \proofToProgFuncOf{\apo_1} = \proofToProgFuncOf{\apop_1}
        \wedge \proofToProgFuncOf{\apo_2} = \proofToProgFuncOf{\apop_2}
        \ .
    \end{multline*}
\end{lemma}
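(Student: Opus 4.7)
The plan is to induct on the derivation of the hypothesis $\posound \choiceOf{\hoaretriplet{\apredset}{\apo_1}{\apredsetp}}{\hoaretriplet{\apredset}{\apo_2}{\apredsetp}}$ in the proof system of \Cref{def:plsProofs}. Only four rules can conclude with a demonic-choice outline: \ruleLabel{PDEM}, \ruleLabel{PCSQ}, \ruleLabel{PGATHER}, and \ruleLabel{PEMPTY}. The \ruleLabel{PEMPTY} case is vacuous, because $\angelicAssertionStrongerOf{\apredsetp}{\set{\apredp}}$ demands a witness $\apredp' \in \apredsetp$ with $\demonicAssertionStrongerOf{\apredp'}{\apredp}$, ruling out $\apredsetp = \emptyset$.

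For \ruleLabel{PDEM}, the precondition $\apredset = \set{\apred}$ is already a singleton, so we take this $\apred$ as the witness. From the premises $\angelicProofHoareTripletHolds{\set{\apred}}{\apo_i}{\apredsetp}$ and a witness $\apredp' \in \apredsetp$ with $\demonicAssertionStrongerOf{\apredp'}{\apredp}$, an application of \ruleLabel{PCSQ} on each branch weakens the postcondition from $\apredsetp$ to $\set{\apredp}$, yielding $\apop_i$ with the same sketch as $\apo_i$ and with $\proofStrongerOf{\apo_i}{\apop_i}$ since $\apredsetp$ is more versatile than $\set{\apredp}$. For \ruleLabel{PCSQ}, the induction hypothesis applies to the inner judgment with more versatile pre- and postconditions $\apredset', \apredsetp'$; transitivity of $\angelicAssertionStronger{}$ gives $\angelicAssertionStrongerOf{\apredsetp'}{\set{\apredp}}$, and the witness $\apred' \in \apredset'$ returned inductively lifts via $\angelicAssertionStrongerOf{\apredset}{\apredset'}$ to an $\apred \in \apredset$ with $\demonicAssertionStrongerOf{\apred}{\apred'}$. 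A final \ruleLabel{PCSQ} on each $\apop_i$ then adjusts the precondition from $\set{\apred'}$ to $\set{\apred}$.

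The main obstacle is the \ruleLabel{PGATHER} case, where the outline splits as
\begin{align*}
\gatherFuncOf{\choiceOf{\apo_1^{1}}{\apo_2^{1}}}{\choiceOf{\apo_1^{2}}{\apo_2^{2}}}
\;=\;
\choiceOf{\gatherFuncOf{\apo_1^{1}}{\apo_1^{2}}}{\gatherFuncOf{\apo_2^{1}}{\apo_2^{2}}}\ ,
\end{align*}
with $\apredset = \apredset^{1} \cup \apredset^{2}$ and $\apredsetp = \apredsetp^{1} \cup \apredsetp^{2}$. The witness $\apredp'$ lies in some $\apredsetp^{j}$, so the induction hypothesis applied to the $j$-th choice subproof produces $\apred \in \apredset^{j} \subseteq \apredset$ and outlines $\apop_i$ with $\proofStrongerOf{\apo_i^{j}}{\apop_i}$ and $\proofToProgFuncOf{\apo_i^{j}} = \proofToProgFuncOf{\apop_i}$. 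The delicate step is upgrading the ordering to $\proofStrongerOf{\gatherFuncOf{\apo_i^{1}}{\apo_i^{2}}}{\apop_i}$. For this I would prove, as a side lemma, that dropping one operand of a gather produces a $\proofStronger{}$-weaker outline, i.e.\ $\proofStrongerOf{\gatherFuncOf{\apo}{\apop}}{\apo}$ and $\proofStrongerOf{\gatherFuncOf{\apo}{\apop}}{\apop}$; transitivity of $\proofStronger{}$ then closes the gap. The sketch equality is immediate from the definition of $\gatherFunc{}$, which only composes outlines over identical sketches.
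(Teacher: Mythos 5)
Your proposal is correct and follows essentially the same route as the paper's proof: induction over the derivation of the hypothesis with the cases \ruleLabel{PDEM}, \ruleLabel{PCSQ}, and \ruleLabel{PGATHER} handled exactly as there (the paper's ``base case'' corresponds to your vacuous \ruleLabel{PEMPTY} case, and your \ruleLabel{PCSQ} witness-lifting and \ruleLabel{PGATHER} restriction to the component containing the witness mirror the paper's steps). You are in fact slightly more explicit than the paper in the \ruleLabel{PGATHER} case, where your side lemma $\proofStrongerOf{\gatherFuncOf{\apo}{\apop}}{\apo}$ plus transitivity discharges the $\proofStronger{}$ and sketch-equality claims that the paper leaves implicit.
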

\begin{proof}
We conduct the proof by induction over the proof tree of 
\begin{equation*}
        \posound \choiceOf{
            \hoaretriplet{\apredset}{\apo_1}{\apredsetp}
        }{
            \hoaretriplet{\apredset}{\apo_2}{\apredsetp}
        }
\end{equation*}
The base case holds trivially.

In the first case of the induction step, 
the proof was done using the rule \ruleLabel{PDEM}.
Then, we have
$\angelicProofHoareTripletHolds{\apredset}{\apo_1}{\apredsetp}$ and also
$\angelicProofHoareTripletHolds{\apredset}{\apo_2}{\apredsetp}$ with $\apredset = \set{\apred}$.
Thus, we know $\angelicProofHoareTripletHolds{\set{\apred}}{\apo_i}{\apredsetp}$.
Using the rule \ruleLabel{PCSQ} on both proofs, we get the realizability triples
$\angelicProofHoareTripletHolds{\set{\apred}}{\apo_i}{\set{\apredp}}$.

We have two remaining cases, either the proof was finished using the rule \ruleLabel{PGATHER} or rule \ruleLabel{PCSQ}.
In the first case, we have 
\begin{equation*}
    \posound
    \choiceOf{
        \hoaretriplet{\cup_i \apredset_i}{\apo_1}{\cup_i \apredsetp_i}
    }{
        \hoaretriplet{\cup_i \apredset_i}{\apo_2}{\cup_i \apredsetp_i}
    }
\end{equation*}
Since $\angelicAssertionStrongerOf{\cup_i \apredsetp_i}{\set{s}}$, one assertion $\apredsetp_i$ is stronger than $\set{\apredp}$.
Without loss of generality, let this be $\apredsetp_1$.
From the above realizability triple, we know that the realizability triple
\begin{equation*}
\posound \choiceOf{\hoaretriplet{\apredset_1}{\apop_1}{\apredsetp_1}}{\hoaretriplet{\apredset_1}{\apop_2}{\apredsetp_1}}
\end{equation*}
holds. Applying the induction hypothesis, we get an $\apred$ and the proofs $\apo_1''$ and $\apo_2''$ for which the realizability triples
$\angelicProofHoareTripletHolds{\set{\apred}}{\apo_1''}{\set{\apredp}}$
and
$\angelicProofHoareTripletHolds{\set{\apred}}{\apo_2''}{\set{\apredp}}$
hold.

In the second case, we have 
\begin{equation*}
    \posound \choiceOf{
        \hoaretriplet{\apredset}{\apo_1}{\weakerof{\apredsetp}}
    }{
        \hoaretriplet{\apredset}{\apo_2}{\weakerof{\apredsetp}}
    }
\end{equation*}
The precondition of the rule \ruleLabel{PCSQ} requires
\begin{equation*}
    \posound \choiceOf{
        \hoaretriplet{\weakerof{\apredset}}{\apo_1}{\apredsetp}
    }{
        \hoaretriplet{\weakerof{\apredset}}{\apo_2}{\apredsetp}
    }
\end{equation*}
Applying the induction hypothesis yields an $\apred \in \weakerof{\apredset}$ and 
$\posound \hoaretriplet{\set{\apred}}{\apop_i}{\set{\apredp}}$.
Since $\angelicAssertionStrongerOf{\apredset}{\weakerof{\apredset}}$ there is an $\apred'$ of $\apredset$ with $\demonicAssertionStrongerOf{\apred'}{\apred}$.
Thus we can strengthen the precondition of the realizability triples to get 
$\posound \hoaretriplet{\set{\apred'}}{\apop_i}{\set{\apredp}}$.
\end{proof}

We proof the following lemma:
\begin{lemma}\label{lem:rewriteLoopProgSingletons}
    The following implication holds:
    \begin{multline*}
        \angelicProofHoareTripletHolds{\apredset}{(\hoaretriplet{I}{\apo}{I})^*}{\apredsetp} \wedge
        i \in \angelicAssertionFont{I}
        \ \implies \  
        \exists \, \apop \dv \,
        \angelicProofHoareTripletHolds{\set{i}}{\apop}{\set{i}}
        \wedge 
        \\
        \proofStrongerOf{\apo}{\apop}
        \wedge \proofToProgFuncOf{\apo} = \proofToProgFuncOf{\apop}
        \ .
    \end{multline*}
\end{lemma}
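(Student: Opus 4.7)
The plan is to prove the lemma by induction on the derivation of $\angelicProofHoareTripletHolds{\apredset}{(\hoaretriplet{I}{\apo}{I})^*}{\apredsetp}$, following the same overall strategy as the proof of \Cref{lem:rewriteDemProgSingletons}. The rules that can produce a loop outline at the top of the derivation tree are \ruleLabel{PLOOP}, \ruleLabel{PCSQ}, and \ruleLabel{PGATHER}, with the remaining axiom \ruleLabel{PEMPTY} handled trivially in the same manner as in the proof of \Cref{lem:rewriteDemProgSingletons}.

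The \ruleLabel{PLOOP} case is essentially by definition. The side condition of the rule forces $I = \set{\apred}$ to be a singleton, so the hypothesis $i \in I$ pins down $i = \apred$, and the premise of \ruleLabel{PLOOP} is exactly $\angelicProofHoareTripletHolds{\set{i}}{\apo}{\set{i}}$. Setting $\apop \definingEquals \apo$ then satisfies all three conclusions by reflexivity of $\proofStronger{}$ and of sketch equality. The \ruleLabel{PCSQ} case is similarly direct: the subderivation has the very same loop outline $(\hoaretriplet{I}{\apo}{I})^*$ (the rule changes only the external pre- and postcondition of the triple, not the outline itself), so the induction hypothesis applied to the subderivation with the same $i$ yields an $\apop$ that already satisfies the goal unchanged.

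The interesting case is \ruleLabel{PGATHER}. Here the top-level loop outline is obtained as the gather of two subderivations $\angelicProofHoareTripletHolds{\apredset_k}{(\hoaretriplet{I_k}{\apo_k}{I_k})^*}{\apredsetp_k}$ for $k = 1, 2$. By the definition of $\gatherFunc$ on loops, the combined loop has invariant $I = I_1 \cup I_2$ and body $\apo = \gatherFuncOf{\apo_1}{\apo_2}$. Given $i \in I_1 \cup I_2$, without loss of generality $i \in I_1$, and applying the induction hypothesis to the $k = 1$ subderivation yields some $\apop$ with $\angelicProofHoareTripletHolds{\set{i}}{\apop}{\set{i}}$, $\proofStrongerOf{\apo_1}{\apop}$, and $\proofToProgFuncOf{\apo_1} = \proofToProgFuncOf{\apop}$. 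The sketch equality demanded by the original lemma follows immediately because \ruleLabel{PGATHER} requires $\proofToProgFuncOf{\apo_1} = \proofToProgFuncOf{\apo_2}$ and $\gatherFunc$ is defined to preserve this common sketch, so $\proofToProgFuncOf{\gatherFuncOf{\apo_1}{\apo_2}} = \proofToProgFuncOf{\apo_1} = \proofToProgFuncOf{\apop}$.

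The main obstacle I anticipate is lifting $\proofStrongerOf{\apo_1}{\apop}$ to $\proofStrongerOf{\gatherFuncOf{\apo_1}{\apo_2}}{\apop}$. For this I would establish a monotonicity property of $\gatherFunc$, namely $\proofStrongerOf{\gatherFuncOf{\apo_1}{\apo_2}}{\apo_k}$ for $k = 1, 2$, by a routine induction on the proof-outline structure. The base case reduces to the elementary fact that $\angelicAssertionStrongerOf{\apredset_1 \cup \apredset_2}{\apredset_k}$, which holds because every $\apred \in \apredset_k$ serves as its own more-precise witness in $\apredset_1 \cup \apredset_2$. Once this monotonicity is in hand, transitivity of $\proofStronger{}$ composes it with the inductive hypothesis to close the \ruleLabel{PGATHER} case and complete the proof.
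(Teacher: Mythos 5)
Your proposal is essentially the paper's own proof: the paper dispatches this lemma with the single remark that it is analogous to \Cref{lem:rewriteDemProgSingletons}, i.e., an induction over the derivation treating \ruleLabel{PLOOP}, \ruleLabel{PCSQ}, and \ruleLabel{PGATHER}, which is precisely your case split, and your handling of \ruleLabel{PLOOP} and \ruleLabel{PCSQ} is correct. In the \ruleLabel{PGATHER} case you are, if anything, more careful than the paper: the auxiliary monotonicity fact $\proofStrongerOf{\gatherFuncOf{\apo_1}{\apo_2}}{\apo_k}$ and the transitivity of $\proofStronger{}$ are needed but never stated there, so making them explicit is an improvement (just note that at non-terminals the definition of $\proofStronger{}$ has to be read as allowing the stronger outline to carry additional production alternatives, since $\gatherFunc{}$ can introduce them; the paper's informal definition does not literally cover this).

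The one place where your argument does not go through as written is the \ruleLabel{PEMPTY} case. In \Cref{lem:rewriteDemProgSingletons} that case is vacuous because the hypothesis $\angelicAssertionStrongerOf{\apredsetp}{\set{\apredp}}$ cannot hold for an empty postcondition; here the only hypothesis, $i \in I$, constrains the inner loop annotation and not the outer postcondition, so a judgment obtained by \ruleLabel{PEMPTY} -- whose body annotations may be arbitrary and need not be derivable -- is not excluded, and for such a judgment an $\apop$ with $\angelicProofHoareTripletHolds{\set{i}}{\apop}{\set{i}}$ and the same sketch as $\apo$ may simply not exist. So this case needs an explicit side condition or argument (e.g., assuming the inner annotations are themselves derivable in the spirit of \Cref{lem:proofProperties}, or a nonempty outer postcondition as in the lemma's use inside the proof of \Cref{th:angelicChoiceElimination}) rather than an appeal to the earlier lemma; the paper's one-line proof inherits exactly the same wrinkle, so this is a defect of the statement as much as of your write-up.
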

\begin{proof}
    The proof is analogous to one of the previous lemma.
\end{proof}

\subsubsection*{Proof of \Cref{th:angelicChoiceElimination}}
The proof is done by an induction over the structure of the $\code{prf}$.
Again, we first state the theorem using the new notation:
\begin{theorem}[Backtracking Freedom]
    Let $\angelicProofHoareTripletHolds{\apredset}{\apo}{\apredsetp}$
    and $\apredp \in \apredsetp$. 
    Then there are $\apred$ and $\apop$ so that 
    $\angelicProofRewriteHolds{
        \hoaretriplet{\apredset}{\apo}{\apredsetp}
    }{
        \hoaretriplet{\set{\apred}}{\apop}{\set{\apredp}}
    }$, 
    $\apred \in \apredset$, 
    and $\proofToProgFuncOf{\apop} \in \angelConcFuncOf{\proofToProgFuncOf{\apo}}$.
\end{theorem}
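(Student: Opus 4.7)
The plan is to proceed by structural induction on the proof outline $\apo$, propagating the chosen target predicate $\apredp$ backwards through the outline, collapsing every selection to a singleton and resolving every non-terminal to one of its productions along the way. For the base case $\apo = \hoaretriplet{\apredset}{\code{com}}{\apredsetp}$, validity of the outline together with $\apredp \in \apredsetp$ supplies some $\apred \in \apredset$ with $\demonicAssertionStrongerOf{\progSemFuncOf{\code{com}}{\apred}}{\apredp}$, so a single application of \ruleLabel{RCOM} gives the required rewrite directly, with $\proofToProgFuncOf{\apop} = \code{com}$ trivially in $\angelConcFuncOf{\code{com}}$.

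For a sequential composition $\hoaretriplet{\apredset}{\apo_1}{\apredsetp};\hoaretriplet{\apredset'}{\apo_2}{\apredsetpp}$ (where $\apredset' = \apredsetp$ by \Cref{lem:proofProperties}), I first invoke the induction hypothesis on $\apo_2$ with target $\apredp$ to obtain a rewrite to $\hoaretriplet{\set{\apredpp}}{\apop_2}{\set{\apredp}}$ with $\apredpp \in \apredsetp$, and then invoke it again on $\apo_1$ with the new target $\apredpp$ to obtain $\hoaretriplet{\set{\apred}}{\apop_1}{\set{\apredpp}}$ with $\apred \in \apredset$. Rules \ruleLabel{RSEQR}, \ruleLabel{RSEQL}, and \ruleLabel{RTRANS} chain these two rewrites into the desired singleton sequential rewrite. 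The non-terminal case $\anonterm(\apo_1\bnf\ldots\bnf\apo_n)$ is handled by locating an index $i$ with $\apredp$ in the postcondition of $\apo_i$ (which exists because the postcondition of the wrapper is the union of the branches' postconditions), applying the induction hypothesis to $\apo_i$ to obtain a singleton rewrite into some program $\apop$ with $\proofToProgFuncOf{\apop} \in \angelConcFuncOf{\proofToProgFuncOf{\apo_i}}$, and then using \ruleLabel{RANG} followed by \ruleLabel{RSELECT} to discard all remaining alternatives and the wrapper.

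The main obstacle is the demonic choice and loop cases. Here \ruleLabel{RDEM} (resp.\ \ruleLabel{RLOOP}) requires a singleton precondition shared by both branches (resp.\ preserved across the loop body), so the induction hypothesis cannot simply be applied branch-wise: the two recursive calls might pick different predicates $\apred$. This is exactly what Lemmas \ref{lem:rewriteDemProgSingletons} and \ref{lem:rewriteLoopProgSingletons} are designed to overcome. For demonic choice, the lemma gives a single $\apred \in \apredset$ together with refined sub-outlines $\apop_1, \apop_2$ over the same sketches, both of type $\hoaretriplet{\set{\apred}}{\apop_i}{\set{\apredp}}$; I then apply the induction hypothesis to each refined $\apop_i$ to extract programs $\aprog_1, \aprog_2$, and combine them via \ruleLabel{RDEM} wrapped in whatever instances of \ruleLabel{RCSQ} are needed to match the original pre- and postcondition. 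The loop case is analogous, using Lemma \ref{lem:rewriteLoopProgSingletons} to obtain an invariant singleton $\set{i}$ stronger than $\apredp$ and a refined body outline, then applying the induction hypothesis to the body and closing with \ruleLabel{RLOOP}. The remaining cases arising from \ruleLabel{PGATHER} and \ruleLabel{PCSQ} in the underlying derivation are absorbed by preceding or following the induction-hypothesis rewrite with \ruleLabel{RCSQ}, so they introduce no new difficulty.
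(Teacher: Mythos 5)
Your skeleton is the paper's: induction on the outline, \ruleLabel{RCOM} in the base case, the \ruleLabel{RSEQR}/\ruleLabel{RSEQL}/\ruleLabel{RTRANS} chaining for sequences, \ruleLabel{RANG} plus \ruleLabel{RSELECT} for non-terminals, and \Cref{lem:rewriteDemProgSingletons}/\Cref{lem:rewriteLoopProgSingletons} for demonic choice and loops. The gap is in how you reconnect the refined outlines delivered by those lemmas with the outline you actually have to rewrite. The rewrite demanded by the theorem has the \emph{original} outline as its source, so before \ruleLabel{RDEM} can fire you need, for each branch, a rewrite $\angelicProofRewriteHolds{\hoaretriplet{\apredset}{\apo_i}{\apredsetp}}{\hoaretriplet{\set{\apred}}{\apop_i}{\set{\apredp}}}$ from the original branch outline to the refined one (and analogously from the original body outline in the loop case). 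You claim this is absorbed by ``whatever instances of \ruleLabel{RCSQ} are needed to match the original pre- and postcondition'', but \ruleLabel{RCSQ} only re-adjusts the outer pre/post of an already-derived rewrite: it neither produces a rewrite whose source is $\hoaretriplet{\apredset}{\apo_i}{\apredsetp}$ nor changes the target's \emph{internal} intermediary assertions, and the refined $\apop_i$ generally differ from $\apo_i$ exactly in those internal assertions, since they come from sub-derivations hidden under \ruleLabel{PGATHER}/\ruleLabel{PCSQ}. The paper closes this hole by invoking the completeness theorem of realization logic (\Cref{th:angelicDecisionCalcComplete}): the lemmas additionally supply $\proofStrongerOf{\apo_i}{\apop_i}$ and validity of both outlines, which is precisely the hypothesis of completeness, and then \ruleLabel{RDEM} (resp.\ \ruleLabel{RLOOP} followed by \ruleLabel{RCSQ}, using $\demonicAssertionStrongerOf{\apred}{i}$ and $\demonicAssertionStrongerOf{i}{\apredp}$) finishes the case. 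Without that invocation your chain of rewrites never starts at the given proof outline.

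A smaller issue: applying the induction hypothesis to the refined $\apop_i$ (to resolve non-terminals inside the branches or the loop body) is reasonable, but $\apop_i$ is not a structural subterm of $\apo$, so ``structural induction on $\apo$'' does not literally license the call; you should instead induct on the size of the underlying sketch, which works because the lemmas guarantee $\proofToProgFuncOf{\apop_i}=\proofToProgFuncOf{\apo_i}$. With the completeness bridge inserted and the induction measure adjusted, your argument goes through.
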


\begin{proof}
\baseCase{}
The proof is $\angelicProofHoareTripletHolds{\apredset}{\code{com}}{\apredsetp}$.
Applying soundness yields a predicate $\apred$ with $\demonicAssertionStrongerOf{\progSemFuncOf{\code{com}}{\apred}}{\apredp}$ and $\apred \in \apredset$.
Therefore, we can apply rule \ruleLabel{RCOM} and get $\angelicProofRewriteHolds{\hoaretriplet{\apredset}{\code{com}}{\apredsetp}}{\hoaretriplet{\set{\apred}}{\code{com}}{\set{\apredp}}}$.

\inductionStep{}
If the proof is a sequence, we have 
\begin{equation*}
    \hoaretriplet{\apredset}{\apo_1}{\apredsetp}
    \code{;}
    \hoaretriplet{\apredsetp}{\apo_2}{\apredsetpp}
    \ .
\end{equation*}
Then, we know that the subproofs are true so $\apo_1$ and $\apo_2$ hold.
Applying the induction hypothesis on the second realizability triple from above, we get an $\apredp$ with $\apredp \in \apredsetp$ and a proof $\apop_2$ for which the following holds:
\begin{equation*}
\angelicProofRewriteHolds{\hoaretriplet{\apredsetp}{\apo_2}{\apredsetpp}}{\hoaretriplet{\set{\apredp}}{\apop_2}{\set{\apredpp}}}
\ .
\end{equation*}
Next, we apply the induction hypothesis on the other sequence and get a predicate $\apred$ with $\apred \in \apredset$ and a proof $\apop_1$ for which the following holds:
\begin{equation*}
\angelicProofRewriteHolds{\hoaretriplet{\apredset}{\apo_1}{\apredsetp}}{\hoaretriplet{\set{\apred}}{\apop_1}{\set{\apredp}}}
\ .
\end{equation*}
Applying rule \ruleLabel{RSEQR}, we get the following rewrite:
\begin{equation*}
    \angelicProofRewriteHolds{
            \hoaretriplet{\apredset}{\apo_1}{\apredsetp}
            \code{;}
            \hoaretriplet{\apredsetp}{\apo_2}{\apredsetpp}
    }{
            \hoaretriplet{\apredset}{\apo_1}{\set{\apredp}}
            \code{;}
            \hoaretriplet{\set{\apredp}}{\apop_2}{\set{\apredpp}}
    }
    \ .
\end{equation*}
Then, applying rule \ruleLabel{RSEQL}, we get the following rewrite:
\begin{equation*}
        \angelicProofRewriteHolds{
            \hoaretriplet{\apredset}{\apo_1}{\set{\apredp}}
            \code{;}
            \hoaretriplet{\set{\apredp}}{\apop_2}{\set{\apredpp}}
    }{
            \hoaretriplet{\set{\apred}}{\apop_1}{\set{\apredp}}
            \code{;}
            \hoaretriplet{\set{\apredp}}{\apop_2}{\set{\apredpp}}
    }
    \ .
\end{equation*}
Finally, with rule \ruleLabel{RTRANS}, we get the sought after result: 
\begin{equation*}
    \angelicProofRewriteHolds{
            \hoaretriplet{\apredset}{\apo_1}{\apredsetp}
            \code{;}
            \hoaretriplet{\apredsetp}{\apo_2}{\apredsetpp}
    }{
            \hoaretriplet{\set{\apred}}{\apop_1}{\set{\apredp}}
            \code{;}
            \hoaretriplet{\set{\apredp}}{\apop_2}{\set{\apredpp}}
    }
    \ .
\end{equation*}
If the proof is a demonic choice, so we have 
\begin{equation*}
    \choiceOf{
        \hoaretriplet{\apredset}{\apo_1}{\apredsetp}
    }{
        \hoaretriplet{\apredset}{\apo_2}{\apredsetp}
    }
    \ ,
\end{equation*}
and a predicate $\apredp$ with $\apredp \in \apredsetp$.
Then we can apply \Cref{lem:rewriteDemProgSingletons} to get 
a predicate $\apred$ and the proofs
$\angelicProofHoareTripletHolds{\set{\apred}}{\apop_1}{\set{\apredp}}$
and
$\angelicProofHoareTripletHolds{\set{\apred}}{\apop_2}{\set{\apredp}}$
with $\proofStrongerOf{\apo_i}{\apop_i}$.
.
Using \Cref{th:angelicDecisionCalcComplete}, the following two rewrites are sound:
\begin{equation*}
    \angelicProofRewriteHolds{
        \hoaretriplet{\apredset}{\apo_1}{\apredsetp}
    }{
        \hoaretriplet{\set{\apred}}{\apop_1}{\set{\apredp}}
    }
    \text{ and }
    \angelicProofRewriteHolds{
        \hoaretriplet{\apredset}{\apo_2}{\apredsetp}
    }{
        \hoaretriplet{\set{\apred}}{\apop_2}{\set{\apredp}}
    }
    \ .
\end{equation*}

An application of the rule \ruleLabel{RDEM} then yields the sought after result: 
\begin{equation*}
    \angelicProofRewriteHolds{
            \choiceOf{
                \hoaretriplet{\apredset}{\apo_1}{\apredsetp}
            }{
                \hoaretriplet{\apredset}{\apo_2}{\apredsetp}
            }
        }{
            \choiceOf{
                \hoaretriplet{\set{\apred}}{\apop_1}{\set{\apredp}}
            }{
                \hoaretriplet{\set{\apred}}{\apop_2}{\set{\apredp}}
            }
        }
        \ .
\end{equation*}

Moving on with loops.
We have the proof
$\angelicProofHoareTripletHolds{\apredset}{(\hoaretriplet{I}{\apo}{I})^*}{\apredsetp}$ and a predicate $\apredp$ with $\apredp \in \apredsetp$.
Since $\angelicAssertionStrongerOf{I}{\apredsetp}$, there also is an $i$ of $I$ with $\demonicAssertionStrongerOf{i}{\apredp}$.
Applying \Cref{lem:rewriteLoopProgSingletons}, we get a proof $\apop$ for which 
$\angelicProofHoareTripletHolds{\set{i}}{\apop}{\set{i}}$
holds and $\proofStrongerOf{\apo}{\apo'}$. 
\Cref{th:angelicDecisionCalcComplete} then yields 
\begin{equation*}
    \angelicProofRewriteHolds{
        \hoaretriplet{I}{\apo}{I}
    }{
        \hoaretriplet{\set{i}}{\apop}{\set{i}}
    }
    \ .
\end{equation*}

Applying rule \ruleLabel{RLOOP} yields 
\begin{equation*}
    \angelicProofRewriteHolds{
        \hoaretriplet{\apredset}{(\hoaretriplet{I}{\apo}{I})^*}{\apredsetp}
    }{
        \hoaretriplet{\set{i}}{(\hoaretriplet{\set{i}}{\apop}{\set{i}})^*}{\set{i}}
    }
    \ .
\end{equation*}
Since $\angelicAssertionStrongerOf{\apredset}{I}$, we know there is an $\apred$ in $\apredset$ for which 
$\demonicAssertionAbsStrongerOf{\apred}{i}$ holds.
Applying rule \ruleLabel{RCSQ}, we get the sought after result:
\begin{equation*}
    \angelicProofRewriteHolds{
        \hoaretriplet{\apredset}{(\hoaretriplet{I}{\apo}{I})^*}{\apredsetp}
    }{
        \hoaretriplet{\set{r}}{(\hoaretriplet{\set{i}}{\apop}{\set{i}})^*}{\set{s}}
    }
    \ .
\end{equation*}

In the last case, we consider proofs over nonterminals.
We have a predicate $\apredp$ and the proof 
\begin{equation*}
\angelicProofHoareTripletHolds{\apredset}{
    \code{N}(
        \hoaretriplet{\apredset_1}{\apo_1}{\apredsetp_1}
        \angelicChoice{}
        \ldots
        \angelicChoice{}
        \hoaretriplet{\apredset_n}{\apo_n}{\apredsetp_n}
    )
    }{\apredsetp}
\end{equation*}
with $\apredp \in \apredsetp$.
Since $(\apredsetp_1 \cup \ldots \cup \apredsetp_n) = \apredsetp$, 
there is a selection $\apredsetp_i$ with $
\apredp \in \apredsetp_i
$.
Without loss of generality, let this $i$ be $1$.
We know the realizability triple $\angelicProofHoareTripletHolds{\apredset_1}{\apo_1}{\apredsetp_1}$ holds.
Using the induction hypothesis, we get the following for an $\apred_1$ of $\apredset_1$:
\begin{equation*}
    \angelicProofRewriteHolds{
        \hoaretriplet{\apredset_1}{\apo_1}{\apredsetp_1}
    }{
        \hoaretriplet{\set{\apred_1}}{\apop_1}{\set{\apredp}}
    }
    \ .
\end{equation*}
Using the rule \ruleLabel{RANG}, we get the following rewrite:
\begin{equation*}
    \angelicProofRewriteHolds{
    \code{N}(
        \hoaretriplet{\apredset_1}{\apo_1}{\apredsetp_1}
        \angelicChoice{}
        \ldots
        \angelicChoice{}
        \hoaretriplet{\apredset_n}{\apo_n}{\apredsetp_n}
    )
    }{
        \\
            \code{N}(
                \hoaretriplet{\set{\apred_1}}{\apop_1}{\set{\apredp}}
                \angelicChoice{}
                \ldots
                \angelicChoice{}
                \hoaretriplet{\apredset_n}{\apo_n}{\apredsetp_n}
            )
    }
    \ .
\end{equation*}
Then, with rule \ruleLabel{RSELECT}, we get
\begin{equation*}
    \angelicProofRewriteHolds{
            \code{N}(
                \hoaretriplet{\set{\apred_1}}{\apop_1}{\set{\apredp}}
                \angelicChoice{}
                \ldots
                \angelicChoice{}
                \hoaretriplet{\apredset_n}{\apo_n}{\apredsetp_n}
            )
    }
    {
        \\
        \hoaretriplet{\set{\apred_1}}{\apop_1}{\set{\apredp}}
    }
    \ .
\end{equation*}
Finally, applying rule \ruleLabel{RTRANS} twice, we get
\begin{equation*}
    \angelicProofRewriteHolds{
    \hoaretriplet{\apredset}{
    \code{N}(
        \hoaretriplet{\apredset_1}{\apo_1}{\apredsetp_1}
        \angelicChoice{}
        \ldots
        \angelicChoice{}
        \hoaretriplet{\apredset_n}{\apo_n}{\apredsetp_n}
    )
    }{\apredsetp}
    }
    {
        \hoaretriplet{\set{\apred_1}}{\apop_1}{\set{\apredp}}
    }
    \ .
\end{equation*}
Because $\apredset_1 \subseteq \apredset$, we have $\apred_1 \in \apredset$.
This concludes the proof.
\end{proof}
\subsection{Proofs for \Cref{ch:vc}}
We show the following lemma:
\begin{lemma}\label{lem:spMonotonic}
    The function $\strongestPostSem{}$ is monotonic (if the specifications of the nonterminals are monotonic):
    \begin{equation*}
        \angelicAssertionStrongerOf{\apredset}{\apredset'} 
        \ \implies \ 
        \angelicAssertionStrongerOf{\strongestPostSemOf{\apredset}{\asketch}}{\strongestPostSemOf{\apredset'}{\asketch}}
        \ .
    \end{equation*}
\end{lemma}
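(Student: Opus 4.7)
The plan is to proceed by structural induction on the sketch $\asketch$, unpacking the definition of $\strongestPostSem{}$ case by case and leaning on the underlying monotonicity of $\progSemFunc{-}$ (Lemma~\ref{lem:comMonotonic}), the assumed monotonicity of each annotated transformer $\Gamma$, and the fact that $\angelicAssertionStronger{}$ composes transitively with $\demonicAssertionStronger{}$ in the expected way. Assume throughout that $\angelicAssertionStrongerOf{\apredset}{\apredset'}$, i.e.\ for every $\apred' \in \apredset'$ there exists $\apred \in \apredset$ with $\demonicAssertionStrongerOf{\apred}{\apred'}$.

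For the base cases, the command case follows directly: if $\apredp' \in \strongestPostSemOf{\apredset'}{\code{com}}$ then $\apredp' = \progSemFuncOf{\code{com}}{\apred'}$ for some $\apred' \in \apredset'$, and picking the witnessing $\apred \in \apredset$ gives $\demonicAssertionStrongerOf{\progSemFuncOf{\code{com}}{\apred}}{\progSemFuncOf{\code{com}}{\apred'}}$ by Lemma~\ref{lem:comMonotonic}, so $\progSemFuncOf{\code{com}}{\apred}$ is the required element of $\strongestPostSemOf{\apredset}{\code{com}}$. The non-terminal case $\anonterm[\Gamma]$ is immediate from the assumed monotonicity of $\Gamma$. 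For loops $\kleeneof{\asketch}[I]$, observe that $\strongestPostSemOf{\apredset}{\kleeneof{\asketch}[I]} \subseteq I$ and contains exactly those invariants $i$ entailed by $\apredset$; if $i \in \strongestPostSemOf{\apredset'}{\kleeneof{\asketch}[I]}$, then $\angelicAssertionStrongerOf{\apredset'}{\set{i}}$, and by transitivity $\angelicAssertionStrongerOf{\apredset}{\set{i}}$, so $i$ also lies in $\strongestPostSemOf{\apredset}{\kleeneof{\asketch}[I]}$ and witnesses itself with $\demonicAssertionStrongerOf{i}{i}$.

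The inductive step for sequences is a double application of the induction hypothesis: from $\angelicAssertionStrongerOf{\apredset}{\apredset'}$ the IH on $\asketch_1$ yields $\angelicAssertionStrongerOf{\strongestPostSemOf{\apredset}{\asketch_1}}{\strongestPostSemOf{\apredset'}{\asketch_1}}$, and applying the IH on $\asketch_2$ to this pair gives the conclusion. The genuinely delicate case, and the one I expect to be the main obstacle, is demonic choice $\choiceOf{\asketch_1}{\asketch_2}$, because its strongest post mixes an angelic choice of precondition with a demonic join $\sqcup_d$ over the two branches. Given $q = \apredp'_1 \sqcup_d \apredp'_2 \in \strongestPostSemOf{\apredset'}{\choiceOf{\asketch_1}{\asketch_2}}$ with $\apredp'_k \in \strongestPostSemOf{\set{\apred'}}{\asketch_k}$ for some $\apred' \in \apredset'$, I pick $\apred \in \apredset$ with $\demonicAssertionStrongerOf{\apred}{\apred'}$, note that this gives $\angelicAssertionStrongerOf{\set{\apred}}{\set{\apred'}}$, and apply the IH to each branch to obtain $\apredp_k \in \strongestPostSemOf{\set{\apred}}{\asketch_k}$ with $\demonicAssertionStrongerOf{\apredp_k}{\apredp'_k}$. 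The final step is to invoke monotonicity of $\sqcup_d$ on the complete lattice $(\demonicAssertions{}, \demonicAssertionStronger{})$ from Lemma~\ref{lem:demonsLattice} to conclude $\demonicAssertionStrongerOf{\apredp_1 \sqcup_d \apredp_2}{\apredp'_1 \sqcup_d \apredp'_2}$, with $\apredp_1 \sqcup_d \apredp_2 \in \strongestPostSemOf{\apredset}{\choiceOf{\asketch_1}{\asketch_2}}$ by construction.

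The recurring pattern of the proof is: translate an $\angelicAssertionStronger{}$ witness into a pointwise $\demonicAssertionStronger{}$ witness, push it through the semantics with the appropriate monotonicity lemma, and repackage it as an $\angelicAssertionStronger{}$ witness on the output side. The only subtlety is keeping the roles of angelic and demonic choice straight in the \ruleLabel{DEM} case, which is why I singled it out.
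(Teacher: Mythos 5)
Your proof is correct and follows essentially the same route as the paper: structural induction on the sketch, using monotonicity of $\progSemFunc{\code{com}}$ for commands, a double application of the induction hypothesis for sequences, transitivity of $\demonicAssertionStronger{}$ for the loop case, the assumed monotonicity of $\Gamma$ for non-terminals, and the per-precondition, per-branch argument with monotonicity of $\join{}_d$ for demonic choice. The only differences are cosmetic (e.g.\ you group the loop and non-terminal cases with the base cases, which is harmless since $\strongestPostSem{}$ does not recurse there).
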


\begin{proof}
We proof this lemma by induction over the structure of sketch.

\baseCase{}
    We show $\angelicAssertionStrongerOf{\apredset}{\apredset'}$ implies $\angelicAssertionStrongerOf{\strongestPostSemOf{\apredset}{\code{com}}}{\strongestPostSemOf{\apredset'}{\code{com}}}$.
Let the predicate $s$ be of $\strongestPostSemOf{\apredset'}{\code{com}}$.
Then, $s = \progSemFuncOf{\code{com}}{\apred'}$ for some predicate $\apred'$ of $\apredset'$.
    Since we know $\angelicAssertionStrongerOf{\apredset}{\apredset'}$, there is a predicate $\apred$ in $R$ with $\demonicAssertionStrongerOf{r}{\apred'}$.
Due to the interpretation of commands being monotonic, we get that $\demonicAssertionStrongerOf{\progSemFuncOf{\code{com}}{r}}{\progSemFuncOf{\code{com}}{\apred'}} = s$.
    Because $\progSemFuncOf{\code{com}}{r}$ is in $\strongestPostSemOf{\apredset}{\code{com}}$, 
    $\angelicAssertionStrongerOf{\strongestPostSemOf{\apredset}{\code{com}}}{\strongestPostSemOf{\apredset'}{\code{com}}}$ follows.

\inductionStep{}
First, we consider the case where the sketch is a sequence.
    We show $\angelicAssertionStrongerOf{\apredset}{\apredset'}$ implies $\angelicAssertionStrongerOf{\strongestPostSemOf{\apredset}{\concatof{\asketch_1}{\asketch_2}}}{\strongestPostSemOf{\apredset'}{\concatof{\asketch_1}{\asketch_2}}}$.
    From the induction hypothesis, we know that $\angelicAssertionStrongerOf{\strongestPostSemOf{\apredset}{\asketch_1}}{\strongestPostSemOf{\apredset'}{\asketch_1}}$ holds.
Applying the induction hypothesis again yields 
\begin{equation*}
    \angelicAssertionStrongerOf{\strongestPostSemOf{\strongestPostSemOf{\apredset}{\asketch_1}}{\asketch_2}}{\strongestPostSemOf{\strongestPostSemOf{\apredset'}{\asketch_1}}{\asketch_2}}
\ .
\end{equation*}
Inserting the definition of the strongest post function, we get 
\begin{equation*}
    \angelicAssertionStrongerOf{\strongestPostSemOf{\apredset}{\concatof{\asketch_1}{\asketch_2}}}{\strongestPostSemOf{\apredset'}{\concatof{\asketch_1}{\asketch_2}}}
\ .
\end{equation*}

Next, we consider the case where the sketch is a choice.
We show 
\begin{equation*}
    \angelicAssertionStrongerOf{\apredset}{\apredset'} 
\implies 
    \angelicAssertionStrongerOf{\strongestPostSemOf{\apredset}{\choiceOf{\asketch_1}{\asketch_2}}}{\strongestPostSemOf{\apredset'}{\choiceOf{\asketch_1}{\asketch_2}}}
\ .
\end{equation*}
Let the predicate $s_j$ be of $\strongestPostSemOf{\apredset'}{\choiceOf{\asketch_1}{\asketch_2}}$.
Thus, there is an $\apred'$ of $\apredset'$ with $s_j' = s_1' \join{} s_2'$ and
$s_1'$ is of $\strongestPostSemOf{\set{\apred'}}{\asketch_1}$ and 
    $s_2'$ is of $\strongestPostSemOf{\set{\apred'}}{\asketch_2}$.
    Since $\angelicAssertionStrongerOf{\apredset}{\apredset'}$, there is an $\apred$ of $R$ with $\angelicAssertionStrongerOf{\set{\apred}}{\set{\apred'}}$.
Applying the induction hypothesis yields 
\begin{equation*}
    \angelicAssertionStrongerOf{\strongestPostSemOf{\set{\apred}}{\asketch_1}}{\strongestPostSemOf{\set{\apred'}}{\asketch_1}} 
    \text{ and }
    \angelicAssertionStrongerOf{\strongestPostSemOf{\set{\apred}}{\asketch_2}}{\strongestPostSemOf{\set{\apred'}}{\asketch_2}}
\end{equation*}
    Therefore, there is an $s_1 \in \strongestPostSemOf{\set{\apred}}{\asketch_1}$ and an $s_2 \in \strongestPostSemOf{\set{\apred}}{\asketch_2}$ with $\angelicAssertionStrongerOf{\set{s_1}}{\set{s_1'}}$ and $\angelicAssertionStrongerOf{\set{s_2}}{\set{s_2'}}$.
Therefore, $s_j = \demonicAssertionStrongerOf{s_1 \join{} s_2}{s_1' \join{} s_2'} = s_j'$.
    Since $s_j $ is of $\strongestPostSemOf{\apredset}{
    \choiceOf{\asketch_1}{\asketch_2}
}$, we have 
\begin{equation*}
    \angelicAssertionStrongerOf{\strongestPostSemOf{\apredset}{\choiceOf{\asketch_1}{\asketch_2}}}{\strongestPostSemOf{\apredset'}{\choiceOf{\asketch_1}{\asketch_2}}}   
 \ .
\end{equation*}

Next, consider the sketch is a loop.
We show 
\begin{equation*}
    \angelicAssertionStrongerOf{\apredset}{\apredset'} 
\implies 
    \angelicAssertionStrongerOf{\strongestPostSemOf{\apredset}{\asketch^*[I]}}
{\strongestPostSemOf{\apredset'}{\asketch^*[I]}}
\ .
\end{equation*}
Let $i$ be of $\strongestPostSemOf{\apredset'}{\asketch^*[I]}$.
Thus, $\angelicAssertionStrongerOf{\apredset'}{\set{i}}$.
By transitivity we have $\angelicAssertionStrongerOf{\apredset}{\set{i}}$ and thus $i$ is also in $\strongestPostSemOf{\apredset}{\asketch^*[I]}$.
The inequality $\demonicAssertionStrongerOf{i}{i}$ holds trivially.

Lastly, consider the sketch is a nonterminal.
Monotonicity follows directly from the required monotonicity of the specification.

This concludes the proof.
\end{proof}

We show the following lemma:
\begin{lemma}[Soundness]\label{th:spSoundess}
    For any $\apredset$ of $\angelicAssertions{}$ and any sketch $\asketch$ of $\angelicProgramLang{}$, the following realizability triple holds (if the specifications of the nonterminals and loops are sound): 
    \begin{equation*}
        \angelicHoareTripletHolds{\apredset}{\asketch}{\strongestPostSemOf{\apredset}{\asketch}}
        \ .
    \end{equation*}
\end{lemma}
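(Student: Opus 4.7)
The plan is to proceed by structural induction on the sketch $\asketch$, invoking \Cref{th:angelicHoareSoundnessCompleteness} at the end (or equivalently building the proof directly in realizability logic) to turn syntactic derivations into semantic validity. In each case, the idea is first to derive a realizability triple for each individual predicate $\apred\in\apredset$ (possibly with a singleton pre- and postcondition, as the rules \ruleLabel{DEM} and \ruleLabel{LOOP} demand), and then to collect these using repeated applications of \ruleLabel{GATHER} to recover the whole selection $\apredset$. The monotonicity of $\strongestPostSem{}$ from \Cref{lem:spMonotonic} will be used to match intermediary assertions after such singleton derivations.

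For the base case $\asketch=\code{com}$, each $\apred\in\apredset$ yields $\angelicHoareTripletHolds{\set{\apred}}{\code{com}}{\set{\progSemFuncOf{\code{com}}{\apred}}}$ directly by \ruleLabel{COM}, since $\demonicAssertionStrongerOf{\progSemFuncOf{\code{com}}{\apred}}{\progSemFuncOf{\code{com}}{\apred}}$ holds by reflexivity. Gathering over all $\apred\in\apredset$ produces $\angelicHoareTripletHolds{\apredset}{\code{com}}{\strongestPostSemOf{\apredset}{\code{com}}}$. The sequence case is routine: apply the induction hypothesis twice, first to $\asketch_1$ and then to $\asketch_2$ with precondition $\strongestPostSemOf{\apredset}{\asketch_1}$, and combine via \ruleLabel{SEQ}. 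The nonterminal case is immediate by the assumed soundness of $\Gamma$. The loop case uses the assumed soundness of the loop invariant: for each $i\in\strongestPostSemOf{\apredset}{\kleeneof{\asketch}[I]}\subseteq I$ we have $\angelicHoareTripletHoldsSemantically{\set{i}}{\asketch}{\set{i}}$, hence by \Cref{th:angelicHoareSoundnessCompleteness} also $\angelicHoareTripletHolds{\set{i}}{\asketch}{\set{i}}$; then \ruleLabel{LOOP} gives $\angelicHoareTripletHolds{\set{i}}{\kleeneof{\asketch}}{\set{i}}$, and \ruleLabel{GATHER} collects these singleton triples to the full strongest post. A final application of \ruleLabel{CSQ}, exploiting $\angelicAssertionStrongerOf{\apredset}{\strongestPostSemOf{\apredset}{\kleeneof{\asketch}[I]}}$, realigns the precondition to $\apredset$.

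The main obstacle is the demonic choice case, because \ruleLabel{DEM} requires a singleton precondition and a common postcondition on both branches, while the strongest post allows independent joins $\apredp_1\sqcup_d\apredp_2$ for every combination of branch outcomes from the same $\apred\in\apredset$. For a fixed $\apred$ and a fixed pair $(\apredp_1,\apredp_2)$ with $\apredp_k\in\strongestPostSemOf{\set{\apred}}{\asketch_k}$, the induction hypothesis gives $\angelicHoareTripletHolds{\set{\apred}}{\asketch_k}{\strongestPostSemOf{\set{\apred}}{\asketch_k}}$; applying \ruleLabel{CSQ} weakens the postcondition of branch $k$ to the singleton $\set{\apredp_1\sqcup_d\apredp_2}$ using that $\demonicAssertionStrongerOf{\apredp_k}{\apredp_1\sqcup_d\apredp_2}$. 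Then \ruleLabel{DEM} applies and yields $\angelicHoareTripletHolds{\set{\apred}}{\choiceOf{\asketch_1}{\asketch_2}}{\set{\apredp_1\sqcup_d\apredp_2}}$. Varying $\apred$, $\apredp_1$ and $\apredp_2$ and gathering by \ruleLabel{GATHER} exactly produces $\strongestPostSemOf{\apredset}{\choiceOf{\asketch_1}{\asketch_2}}$ on the right. The delicate bookkeeping here, namely coordinating the choice of $(\apredp_1,\apredp_2)$ across branches to match the join used in the definition of $\strongestPostSem{}$, is the only nontrivial step; everything else is a direct consequence of the induction hypothesis and the rules of realizability logic.
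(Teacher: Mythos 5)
Your proposal is correct and follows essentially the same route as the paper's proof: structural induction on the sketch, deriving singleton triples per predicate (and per pair of branch outcomes joined by $\join{}_d$ in the choice case), collecting them with \ruleLabel{GATHER}, adjusting with \ruleLabel{CSQ}, and handling loops and non-terminals via the assumed soundness of the annotations (bridged to derivability through \Cref{th:angelicHoareSoundnessCompleteness}). The only quibble is your framing remark about turning syntactic derivations into semantic validity "at the end" — the lemma asserts derivability, so no such final conversion is needed — but your case-by-case construction already builds the derivations directly, so this does not affect correctness.
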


\begin{proof}
We proof this lemma by induction over the structure of the sketch. 

\baseCase{}
We show 
$\angelicHoareTripletHolds{\apredset}{\code{com}}{\strongestPostSemOf{\apredset}{\code{com}}}$.
For every $\apred \in \apredset$ we can can show 
the realizability triple
$\angelicHoareTripletHolds{\set{\apred}}{\code{com}}{\set{\progSemFuncOf{\code{com}}{\apred}}}$.
Using the rule \ruleLabel{GATHER} we can show 
$\angelicHoareTripletHolds{\apredset}{\code{com}}{\bigcup_{\apred \in \apredset}\set{\progSemFuncOf{\code{com}}{\apred}}}$.
By definition, this is the same as 
$\angelicHoareTripletHolds{\apredset}{\code{com}}{\strongestPostSemOf{\apredset}{\code{com}}}$.

\inductionStep{}
First, consider a sequence of sketch.
We show the realizability triple
\begin{equation*}
\angelicHoareTripletHolds{\apredset}{\asketch_1\code{;}\asketch_2}{\strongestPostSemOf{\apredset}{\asketch_1\code{;}\asketch_2}}
\ .
\end{equation*}
Applying the induction hypothesis twice yields the realizability triples:
\begin{equation*}
\angelicHoareTripletHolds{\apredset}{\asketch_1}{\strongestPostSemOf{\apredset}{\asketch_1}}
\text{ and } 
\angelicHoareTripletHolds{\strongestPostSemOf{\apredset}{\asketch_1}}{\asketch_2}{\strongestPostSemOf{\strongestPostSemOf{\apredset}{\asketch_1}}{\asketch_2}}
\ .
\end{equation*}
Using rule \ruleLabel{SEQ}, we get the realizability triple
\begin{equation*}
\angelicHoareTripletHolds{\apredset}{\asketch_1\code{;}\asketch_2}{\strongestPostSemOf{\strongestPostSemOf{\apredset}{\asketch_1}}{\asketch_2}}
\ .
\end{equation*}
This is the same as the realizability triple
    $\angelicHoareTripletHolds{\apredset}{\asketch_1\code{;}\asketch_2}{\strongestPostSemOf{\apredset}{\asketch_1\code{;}\asketch_2}}$
.

Next, consider the sketch is a choice.
We show
\begin{equation*}
\angelicHoareTripletHolds{\apredset}{\choiceOf{\asketch_1}{\asketch_2}}{\strongestPostSemOf{\apredset}{\choiceOf{\asketch_1}{\asketch_2}}}
\ .
\end{equation*}
From the definition of the strongest post, we know that it is the following:
\begin{equation*}
    \strongestPostSemOf{\apredset}{\choiceOf{\asketch_1}{\asketch_2}}
    = 
    \bigcup_{\apred \in \apredset}\setCond{\apredp_1 \join{} \apredp_2}{
        \apredp_1 \in \strongestPostSemOf{\set{\apred}}{\asketch_1}
        \wedge 
        \apredp_2 \in \strongestPostSemOf{\set{\apred}}{\asketch_2}
        }
        \ .
\end{equation*}
For every $\apred$ of $\apredset$, we know from the induction hypothesis that
$\angelicHoareTripletHolds{\set{\apred}}{\asketch_1}{\strongestPostSemOf{\set{\apred}}{\asketch_1}}$ and 
$\angelicHoareTripletHolds{\set{\apred}}{\asketch_2}{\strongestPostSemOf{\set{\apred}}{\asketch_2}}$ hold.
Using the rule \ruleLabel{CSQ}, we can show 
the realizability triples
$\angelicHoareTripletHolds{\set{\apred}}{\asketch_1}{\set{\apredp_{k1}}}$ 
for any $\apredp_{k1}$ of $\strongestPostSemOf{\set{\apred}}{\asketch_1}$ and
$\angelicHoareTripletHolds{\set{\apred}}{\asketch_2}{\set{\apredp_{k2}}}$ 
for any $\apredp_{k2}$ of $\strongestPostSemOf{\set{\apred}}{\asketch_2}$.
Let $\apredp_j$ be the join of any two $\apredp_{k1}$ and $\apredp_{k2}$.
Using the rule \ruleLabel{CSQ}, we can show
$\angelicHoareTripletHolds{\set{\apred}}{\asketch_i}{\set{\apredp_j}}$.
With the rule \ruleLabel{DEM}, we get 
$\angelicHoareTripletHolds{\set{\apred}}{\choiceOf{\asketch_1}{\asketch_2}}{\set{\apredp_j}}$.
Applying rule \ruleLabel{GATHER} yields 
\begin{equation*}
\angelicHoareTripletHolds{\set{\apred}}{\choiceOf{\asketch_1}{\asketch_2}}
{\setCond{\apredp_1 \join{} \apredp_2}{
    \apredp_i \in \strongestPostSemOf{\set{\apred}}{\asketch_i}
}}
\ .
\end{equation*}

Applying rule \ruleLabel{GATHER} once more yields 
\begin{equation*}
    \angelicHoareTripletHolds{\apredset}{\choiceOf{\asketch_1}{\asketch_2}}
{\bigcup_{\apred \in \apredset}\setCond{\apredp_1 \join{} \apredp_2}{
    \apredp_i \in \strongestPostSemOf{\set{\apred}}{\asketch_i}
}}
\ .
\end{equation*}
This is the same as 
$
\angelicHoareTripletHolds{\apredset}{\choiceOf{\asketch_1}{\asketch_2}}
{\strongestPostSemOf{\apredset}{\choiceOf{\asketch_1}{\asketch_2}}}
$.

Next, consider the sketch is a loop.
Let $i$ be of $\strongestPostSemOf{\apredset}{\kleeneof{\asketch}}$.
From soundness of the invariant annotation, we get 
$\angelicHoareTripletHolds{\set{i}}{\asketch}{\set{i}}$.
With rule loop, we get
$\angelicHoareTripletHolds{\set{i}}{\kleeneof{\asketch}}{\set{i}}$.
With gather, we get 
\begin{equation*}
\angelicHoareTripletHolds{\strongestPostSemOf{\apredset}{\kleeneof{\asketch}}}{\kleeneof{\asketch}}{\strongestPostSemOf{\apredset}{\kleeneof{\asketch}}}\ .
\end{equation*}
With \ruleLabel{CSQ}, we get 
$\angelicHoareTripletHolds{\apredset}{\kleeneof{\asketch}}{\strongestPostSemOf{\apredset}{\kleeneof{\asketch}}}$.

Finally, consider the sketch is a nonterminal.
Soundness follows directly from the required soundness of the specification of the nonterminal.

This concludes the proof.
\end{proof}
We show the following corollary:
\begin{corollary}\label{cor:spHoareTriple}
    For any $\apredset, \angelicAssertionFont{\apredsetp}$ of $\angelicAssertions{}$ and any sketch $\asketch$ of $\angelicProgramLang{}$ with sound annotations, the following implication holds: 
   \begin{equation*}
       \angelicAssertionStrongerOf{\strongestPostSemOf{\apredset}{\asketch}}{\angelicAssertionFont{\apredsetp}} 
       \ \implies \ 
       \angelicHoareTripletHolds{\apredset}{\asketch}{\angelicAssertionFont{\apredsetp}} 
       \ .
   \end{equation*}   
\end{corollary}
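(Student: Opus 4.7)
The plan is to derive the corollary as a direct consequence of the soundness of the strongest post (Lemma/Theorem \ref{th:spSoundess}) combined with the rule of consequence \ruleLabel{CSQ} of realizability logic. The structure of the statement is the classical pattern ``strongest post entails $\Rightarrow$ Hoare triple holds'', which in ordinary Hoare logic is immediate from $\vdash \{P\}\,c\,\{sp(P,c)\}$ and consequence, and the same pattern applies here verbatim once we check that the side conditions (soundness of annotations) transfer.

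Concretely, I would proceed as follows. First, assume the hypothesis $\angelicAssertionStrongerOf{\strongestPostSemOf{\apredset}{\asketch}}{\angelicAssertionFont{\apredsetp}}$ and the soundness of all loop/non-terminal annotations in $\asketch$. Second, invoke Lemma \ref{th:spSoundess} with the same $\apredset$ and $\asketch$ to obtain the derivable triple
\[
\angelicHoareTripletHolds{\apredset}{\asketch}{\strongestPostSemOf{\apredset}{\asketch}}.
\]
Third, apply Rule \ruleLabel{CSQ} from Figure \ref{def:pls}, instantiating the precondition-side inequality by reflexivity $\angelicAssertionStrongerOf{\apredset}{\apredset}$ (which follows directly from the definition of $\angelicAssertionStronger{}$) and the postcondition-side inequality by the hypothesis $\angelicAssertionStrongerOf{\strongestPostSemOf{\apredset}{\asketch}}{\angelicAssertionFont{\apredsetp}}$. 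This yields the desired
\[
\angelicHoareTripletHolds{\apredset}{\asketch}{\angelicAssertionFont{\apredsetp}}.
\]

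There is no real obstacle here: the corollary is the standard ``sp-to-Hoare'' bridge and the proof is a two-line composition of Lemma \ref{th:spSoundess} with Rule \ruleLabel{CSQ}. The only small thing to check is the self-entailment $\angelicAssertionStrongerOf{\apredset}{\apredset}$ used on the precondition side of \ruleLabel{CSQ}, which is immediate: for every $\apred\in\apredset$, $\apred$ itself witnesses the required more-precise element, since $\demonicAssertionStrongerOf{\apred}{\apred}$ holds by reflexivity of $\demonicAssertionStronger{}$ (Lemma \ref{lem:demonsPartialOrder}). Soundness of the annotations is carried over unchanged from the hypothesis to the application of Lemma \ref{th:spSoundess}, so no additional assumptions are needed.
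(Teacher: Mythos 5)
Your proposal is correct and follows exactly the paper's own argument: invoke Lemma~\ref{th:spSoundess} to obtain $\angelicHoareTripletHolds{\apredset}{\asketch}{\strongestPostSemOf{\apredset}{\asketch}}$ and then close with Rule \ruleLabel{CSQ} using the hypothesis on the postcondition side. The extra remark about reflexivity of $\angelicAssertionStronger{}$ on the precondition side is a fine (if implicit in the paper) detail and changes nothing.
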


\begin{proof}
    \Cref{th:spSoundess} yields $\angelicHoareTripletHolds{\apredset}{\asketch}{\strongestPostSemOf{\apredset}{\asketch}}$.
    Applying rule \ruleLabel{CSQ} gives us the sought after result 
    $\angelicHoareTripletHolds{\apredset}{\asketch}{\apredsetp}$
\end{proof}

\subsubsection*{Proof of \Cref{th:spComplete}}
\begin{proof}[\unskip\nopunct]
We have already shown soundness above.
    We prove completeness by an induction over the proof tree of $\angelicHoareTripletHolds{\apredset}{\asketch}{\apredsetp}$.

\baseCase{}
We show 
    $\angelicHoareTripletHolds{\set{\apred}}{\code{com}}{\set{\apredp}} \implies \angelicAssertionStrongerOf{\strongestPostSemOf{\set{\apred}}{\code{com}}}{\set{\apredp}}$.
From the precondition of rule \ruleLabel{COM}, we know $\demonicAssertionStrongerOf{\progSemFuncOf{\code{com}}{\apred}}{\apredp}$.
    Therefore $\strongestPostSemOf{\set{\apred}}{\code{com}} = 
    \angelicAssertionStrongerOf{\set{\progSemFuncOf{\code{com}}{\apred}}}{\set{\apredp}}$.

\inductionStep{}
Consider the rule \ruleLabel{SEQ}.
We show 
that the validity of
$\angelicHoareTripletHolds{\apredset}{\concatof{\asketch_1}{\asketch_2}}{T}$ implies
    $\angelicAssertionStrongerOf{\strongestPostSemOf{\apredset}{\concatof{\asketch_1}{\asketch_2}}}{T}$.
From the preconditions we know that both 
    $\angelicHoareTripletHolds{\apredset}{\asketch_1}{\apredsetp}$ and
    $\angelicHoareTripletHolds{\apredsetp}{\asketch_2}{T}$ hold.
Applying the induction hypothesis yields
    $\angelicAssertionStrongerOf{\strongestPostSemOf{\apredset}{\asketch_1}}{\apredsetp}$ and 
    $\angelicAssertionStrongerOf{\strongestPostSemOf{\apredsetp}{\asketch_2}}{T}$.
Because the strongest post function is monotonic we get the sought after inequality
    $\angelicAssertionStrongerOf{\strongestPostSemOf{\apredset}{\concatof{\asketch_1}{\asketch_2}}}{T}$. 

Consider the rule \ruleLabel{DEM}.
We show 
\begin{equation*}
    \angelicHoareTripletHolds{\set{\apred}}{\choiceOf{\asketch_1}{\asketch_2}}{\apredsetp} \implies \angelicAssertionStrongerOf{\strongestPostSemOf{\set{\apred}}{\choiceOf{\asketch_1}{\asketch_2}}}{\apredsetp}
\ .
\end{equation*}
From the preconditions we know 
that 
    $\angelicHoareTripletHolds{\set{\apred}}{\asketch_1}{\apredsetp}$
and
    $\angelicHoareTripletHolds{\set{\apred}}{\asketch_2}{\apredsetp}$
hold.
Applying the induction hypothesis yields the inequalities
    $\angelicAssertionStrongerOf{\strongestPostSemOf{\set{\apred}}{\asketch_1}}{\apredsetp}$
and 
    $\angelicAssertionStrongerOf{\strongestPostSemOf{\set{\apred}}{\asketch_2}}{\apredsetp}$.
Let the predicate $s$ be an element of $S$.
Because the strongest posts are more versatile than $S$, there is an $s_i$ of each strongest post with $\demonicAssertionStrongerOf{s_i}{s}$.
Therefore, $s$ is an upper bound of $s_1$ and $s_2$ and thus 
$\demonicAssertionStrongerOf{s_1 \join{} s_2}{s}$ holds.
To conclude this case, see that $s_1 \join{} s_2 \in 
    \strongestPostSemOf{\set{\apred}}{\choiceOf{\asketch_1}{\asketch_2}}$,
and therefore 
    $\angelicAssertionStrongerOf{\strongestPostSemOf{\set{\apred}}{\choiceOf{\asketch_1}{\asketch_2}}}{\apredsetp}$
holds.

Consider the rule \ruleLabel{ANG}.
Completeness follows directly from the requirement that the specification of the nonterminals is complete.

Consider the rule \ruleLabel{CSQ}.
We show
\begin{equation*}
\angelicHoareTripletHolds{\apredset}{\asketch}{\apredsetp'} 
\implies 
\angelicAssertionStrongerOf{\strongestPostSemOf{\apredset}{\asketch}}{\apredsetp'}
\ .
\end{equation*}
From the preconditions, we know that the realizability triple 
$\angelicHoareTripletHolds{\apredset'}{\asketch}{\apredsetp}$ holds with
$\angelicAssertionStrongerOf{\apredset}{\apredset'}$ and 
$\angelicAssertionStrongerOf{\apredsetp}{\apredsetp'}$.
From the induction hypothesis we know that 
$\angelicAssertionStrongerOf{\strongestPostSemOf{\apredset'}{\asketch}}{\apredsetp}$.
The sought after result immediately follows from monotonicity and because the relation $\angelicAssertionStronger{}$ is transitive:
$\angelicAssertionStrongerOf{\strongestPostSemOf{\apredset}{\asketch}}{\apredsetp'}$.

Consider the rule \ruleLabel{LOOP}.
We show $\angelicHoareTripletHolds{\set{i}}{\kleeneof{\asketch}[I]}{\set{i}}$ implies 
$\angelicAssertionStrongerOf{\strongestPostSemOf{\set{i}}{\kleeneof{\asketch}[I]}}{\set{i}}$.
Due to the precondition of rule \ruleLabel{LOOP} and completeness of the annotation, we have $i \in I$.
Thus, $i \in \strongestPostSemOf{\set{i}}{\kleeneof{\asketch}[I]}$ and therefore the inequality trivially holds.

Lastly, consider the rule \ruleLabel{GATHER}.
We show 
\begin{equation*}
\angelicHoareTripletHolds{\cup_i \apredset_i}{\asketch}{\cup_i \apredsetp_i} 
\implies 
    \angelicAssertionStrongerOf{\strongestPostSemOf{\cup_i \apredset_i}{\asketch}}{(\cup_i \apredsetp_i)}
\ .
\end{equation*}

From the preconditions we know that
$\angelicHoareTripletHolds{\apredset_i}{\asketch}{\apredsetp_i}$ hold.
Applying the induction hypothesis yields
$\angelicAssertionStrongerOf{\strongestPostSemOf{\apredset_i}{\asketch}}{\apredsetp_i}$.
Using monotonicity of the strongest post function, we get 
$\angelicAssertionStrongerOf{\strongestPostSemOf{\cup_i \apredset_i}{\asketch}}{\apredsetp_i}$ for every $i$.
Together, this means 
$\angelicAssertionStrongerOf{\strongestPostSemOf{\cup_i \apredset_i}{\asketch}}{(\cup_i \apredsetp_i)}$ 
This concludes the proof.
\end{proof}

\subsubsection*{Proof of \Cref{th:vcSoundness}}
\begin{proof}[\unskip\nopunct]
We prove soundness by induction over the structure of the sketch.

\baseCase{}
    We show $\models \verificationConditionsFuncOf{\hoaretriplet{\apredset}{\code{com}}{\apredsetp}} \implies \angelicHoareTripletHolds{\apredset}{\code{com}}{\apredsetp}$.
    Because the verification conditions hold, we know that the inequality $\angelicAssertionStrongerOf{\vcStrongestPostSemOf{\apredset}{\code{com}}}{\apredsetp}$ is true.
    The realizability triple $\angelicHoareTripletHolds{\apredset}{\code{com}}{\apredsetp}$ immediately follows from \Cref{cor:spHoareTriple}.

\inductionStep{}
In the first case, the sketch is a sequence.
    We know the following verification conditions hold: $\models \verificationConditionsFuncOf{\hoaretriplet{\apredset}{\concatof{\asketch_1}{\asketch_2}}{\apredsetp}}$.
Thus, we know that the verification conditions 
$\models \verificationConditionsFuncOf{\hoaretriplet{\apredset}{\asketch_2}{\vcStrongestPostSemOf{\apredset}{\asketch_1}}}$
and 
    $\models \verificationConditionsFuncOf{\hoaretriplet{\vcStrongestPostSemOf{\apredset}{\asketch_1}}{\asketch_2}{\apredsetp}}$
also hold.
Applying the induction hypothesis yields 
\begin{equation*}
\angelicHoareTripletHolds{\apredset}{\asketch_1}{\vcStrongestPostSemOf{\apredset}{\asketch_1}}
\text{ and } 
    \angelicHoareTripletHolds{\vcStrongestPostSemOf{\apredset}{\asketch_1}}{\asketch_2}{\apredsetp}
\ .
\end{equation*}
Using the rule \ruleLabel{SEQ}, we get 
    $\angelicHoareTripletHolds{\apredset}{\concatof{\asketch_1}{\asketch_2}}{\apredsetp}$.

If the sketch is a choice, we have 
    $\angelicAssertionStrongerOf{\vcStrongestPostSemOf{\apredset}{\choiceOf{\asketch_1}{\asketch_2}}}{\apredsetp}$
and for every $r \in \apredset$ we have 
    $\models \verificationConditionsFuncOf{\hoaretriplet{\set{\apred}}{\asketch_1}{\vcStrongestPostSemOf{\set{\apred}}{\asketch_1}}}$
and
    $\models \verificationConditionsFuncOf{\hoaretriplet{\set{\apred}}{\asketch_2}{\vcStrongestPostSemOf{\set{\apred}}{\asketch_2}}}$.
Applying the induction hypothesis yields
the two realizability triples
    $\angelicHoareTripletHolds{\set{\apred}}{\asketch_1}{\vcStrongestPostSemOf{\set{\apred}}{\asketch_1}}$
and 
    $\angelicHoareTripletHolds{\set{\apred}}{\asketch_2}{\vcStrongestPostSemOf{\set{\apred}}{\asketch_2}}$
for every $r \in \apredset$.
Using rule \ruleLabel{CSQ}, we can show the realizability triple
    $\angelicHoareTripletHolds{\set{\apred}}{\asketch_1}{\vcStrongestPostSemOf{\set{\apred}}{\choiceOf{\asketch_1}{\asketch_2}}}$
and
    $\angelicHoareTripletHolds{\set{\apred}}{\asketch_2}{\vcStrongestPostSemOf{\set{\apred}}{\choiceOf{\asketch_1}{\asketch_2}}}$.
Using the rule \ruleLabel{DEM}, we get 
    $\angelicHoareTripletHolds{\set{\apred}}{\choiceOf{\asketch_1}{\asketch_2}}{\vcStrongestPostSemOf{\set{\apred}}{\choiceOf{\asketch_1}{\asketch_2}}}$
for every $r \in \apredset$.
Using the gather rule, we get 
$\angelicHoareTripletHolds{\apredset}{\choiceOf{\asketch_1}{\asketch_2}}{\vcStrongestPostSemOf{\apredset}{\choiceOf{\asketch_1}{\asketch_2}}}$.
Applying the rule \ruleLabel{CSQ}, we get
    $\angelicHoareTripletHolds{\apredset}{\choiceOf{\asketch_1}{\asketch_2}}{\apredsetp}$.

If the sketch is a loop, we have the verification conditions
$\models (\bigcup_{i \in I} \verificationConditionsFuncOf{\hoaretriplet{\set{i}}{\code{s}}{\set{i}}})$ from the check annotation function
and the inequality
    $\angelicAssertionStrongerOf{\vcStrongestPostSemOf{\apredset}{\code{s}^*[I]}}{\apredsetp}$.
The induction hypothesis yields the realizability triple
$\angelicHoareTripletHolds{\set{i}}{\code{s}}{\set{i}}$ for every $i \in I$.
Applying the rule \ruleLabel{LOOP} yields 
$\angelicHoareTripletHolds{\set{i}}{\code{s}^*}{\set{i}}$ for every $i \in I$.
Since $\vcStrongestPostSemOf{\apredset}{\code{s}^*[I]}$ is a subset of $I$ we can use multiple applications of the rule \ruleLabel{GATHER} to get 
$\angelicHoareTripletHolds{\vcStrongestPostSemOf{\apredset}{\code{s}^*[I]}}{\code{s}^*}{\vcStrongestPostSemOf{\apredset}{\code{s}^*[I]}}$.
Per definition, we have $\angelicAssertionStrongerOf{\apredset}{\vcStrongestPostSemOf{\apredset}{\code{s}^*[I]}}$.
Using the rule \ruleLabel{CSQ}, we get that $\angelicHoareTripletHolds{\apredset}{\code{s}^*}{\apredsetp}$ holds.

If the sketch is a nonterminal,
we know that $\angelicAssertionStrongerOf{\strongestPostSemOf{\apredset}{\code{\anonterm}[\Gamma]}}{\apredsetp}$ holds and from the check annotations function we know there is a $j$ with $\models \verificationConditionsFuncOf{\hoaretriplet{\apredset}{\aprog}{\strongestPostSemOf{\apredset}{\aprog}}}$ for every $\aprog$ of the oracle function $\oracleFuncOf{\anonterm}{j}$.
Also, we have $\angelicAssertionStrongerOf{(\bigcup_{\aprog \in \oracleFuncOf{\anonterm}{j}} \strongestPostSemOf{\apredset}{\aprog})}{\Gamma(\apredset)}$.
Using the induction hypothesis and rule \ruleLabel{ANG}, we get 
$\angelicHoareTripletHolds{\apredset}{\anonterm}{\strongestPostSemOf{\apredset}{\aprog}}$ for every $\aprog$ of $\oracleFuncOf{\anonterm}{j}$.
Using the gather rule, we get 
$\angelicHoareTripletHolds{\apredset}{\anonterm}{
 \bigcup_{\aprog \in \oracleFuncOf{\anonterm}{j}}   
\strongestPostSemOf{\apredset}{\aprog}}$.
Using the rule \ruleLabel{CSQ}, we get 
$\angelicHoareTripletHolds{\apredset}{\anonterm}{\Gamma(\apredset)}$.
Since $\Gamma(\apredset)$ is equal to $\strongestPostSemOf{\apredset}{\anonterm[\Gamma]}$, we can apply \ruleLabel{CSQ} one more time to get 
$\angelicHoareTripletHolds{\apredset}{\anonterm}{\apredsetp}$.

Next, we show completeness.
First, see that if $\angelicAssertionStrongerOf{\apredsetp}{\weakerof{\apredsetp}}$ and $\models \verificationConditionsFuncOf{\hoaretriplet{\apredset}{\asketch}{\apredsetp}}$ then we also have
$\models \verificationConditionsFuncOf{\hoaretriplet{\apredset}{\asketch}{\weakerof{\apredsetp}}}$.
We proceed by induction over the structure of sketches.

\baseCase{}
We have $\asketch = \code{com}$.
Completeness results from completeness of $\vcStrongestPostSem{}$.

\inductionStep{}
We have $\angelicHoareTripletHoldsSemantically{\apredset}{\concatof{\asketch_1}{\asketch_2}}{\apredsetp}$.
From soundness of the strongest post and the induction hypothesis, we get $\models \verificationConditionsFuncOf{\hoaretriplet{\apredset}{\asketch_1}{\vcStrongestPostSemOf{\apredset}{\asketch_1}}}$ and we also get the verification conditions of
$\models \verificationConditionsFuncOf{\hoaretriplet{\vcStrongestPostSemOf{\apredset}{\asketch_1}}{\asketch_2}{\vcStrongestPostSemOf{\apredset}{\concatof{\asketch_1}{\asketch_2}}}}$.
Because of completeness of the strongest post and our remark that the post condition in verification conditions may be weakened, we get 
$\models \verificationConditionsFuncOf{\hoaretriplet{\vcStrongestPostSemOf{\apredset}{\asketch_1}}{\asketch_2}{\apredsetp}}$.

The proof for choices directly follows from soundness and completeness of the strongest post function.
Then, only an application of the induction hypothesis is left.

In the case of loops we can discharge the first inequality by soundness and completeness from strongest post.
The inequalities from the check annotation function are true because the annotations are sound and thus the induction hypothesis can be applied.

The case for non-terminals is more cumbersome.
The first inequality follows from completeness of the strongest post function.
We now discuss why the inequalities in the check annotations function hold:
From $\angelicHoareTripletHoldsSemantically{\apredset}{\anonterm[\Gamma]}{\apredsetp}$, we get for every $\apredp \in \apredsetp$ a program $\aprog$ and an $\apred \in \apredsetp$ for which 
$\demonicHoareTripletHoldsSemantically{\apred}{\aprog}{\apredp}$ holds.
This implies that 
$\angelicHoareTripletHoldsSemantically{\set{\apred}}{\aprog}{\set{\apredp}}$ holds.
From completeness of the strongest post, we get
$\angelicAssertionStrongerOf{\strongestPostSemOf{\set{\apred}}{\aprog}}{\set{\apredp}}$.
By monotonicity of the strongest post function, we get 
$\angelicAssertionStrongerOf{\strongestPostSemOf{\apredset}{\aprog}}{\set{\apredp}}$.
We collect the programs returned from completeness in the set $\aprogset$.
We get that $\angelicAssertionStrongerOf{(\cup_{\aprog \in \aprogset}\strongestPostSemOf{\apredset}{\aprog})}{\apredsetp}$.
Since $S$ is finite, $\aprogset$ is finite.
Thus, there is a $j$ for which $\aprogset \subseteq \oracleFuncOf{\anonterm}{j}$ holds.
With this, we also have
$\angelicAssertionStrongerOf{(\cup_{\aprog \in \oracleFuncOf{\anonterm}{j}}\strongestPostSemOf{\apredset}{\aprog})}{\apredsetp}$.
The verification conditions on the individual programs hold because of the soundness of the strongest post function and the induction hypothesis.
\end{proof}
\subsection{Proofs for \Cref{ch:adAlgo}}
\subsection*{Proof of \Cref{Theorem:syn}}
First, we state the theorem using the new notation:
\begin{theorem}[$\syn$-Sound-And-Complete]
Consider $\posound \hoaretriplet{\apredset}{\apo}{\apredsetp}$ and $\apredp\in\apredsetp$ with $\apredp\neq\fail{}$. 
Then $\synof{\hoaretriplet{\apredset}{\apo}{\apredsetp}}{\apredp}=(\apred, \aprog)$ 
with $\apred\in\apredset$, $\apred\neq\fail{}$, $\aprog\in\angelConcFuncOf{\proofToProgFuncOf{\apo}}$, and $\demonicHoareTripletHoldsSemantically{\apred}{\aprog}{\apredp}$. 
The number of SMT solver calls is at most $\sizeof{\apo}$. 
\end{theorem}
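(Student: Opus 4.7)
The plan is to prove all four correctness conditions and the SMT bound together, by structural induction on the proof outline $\apo$, following exactly the case split used in the definition of $\syn$ in \Cref{def:add}. In each case, we rely on \Cref{lem:proofProperties} to extract the validity of sub-proof-outlines from the validity of $\apo$, and on the hypothesis $\apredp \in \apredsetp$ with $\apredp \neq \fail{}$ to guarantee that the recursive searches succeed rather than fall through to $(\fail{}, -)$.

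The base case is a command $\code{com}$. Here $\syn$ iterates through the predicates in the precondition, skipping $\fail{}$, and issues one SMT query per candidate $\apred$ to test $\demonicAssertionStrongerOf{\progSemFuncOrigOf{\code{com}}{\apred}}{\apredp}$. Since the proof outline was built by \ruleLabel{PCOM} and \ruleLabel{PGATHER} with intermediate uses of \ruleLabel{PCSQ}, there is at least one $\apred \in \apredset$, with $\apred \neq \fail{}$, such that this check succeeds; the function returns this pair. The number of SMT calls is bounded by $|\apredset| \leq |\apo|$. For sequential composition, we first invoke $\synof{\apo_2}{\apredp}$, which by \Cref{lem:proofProperties} and the induction hypothesis returns some $(\apredpp, \aprog_2)$ with $\apredpp \in \apredsetp$, $\apredpp \neq \fail{}$, $\demonicHoareTripletHoldsSemantically{\apredpp}{\aprog_2}{\apredp}$, and we then feed $\apredpp$ into $\synof{\apo_1}{\apredpp}$. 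Composing the two Hoare triples via the demonic \ruleLabel{SEQ} rule gives $\demonicHoareTripletHoldsSemantically{\apred}{\aprog_1;\aprog_2}{\apredp}$, and the SMT bound adds.

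The cases for non-terminals $\anonterm(\apo)$ and angelic choices $\apo_1 \bnf \apo_2$ follow directly from the induction hypothesis: in the latter case, the test $\apredp \in \apredsetp_1$ decides which branch will not fall through to $(\fail{},-)$, and this test itself is not an SMT call. The genuinely delicate cases are demonic choice and loops, where Rule~\ruleLabel{PDEM} and Rule~\ruleLabel{PLOOP} require a \emph{singleton} precondition. The function $\syn$ iterates over each $\apred \in \apredset$ and uses the helper $\mathit{outl}$ to look up a proof outline with precondition $\set{\apred}$ and postcondition $\set{\apredp}$. I plan to argue, as in the proof of \Cref{th:angelicChoiceElimination} and using \Cref{lem:rewriteDemProgSingletons} and \Cref{lem:rewriteLoopProgSingletons}, that because $\posound \hoaretriplet{\apredset}{\apo}{\apredsetp}$ is the result of gathering singleton \ruleLabel{PDEM}/\ruleLabel{PLOOP} sub-derivations followed by \ruleLabel{PCSQ}, there is at least one $\apred \in \apredset$ for which such a singleton sub-outline is registered, so $\mathit{outl}$ will not abort. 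The induction hypothesis applied to $\apo_1', \apo_2'$ (resp.\ $\apo'$ in the loop case) returns programs $\aprog_1, \aprog_2$ (resp.\ $\aprog$), and Rules~\ruleLabel{DEM}/\ruleLabel{LOOP} of Hoare logic combine the resulting triples into the required $\demonicHoareTripletHoldsSemantically{\apred}{\choiceOf{\aprog_1}{\aprog_2}}{\apredp}$ (resp.\ $\demonicHoareTripletHoldsSemantically{\apred}{\kleeneof{\aprog}}{\apredp}$).

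The main obstacle is this last point: showing that $\mathit{outl}$ is guaranteed to succeed for \emph{some} $\apred$ without introducing additional SMT calls. The argument requires a careful bookkeeping observation, namely that the hashed lookup table of singleton proof outlines is populated exactly when \ruleLabel{PDEM}/\ruleLabel{PLOOP} is applied during the construction of $\apo$, so the iteration over $\apredset$ exhausts at most the predicates that were gathered via \ruleLabel{PGATHER} and weakened via \ruleLabel{PCSQ}, which is bounded by the syntactic size of $\apo$. Putting everything together, each command of $\proofToProgFuncOf{\apo}$ contributes at most linearly many SMT checks in the base case, and the inductive composition cases contribute none, yielding the $|\apo|$ bound.
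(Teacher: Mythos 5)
Your proposal is correct and follows essentially the same route as the paper's own proof: induction on the shape of the proof outline mirroring the case split of $\syn$, a linear search with one solver query per candidate predicate in the command case, composition via the induction hypothesis for sequences, non-terminals and angelic choices, and, for demonic choices and loops, reliance on looked-up singleton sub-outlines whose existence follows from the singleton side conditions of \ruleLabel{PDEM}/\ruleLabel{PLOOP} together with gathering and consequence (the paper argues this via soundness of the outline plus completeness, you via \Cref{lem:rewriteDemProgSingletons} and \Cref{lem:rewriteLoopProgSingletons}, which amounts to the same thing). The only nitpick is your claim that the inductive cases contribute no SMT checks: the loop case does pay for the comparisons $\demonicAssertionStrongerOf{i}{\apredp}$ while scanning the invariant, which the paper counts as at most $\sizeof{I}$ additional calls, though this still fits within the overall $\sizeof{\apo}$ bound.
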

\begin{proof}
We proof this theorem by induction over the shape of $\hoaretriplet{\apredset}{\apo}{\apredsetp}$.

\baseCase{}
We have the valid realizability triple $\hoaretriplet{\apredset}{\code{com}}{\apredsetp}$.
Thus, we know for every $\apredp' \in \apredsetp$ there is an $\apred \in \apredset$ with $\demonicAssertionStrongerOf{\progSemFuncOf{\code{com}}{\apred}}{\apredp'}$.
Since $\apredp$ is of $\apredsetp$, the call $\synof{\hoaretriplet{\apredset}{\code{com}}{\apredsetp}}{\apredp}$ eventually terminates returning $(\apred, \code{com})$.
The predicate $\apred$ cannot be $\fail{}$ because $\apredp$ is not $\fail{}$ and $\code{com}$ preserves $\fail{}$.
Also $\code{com} \in \angelConcFuncOf{\code{com}}$ holds trivially.
Moreover, since 
$\demonicAssertionStrongerOf{\progSemFuncOf{\code{com}}{\apred}}{\apredp}$, the Hoare triple $\demonicHoareTripletHoldsSemantically{\apred}{\code{com}}{\apredp}$ holds.
Since we only have to go through $\apredset$ once to find a suitable $\apred$, the number of SMT solver calls is at most $\sizeof{\apredset} \leq \sizeof{\apo}$.

\inductionStep{}
In the first case,
we have $\hoaretriplet{\apredset}{\concatof{\apo_1}{\apo_2}}{\apredsetp}$.
Thus, the call is $\synof{\hoaretriplet{\apredset}{\concatof{\apo_1}{\apo_2}}{\apredsetp}}{\apredp}$.
Let $\apredsetpp$ be the intermediary selection.
Applying the induction hypothesis, the call $\synof{\apo_2}{\apredp}$ returns $(\apredpp, \aprog_2)$ with $\apredpp \in \apredsetpp$, $\apredpp \neq \fail{}$, $\aprog_2 \in \angelConcFuncOf{\proofToProgFuncOf{\apo_2}}$, and $\demonicHoareTripletHoldsSemantically{\apredpp}{\aprog_2}{\apredp}$.
Also the number of SMT calls is at most $\sizeof{\apo_2}$.
Then we call $\synof{\apo_1}{\apredpp}$ and by the induction hypothesis we get 
$(\apred, \aprog_1)$ with $\apred \in \apredset$, $\apred \neq \fail{}$, $\aprog_1 \in \angelConcFuncOf{\proofToProgFuncOf{\apo_1}}$, and $\demonicHoareTripletHoldsSemantically{\apred}{\aprog_1}{\apredpp}$.
Also the number of SMT calls is at most $\sizeof{\apo_1}$.
Put together, the function returns $(\apred, \concatof{\aprog_1}{\aprog_2})$.
Since both Hoare triples hold, we get 
$\demonicHoareTripletHoldsSemantically{\apred}{\concatof{\aprog_1}{\aprog_2}}{\apredp}$.
Also, we have $\concatof{\aprog_1}{\aprog_2} \in \angelConcFuncOf{\proofToProgFuncOf{\apo}}$.
And the number of SMT calls is at most $\sizeof{\apo_1} + \sizeof{\apo_2} \leq \sizeof{\apo}$.

In the next case, we have
$\hoaretriplet{\apredset}{\anonterm(\apo)}{\apredsetp}$.
We have two sub-cases:
First, $\apo$ is only one proof outline as opposed to many separated by the $\bnf$ symbol.
Then the pre and post condition of $\apo$ and $\anonterm(\apo)$ match.
We directly get the sought after result by applying the induction hypothesis and seeing that $\angelConcFuncOf{\proofToProgFuncOf{\apo}} \subseteq \angelConcFuncOf{\proofToProgFuncOf{\anonterm(\apo)}}$.
In the second sub-case, we know that $\apo = \apo_1 \bnf \apo_2$.
Since $\apredsetp$ is the union of all posts of $\apo$, we know that eventually we try out a subproof with $\apredp$ in its post.
Then, the induction hypothesis is applied again directly yielding the required results.

In the next case, we have
$\choiceOf{\hoaretriplet{\apredset}{\apo_1}{\apredsetp}}{\hoaretriplet{\apredset}{\apo_2}{\apredsetp}}$.
Because the proof outline is valid, we know that for every $\apredp \in \apredsetp$ there is an $\apred$ of $\apredset$ for which there is a program $\choiceOf{\aprog_1}{\aprog_2}$ for which
$\demonicHoareTripletHoldsSemantically{\apred}{\choiceOf{\aprog_1}{\aprog_2}}{\apredp}$ holds.
This implies that 
$\demonicHoareTripletHoldsSemantically{\apred}{\aprog_i}{\apredp}$ must hold.
This also means, 
the realizability triple
$\angelicHoareTripletHoldsSemantically{\apred}{\aprog_i}{\apredp}$ holds which in turn implies that 
$\angelicHoareTripletHoldsSemantically{\apred}{\angelConcFuncOf{\apo_i}}{\apredp}$ holds.
By completeness, the outlines $\apop_i$ can be built when supplied with the correct $\apred \in \apredset$.
Then, we use the induction hypothesis 
to get $(\apred, \aprog_i)$ from the recursive calls with $\apred \in \set{\apred} \subseteq \apredset$, and the Hoare triple 
$\demonicHoareTripletHoldsSemantically{\apred}{\aprog_i}{\apredp}$, and $\aprog_i \in \angelConcFuncOf{\proofToProgFuncOf{\apop_i}}$, 
and the number of SMT solver calls for each recursion is at most $\sizeof{\apop_i}$.
Since $\proofToProgFuncOf{\apo_i} = \proofToProgFuncOf{\apop_i}$ we also have that 
$\choiceOf{\aprog_1}{\aprog_2}$ is of $\proofToProgFuncOf{\choiceOf{\apo_1}{\apo_2}}$.
Because $\demonicHoareTripletHoldsSemantically{\apred}{\aprog_i}{\apredp}$, we also have 
$\demonicHoareTripletHoldsSemantically{\apred}{\choiceOf{\aprog_1}{\aprog_2}}{\apredp}$.
When the proof outlines $\apop_i$ are properly looked up, there is no need for additional SMT calls.
Thus, the SMT calls are at most $\sizeof{\choiceOf{\apop_1}{\apop_2}} \leq \sizeof{\choiceOf{\apo_1}{\apo_2}}$.

In the last case consider a loop 
$\hoaretriplet{\apredset}{\kleeneof{\hoaretriplet{I}{\apo}{I}}}{\apredsetp}$.
In the case the realizability triple of the loop was weakened, we keep the original invariant selections.
Thus $\apredset$ resp. $\apredsetp$ are stronger resp. weaker than $I$.
We have $\angelicAssertionStrongerOf{I}{\apredsetp}$.
Therefore, there is an $i$ of $I$ with $\demonicAssertionStrongerOf{i}{\apredp}$.
This $i$ will eventually be found by the $\syn$ function.
Since the proof is sound, $I$ is a sound invariant for $\proofToProgFuncOf{\apo}$.
Thus, for every $i$ of $I$, the realizability triple
$\angelicHoareTripletHolds{\set{i}}{\proofToProgFuncOf{\apo}}{\set{i}}$ is valid.
Therefore, the proof outline $\apop$ can be constructed. 
By the induction hypothesis, $\synof{\apop}{i}$ returns $(i, \aprog)$ with 
$\aprog \in \angelConcFuncOf{\proofToProgFuncOf{\apop}}$, and 
$\demonicHoareTripletHoldsSemantically{i}{\aprog}{i}$, and the number of SMT solver calls is at most $\sizeof{\apop}$.
Because 
$\demonicHoareTripletHoldsSemantically{i}{\aprog}{i}$
holds, we also have 
$\demonicHoareTripletHoldsSemantically{i}{\kleeneof{\aprog}}{i}$.
Since $\angelicAssertionStrongerOf{\apredset}{I}$ there is an $\apred$ of $\apredset$ with $\demonicAssertionStrongerOf{\apred}{i}$.
Since $\demonicAssertionStrongerOf{i}{\apredp}$, we can weaken to
$\demonicHoareTripletHoldsSemantically{\apred}{\kleeneof{\aprog}}{\apredp}$ and return $\apred \in \apredset$.
Because $\aprog \in \angelConcFuncOf{\proofToProgFuncOf{\apop}}$, we have 
$\kleeneof{\aprog} \in \angelConcFuncOf{\kleeneof{\proofToProgFuncOf{\apop}}}$.
Because $\proofToProgFuncOf{\apop} = \proofToProgFuncOf{\apo}$, we have 
$\kleeneof{\aprog} \in \angelConcFuncOf{\kleeneof{\proofToProgFuncOf{\apo}}}$.
Because the proof outline $\apop$ can be looked up and does not need recomputing,
we need at most $\sizeof{I}$ additional SMT calls.
All in all, we have at most $\sizeof{\apo}$ SMT calls.
\end{proof}
\subsection{Proofs for \Cref{ch:implementation}}
\begin{definition}
    An abstract predicate is fail or maps variables to sets of states of the SMR automaton.
\begin{equation*}
\demonicAssertionsAbs{} = (\vars{} \rightarrow \powersetOf{\automaton{}}) \cup \set{\fail{}}
\end{equation*}
\end{definition}
To substitute the original more precise relation $\demonicAssertionStronger{}$ defined on $\demonicAssertions{}$, we introduce a new relation on the abstracted domain.
\begin{definition}
    We define a stronger relation $\demonicAssertionAbsStronger{}$ on $\demonicAssertionsAbs{}$.
    \begin{equation*}
        \demonicAssertionAbsStrongerOf{a}{b}  
        \ \Leftrightarrow \ 
        b = \fail{} \vee
        \forall \, v \in \vars{} \dv a(v) \subseteq b(v)
    \end{equation*}
\end{definition}
Ignoring the predicate $\fail{}$, this relation matches the order in Meyer and Wolff's paper.

\begin{definition}
We define a predicate abstraction function to abstract the original predicate and a predicate concretisation function to concretize abstract predicates.
The predicate abstraction function is a cartesian abstraction:
\begin{equation*}
    \funcDef{\demonicAbsFunc{}}{\demonicAssertions{} \rightarrow \demonicAssertionsAbs{}}
\end{equation*}
Let $v$ be a variable in $\vars{}$. Then, the function is defined as follows:
\begin{equation*}
    \begin{cases}
        \demonicAbsFuncOf{r} = \fail{} &, r = \fail{} \\ 
        \demonicAbsFuncOf{r}(v) = \bigcup_{q \in r} q(v) &, \text{else}
    \end{cases}
\end{equation*}
We continue with the demon concretisation function:
\begin{equation*}
    \funcDef{\demonicConcFunc{}}{\demonicAssertionsAbs{} \rightarrow \demonicAssertions{}}
\end{equation*}
\begin{equation*}
    \demonicConcFuncOf{a} = 
    \begin{cases}
        \fail{} &, a = \fail{} \\ 
        \setCond{q \in \states{}}{\forall v \in \vars{} \dv q(v) \subseteq a(v)} &, \text{else}
    \end{cases}
\end{equation*}
\end{definition}

\begin{lemma}
    $
    \demonicAssertionStrongerOf{r}{s} \implies \demonicAssertionAbsStrongerOf{\demonicAbsFuncOf{r}}{\demonicAbsFuncOf{s}}    
    \ .
    $
\end{lemma}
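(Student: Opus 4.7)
The plan is to prove the lemma by a direct case analysis on the distinguished element $\fail{}$, exploiting how it behaves in both orderings. We assume $\demonicAssertionStrongerOf{r}{s}$ and unfold its definition: either $s = \fail{}$, or both $s \neq \fail{}$, $r \neq \fail{}$, and $r \subseteq s$. These two cases give us a clean split to work with on the right-hand side of the implication.

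In the first case, $s = \fail{}$, the definition of the abstraction gives $\demonicAbsFuncOf{s} = \fail{}$, and the desired $\demonicAssertionAbsStrongerOf{\demonicAbsFuncOf{r}}{\demonicAbsFuncOf{s}}$ then holds by the first disjunct in the definition of $\demonicAssertionAbsStronger{}$, regardless of what $\demonicAbsFuncOf{r}$ is. In the second case, neither $r$ nor $s$ equals $\fail{}$, so the abstraction applies pointwise and we only need the second disjunct: $\forall v \in \vars{}\dv \demonicAbsFuncOf{r}(v) \subseteq \demonicAbsFuncOf{s}(v)$. Fix any variable $v$. By definition, $\demonicAbsFuncOf{r}(v) = \bigcup_{q \in r} q(v)$ and $\demonicAbsFuncOf{s}(v) = \bigcup_{q \in s} q(v)$. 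Since $r \subseteq s$, the union indexed by $r$ is trivially a subset of the union indexed by $s$, which is exactly what is needed.

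Note that the fourth conceivable combination, namely $r = \fail{}$ while $s \neq \fail{}$, never arises under the hypothesis: the definition of $\demonicAssertionStronger{}$ explicitly rules it out, which is the whole point of having $\fail{}$ be the top of the lattice. This is arguably the only point that requires a moment of care; the rest is just unfolding definitions and using monotonicity of union under subset. There is no nontrivial obstacle, and the proof is two short paragraphs once the case split is in place.
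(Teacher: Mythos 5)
Your proof is correct and follows essentially the same route as the paper's: a case split on whether $s = \fail{}$ (where the abstract ordering holds trivially since $\demonicAbsFuncOf{s} = \fail{}$), and otherwise the pointwise inclusion $\bigcup_{q \in r} q(v) \subseteq \bigcup_{q \in s} q(v)$ for each $v \in \vars{}$ from $r \subseteq s$. Your explicit remark that $r = \fail{}$ with $s \neq \fail{}$ cannot occur under the hypothesis is a nice clarification the paper leaves implicit, but it does not change the argument.
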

\begin{proof}
If the predicate $s$ is $\fail{}$, then $\demonicAbsFuncOf{s}$ is also $\fail{}$ so the inequality holds.
Otherwise, we have 
\begin{align*}
    r \subseteq s \implies 
    & \,
    (\bigcup_{q \in r}q(v)) \subseteq 
    (\bigcup_{q \in s}q(v))
    \\ 
    \implies & \,
    \demonicAssertionAbsStrongerOf{\demonicAbsFuncOf{r}}{\demonicAbsFuncOf{s}}
    \ .
\end{align*}
Here, $v$ is any variable of $\vars{}$.
\end{proof}

\begin{lemma}
    $
    \demonicAssertionAbsStrongerOf{a}{b} \implies \demonicAssertionStrongerOf{\demonicConcFuncOf{a}}{\demonicConcFuncOf{b}}    
    \ .
    $
\end{lemma}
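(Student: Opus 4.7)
The statement is that the concretisation function $\demonicConcFunc{}$ is monotonic with respect to $\demonicAssertionAbsStronger{}$ and $\demonicAssertionStronger{}$. My plan is to proceed by a case split on whether the weaker side $b$ equals $\fail{}$, and then in the non-$\fail{}$ case to chase the definitions by picking an arbitrary state $q$ from $\demonicConcFuncOf{a}$ and showing it lies in $\demonicConcFuncOf{b}$.

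First, if $b = \fail{}$, then by definition $\demonicConcFuncOf{b} = \fail{}$, and since $\fail{}$ is the top element of the concrete lattice, $\demonicAssertionStrongerOf{\demonicConcFuncOf{a}}{\fail{}}$ holds trivially, regardless of whether $\demonicConcFuncOf{a}$ is $\fail{}$ or a set of states.

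Second, if $b \neq \fail{}$, then the hypothesis $\demonicAssertionAbsStrongerOf{a}{b}$ tells us that $a$ is also a function (not $\fail{}$) and $\forall v \in \vars{}.\ a(v) \subseteq b(v)$. Consequently $\demonicConcFuncOf{a}$ and $\demonicConcFuncOf{b}$ are both sets of states, not $\fail{}$, so to establish $\demonicAssertionStrongerOf{\demonicConcFuncOf{a}}{\demonicConcFuncOf{b}}$ it suffices to show $\demonicConcFuncOf{a} \subseteq \demonicConcFuncOf{b}$. Take an arbitrary $q \in \demonicConcFuncOf{a}$. By the definition of $\demonicConcFunc{}$ we have $q(v) \subseteq a(v)$ for every $v \in \vars{}$. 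Chaining with the pointwise inclusion $a(v) \subseteq b(v)$ given by the hypothesis yields $q(v) \subseteq b(v)$ for every $v$, so $q \in \demonicConcFuncOf{b}$, as required.

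There is no real obstacle here; the only subtlety is the edge case where one of $a$ or $b$ equals $\fail{}$, which must be handled up front to avoid dereferencing $\fail{}$ as a function. Once that is dispensed with, the argument is a direct transitivity of $\subseteq$ at each variable, carried out pointwise over $\vars{}$.
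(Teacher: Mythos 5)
Your proof is correct and follows essentially the same route as the paper's: a case split on whether $b = \fail{}$, followed in the non-$\fail{}$ case by chaining the pointwise inclusions $q(v) \subseteq a(v) \subseteq b(v)$ to conclude $\demonicConcFuncOf{a} \subseteq \demonicConcFuncOf{b}$. The only difference is presentational — you chase an arbitrary state $q$ explicitly, whereas the paper states the inclusion of the two set comprehensions directly.
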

\begin{proof}
If the abstract predicate $b$ is $\fail{}$, then $\demonicConcFuncOf{b}$ is also $\fail{}$ so the inequality holds.
Otherwise, we have 
\begin{align*}
    a(v) \subseteq b(v) \implies 
    & \,
    \setCond{p \in \states{}}{\forall v' \in \vars{} \dv p(v) \subseteq a(v)} \subseteq 
    \\ & \qquad
    \setCond{p \in \states{}}{\forall v' \in \vars{} \dv p(v) \subseteq b(v)}
    \\ 
    \implies & \,
    \demonicAssertionStrongerOf{\demonicConcFuncOf{a}}{\demonicConcFuncOf{b}}
    \ .
\end{align*}
Here, $v$ is any variable of $\vars{}$.
\end{proof}

\begin{lemma}\label{lem:demonGalois}
    The pair $(\demonicAbsFunc{}, \demonicConcFunc{})$ is a Galois connection between $\demonicAssertions{}$ and $\demonicAssertionsAbs{}$:
    \begin{equation*}
        (\demonicAssertions{}, \demonicAssertionStronger{})
        \galois{\demonicAbsFunc{}}{\demonicConcFunc{}}
        (\demonicAssertionsAbs{}, \demonicAssertionAbsStronger{})
    \end{equation*}
    This means, the following inequalities hold:
    \begin{align*}
        \forall r \in \demonicAssertions{} &\dv  \demonicAssertionStrongerOf{r}{
        \ \demonicConcFuncOf{\demonicAbsFuncOf{r}}}\\
        \forall a \in \demonicAssertionsAbs{} & \dv 
         \demonicAssertionAbsStrongerOf{\demonicAbsFuncOf{\demonicConcFuncOf{a}}}{
        \ a}
    \end{align*}
\end{lemma}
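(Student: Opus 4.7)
The plan is to verify the two inequalities case by case, separating the $\fail{}$ element from set-valued predicates in each direction. For both directions, the $\fail{}$ case is routine: $\demonicAbsFunc{}$ and $\demonicConcFunc{}$ both map $\fail{}$ to $\fail{}$, and by the definitions of $\demonicAssertionStronger{}$ and $\demonicAssertionAbsStronger{}$ the element $\fail{}$ is the top of each order, so any inequality with $\fail{}$ on the right holds automatically.

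For the first inequality $\demonicAssertionStrongerOf{r}{\demonicConcFuncOf{\demonicAbsFuncOf{r}}}$ with $r \neq \fail{}$, I would unfold the two definitions and fix an arbitrary $q \in r$. The abstraction gives $\demonicAbsFuncOf{r}(v) = \bigcup_{p \in r} p(v) \supseteq q(v)$ for every $v \in \vars{}$ just by the definition of union. But this is exactly the membership condition of $\demonicConcFunc{}$, so $q \in \demonicConcFuncOf{\demonicAbsFuncOf{r}}$, and consequently $r \subseteq \demonicConcFuncOf{\demonicAbsFuncOf{r}}$ as required.

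For the second inequality $\demonicAssertionAbsStrongerOf{\demonicAbsFuncOf{\demonicConcFuncOf{a}}}{a}$ with $a \neq \fail{}$, the argument is dual. Every $q \in \demonicConcFuncOf{a}$ satisfies $q(v) \subseteq a(v)$ for all $v$ by definition of $\demonicConcFunc{}$, so taking the union over all such $q$ preserves the containment: $\demonicAbsFuncOf{\demonicConcFuncOf{a}}(v) = \bigcup_{q \in \demonicConcFuncOf{a}} q(v) \subseteq a(v)$, which is exactly the pointwise condition that $\demonicAssertionAbsStronger{}$ demands. The degenerate subcase where $\demonicConcFuncOf{a}$ is empty yields the empty union on each coordinate, so the inequality holds vacuously.

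I do not expect any significant obstacle; the work is essentially a careful unfolding of the two definitions. The only subtlety worth flagging is the bookkeeping around $\fail{}$ and the empty-concretization corner case, both of which are handled by direct appeals to the definitions of $\demonicAssertionStronger{}$ and $\demonicAssertionAbsStronger{}$ given just before the lemma.
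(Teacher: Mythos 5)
Your proof is correct and follows essentially the same route as the paper's: the first inequality via the membership argument that each $q \in r$ satisfies $q(v) \subseteq \bigcup_{p \in r} p(v)$ and hence lies in $\demonicConcFuncOf{\demonicAbsFuncOf{r}}$, and the second via the pointwise union containment $\bigcup_{q \in \demonicConcFuncOf{a}} q(v) \subseteq a(v)$, with the $\fail{}$ cases dispatched by $\fail{}$ being the top element. The only difference is your explicit remark about the empty-concretization corner case, which the paper leaves implicit and which your union argument already covers.
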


\begin{proof}
We start the proof with the first equation.
Let $r$ be of $\demonicAssertions{}$.
If $r = \fail{}$ then $\demonicConcFuncOf{\demonicAbsFuncOf{r}} = \fail{}$.
Thus, the inequality $\demonicAssertionStrongerOf{r}{\demonicConcFuncOf{\demonicAbsFuncOf{r}} = \fail{}}$ holds.
If $r$ is not $\fail{}$, then     
\begin{equation*}
\demonicConcFuncOf{\demonicAbsFuncOf{r}} = 
\setCond{p \in \states{}}{\forall v \in \vars{} \dv p(v) \subseteq \bigcup_{q \in r} q(v)}
\ .
\end{equation*}
Let the state $q$ be of $r$.
Then $q$ is also in  
$\demonicConcFuncOf{\demonicAbsFuncOf{r}}$.
Therefore $r \subseteq 
\demonicConcFuncOf{\demonicAbsFuncOf{r}}$.
And that means $\demonicAssertionStrongerOf{r}{\demonicConcFuncOf{\demonicAbsFuncOf{r}}}$.

We continue with the second equation.
Let $a$ be of $\demonicAssertionsAbs{}$.
If $a$ is $\fail{}$, then $\demonicAbsFuncOf{\demonicConcFuncOf{a}}$ is trivially stronger.
If $a$ is not $\fail{}$, then we show that for any $v$ of $\vars{}$ the inclusion 
$\demonicAbsFuncOf{\demonicConcFuncOf{a}}(v) \subseteq a(v)$ holds.
Let $v$ be of $\vars{}$.
Then 
\begin{align*}
    \demonicAbsFuncOf{\demonicConcFuncOf{a}}(v) = & \,
    \cup_{q \in \demonicConcFuncOf{a}}q(v) 
    \\
    = & \,
    \cup_{q \in \setCond{p \in \states{}}{\forall v' \in \vars{} \dv p(v') \subseteq a(v')}} q(v)\\
    \subseteq &\,
    a(v)
    \ .
\end{align*}
This concludes the proof.
\end{proof}

\begin{lemma}\label{lem:demonsAbsPO}
    The relation $\demonicAssertionAbsStronger{}$ is a partial order relation on $\demonicAssertionsAbs{}$.
\end{lemma}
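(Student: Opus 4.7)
The plan is to verify reflexivity, antisymmetry, and transitivity directly from the definition, mirroring the case analysis already used for Lemma~\ref{lem:demonsPartialOrder}. The only structural difference is that the underlying carrier is now $(\vars{} \rightarrow \powersetOf{\automaton{}}) \cup \set{\fail{}}$ rather than $\powersetOf{\states{}} \cup \set{\fail{}}$, so pointwise set inclusion over $\vars{}$ plays the role that set inclusion played before. In every case I would split on whether $\fail{}$ appears, since the right-hand disjunct of $\demonicAssertionAbsStronger{}$ behaves differently depending on whether one of the arguments is $\fail{}$.

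For reflexivity I would take $a \in \demonicAssertionsAbs{}$ and distinguish $a = \fail{}$, in which case the first disjunct of the definition fires, from $a \neq \fail{}$, in which case $a(v) \subseteq a(v)$ holds for every $v \in \vars{}$ by reflexivity of $\subseteq$. For antisymmetry I would consider $a \demonicAssertionAbsStronger{} b$ and $b \demonicAssertionAbsStronger{} a$. If $a = \fail{}$, then the second inequality forces $a(v) \subseteq b(v)$ — which is meaningless for $\fail{}$ unless $b = \fail{}$ — so both are $\fail{}$; symmetrically if $b = \fail{}$. Otherwise, pointwise antisymmetry of $\subseteq$ yields $a(v) = b(v)$ for every $v$, hence equality of functions.

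Transitivity is the only place that needs mild care. Given $\demonicAssertionAbsStrongerOf{a}{b}$ and $\demonicAssertionAbsStrongerOf{b}{c}$, if $c = \fail{}$ then the conclusion is immediate. If $c \neq \fail{}$, the second premise must use the pointwise disjunct, so $b \neq \fail{}$ and $b(v) \subseteq c(v)$ for all $v$; in turn the first premise must then use the pointwise disjunct as well, so $a \neq \fail{}$ and $a(v) \subseteq b(v)$ for all $v$. Composing via transitivity of $\subseteq$ gives $a(v) \subseteq c(v)$ for every $v$, which is the desired $\demonicAssertionAbsStrongerOf{a}{c}$.

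I expect no real obstacle here: the argument is a routine case split and an appeal to the standard order-theoretic properties of $\subseteq$ applied pointwise. The only thing worth making explicit is that whenever $b \neq \fail{}$ and $\demonicAssertionAbsStrongerOf{a}{b}$ holds, then $a$ must itself be a function (not $\fail{}$), since $\fail{}(v)$ is not defined and hence the pointwise inclusion disjunct cannot be invoked from $a = \fail{}$. This is the only place where the interaction between $\fail{}$ and functions enters the reasoning, and it is already implicit in the analogous treatment of $\fail{}$ in Lemma~\ref{lem:demonsPartialOrder}.
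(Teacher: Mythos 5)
Your proposal is correct and takes essentially the same route as the paper's proof: a case split on whether $\fail{}$ occurs, followed by pointwise reflexivity, antisymmetry, and transitivity of $\subseteq$ over $\vars{}$, with the same observation that a $\fail{}$ argument can never satisfy the pointwise disjunct. The only nit is in your antisymmetry case with $a = \fail{}$: it is the hypothesis $\demonicAssertionAbsStrongerOf{a}{b}$, whose pointwise disjunct reads $a(v) \subseteq b(v)$, that forces $b = \fail{}$, not the second inequality; the conclusion you draw is nevertheless the right one.
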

\begin{proof}
We first show reflexivity.
Let $a$ be of $\demonicAssertionsAbs{}$.
If $a = \fail{}$, then the inequality $\demonicAssertionAbsStrongerOf{a}{a}$ trivially holds.
Otherwise, let $v$ be a variable of $\vars{}$.
Then the inclusion $a(v) \subseteq a(v)$ also holds.

We continue with transitivity.
Let $a$, $b$ and $c$ be of $\demonicAssertionsAbs{}$.
We assume that the inequalities $\demonicAssertionAbsStrongerOf{a}{b}$ and 
$\demonicAssertionAbsStrongerOf{b}{c}$ are true.
If $c = \fail{}$, then the inequality $\demonicAssertionAbsStrongerOf{a}{c}$ trivially holds.
Otherwise, we get that $b \neq \fail{}$ and also $a \neq \fail{}$.
Let $v$ be of $\vars{}$.
We have $a(v) \subseteq b(v)$ and $b(v) \subseteq c(v)$.
Through transitivity of the relation $\subseteq$ we also get the inclusion $a(v) \subseteq c(v)$.

Lastly, we show antisymmetry.
Let $a$ and $b$ be of $\demonicAssertionsAbs{}$.
We have $\demonicAssertionAbsStrongerOf{a}{b}$ and $\demonicAssertionAbsStrongerOf{b}{a}$.
If $a$ is $\fail{}$ then $b$ must also be $\fail{}$ and vice versa.
If neither one is fail, let $v$ be of $\vars{}$.
We know that the inequalities $a(v) \subseteq b(v)$ and $b(v) \subseteq a(v)$ are true and thus we have $a(v) = b(v)$ for all $v$ of $\vars{}$.
That means, $a$ and $b$ are equal.
\end{proof}

\begin{lemma} \label{lem:demonsAbsLattice}
    The partial order $(\demonicAssertionsAbs{}, \demonicAssertionAbsStronger{})$ is a complete lattice.
    In fact, the join and meet can be computed by the following equations: 
    Let $A$ be a subset of $\demonicAssertionsAbs{}$.
    \begin{align*}
        &\begin{cases}
        \joinOf{A} =
            \fail{} \hphantom{\ \, \quad\qquad\qquad}&, \fail{} \in A \\
        \joinOf{A}(v) =
            \bigcup_{a \in A} a(v) &, \text{else}
        \end{cases} 
        \\
        &\begin{cases}
        \meetOf{R} =
            \fail{} &, R = \set{\fail{}} \\
        \meetOf{R} =
            \bigcap_{a \in (A \setminus \set{\fail{}})} a(v) &, \text{else}
        \end{cases}
    \end{align*}

\end{lemma}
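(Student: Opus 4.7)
The plan is to mirror the proof of \Cref{lem:demonsLattice}, exploiting the fact that $\demonicAssertionAbsStronger{}$ is defined pointwise in the variables of $\vars{}$ modulo the distinguished top element $\fail{}$. Partial-orderness is already given by \Cref{lem:demonsAbsPO}, so it remains to verify that the proposed formulas for $\join{}$ and $\meet{}$ really give the least upper and greatest lower bound of an arbitrary subset $A \subseteq \demonicAssertionsAbs{}$. The analogous reasoning for the non-abstract case can be transferred essentially verbatim, replacing ``$\subseteq$ on predicates'' by ``pointwise $\subseteq$ on the image of each variable''.

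First I would check that $\joinOf{A}$ is an upper bound of $A$. Pick any $a \in A$. If $\fail{} \in A$, then $\joinOf{A} = \fail{}$ and $\demonicAssertionAbsStrongerOf{a}{\fail{}}$ holds directly by the definition of $\demonicAssertionAbsStronger{}$. Otherwise, $\joinOf{A}(v) = \bigcup_{b \in A} b(v) \supseteq a(v)$ for every $v \in \vars{}$, which yields $\demonicAssertionAbsStrongerOf{a}{\joinOf{A}}$. Next I would check minimality: let $u$ be any upper bound of $A$. If $u = \fail{}$ there is nothing to show. Otherwise no element of $A$ can be $\fail{}$ (because $\demonicAssertionAbsStrongerOf{\fail{}}{u}$ forces $u = \fail{}$), so $\joinOf{A}$ uses the second branch of its definition; for every $v$ and every $a \in A$, $a(v) \subseteq u(v)$, and taking the union over $a$ gives $\joinOf{A}(v) \subseteq u(v)$, i.e.\ $\demonicAssertionAbsStrongerOf{\joinOf{A}}{u}$.

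The argument for the meet is symmetric. To see that $\meetOf{A}$ is a lower bound, take $a \in A$. If $A = \set{\fail{}}$ then $a = \meetOf{A} = \fail{}$ and the inequality is reflexive. Otherwise either $a = \fail{}$, in which case $\demonicAssertionAbsStrongerOf{\meetOf{A}}{\fail{}}$ holds by definition, or else $\meetOf{A}(v) = \bigcap_{b \in A \setminus \set{\fail{}}} b(v) \subseteq a(v)$ for every $v \in \vars{}$. For maximality, let $l$ be a lower bound of $A$. If $A = \set{\fail{}}$, then $\meetOf{A} = \fail{}$ is top, so $\demonicAssertionAbsStrongerOf{l}{\meetOf{A}}$ trivially. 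Otherwise pick any $a \in A \setminus \set{\fail{}}$; from $\demonicAssertionAbsStrongerOf{l}{a}$ we infer $l \neq \fail{}$, and then for each $b \in A \setminus \set{\fail{}}$ and each $v$ we get $l(v) \subseteq b(v)$, so $l(v) \subseteq \bigcap_{b \in A \setminus \set{\fail{}}} b(v) = \meetOf{A}(v)$, giving $\demonicAssertionAbsStrongerOf{l}{\meetOf{A}}$.

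No step looks genuinely hard; the only subtlety is keeping the $\fail{}$ branches of the definitions aligned with the four case splits (upper/least upper, lower/greatest lower), and in particular noticing in the least-upper-bound argument that whenever $A$ contains $\fail{}$ any upper bound must itself be $\fail{}$, which is exactly what makes the first branch of $\joinOf{A}$ tight. Because the proof is a direct pointwise lift of \Cref{lem:demonsLattice}, I would present it by explicit case analysis along the branches of the definition, closely echoing the structure of the earlier proof so that the reader can verify the translation at a glance.
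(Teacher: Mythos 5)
Your proposal is correct and follows essentially the same route as the paper: a pointwise lift of the argument for \Cref{lem:demonsLattice}, verifying in four cases that the given formulas are an upper bound, the least upper bound, a lower bound, and the greatest lower bound, with the same handling of the $\fail{}$ branches (including the observation that a non-$\fail{}$ upper bound forces $\fail{}\notin A$). The only cosmetic difference is that in the greatest-lower-bound step you split on $A=\set{\fail{}}$ and then infer $l\neq\fail{}$, whereas the paper splits on whether $l=\fail{}$; the content is the same.
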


\begin{proof}
    Let $A$ be a subset of $\demonicAssertionsAbs{}$.
    Let $a$ be an element of $A$.
    We first show that the join is an upper bound $\demonicAssertionAbsStrongerOf{a}{\joinOf{A}}$.
    If $\fail{} \in A$, then $\joinOf{A} = \fail{}$ so 
    the inequality 
    $\demonicAssertionAbsStrongerOf{a}{\joinOf{A}}$
    trivially holds.
    Otherwise, let $v$ be of $\vars{}$.
    The inclusion $a(v) \subseteq \bigcup_{a' \in A} a'(v)$ holds trivially, since $a \in A$.
    Now, to show that the join is the least upper bound, let $u$ be an upper bound of $A$.
    If $u$ is $\fail{}$
    the inequality $\demonicAssertionAbsStrongerOf{\joinOf{A}}{u}$ holds trivially.
    Otherwise, if $u$ is not $\fail{}$, we know $\fail{} \not\in A$.
    Let $v$ be of $\vars{}$.
    For any abstract demon $a$ of $A$, $a(v) \subseteq u(v)$ must hold.
    Therefore the inclusion $\joinOf{A}(v) = \bigcup_{a \in A} a(v) \subseteq u(v)$ holds.
    Thus, 
    the inequality
    $\demonicAssertionAbsStrongerOf{\joinOf{A}}{u}$ is true.
    
    Moving on with the meet.
    Let $a$ be an element of $A$.
    We first show that the meet is a lower bound: 
    $\demonicAssertionAbsStrongerOf{\meetOf{A}}{a}$.
    If $A = \set{\fail{}}$, then $a = \fail{}$ so the inequality
    $\demonicAssertionAbsStrongerOf{\meetOf{A}}{a}$ trivially holds.
    Otherwise, let $v$ be of $\vars{}$.
    Then, by definition we know that $\meetOf{A}(v) = \bigcap_{a' \in A\setminus\set{\fail{}}} a'(v)$.
    If $a$ is $\fail{}$, the inequality trivially holds.
    Otherwise, 
    $\meetOf{A}(v) \subseteq a(v)$ is also true. 
    Now, to show that the meet is the greatest lower bound, let $l$ be a lower bound of $A$.
    If $l$ is $\fail{}$, $A$ must be $\set{\fail{}}$ and thus $\meetOf{A} = \fail{}$, so $\demonicAssertionAbsStrongerOf{l}{\meetOf{A}}$ trivially holds.
    Otherwise, let $v$ be of $\vars{}$.
    Then for any $a$ of $A$ that is not $\fail{}$, $l(v) \subseteq a(v)$ must hold.
    Thus, $l(v) \subseteq \bigcap_{a' \in A\setminus\set{\fail{}}} a'(v) = \meetOf{A}(v)$ holds.
    And therefore the inequality $\demonicAssertionAbsStrongerOf{l}{\meetOf{A}}$ is true.
\end{proof}

Next, we lift the abstract predicates to abstract selections.
We remind the reader, that selections are sets of predicates, i.e.\ $\angelicAssertions{} = \powersetOf{\demonicAssertions{}}$.
\begin{definition}
    An \emph{Abstract Selection} is a set of abstract predicates.
    \begin{equation*}
        \angelicAssertionsAbs{} = \powersetOf{\demonicAssertionsAbs{}}
    \end{equation*}
\end{definition}

The more versatile relation on abstract angels is the same as on $\angelicAssertions{}$, except that the elements of the abstract selections are compared using the abstract more precise relation defined on abstract predicates.
\begin{definition}
    $
        \angelicAssertionAbsStrongerOf{A}{B} 
        \ \Leftrightarrow \ 
         \forall b \in B \dv \exists \, a \in A \dv \demonicAssertionAbsStrongerOf{a}{b}
         $
\end{definition}

\begin{definition}
    The selection abstraction function $\angelicAbsFunc{}$ relates elements of $\angelicAssertions{}$ to their abstract representation in $\angelicAssertionsAbs{}$.
    \begin{equation*}
        \funcDef{\angelicAbsFunc{}}{\angelicAssertions{} \rightarrow \angelicAssertionsAbs{}}
    \end{equation*}
    \begin{equation*}
        \angelicAbsFuncOf{\angelicAssertionFont{R}} = \setCond{\demonicAbsFuncOf{r}}{r \in \angelicAssertionFont{R}}  
    \end{equation*}

    The selection concretisation function $\angelicConcFunc{}$ relates elements of $\angelicAssertionsAbs{}$ to the elements of the original domain $\angelicAssertions{}$.
    \begin{equation*}
        \funcDef{\angelicConcFunc{}}{\angelicAssertionsAbs{} \rightarrow \angelicAssertions{}}
    \end{equation*}
    \begin{equation*}
        \angelicConcFuncOf{A} = \setCond{\demonicConcFuncOf{a}}{a \in A}  
    \end{equation*}
\end{definition}

\begin{lemma}
    $
    \angelicAssertionStrongerOf{R}{S} \implies \angelicAssertionAbsStrongerOf{\angelicAbsFuncOf{R}}{\angelicAbsFuncOf{S}}
    \ .
    $
\end{lemma}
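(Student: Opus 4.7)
The plan is to unfold the definitions of the two versatility relations and chain together the componentwise monotonicity of $\demonicAbsFunc{}$ that was established immediately above. Concretely, I would start by assuming $\angelicAssertionStrongerOf{R}{S}$ and then, following the definition of $\angelicAssertionAbsStronger{}$, fix an arbitrary element of $\angelicAbsFuncOf{S}$ and produce a more precise element in $\angelicAbsFuncOf{R}$.

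The key steps proceed in this order. Let $b \in \angelicAbsFuncOf{S}$. By the definition of $\angelicAbsFunc{}$, there is some $s \in S$ with $b = \demonicAbsFuncOf{s}$. Since $\angelicAssertionStrongerOf{R}{S}$ holds, the definition of $\angelicAssertionStronger{}$ yields an $r \in R$ with $\demonicAssertionStrongerOf{r}{s}$. I would then invoke the already established lemma stating that $\demonicAssertionStrongerOf{r}{s}$ implies $\demonicAssertionAbsStrongerOf{\demonicAbsFuncOf{r}}{\demonicAbsFuncOf{s}}$, which gives $\demonicAssertionAbsStrongerOf{\demonicAbsFuncOf{r}}{b}$. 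Finally, $r \in R$ ensures $\demonicAbsFuncOf{r} \in \angelicAbsFuncOf{R}$, so $\demonicAbsFuncOf{r}$ is the required witness. Since $b$ was arbitrary, this establishes $\angelicAssertionAbsStrongerOf{\angelicAbsFuncOf{R}}{\angelicAbsFuncOf{S}}$.

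There is essentially no obstacle here: the argument is a one-line verification that the definition of versatility is preserved pointwise by applying $\demonicAbsFunc{}$ to every predicate. The only thing worth being careful about is the quantifier structure of $\angelicAssertionAbsStronger{}$ (universal on the right, existential on the left), which is why we start by picking an element of $\angelicAbsFuncOf{S}$ rather than $\angelicAbsFuncOf{R}$. The edge case $s = \fail{}$ needs no special treatment because $\demonicAbsFuncOf{\fail{}} = \fail{}$ and the demonic-abstract monotonicity lemma already handles $\fail{}$ on either side. Thus the proof is a direct three-line corollary of the predicate-level monotonicity result, with no induction or case analysis required.
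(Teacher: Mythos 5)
Your proof is correct and follows exactly the paper's own argument: pick an arbitrary $\demonicAbsFuncOf{s}\in\angelicAbsFuncOf{S}$, use $\angelicAssertionStrongerOf{R}{S}$ to obtain $r\in R$ with $\demonicAssertionStrongerOf{r}{s}$, and apply the predicate-level monotonicity lemma $\demonicAssertionStrongerOf{r}{s}\implies\demonicAssertionAbsStrongerOf{\demonicAbsFuncOf{r}}{\demonicAbsFuncOf{s}}$ to produce the witness in $\angelicAbsFuncOf{R}$. Your additional remark that the $\fail{}$ case needs no special treatment is consistent with the paper's handling, which delegates it entirely to that earlier lemma.
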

\begin{proof}
Let $\demonicAbsFuncOf{s}$ be of $\angelicAbsFuncOf{S}$.
We know there is a predicate $r$ in $R$ more precise than $s$.
Thus, the inequality $\demonicAssertionAbsStrongerOf{\demonicAbsFuncOf{r}}{\demonicAbsFuncOf{s}}$ is true.
And therefore the inequality
$\angelicAssertionAbsStrongerOf{\angelicAbsFuncOf{R}}{\angelicAbsFuncOf{S}}$
holds.
\end{proof}

\begin{lemma}
    $
    \angelicAssertionStrongerOf{A}{B} \implies \angelicAssertionStrongerOf{\angelicConcFuncOf{A}}{\angelicConcFuncOf{B}}
    \ .
    $
\end{lemma}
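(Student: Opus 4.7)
The plan is to mirror the structure of the immediately preceding lemma about $\angelicAbsFunc{}$ and reduce the statement to the earlier result $\demonicAssertionAbsStrongerOf{a}{b} \implies \demonicAssertionStrongerOf{\demonicConcFuncOf{a}}{\demonicConcFuncOf{b}}$, combined with a definitional unfolding of the more-versatile relation. By analogy with the three preceding lemmas in this quartet of abstraction/concretization monotonicity results, the premise should be read as $\angelicAssertionAbsStrongerOf{A}{B}$ (the order on $\angelicAssertionsAbs{}$) since $A,B$ range over abstract selections; the concretization on the right-hand side then produces concrete selections, for which the ordinary $\angelicAssertionStronger{}$ applies.

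First I would fix an arbitrary element of $\angelicConcFuncOf{B}$. By the definition of $\angelicConcFunc{}$ this element has the form $\demonicConcFuncOf{b}$ for some $b \in B$. To establish $\angelicAssertionStrongerOf{\angelicConcFuncOf{A}}{\angelicConcFuncOf{B}}$ it suffices, by the definition of $\angelicAssertionStronger{}$, to exhibit an element of $\angelicConcFuncOf{A}$ that is $\demonicAssertionStronger{}$-below $\demonicConcFuncOf{b}$.

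Next I would apply the hypothesis $\angelicAssertionAbsStrongerOf{A}{B}$ at the witness $b$, obtaining some $a \in A$ with $\demonicAssertionAbsStrongerOf{a}{b}$. Then $\demonicConcFuncOf{a}$ lies in $\angelicConcFuncOf{A}$ by construction, and the previously established monotonicity of $\demonicConcFunc{}$ on the demon layer gives $\demonicAssertionStrongerOf{\demonicConcFuncOf{a}}{\demonicConcFuncOf{b}}$. Since the witness $b$ was arbitrary, the definition of the more-versatile relation is satisfied and the conclusion follows.

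There is essentially no hard step: the argument is a two-line routine chase that extracts a witness in $A$ via the definition of $\angelicAssertionAbsStronger{}$ and then transports the resulting inequality from the demon layer to the selection layer via the earlier concretization-monotonicity lemma. The only mild subtlety is to identify the intended reading of the orderings on either side, which the types of $\angelicConcFunc{}$ and $\demonicConcFunc{}$ fix uniquely.
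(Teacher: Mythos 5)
Your proposal is correct and follows essentially the same route as the paper's own proof: pick an arbitrary $\demonicConcFuncOf{b}\in\angelicConcFuncOf{B}$, use the hypothesis (read with the abstract ordering, as the paper's proof also does) to obtain $a\in A$ with $\demonicAssertionAbsStrongerOf{a}{b}$, and transport this via the earlier demon-level concretization-monotonicity lemma to $\demonicAssertionStrongerOf{\demonicConcFuncOf{a}}{\demonicConcFuncOf{b}}$ with $\demonicConcFuncOf{a}\in\angelicConcFuncOf{A}$. Your remark about the intended reading of the premise matches how the paper itself resolves the notation.
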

\begin{proof}
Let the predicate $\demonicConcFuncOf{b}$ be of $\angelicConcFuncOf{B}$.
We know there is an abstract predicate $a$ of $A$ with $\demonicAssertionAbsStrongerOf{a}{b}$.
Thus, $\demonicConcFuncOf{a}$ is more precise than $\demonicConcFuncOf{b}$.
Since $\demonicConcFuncOf{a} \in \angelicConcFuncOf{A}$, we know the inequality
$\angelicAssertionStrongerOf{\angelicConcFuncOf{A}}{\angelicConcFuncOf{B}}$
holds.
\end{proof}

\begin{lemma}\label{lem:angelAbsGalois}
    The pair $(\angelicAbsFunc{}, \angelicConcFunc{})$ is a Galois connection between $\angelicAssertions{}$ and $\angelicAssertionsAbs{}$:
    \begin{equation*}
        (\angelicAssertions{}, \angelicAssertionStronger{})
        \galois{\angelicAbsFunc{}}{\angelicConcFunc{}}
        (\angelicAssertionsAbs{}, \angelicAssertionAbsStronger{})
    \end{equation*}
    This means, the following inequalities hold:
    \begin{align*}
        \forall R \in \angelicAssertions{} &\dv  \angelicAssertionStrongerOf{R}{
        \ \angelicConcFuncOf{\angelicAbsFuncOf{R}}}\\
        \forall \, A \in \angelicAssertionsAbs{} & \dv 
         \angelicAssertionAbsStrongerOf{\angelicAbsFuncOf{\angelicConcFuncOf{A}}}{
        \ A}
    \end{align*}
\end{lemma}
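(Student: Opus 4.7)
The plan is to lift the Galois connection property from the demonic level (Lemma~\ref{lem:demonGalois}) to the angelic level, by unfolding the definitions of $\angelicAbsFunc{}$, $\angelicConcFunc{}$, and the two ordering relations $\angelicAssertionStronger{}$ and $\angelicAssertionAbsStronger{}$, and then applying the corresponding demonic-level inequality pointwise to the witnessing predicate. Both inequalities are extensionality-style statements about selections (sets of predicates), and in each case the work reduces to picking the right witness out of the image of $\demonicAbsFunc{}$ or $\demonicConcFunc{}$.

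For the first inequality, I would fix $R \in \angelicAssertions{}$ and, by the definition of $\angelicAssertionStronger{}$, take an arbitrary $s' \in \angelicConcFuncOf{\angelicAbsFuncOf{R}}$ and produce a predicate in $R$ that is more precise than $s'$. Unfolding, such an $s'$ has the form $\demonicConcFuncOf{\demonicAbsFuncOf{r}}$ for some $r \in R$. Lemma~\ref{lem:demonGalois} gives $\demonicAssertionStrongerOf{r}{\demonicConcFuncOf{\demonicAbsFuncOf{r}}}$, i.e. $\demonicAssertionStrongerOf{r}{s'}$, so $r$ is the desired witness and the required $\angelicAssertionStrongerOf{R}{\angelicConcFuncOf{\angelicAbsFuncOf{R}}}$ holds.

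For the second inequality, I would fix $A \in \angelicAssertionsAbs{}$ and, by the definition of $\angelicAssertionAbsStronger{}$, take an arbitrary $b \in A$ and exhibit an element of $\angelicAbsFuncOf{\angelicConcFuncOf{A}}$ that is $\demonicAssertionAbsStronger{}$-stronger than $b$. The canonical choice is $\demonicAbsFuncOf{\demonicConcFuncOf{b}}$: it lies in $\angelicAbsFuncOf{\angelicConcFuncOf{A}}$ because $\demonicConcFuncOf{b} \in \angelicConcFuncOf{A}$ by definition, and Lemma~\ref{lem:demonGalois} gives $\demonicAssertionAbsStrongerOf{\demonicAbsFuncOf{\demonicConcFuncOf{b}}}{b}$ directly.

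There is no real obstacle here; the only thing to be careful about is that the two orderings $\angelicAssertionStronger{}$ and $\angelicAssertionAbsStronger{}$ are \emph{existentially} quantified over the left-hand selection and \emph{universally} quantified over the right-hand one, so the direction of the witness construction matters: on the concrete side we must pick an element of the smaller (image) set and find a better one in $R$, while on the abstract side we must pick an element of $A$ and exhibit a stronger element inside the image set. Once the quantifier directions are aligned with the corresponding cases of Lemma~\ref{lem:demonGalois}, both arguments are one-line reductions.
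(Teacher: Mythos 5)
Your proposal is correct and follows essentially the same route as the paper: both inequalities are reduced to the demonic-level Galois connection (Lemma~\ref{lem:demonGalois}) by unfolding $\angelicAbsFunc{}$ and $\angelicConcFunc{}$ elementwise, picking $r$ as the witness for an arbitrary $\demonicConcFuncOf{\demonicAbsFuncOf{r}}$ on the concrete side and $\demonicAbsFuncOf{\demonicConcFuncOf{a}}$ as the witness for an arbitrary $a \in A$ on the abstract side. Your remark about the quantifier directions of $\angelicAssertionStronger{}$ and $\angelicAssertionAbsStronger{}$ is exactly the point where one could go wrong, and you handled it the same way the paper does.
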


\begin{proof}
We start the proof with the first inequality.
Let $R$ be of $\angelicAssertions{}$.
Let the predicate $s$ be of $\angelicConcFuncOf{\angelicAbsFuncOf{R}}$.
Then $s \in \setCond{\demonicConcFuncOf{a}}{a \in \setCond{\demonicAbsFuncOf{r}}{r \in R}}$.
Thus, $s \in 
\setCond{\demonicConcFuncOf{\demonicAbsFuncOf{r}}}{r \in R}$.
Since the inequality $\demonicAssertionAbsStrongerOf{r}{\demonicConcFuncOf{\demonicAbsFuncOf{r}}}$ is true, we get that there is an $r$ in $R$ more precise than $s$, thus the inequality $\angelicAssertionAbsStrongerOf{R}{\angelicConcFuncOf{\angelicAbsFuncOf{R}}}$ holds.

Moving on with the second inequality, let $A$ be of $\angelicAssertionsAbs{}$.
Let the abstract predicate $a$ be of $A$.
We have $\angelicAbsFuncOf{\angelicConcFuncOf{A}} = \setCond{\demonicAbsFuncOf{r}}{r \in \setCond{\demonicConcFuncOf{a}}{a \in A}}$.
Thus, $\demonicAbsFuncOf{\demonicConcFuncOf{a}}$ is in $\angelicAbsFuncOf{\angelicConcFuncOf{A}}$.
Since we know that the inequality 
$\demonicAssertionAbsStrongerOf{\demonicAbsFuncOf{\demonicConcFuncOf{a}}}{a}$ holds, we get that the inequality  $\angelicAssertionAbsStrongerOf{\angelicAbsFuncOf{\angelicConcFuncOf{A}}}{A}$ is true.
\end{proof}

\begin{lemma}\label{lem:galoisInsertion}
    The pair $(\demonicAbsFunc{}, \demonicConcFunc{})$ is a Galois insertion, i.e.\ the following equation holds for any $a$ of $\demonicAssertionsAbs{}$:
    $
        \demonicAbsFuncOf{\demonicConcFuncOf{a}} = a 
    $.

    The pair $(\angelicAbsFunc{}, \angelicConcFunc{})$ also is a Galois insertion, i.e.\ the following equation holds true for any $A$ of $\angelicAssertionsAbs{}$:
    $
        \angelicAbsFuncOf{\angelicConcFuncOf{A}} = A
    $.
\end{lemma}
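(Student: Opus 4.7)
The plan is to reduce the angelic claim to the demonic one and prove the demonic claim by chasing the definitions, where the only real observation is that abstract predicates and concrete states share the same underlying shape $\vars{} \rightarrow \powersetOf{\automaton{}}$.

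For the demonic part, I would case split on whether $a = \fail{}$. If $a = \fail{}$, then $\demonicConcFuncOf{a} = \fail{}$ and $\demonicAbsFuncOf{\fail{}} = \fail{} = a$ by definition. Otherwise, the goal is to show $\demonicAbsFuncOf{\demonicConcFuncOf{a}}(v) = a(v)$ for every $v \in \vars{}$, i.e., $\bigcup_{q \in \demonicConcFuncOf{a}} q(v) = a(v)$. The inclusion $\subseteq$ is immediate because each $q \in \demonicConcFuncOf{a}$ already satisfies $q(v) \subseteq a(v)$ by the definition of $\demonicConcFunc{}$. For the reverse inclusion $\supseteq$, the key move is to use $a$ itself as a witnessing state: define $q^{*} \in \states{}$ by $q^{*}(v') \definingEquals a(v')$ for every $v'$. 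Then trivially $q^{*}(v') \subseteq a(v')$, so $q^{*} \in \demonicConcFuncOf{a}$, and hence $a(v) = q^{*}(v) \subseteq \bigcup_{q \in \demonicConcFuncOf{a}} q(v)$.

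For the angelic part, I would simply unfold the definitions and invoke the just-proved demonic identity:
\begin{equation*}
\angelicAbsFuncOf{\angelicConcFuncOf{A}} \;=\; \setCond{\demonicAbsFuncOf{r}}{r \in \angelicConcFuncOf{A}} \;=\; \setCond{\demonicAbsFuncOf{\demonicConcFuncOf{a}}}{a \in A} \;=\; \setCond{a}{a \in A} \;=\; A\ .
\end{equation*}
The middle equality uses that $\angelicConcFuncOf{A} = \setCond{\demonicConcFuncOf{a}}{a \in A}$, and the next equality is exactly the demonic insertion property.

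The main (and really only) obstacle is pinning down that $q^{*} \definingEquals a$ is a legal inhabitant of $\states{}$. This is what makes the insertion property go through sharply, rather than merely the Galois inequality of \Cref{lem:demonGalois}, and it rests on the appendix's identification of states with maps $\vars{} \rightarrow \powersetOf{\automaton{}}$, which happens to coincide with the non-$\fail{}$ part of $\demonicAssertionsAbs{}$. Beyond that, everything is a direct computation from the definitions, and the $\fail{}$ cases are handled uniformly by the first clause of both $\demonicAbsFunc{}$ and $\demonicConcFunc{}$.
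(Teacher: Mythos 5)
Your proof is correct and follows essentially the same route as the paper's: both handle the angelic identity by the same four-line unfolding via the demonic one, and both establish the demonic identity by exploiting that the state space is rich enough to realize $a$ itself --- your explicit witness $q^{*}$ with $q^{*}(v')=a(v')$ is exactly what the paper's appeal to the ``abundance of available states'' amounts to, so your flagged assumption is the same one the paper relies on implicitly. The only cosmetic difference is that the paper gets the easy direction $\demonicAssertionAbsStrongerOf{\demonicAbsFuncOf{\demonicConcFuncOf{a}}}{a}$ from the Galois connection (\Cref{lem:demonGalois}) and concludes by antisymmetry (\Cref{lem:demonsAbsPO}), whereas you verify both inclusions pointwise directly from the definitions.
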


\begin{proof}
We start with the first equality.
Since we know the inequality $\demonicAssertionAbsStrongerOf{\demonicAbsFuncOf{\demonicConcFuncOf{a}}}{a}$ holds, it is sufficient to show that the inequality
$\demonicAssertionAbsStrongerOf{a}{\demonicAbsFuncOf{\demonicConcFuncOf{a}}}$ is true because the relation $\demonicAssertionAbsStronger{}$ is antisymmetric.
So let $a$ be of $\demonicAssertionsAbs{}$.
If $a = \fail{}$, then $\demonicAbsFuncOf{\demonicConcFuncOf{a}}$ is also $\fail{}$.
If $a$ is not $\fail{}$, then let $v$ be any variable of $\vars{}$.
Then 
\begin{align*}
    a(v) \subseteq &  \,
    \cup_{p \in \states{} \wedge p(v) \subseteq a(v)}p(v)
    \\
    \subseteq & \, 
    \cup_{p \in \setCond{p' \in \states{}}{\forall v' \in \vars{} \dv p'(v') \subseteq a(v')}}p(v)
    \\
    = & \, 
    \demonicAbsFuncOf{\demonicConcFuncOf{a}}
\end{align*}
The second inclusion holds true because of the abundance of available states.
Therefore, the inequality $\demonicAssertionAbsStrongerOf{a}{\demonicAbsFuncOf{\demonicConcFuncOf{a}}}$ holds.
That means $a = \demonicAbsFuncOf{\demonicConcFuncOf{a}}$ is true.

Because the relation on abstract angels is not antisymmetric, we show equality directly:
\begin{align*}
\angelicAbsFuncOf{\angelicConcFuncOf{A}} 
= & \,
\setCond{\demonicAbsFuncOf{r}}{r \in \setCond{\demonicConcFuncOf{a}}{a \in A}}
\\
= & \,
\setCond{\demonicAbsFuncOf{\demonicConcFunc{a}}}{a \in A}
\\
= & \,
\setCond{a}{a \in A}
\\
= & \,
A
\ .
\end{align*}
\end{proof}

\begin{definition}
    Meyer and Wolff also provide an interpretation of commands that either maps to another state or to their notation of failing: $\top$.
    \begin{gather*}
    \progSemFuncAbstractOf{\code{com}}{a} =
    \begin{cases}
        \fail{} &, a = \fail{} \vee \progSemFuncTypesOf{\code{com}}{a} = \top \\
        \progSemFuncTypesOf{\code{com}}{a} &, \text{ else}
    \end{cases}
    \end{gather*}
\end{definition}

\begin{lemma}
    The abstract interpretation of commands $\progSemFuncAbstract{\code{com}}$ is a safe abstraction for $\progSemFunc{\code{com}}$, i.e.\
    the following equation holds:
    \begin{equation*}
        \progSemFuncAbstractOf{\code{com}}{a} 
        = 
        \demonicAbsFuncOf{\progSemFuncOf{\code{com}}{\demonicConcFuncOf{a}}}
    \end{equation*}
\end{lemma}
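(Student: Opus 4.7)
The plan is to prove the equality by a case analysis following the defining cases of $\progSemFuncAbstract{\code{com}}$. In each case both sides produce a value from $\demonicAssertionsAbs{}$, so it suffices to match either the $\fail{}$ element or the abstract predicate produced.

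First I would dispatch the trivial case $a = \fail{}$: by definition $\progSemFuncAbstractOf{\code{com}}{a} = \fail{}$, and on the right-hand side $\demonicConcFuncOf{\fail{}} = \fail{}$, then $\progSemFuncOf{\code{com}}{\fail{}} = \fail{}$ by the lifted semantics of commands, and finally $\demonicAbsFuncOf{\fail{}} = \fail{}$, so the two sides coincide.

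Next I would treat the case $a \neq \fail{}$ but $\progSemFuncTypesOf{\code{com}}{a} = \top$. Here I must show that $\progSemFuncOf{\code{com}}{\demonicConcFuncOf{a}} = \fail{}$, so that applying $\demonicAbsFunc{}$ yields $\fail{}$ and matches $\progSemFuncAbstractOf{\code{com}}{a} = \fail{}$. The argument appeals to the correspondence between the type semantics $\progSemFuncTypes{-}$ of \cite{POPL2019,POPL2020} and the concrete semantics of commands: the type transformer returns $\top$ precisely when some state in the set denoted by the Cartesian predicate admits a concrete execution of $\code{com}$ that fails. That yields a $q \in \demonicConcFuncOf{a}$ with $\progSemFuncOrigOf{\code{com}}{q} = \fail{}$, and by definition of the lifted semantics a single failing state forces the whole predicate to $\fail{}$.

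Finally, in the remaining case where $a \neq \fail{}$ and $\progSemFuncTypesOf{\code{com}}{a} \neq \top$, I have $\progSemFuncAbstractOf{\code{com}}{a} = \progSemFuncTypesOf{\code{com}}{a}$. The goal is to show $\demonicAbsFuncOf{\progSemFuncOf{\code{com}}{\demonicConcFuncOf{a}}}$ equals this. The key tools are the Galois insertion property from \Cref{lem:galoisInsertion}, which guarantees $\demonicAbsFuncOf{\demonicConcFuncOf{a}} = a$, and the definition of the lifted concrete semantics as a union of the per-state effects; combining these shows that $\demonicAbsFunc{}$ applied to the concrete post yields exactly the Cartesian join, which is by construction the pointwise type transformation $\progSemFuncTypesOf{\code{com}}{a}$.

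The main obstacle will be the second and third cases, because they ultimately rest on the semantic correspondence between the type-based transformer $\progSemFuncTypes{-}$ of \cite{POPL2019,POPL2020} and the standard lifted semantics $\progSemFunc{-}$, rather than on anything constructed here. Concretely, one must argue that the Cartesian abstraction is exact for the pointwise type transformation in the non-failing case, and that $\top$ in the type world is attained exactly when some concretized state fails. Once that correspondence is invoked as a black box from \cite{POPL2020}, the remaining manipulations are straightforward unfoldings of the definitions of $\demonicAbsFunc{}$, $\demonicConcFunc{}$, and the lifted semantics.
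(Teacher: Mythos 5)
Your case split ($a=\fail{}$, type interpretation returns $\top$, normal case) is a reasonable skeleton and the $\fail{}$ case is fine, but the two remaining cases — which carry the entire content of the lemma — are discharged by citing an "exactness correspondence" from \cite{POPL2019,POPL2020} as a black box. That correspondence is not a theorem of those works: the Cartesian abstraction $\demonicAbsFunc{}$, the concretisation $\demonicConcFunc{}$, and the lifted semantics $\progSemFunc{\code{com}}$ are constructions of this paper, and the claim that $\demonicAbsFunc{}$ of the lifted concrete post coincides with the pointwise type transformation is precisely the equation you are asked to prove (restricted to the non-failing case). Moreover, the one concrete tool you do invoke, the Galois insertion identity $\demonicAbsFuncOf{\demonicConcFuncOf{a}} = a$ from \Cref{lem:galoisInsertion}, cannot be pushed through the semantics: $\demonicAbsFuncOf{\progSemFuncOf{\code{com}}{\demonicConcFuncOf{a}}}$ unions the per-state posts over \emph{all} $q \in \demonicConcFuncOf{a}$, not just over $a$, and for a non-monotone transformer this union can strictly exceed $\progSemFuncTypesOf{\code{com}}{a}$; so your sentence "combining these shows that $\demonicAbsFunc{}$ applied to the concrete post yields exactly the Cartesian join, which is by construction the pointwise type transformation" asserts the conclusion rather than deriving it.

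What actually closes both cases are two specific facts that your proposal never names. First, monotonicity of the per-state interpretation $\progSemFuncTypes{\code{com}}$ (stated for Meyer and Wolff's type transformer): since every $q \in \demonicConcFuncOf{a}$ satisfies $\demonicAssertionAbsStrongerOf{q}{a}$, each per-state post is below $\progSemFuncAbstractOf{\code{com}}{a}$, which bounds the abstracted concrete post from above. Second, the observation that $a$ itself is an element of $\demonicConcFuncOf{a}$: applying the (deterministic) semantics to this particular state contributes exactly $\progSemFuncAbstractOf{\code{com}}{a}$ to the concrete post, giving the converse bound — and it also settles your $\top$ case without any external appeal, because if $\progSemFuncTypesOf{\code{com}}{a} = \top$ then the member $a$ of $\demonicConcFuncOf{a}$ already fails and the lifted post collapses to $\fail{}$, whence $\demonicAbsFuncOf{\fail{}} = \fail{}$. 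The paper's proof is organised around exactly these two ingredients, establishing $\demonicAssertionAbsStronger{}$ in both directions and concluding by antisymmetry; until you replace the black-box appeal with them (or an equivalent argument), the hardest step of your proof is assumed rather than proven.
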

\begin{proof}
The original interpretation of $\code{com}$ in Meyer and Wolff's work is monotonic.
Thus, $\progSemFuncAbstract{\code{com}}$ is monotonic.
We show equality by showing $\demonicAssertionStronger{}$ in both ways.
By antisymmetry, equality follows.

We start with
$
\demonicAssertionAbsStrongerOf{
 \progSemFuncAbstractOf{\code{com}}{a}   
}{
\demonicAbsFuncOf{\progSemFuncOf{\code{com}}{\demonicConcFuncOf{a}}}
}
$:
If the right side is $\fail{}$ we are done.
Otherwise,
see that $a \in \demonicConcFuncOf{a}$.
Thus, $\progSemFuncAbstractOf{\code{com}}{a} \in 
\progSemFuncOf{\code{com}}{\demonicConcFuncOf{a}}
$.
Let $v$ be a variable of $\vars{}$.
Then $\progSemFuncAbstractOf{\code{com}}{a}(v)$
is a subset of
$
\demonicAbsFuncOf{\progSemFuncOf{\code{com}}{\demonicConcFuncOf{a}}}(v)$.
Therefore, the inequality 
$
\demonicAssertionAbsStrongerOf{
 \progSemFuncAbstractOf{\code{com}}{a}   
}{
\demonicAbsFuncOf{\progSemFuncOf{\code{com}}{\demonicConcFuncOf{a}}}
}
$
holds.

We proceed with the other direction:
Again, if the left side is $\fail{}$ we are done.
Otherwise, see that for all $q$ of $\demonicConcFuncOf{a}$, $\demonicAssertionAbsStrongerOf{q}{a}$ holds.
Now, let $b$ be an element of $\progSemFuncOf{\code{com}}{\demonicConcFuncOf{a}}$.
Then, $\demonicAssertionAbsStrongerOf{b}{\progSemFuncAbstractOf{\code{com}}{a}}$ holds because the original interpretation of commands is monotonic.
Thus, for any variable $v$ of $\vars{}$, we have
$b(v) \subseteq \progSemFuncAbstractOf{\code{com}}{a}(v)$.
Therefore, 
$
\demonicAbsFuncOf{\progSemFuncOf{\code{com}}{\demonicConcFuncOf{a}}}(v)$
is a subset of
$\progSemFuncAbstractOf{\code{com}}{a}(v)$.
So, the inequality holds.
This concludes the proof.
\end{proof}

\begin{lemma}
    The following inequality holds:
    \begin{equation*}
        \angelicAssertionAbsStrongerOf{
            \angelicAbsFuncOf{\strongestPostSemOf{\angelicConcFuncOf{A}}{\code{com}}}
        }{
            \strongestPostSemAbsOf{A}{\code{com}}
        }
    \end{equation*}
\end{lemma}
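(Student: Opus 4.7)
The plan is to unfold the definitions of the two abstract selections and exhibit, for every element on the right-hand side, a witness on the left-hand side that is at least as precise. By the definition of the strongest post on a command, the right-hand side is $\strongestPostSemAbsOf{A}{\code{com}} = \setCond{\progSemFuncAbstractOf{\code{com}}{a}}{a \in A}$, while the left-hand side, after unfolding $\strongestPostSem{}$ and pushing the selection abstraction $\angelicAbsFunc{}$ inside the set-builder, becomes $\setCond{\demonicAbsFuncOf{\progSemFuncOf{\code{com}}{r}}}{r \in \angelicConcFuncOf{A}}$, which is the same as $\setCond{\demonicAbsFuncOf{\progSemFuncOf{\code{com}}{\demonicConcFuncOf{a}}}}{a \in A}$ by the definition of $\angelicConcFunc{}$.

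With the two sides in this shape, proving $\angelicAssertionAbsStronger{}$ amounts to picking, for each $\progSemFuncAbstractOf{\code{com}}{a}$ on the right, a witness on the left that is more precise than it. The natural candidate is the element $\demonicAbsFuncOf{\progSemFuncOf{\code{com}}{\demonicConcFuncOf{a}}}$ obtained by plugging in the same $a \in A$. The preceding lemma establishes exactly the equality $\progSemFuncAbstractOf{\code{com}}{a} = \demonicAbsFuncOf{\progSemFuncOf{\code{com}}{\demonicConcFuncOf{a}}}$, so the witness coincides with the target and reflexivity of $\demonicAssertionAbsStronger{}$ (shown in \Cref{lem:demonsAbsPO}) finishes the argument.

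Since the hard work is already in the safe-abstraction lemma for commands, the present proof is essentially bookkeeping: unfold the two strongest-post clauses, rewrite $\angelicAbsFunc{} \circ \angelicConcFunc{}$ applications using the Galois-insertion property from \Cref{lem:galoisInsertion} where convenient, and appeal to the lemma relating $\progSemFuncAbstract{\code{com}}$ and $\progSemFunc{\code{com}}$. The only step that requires attention is making sure that the set $\angelicConcFuncOf{A}$ really ranges over predicates of the form $\demonicConcFuncOf{a}$ for $a \in A$, which is direct from the definition of $\angelicConcFunc{}$. No case distinction on $a = \fail{}$ is needed because the equality from the previous lemma already handles this uniformly.

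The main potential obstacle is notational rather than mathematical: the two strongest-post functions are defined over different domains (selections vs.\ abstract selections), so one has to be careful to keep the abstraction/concretisation applied at the right place when pushing $\angelicAbsFunc{}$ through the set-builder. Once this is done cleanly, the proof reduces to a one-line application of the previous lemma plus reflexivity.
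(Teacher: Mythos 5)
Your proposal is correct and follows essentially the same route as the paper's proof: unfold both strongest-post clauses, rewrite $\angelicConcFuncOf{A}$ as $\setCond{\demonicConcFuncOf{a}}{a \in A}$, apply the safe-abstraction equality $\progSemFuncAbstractOf{\code{com}}{a} = \demonicAbsFuncOf{\progSemFuncOf{\code{com}}{\demonicConcFuncOf{a}}}$, and conclude by reflexivity (the paper invokes reflexivity of $\angelicAssertionAbsStronger{}$ on the resulting set equality, which is the same observation as your element-wise use of reflexivity of $\demonicAssertionAbsStronger{}$). The optional appeal to the Galois-insertion lemma is unnecessary but harmless.
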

\begin{proof}
   \begin{align*}
        \angelicAbsFuncOf{\strongestPostSemOf{\angelicConcFuncOf{A}}{\code{com}}} = & \,
        \setCond{\demonicAbsFuncOf{\progSemFuncOf{\code{com}}{r}}}{r \in \angelicConcFuncOf{A}}
        \\
        = & \,
        \setCond{\demonicAbsFuncOf{\progSemFuncOf{\code{com}}{\demonicConcFuncOf{a}}}}{a \in A}
        \\
        = & \,
        \setCond{\progSemFuncAbstractOf{\code{com}}{a}}{a \in A}
        \\
        = & \,
        \strongestPostSemAbsOf{A}{\code{com}}
   \end{align*} 
   The inequality holds because $\angelicAssertionAbsStronger{}$ is reflexive.
\end{proof}

\begin{lemma}
    The following equations hold:
    \begin{equation*}
        \angelicAbsFuncOf{R \cup S} = \angelicAbsFuncOf{R} \cup \angelicAbsFuncOf{S}
    \end{equation*}
    \begin{equation*}
        \angelicConcFuncOf{A \cup B} = \angelicConcFuncOf{A} \cup \angelicConcFuncOf{B}
        \ .
    \end{equation*}
\end{lemma}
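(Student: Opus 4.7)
The plan is to observe that both $\angelicAbsFunc{}$ and $\angelicConcFunc{}$ are defined pointwise by image sets: $\angelicAbsFuncOf{R} = \setCond{\demonicAbsFuncOf{r}}{r \in R}$ and $\angelicConcFuncOf{A} = \setCond{\demonicConcFuncOf{a}}{a \in A}$. Both equations are therefore instances of the general set-theoretic identity $f(X \cup Y) = f(X) \cup f(Y)$ for images of a function over a union.

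For the first equation, I would simply unfold the definition of $\angelicAbsFunc{}$ on the left-hand side to get $\setCond{\demonicAbsFuncOf{r}}{r \in R \cup S}$. Since $r \in R \cup S$ iff $r \in R$ or $r \in S$, this set splits as $\setCond{\demonicAbsFuncOf{r}}{r \in R} \cup \setCond{\demonicAbsFuncOf{r}}{r \in S}$, which is precisely $\angelicAbsFuncOf{R} \cup \angelicAbsFuncOf{S}$ by definition. Showing both inclusions separately also works and requires no more than tracking which disjunct in the membership condition of $R \cup S$ a given $r$ belongs to.

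The proof of the second equation is entirely analogous, with $\demonicConcFunc{}$ in place of $\demonicAbsFunc{}$: unfold the definition, split the union in the set-builder condition, and reassemble. No properties of the underlying domains or of the Galois connection from \Cref{lem:angelAbsGalois} are needed; only the definitions of $\angelicAbsFunc{}$ and $\angelicConcFunc{}$ as elementwise images are used.

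There is no real obstacle here — the lemma is a purely syntactic manipulation of set-builder notation, and I expect the proof to be a few lines of equational reasoning. The statement is presumably included because it is invoked in later arguments about how the abstraction behaves with respect to the \ruleLabel{GATHER} rule of realizability logic, where selections are combined by union.
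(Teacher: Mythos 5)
Your proof is correct and follows exactly the same route as the paper: unfold the image-set definition of $\angelicAbsFunc{}$ (resp.\ $\angelicConcFunc{}$), split the set-builder condition over the union, and reassemble, with the second equation handled analogously.
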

\begin{proof}
    We start with the first equation.
    \begin{align*}
    \angelicAbsFuncOf{R \cup S} = & \,
    \setCond{\demonicAbsFuncOf{r}}{r \in R \cup S} \\
    = & \, 
    \setCond{\demonicAbsFuncOf{r}}{r \in R} \cup 
    \setCond{\demonicAbsFuncOf{s}}{s \in S} \cup 
     \\
     = & \, 
     \angelicAbsFuncOf{R} \cup \angelicAbsFuncOf{S}
    \end{align*}
    The proof for the second equation is analogous.
\end{proof}

\begin{lemma}\label{lem:forConj}
    The following equation holds
    \begin{equation*}
        \angelicAssertionAbsStrongerOf{\vcStrongestPostSemAbsOf{A}{\code{com}}}{B}
        \implies
        \angelicAssertionStrongerOf{\vcStrongestPostSemOf{\angelicConcFuncOf{A}}{\code{com}}}{\angelicConcFuncOf{B}}
    \end{equation*}
\end{lemma}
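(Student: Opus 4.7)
The plan is to unfold the definition of $\angelicAssertionStronger{}$ and chase a witness through the abstraction. Fix an arbitrary element of $\angelicConcFuncOf{B}$; by definition it has the form $\demonicConcFuncOf{b}$ for some $b \in B$. I must exhibit a predicate in $\vcStrongestPostSemOf{\angelicConcFuncOf{A}}{\code{com}}$ that is more precise than $\demonicConcFuncOf{b}$. The hypothesis $\angelicAssertionAbsStrongerOf{\vcStrongestPostSemAbsOf{A}{\code{com}}}{B}$ yields an $a \in A$ with $\demonicAssertionAbsStrongerOf{\progSemFuncAbstractOf{\code{com}}{a}}{b}$. The natural candidate witness on the concrete side is then $\progSemFuncOf{\code{com}}{\demonicConcFuncOf{a}}$, which lies in $\vcStrongestPostSemOf{\angelicConcFuncOf{A}}{\code{com}}$ since $\demonicConcFuncOf{a} \in \angelicConcFuncOf{A}$.

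It remains to prove $\demonicAssertionStrongerOf{\progSemFuncOf{\code{com}}{\demonicConcFuncOf{a}}}{\demonicConcFuncOf{b}}$. First I would rewrite $\progSemFuncAbstractOf{\code{com}}{a}$ using the preceding lemma, which gives $\progSemFuncAbstractOf{\code{com}}{a} = \demonicAbsFuncOf{\progSemFuncOf{\code{com}}{\demonicConcFuncOf{a}}}$. Substituting into the abstract inequality and applying monotonicity of $\demonicConcFunc{}$ (a property of the Galois connection established in \Cref{lem:demonGalois}) produces
\begin{equation*}
\demonicAssertionStrongerOf{\demonicConcFuncOf{\demonicAbsFuncOf{\progSemFuncOf{\code{com}}{\demonicConcFuncOf{a}}}}}{\demonicConcFuncOf{b}}\ .
\end{equation*}
The extensivity side of the Galois connection gives $\demonicAssertionStrongerOf{\progSemFuncOf{\code{com}}{\demonicConcFuncOf{a}}}{\demonicConcFuncOf{\demonicAbsFuncOf{\progSemFuncOf{\code{com}}{\demonicConcFuncOf{a}}}}}$, so transitivity of $\demonicAssertionStronger{}$ (\Cref{lem:demonsPartialOrder}) finishes the chain.

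Since the chosen $\demonicConcFuncOf{b}$ was arbitrary, this establishes $\angelicAssertionStrongerOf{\vcStrongestPostSemOf{\angelicConcFuncOf{A}}{\code{com}}}{\angelicConcFuncOf{B}}$. I do not expect a real obstacle: the calculation is a routine chase through the Galois connection. The only point that needs slight care is that monotonicity of $\demonicConcFunc{}$ is used implicitly, which follows from the Galois-connection lemma already proved; I would state this explicitly so that the step from the abstract inequality to the concrete inclusion is transparent. Note that the argument also uses monotonicity of $\progSemFunc{\code{com}}$ (\Cref{lem:comMonotonic}) implicitly when invoking the rewriting equation for $\progSemFuncAbstract{\code{com}}$, but that equation was proved in full above and may be cited as a black box.
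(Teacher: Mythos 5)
Your proof is correct, but it takes a more elementary, element-wise route than the paper. The paper argues entirely at the level of selections: it chains the selection-level safe-abstraction fact $\angelicAssertionAbsStrongerOf{\angelicAbsFuncOf{\strongestPostSemOf{\angelicConcFuncOf{A}}{\code{com}}}}{\strongestPostSemAbsOf{A}{\code{com}}}$ with the hypothesis, then applies extensivity of the selection Galois connection (\Cref{lem:angelAbsGalois}), monotonicity of $\angelicConcFunc{}$, and transitivity to arrive at $\angelicAssertionStrongerOf{\strongestPostSemOf{\angelicConcFuncOf{A}}{\code{com}}}{\angelicConcFuncOf{B}}$. You instead unfold the versatility ordering, fix an element $\demonicConcFuncOf{b}$ of $\angelicConcFuncOf{B}$, and chase the witness $\progSemFuncOf{\code{com}}{\demonicConcFuncOf{a}}$ through the predicate-level Galois connection ($\demonicAbsFunc{}$, $\demonicConcFunc{}$) together with the equality $\progSemFuncAbstractOf{\code{com}}{a} = \demonicAbsFuncOf{\progSemFuncOf{\code{com}}{\demonicConcFuncOf{a}}}$, thereby bypassing the lifted selection-level lemmas (which the paper itself proves element-wise anyway). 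Both arguments are sound; the paper's version buys a three-line proof that reuses the already-lifted facts, while yours is self-contained at the predicate level and makes the concrete witness explicit. Two small citation points: monotonicity of $\demonicConcFunc{}$ is a separate lemma stated just before \Cref{lem:demonGalois} rather than part of \Cref{lem:demonGalois} itself (that lemma only gives the two extensivity inequalities), and your step from the hypothesis to ``there is $a\in A$ with $\demonicAssertionAbsStrongerOf{\progSemFuncAbstractOf{\code{com}}{a}}{b}$'' silently unfolds $\strongestPostSemAbsOf{A}{\code{com}} = \setCond{\progSemFuncAbstractOf{\code{com}}{a}}{a\in A}$, which is worth saying explicitly.
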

\begin{proof}
    Because the abstract strongest post is a safe abstraction, we have 
    \begin{equation*}
        \angelicAbsFuncOf{\vcStrongestPostSemOf{\angelicConcFuncOf{A}}{\code{com}}} 
        \angelicAssertionAbsStronger{} 
        \vcStrongestPostSemAbsOf{A}{\code{com}}
        \angelicAssertionAbsStronger{} 
        B
        \ .
    \end{equation*}
    Because we have a Galois connection, the following inequality holds:
    \begin{equation*}
        \angelicAssertionStrongerOf{
            \vcStrongestPostSemOf{\angelicConcFuncOf{A}}{\code{com}}
        }{
            \angelicConcFuncOf{\angelicAbsFuncOf{\vcStrongestPostSemOf{\angelicConcFuncOf{A}}{\code{com}}}}
        }
        \ .
    \end{equation*}
    Applying the monotonicity of $\angelConcFunc{}$ we get the following two inequalities:
    \begin{align*}
         \angelicAssertionStrongerOf{
            \vcStrongestPostSemOf{\angelicConcFuncOf{A}}{\code{com}}
        }{
            &
            \angelicConcFuncOf{\vcStrongestPostSemAbsOf{A}{\code{com}}}
        }
        \\
        \angelicAssertionStrongerOf{
            \vcStrongestPostSemOf{\angelicConcFuncOf{A}}{\code{com}}
        }{
            &
            \angelicConcFuncOf{B}
        }
    \end{align*}
\end{proof}

\subsection*{Proof of \Cref{th:abstractVC}}
\begin{proof}[\unskip\nopunct]
    We prove the theorem by induction over the structure of $\asketch$.
    
    \baseCase{}
    If $\asketch = \code{com}$, then we know the inequality 
    $\angelicAssertionAbsStrongerOf{\vcStrongestPostSemAbsOf{A}{\code{com}}}{B}$
    is true.
    This then implies that 
    the inequality
    $\angelicAssertionStrongerOf{\vcStrongestPostSemOf{\angelicConcFuncOf{A}}{\code{com}}}{\angelicConcFuncOf{B}}$ holds.
    Therefore, all verification conditions of $\verificationConditionsFuncOf{\hoaretriplet{\angelicConcFuncOf{A}}{\code{com}}{\angelicConcFuncOf{B}}}$ are valid.
    Thus, we can show the realizability triple $\angelicHoareTripletHolds{\angelicConcFuncOf{A}}{\code{com}}{\angelicConcFuncOf{B}}$.
    
    \inductionStep{}
    If the sketch $\asketch = \concatof{\asketch_1}{\asketch_2}$, we know that the verification conditions of the functions
    $\absVerificationConditionsFuncOf{\hoaretriplet{A}{\asketch_1}{\vcStrongestPostSemAbsOf{A}{\asketch_1}}}$ and 
    $\absVerificationConditionsFuncOf{\hoaretriplet{\vcStrongestPostSemAbsOf{\asketch_1}{A}}{\asketch_2}{B}}$
    hold.
    Applying the induction hypothesis, we get
    that
    $\angelicHoareTripletHolds{\angelicConcFuncOf{A}}{\asketch_1}{\angelicConcFuncOf{\vcStrongestPostSemAbsOf{A}{\asketch_1}}}$ and 
    $\angelicHoareTripletHolds{\angelicConcFuncOf{\vcStrongestPostSemAbsOf{A}{\asketch_1}}}{\asketch_2}{\angelicConcFuncOf{B}}$ are valid.
    Using rule \ruleLabel{SEQ}, we prove 
    the realizability triple
    $\angelicHoareTripletHolds{\angelicConcFuncOf{A}}{\concatof{\asketch_1}{\asketch_2}}{\angelicConcFuncOf{B}}$.

    The other rules are also analogous to the original proof.
\end{proof}
\subsection*{Proof of \Cref{th:connectionToSeb}}
\begin{proof}[\unskip\nopunct]
By completeness of Hoare logic, we get 
$\demonicHoareTripletHolds{\demonicConcFuncOf{a}}{\aprog}{\demonicConcFuncOf{b}}$.
Since the rules are the Same as in Meyer and Wolff's type system, we get 
$\sebHoareTripletHolds{\demonicConcFuncOf{a}}{\aprog}{\demonicConcFuncOf{b}}$.
\end{proof}
\section{Details on the Application}
\subsection{Background on Safe Memory Reclamation}\label{Section:BackgroundSMR}
Our goal is to synthesize code that makes a lock-free data structure memory safe. 
We recall the basics of safe memory reclamation and outline the approach from~\cite{POPL2019,POPL2020,WolffPhd} to verify that a lock-free data structure properly protects its memory using a safe memory reclamation algorithm. 
We also demonstrate the approach on an example.
\subsection{Background} \label{sec:preliminaries:overview}
When writing programs in languages such as C++, memory management is manual.
This is opposed to other programming languages, such as Java, which provide \emph{Garbage Collection} (GC) for automatic memory management.
Not having GC at hand proves challenging in implementing (and verifying) concurrent, especially lock-free data structures 
due to use-after-free errors and the ABA problem. 
For assistance, programmers can resort to Safe Memory Reclamation (SMR) algorithms, e.g.\ Hazard Pointers (HP) \cite{HP}. 
Instead of freeing memory directly, these algorithms provide a \emph{retire} function that delays freeing memory until it is safe to do so. 
In order for the SMR algorithm to know when freeing memory is allowed, the programmer has to call SMR specific functions.
With HP, the programmer has to call a \emph{protect} function on a pointer they want to access.
Afterwards, however, the programmer has to check if the protection was successful.
In the next chapter, we automatically synthesize the code for such calls and checks.

In \cite{POPL2019,POPL2020,WolffPhd}, Meyer and Wolff proposed a type-based approach to automatically verify lock-free data structures that use SMR algorithms for memory reclamation.
We base our synthesis upon their work. 
The idea is to abstract away the implementation details of the SMR and verify the program with the help of a specification, a so-called SMR automaton.
The automata consist of states, and subsets of these states form the types, also called guarantees, in their approach. 
The states capture the effect that the SMR algorithm has on a pointer. 
Every SMR automaton comes with at least three guarantees.
The local guarantee, $\localGuarantee{}$, signals that the pointer is thread local.
The active guarantee, $\activeGuarantee{}$, signals that the pointer is published and was not yet retired.
The safe guarantee, $\safeGuarantee{}$, signals that the pointer is protected by the SMR algorithm.
Additionally, SMR automata may bring SMR specific guarantees $\ELGuarantee{}$.

Pointer dereferences are only allowed when the SMR automaton for the pointer is guaranteed to be in a state described by $\localGuarantee{}$, $\activeGuarantee{}$, or $\safeGuarantee{}$.

The type system tracks the guarantees each pointer has.
It annotates the original program commands, SMR commands, and invariant annotations that may have to be added to the code. 
The invariant annotations provide information about the protection status of a pointer that the type system can rely on, e.g.\ that a shared pointer is not retired at a certain program location.
The invariant annotations need to be verified separately. 
Importanty, this can be done assuming GC.
When the type check is successful and the invariant annotations can be discharged, the program does not unsafely dereference memory and does not suffer from ABAs.

\subsection{Example SMR Automaton: Hazard Pointer}
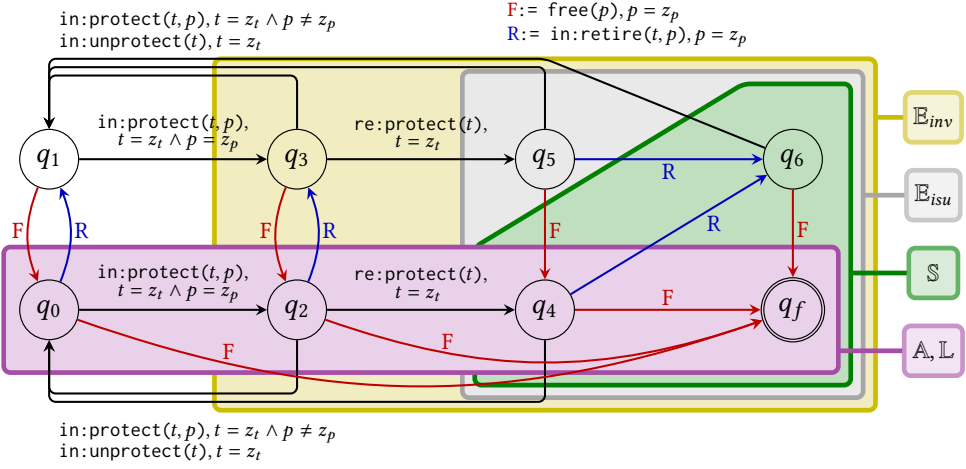
\begin{figure}
    \caption{SMR automaton for HP, adapted from \cite{WolffPhd}.}
    \label{fig:HPsmr}
    \centering
    \scalebox{1.1}{
    \begin{tikzpicture}
        \tikzset{
          sh2n1/.style={shift={(0,1.0)}},
          sh2n2/.style={shift={(0,1.1)}},
          sh2n3/.style={shift={(0,1.2)}},
          sh2s1/.style={shift={(0,-1.0)}},
          sh2s2/.style={shift={(0,-1.1)}},
          sh2s3/.style={shift={(0,-1.2)}},
          sh2e/.style={shift={(1,0)}},
          sh2w/.style={shift={(-1,0)}},
          sh2nw/.style={shift={(-1,1)}},
          sh2ne/.style={shift={(1,1)}},
          sh2sw/.style={shift={(-1,-1)}},
          sh2se/.style={shift={(1,-1)}},
          rc/.style={rounded corners=0.8mm,line width=0.7pt},
          place/.style={draw,circle},
          label/.style={draw, rounded corners=0.8mm, ultra thick, minimum width=0.7cm, minimum height=0.7cm, scale=0.85},
          acc/.style={draw,circle,double=magenta!10!violet!10, double distance=2pt},
          func/.style={scale=0.7},
        }

        \draw [ultra thick, draw=olive!50!yellow, fill=olive!50!yellow!25, opacity=0.6, rounded corners=0.8mm]
        (10,1.2) -- (2,1.2) -- (2,-3.0) -- (10,-3.0) -- cycle;

        \draw [ultra thick, draw=gray!70!white, fill=gray!70!white!25, opacity=0.6, rounded corners=0.8mm]
        (9.85,1.05) -- (5,1.05) -- (5,-2.85) -- (9.85,-2.85) -- cycle;

        \draw [ultra thick, draw=green!50!black, fill=green!50!black!25, opacity=0.6, rounded corners=0.8mm]
        (9.7,0.9) -- (8.4,0.9) -- (5.15, -1.05) -- (5.15,-2.7) -- (9.7,-2.7) -- cycle;

        \draw [ultra thick, draw=violet!70!white, fill=violet!70!white!25, opacity=0.6, rounded corners=0.8mm]
        (9.55,-1.05) -- (-0.55, -1.05) -- (-0.55,-2.55) -- (9.55,-2.55) -- cycle;

        \node[label, draw=olive!50!yellow!60, fill=olive!50!yellow!15] (Einv) at (10.7, 0.5) {$\EinvGuarantee{}$};
        \draw [ultra thick, draw=olive!50!yellow, opacity=0.6]

        (Einv) -- (10, 0.5);

        \node[label, draw=gray!70!white!60, fill=gray!70!white!15] (Eisu) at (10.7, -0.43) {$\EisuGuarantee{}$};
        \draw [ultra thick, draw=gray!70!white, opacity=0.6]

        (Eisu) -- (9.85, -0.43);

        \node[label, draw=green!50!black!60, fill=green!50!black!15] (safe) at (10.7, -1.36) {$\safeGuarantee{}$};
        \draw [ultra thick, draw=green!50!black, opacity=0.6]
        (safe) -- (9.7, -1.36);

        \node[label, draw=violet!70!white!60, fill=violet!70!white!15] (active) at (10.7, -2.29) {$\activeGuarantee{}, \localGuarantee{}$};
        \draw [ultra thick, draw=violet!70!white, opacity=0.6]
        (active) -- (9.55, -2.29);

        \node[func] (Flabel) at (7, 1.6) {$\begin{aligned}
        &{\text{\color{red!75!black}F}} \code{:= free(}p\code{)}, p = z_p 
        \\[-0.8ex]
        &{\text{\color{blue!75!black}R}} \code{:= in:retire(}t, p\code{)}, p = z_p
        \end{aligned}$};

        \node[place] (1) at (0,0) {$q_1$};
        \node[place] (3) at (3,0) {$q_3$};
        \node[place] (5) at (6,0) {$q_5$};
        \node[place] (6) at (9,0) {$q_6$};

        \node[place] (0) at (0,-1.8) {$q_0$};
        \node[place] (2) at (3,-1.8) {$q_2$};
        \node[place] (4) at (6,-1.8) {$q_4$};
        \draw [draw=black]  (9,-1.8) circle (10pt);
        \node [place] (f) at (9,-1.8) {$q_f$};
      

        \draw[-stealth,rc] (6) -- ([sh2n3]5.center) -- node[scale=0.7, pos=0.7][above, font=\linespread{0.5}\selectfont]{$\begin{aligned}
        &\code{in:protect(}t,p\code{)}, t = z_t \wedge p \neq z_p
        \\[-0.8ex]& \code{in:unprotect(}t\code{)}, t = z_t
        \end{aligned}$} ([sh2n3]1.center) -- (1);
        \draw[-stealth,rc] (5) -- ([sh2n2]5.center) -- ([sh2n2]1.center) -- (1);
        \draw[-stealth,rc] (3) -- ([sh2n1]3.center) -- ([sh2n1]1.center) -- (1);

        \draw[-stealth,rc] (4) -- ([sh2s2]4.center) -- node[scale=0.7, pos=0.7][below=0.15cm, font=\linespread{0.5}\selectfont]{$\begin{aligned}
        &\code{in:protect(}t,p\code{)}, t = z_t \wedge p \neq z_p
        \\[-0.8ex]& \code{in:unprotect(}t\code{)}, t = z_t
        \end{aligned}$} ([sh2s2]0.center) -- (0);
        \draw[-stealth,rc] (2) -- ([sh2s1]2.center) -- ([sh2s1]0.center) -- (0);

        \draw[-stealth,rc] (1) -- node[scale=0.7][above, font=\linespread{0.5}\selectfont]{$\begin{aligned}
        &\code{in:protect(}t,p\code{)},
        \\[-0.8ex]& \quad t = z_t \wedge p = z_p
        \end{aligned}$} (3);

        \draw[-stealth,rc] (0) -- node[scale=0.7][above, font=\linespread{0.5}\selectfont]{$\begin{aligned}
        &\code{in:protect(}t,p\code{)},
        \\[-0.8ex]& \quad t = z_t \wedge p = z_p
        \end{aligned}$} (2);

        \draw[-stealth,rc] (3) -- node[scale=0.7][above, font=\linespread{0.5}\selectfont]{$\begin{aligned}
        &\code{re:protect(}t\code{)},
        \\[-0.8ex]& \qquad t = z_t
        \end{aligned}$} (5);

        \draw[-stealth,rc] (2) -- node[scale=0.7][above, font=\linespread{0.5}\selectfont]{$\begin{aligned}
        &\code{re:protect(}t\code{)},
        \\[-0.8ex]& \qquad t = z_t
        \end{aligned}$} (4);

        \path[-stealth,rc, red!75!black] (0) edge[bend right=19]  node[scale=0.8,red!75!black, above=-0.05, pos =0.21] {F} (f);

        \path[-stealth,rc, red!75!black] (2) edge[bend right=19]  node[scale=0.8,red!75!black, above=-0.05, pos =0.27] {F} (f);

        \path[-stealth,rc, red!75!black] (4) edge  node[scale=0.8,red!75!black, above=-0.05] {F} (f);

        \path[-stealth,rc, red!75!black] (6) edge  node[scale=0.8,red!75!black, right=-0.05, pos=0.45] {F} (f);

        \path[-stealth,rc, red!75!black] (5) edge  node[scale=0.8,red!75!black, right=-0.05, pos=0.45] {F} (4);

        \path[-stealth,rc, red!75!black] (1) edge[bend right=22]  node[scale=0.8,red!75!black, left=-0.05, pos=0.45] {F} (0);

        \path[-stealth,rc, red!75!black] (3) edge[bend right=22]  node[scale=0.8,red!75!black, left=-0.05, pos=0.45] {F} (2);

        \path[-stealth,rc, blue!75!black] (0) edge[bend right=22]  node[scale=0.8,blue!75!black, right=-0.05, pos=0.55] {R} (1);

        \path[-stealth,rc, blue!75!black] (2) edge[bend right=22]  node[scale=0.8,blue!75!black, right=-0.05, pos=0.55] {R} (3);

        \path[-stealth,rc, blue!75!black] (5) edge  node[scale=0.8,blue!75!black, below=-0.05] {R} (6);

        \path[-stealth,rc, blue!75!black] (4) edge  node[scale=0.8,blue!75!black, below, pos=0.73] {R} (6);

    \end{tikzpicture}
    }
\end{figure}
We demonstrate the concept of SMR automata on the SMR algorithm Hazard Pointer.
In \Cref{fig:HPsmr}, the SMR automaton $\automaton{}$ for Hazard Pointers is depicted.
It consists of eight states, one of them final.
The automaton specifies illegal behavior. 
That means, when using HP, all transitions to the final state will not occur.
The automaton takes two parameters: $z_p$ representing the pointer whose state is tracked, and $z_t$ representing the thread whose perspective is taken. 
The automaton has two extra guarantees: $\EinvGuarantee{}$ and $\EisuGuarantee{}$.
In \Cref{fig:HPsmr}, all guarantees are depicted by colored zones.
A pointer is safe to access when it possesses the guarantees $\activeGuarantee{}$, $\localGuarantee{}$, or $\safeGuarantee{}$, because all transitions for command $\code{free}$ lead to the fail state.
Therefore, when a pointer is guaranteed to be in one of the states described by the three guarantees, it cannot be freed. 
Thus, a dereference would be safe. 
Here, guarantee $\safeGuarantee{}$ represents the protection by the SMR algorithm.
Guarantee $\EinvGuarantee{}$ describes the states the automaton is guaranteed to be in when thread $z_t$ invokes a protection on pointer~$z_p$.
When that call returns, the automaton is guaranteed to be in the states described by $\EisuGuarantee{}$.
When thread $z_t$ protects another pointer, or unprotects pointer $z_p$, the aforementioned guarantees are lost.
The transitions for $\code{free}$ and $\code{retire}$ do not depend on the executing thread and can be interferences from other threads.
The guarantee $\activeGuarantee{}$ is not closed under interferences.
Local pointers cannot experience interference, thus interference can be disregarded for guarantee $\localGuarantee{}$.
When using HP, after issuing a protection, it must be checked if the protection was successful.
This is represented in the automaton by the fact that the guarantees $\EisuGuarantee{}$ and $\safeGuarantee{}$ are not the same.
After returning from the protection call, the automaton could be in state $q_5$ where it is not safe yet.
The strategy to protect a pointer therefore is to first call the method $\code{protect}$ and then compare it to a pointer that possesses the guarantee $\activeGuarantee{}$.
If both are equal, they must be in the state $q_4$ ($q_f$ is never reached) and thus both automatically get the guarantee $\safeGuarantee{}$ and are thereby safe to access.

\subsection{Example Type Check: Treiber's Stack \code{pop}}
\begin{wrapfigure}{O}{0.6\textwidth}
    \caption{Excerpt of \code{pop} in Treiber's stack demonstrating Meyer and Wolff's type system using Hazard Pointers (simplified).}
    \label{code:exampleTypeSystemDemo}
\begin{lstlisting}[multicols=2]
$\code{top}: \nothingGuarantee{}, \code{TOS}: \nothingGuarantee{}$
top = TOS;
$\code{top}: \nothingGuarantee{}, \code{TOS}: \nothingGuarantee{}$
in:protect(top);
$\code{top}: \EinvGuarantee{}, \code{TOS}: \nothingGuarantee{}$
re:protect();
$\code{top}: \EisuGuarantee{}, \code{TOS}: \nothingGuarantee{}$
atomic {
  $\code{top}: \EisuGuarantee{}, \code{TOS}: \nothingGuarantee{}$
  \@inv active(TOS);
  $\code{top}: \EisuGuarantee{}, \code{TOS}: \activeGuarantee{}$
  assume(top == TOS);
  $(\code{top}: \EisuGuarantee{} \wedge \activeGuarantee{},$
   $\code{TOS}: \activeGuarantee{} \wedge \EisuGuarantee{})$
  $(\code{top}: \EisuGuarantee{} \wedge \activeGuarantee{} \wedge \safeGuarantee{},$
   $\code{TOS}: \activeGuarantee{} \wedge \EisuGuarantee{} \wedge \safeGuarantee{})$
}
$\code{top}: \EisuGuarantee{} \wedge \safeGuarantee{}, \code{TOS}: \nothingGuarantee{}$
d = top.data;
$\code{top}: \EisuGuarantee{} \wedge \safeGuarantee{}, \code{TOS}: \nothingGuarantee{}$
\end{lstlisting}
    \end{wrapfigure}

We demonstrate the type check from~\cite{POPL2020,WolffPhd}.
\Cref{code:exampleTypeSystemDemo} shows an excerpt of the \code{pop} method from Treiber's stack using Hazard Pointers as the SMR algorithm.
There are two pointer variables:
the local variable \code{top} and the shared variable \code{TOS}.
In the beginning, both pointers possess no guarantees.
First, the current value of the shared variable is read into \code{top}.
Afterwards a hazard pointer protection for $\code{top}$ is called.
It is invoked in Line~4 and returns in Line~6.
After a protection, one has to check if it was successful.
This requires two steps. 
First, we signal the type system that the shared variable $\code{TOS}$ is not yet retired through the invariant annotation $\code{\@active(TOS)}$.
Afterwards, we compare the previously read value stored in the variable $\code{top}$ with the current value of $\code{TOS}$.
If both variables point to the same address, we combine the information we have on each variable.
Therefore, both pointers now have the guarantees $\EisuGuarantee{}$ and $\activeGuarantee{}$ and by inference also $\safeGuarantee{}$. 
Leaving the atomic block, other threads are able to interfere again, so the active guarantee is lost for $\code{top}$.
Because the pointer $\code{TOS}$ is shared and other threads can manipulate it in any way, it loses all its guarantees.
However, the pointer $\code{top}$ can now be safely dereferenced in Line~19.
We use the theory presented in this paper and leverage the type system discussed above to automatically synthesize the extra code needed to pass the type check, i.e.\ Lines 4, 6, and 10.
\subsection{Assertion Language}\label{Section:AbsInt}
Our assertion language for predicates is $\demonicAssertionsAbs{} = (\vars{} \rightarrow \powersetOf{\automaton{}}) \cup \set{\fail{}}$, and we also call the elements in this domain abstract predicates. 
We annotate our development by $\#$ to indicate that we work with abstract predicates. 
Functions defined on the abstract domain that are annotated with a $\#$ symbol, e.g. $\absVerificationConditionsFunc{}$.
As an abstraction function we have $\demonicAbsFunc{}$ and $\angelicAbsFunc{}$ for predicates and selections, respectively.
Abstract selections are a set of abstract predicates.
Similarly, the concretisation functions are called $\demonicConcFunc{}$ and $\angelicConcFunc{}$.
Since it is deterministic, the original interpretation of commands $\progSemFuncTypes{\code{com}}$ serves as a safe abstract interpretation of $\progSemFunc{\code{com}}$ when the semantics for $\fail{}$ are added.
Thus, there is no need to concretize to the non~abstract domain.
We use the abstract version of verification conditions to proof realizability triples:
\begin{theorem}
    \label{th:abstractVC}
    $
        \models \absVerificationConditionsFuncOf{\hoaretriplet{A}{\code{stmt}}{B}} 
    $ implies $
        \angelicHoareTripletHolds{\angelicConcFuncOf{A}}{\code{stmt}'}{\angelicConcFuncOf{B}}
    $
        .
\end{theorem}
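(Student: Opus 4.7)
The plan is to prove Theorem~\ref{th:abstractVC} by structural induction on the sketch $\asketch$, mirroring the soundness argument of Theorem~\ref{th:vcSoundness} but bridging the abstract and concrete worlds via the Galois connection established in Lemmas~\ref{lem:demonGalois} and~\ref{lem:angelAbsGalois}. The key technical bridge is Lemma~\ref{lem:forConj}, which lifts an abstract inequality $\angelicAssertionAbsStrongerOf{\vcStrongestPostSemAbsOf{A}{\code{com}}}{B}$ to the concrete inequality $\angelicAssertionStrongerOf{\vcStrongestPostSemOf{\angelicConcFuncOf{A}}{\code{com}}}{\angelicConcFuncOf{B}}$, and this is precisely what I would generalize (implicitly, case by case) to arbitrary sketches.

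For the base case $\asketch = \code{com}$, I invoke Lemma~\ref{lem:forConj} directly to obtain $\angelicAssertionStrongerOf{\vcStrongestPostSemOf{\angelicConcFuncOf{A}}{\code{com}}}{\angelicConcFuncOf{B}}$ from the abstract verification condition. This means all verification conditions in $\verificationConditionsFuncOf{\hoaretriplet{\angelicConcFuncOf{A}}{\code{com}}{\angelicConcFuncOf{B}}}$ hold, so Theorem~\ref{th:vcSoundness} yields the desired realizability triple.

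For the inductive cases, I would follow the same rule structure used in Theorem~\ref{th:vcSoundness}. For a sequence $\concatof{\asketch_1}{\asketch_2}$, the abstract verification condition decomposes into conditions on $\hoaretriplet{A}{\asketch_1}{\vcStrongestPostSemAbsOf{A}{\asketch_1}}$ and $\hoaretriplet{\vcStrongestPostSemAbsOf{A}{\asketch_1}}{\asketch_2}{B}$; I apply the induction hypothesis to each to get concrete realizability triples, then compose them with Rule~\ruleLabel{SEQ}, using that $\angelicConcFunc{}$ on the intermediate assertion $\vcStrongestPostSemAbsOf{A}{\asketch_1}$ gives a well-defined matching precondition and postcondition. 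The cases for choice, loop, and non-terminal are analogous, invoking the corresponding Hoare-logic rules (\ruleLabel{DEM}, \ruleLabel{LOOP}, \ruleLabel{ANG}), together with \ruleLabel{GATHER} and \ruleLabel{CSQ} to reconcile the shapes of the concretized selections.

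The main obstacle will be the choice and loop cases, where the abstract strongest post may only be a safe over-approximation of the concrete one rather than being equal (only commands enjoy the tight correspondence shown above). I need an analogue of Lemma~\ref{lem:forConj} for $\strongestPostSem{}$ on composite sketches, which follows by structural induction from monotonicity of $\angelicConcFunc{}$ and the Galois connection property $\angelicAssertionStrongerOf{R}{\angelicConcFuncOf{\angelicAbsFuncOf{R}}}$ established in Lemma~\ref{lem:angelAbsGalois}. Concretely, I prove as an auxiliary fact that $\angelicAssertionAbsStrongerOf{X}{Y}$ implies $\angelicAssertionStrongerOf{\angelicConcFuncOf{X}}{\angelicConcFuncOf{Y}}$ (already a lemma in the appendix), combined with the safe-abstraction property of the abstract strongest post at every sketch level. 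With this, every inequality produced by $\absVerificationConditionsFunc{}$ transfers into a valid concrete inequality in $\verificationConditionsFunc{}$, and the concrete soundness theorem closes the argument.
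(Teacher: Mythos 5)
Your proposal follows essentially the same route as the paper's own proof: structural induction on the sketch, with the base case discharged by Lemma~\ref{lem:forConj} and concrete soundness (Theorem~\ref{th:vcSoundness}/Corollary~\ref{cor:spHoareTriple}), the sequence case by applying the induction hypothesis to the two abstract sub-conditions and recomposing with Rule~\ruleLabel{SEQ}, and the remaining cases handled analogously via \ruleLabel{DEM}, \ruleLabel{LOOP}, \ruleLabel{ANG}, \ruleLabel{GATHER}, and \ruleLabel{CSQ} together with monotonicity of $\angelicConcFunc{}$. If anything, your explicit attention to the choice and loop cases (where the abstract strongest post only over-approximates and one needs monotonicity of concretization and the Galois connection to reconcile the selections) is more detailed than the paper's proof, which simply declares those cases analogous.
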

Proofs over programs that start and end in singletons can be replicated in the type system of Meyer and Wolff of which a successful type check is denoted by $\sebHoareTripletDash{}$:
\begin{theorem}
    \label{th:connectionToSeb}
    $
        \demonicHoareTripletHoldsSemantically{\demonicConcFuncOf{a}}{\aprog}{\demonicConcFuncOf{b}}
        $
        implies
        $
        \sebHoareTripletHolds{a}{\aprog}{b}
        $
        .
\end{theorem}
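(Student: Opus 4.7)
The plan is to bridge the semantic Hoare triple over concretised assertions with a syntactic derivation in the type system of \cite{POPL2019,POPL2020,WolffPhd} in three moves, using completeness of Hoare logic and the Galois insertion established earlier.

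First, from the hypothesis $\demonicHoareTripletHoldsSemantically{\demonicConcFuncOf{a}}{\aprog}{\demonicConcFuncOf{b}}$, I would invoke completeness of Hoare logic (recalled in \Cref{ch:pls}) to obtain a syntactic derivation $\demonicHoareTripletHolds{\demonicConcFuncOf{a}}{\aprog}{\demonicConcFuncOf{b}}$. This gives us a structured proof tree over commands, sequential composition, choice, and Kleene star that uses the command semantics $\progSemFunc{-}$.

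Second, I would translate this derivation, rule-by-rule, into a derivation in the type system of Meyer and Wolff. The key observation is that for each program constructor the proof rules coincide structurally: the sequencing, choice, loop, and consequence rules of $\sebHoareTripletDash{}$ are defined analogously to $\vdash_{d}$, and for the command case the typing rule checks a condition on $\progSemFuncTypes{\code{com}}$ which, by definition of $\progSemFuncAbstract{\code{com}}$, safely abstracts $\progSemFunc{\code{com}}$. Since $\progSemFuncAbstract{\code{com}}$ is a safe abstraction of $\progSemFunc{\code{com}}$ (established in the lemma on abstract command semantics), every base-case verification condition in the concrete derivation gives rise to a satisfied base-case condition in the type system. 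This yields $\sebHoareTripletHolds{\demonicConcFuncOf{a}}{\aprog}{\demonicConcFuncOf{b}}$.

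Third, I would lift the result from the concretised assertions to the abstract assertions $a$ and $b$ themselves. Here I would invoke the Galois insertion property, $\demonicAbsFuncOf{\demonicConcFuncOf{a}} = a$ and $\demonicAbsFuncOf{\demonicConcFuncOf{b}} = b$, to convert the type judgement so that its pre- and postconditions are $a$ and $b$. This reuses the rule of consequence in the type system together with monotonicity of $\demonicAbsFunc{}$ and $\demonicConcFunc{}$.

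The main obstacle I anticipate is the rule-by-rule translation in step two: a faithful correspondence between $\vdash_{d}$ and $\sebHoareTripletDash{}$ has to be stated carefully, in particular for the command rule where the concrete check $\demonicAssertionStrongerOf{\progSemFuncOf{\code{com}}{r}}{s}$ must be matched against the typing side condition phrased in terms of $\progSemFuncTypes{\code{com}}$. The safe-abstraction lemma for $\progSemFuncAbstract{\code{com}}$ is the key tool, but one has to pay attention to the special element $\fail{}$ (respectively $\top$ in the typing formulation) and to the exact shape of the side conditions in the typing rules of \cite{POPL2019,POPL2020,WolffPhd}. Once this correspondence is nailed down for commands, the inductive step cases are routine since both calculi share the same structural rules.
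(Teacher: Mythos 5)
Your plan follows the same route as the paper's own (two-line) proof: invoke completeness of Hoare logic to get $\demonicHoareTripletHolds{\demonicConcFuncOf{a}}{\aprog}{\demonicConcFuncOf{b}}$ and then replay the derivation in Meyer and Wolff's type system, whose rules coincide with those of $\vdash_d$. Your additional care about the command case via the safe-abstraction lemma and the final Galois-insertion step relating $\demonicConcFuncOf{a},\demonicConcFuncOf{b}$ back to $a,b$ only spells out details the paper leaves implicit, so this is essentially the same argument.
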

\section{Example: \code{pop} Method in Treiber's Stack}\label{Section:Treiber}
We demonstrate our synthesis approach on the \code{pop} method of Treiber's Stack.
First, we insert nonterminals that can resolve to calls to the SMR algorithm and invariant annotations into the program code.
Then, using verification conditions, it is shown that these nonterminals are sufficient for deriving a memory safe program that passes Meyer and Wolff's type check.
Afterwards, using the synthesis function, we derive a memory safe program.  
\begin{figure}
    \caption{Treiber's Stack Pop with inserted nonterminals.
      {\color{materialRed}Red} nonterminals resolve to $\code{skip}$.
      {\color{materialPurple}Purple} nonterminals resolve to the {\color{gray} gray} code beside them. 
    }
    \label{code:exampleVCTreibers}
\begin{lstlisting}[multicols=2]
BEGIN LOOP {
top := TOS; #AC;#
((assume(top == NULL); #AC;# 
  result := EMPTY; #AC;#)
$\color{black} \choice{}$
(
  assume(top != NULL); #AC;#
  ((
    atomic { |\label{line:atomicityAbstractionStart}|
      ~AC;~$\vsim$@\@inv active(TOS);@ |\label{line:invActiveForProtection}|
      top := TOS;| \label{line:topEqualsTOS} |
      ~AC;~$\vsim$@in:protect(top);|\label{line:protections}|
             re:protect();@
      assume(top != NULL); #AC;#
    }|\label{line:atomicityAbstractionEnd}|
    #AC;# next := top.next; #AC;# | \label{line:nextEqualsTopNext} |
    atomic {
      ~AC;~$\vsim$@\@inv active(TOS);@|\label{line:invActiveBeforeCAS}|
     ((assume(CAS(TOS, top, next)); | \label{line:CAS} |
        #AC;#
        flag := true; #AC;#
      )
        $\color{black} \choice{}$
      ( flag := false; #AC;#))
    } 
    #AC;#
    ((assume(flag == true); #AC;# 
      result := top.data; #AC;#) | \label{line:resultEqualsTopData} |
    $\color{black} \choice{}$
    (assume(flag == false); #AC;
     #skip; #AC;#))
  ))
  $\color{black} \choice{}$
  ( skip; #AC;#)
))
}* END LOOP
\end{lstlisting}
\end{figure}

In \Cref{code:exampleVCTreibers}, the pop method of Treiber's stack is depicted.
Therein, we already inserted the nonterminal $\code{AC}$.
Insertions that will resolve to $\code{skip}$ are marked in red.
Insertions that will resolve to commands other than skip are marked in purple with the commands they will resolve to in gray right beside them.
The atomic block in Lines~\ref{line:atomicityAbstractionStart} to \ref{line:atomicityAbstractionEnd} serves as an atomicity abstraction for $\code{assume(top = TOS)}$.
The assumption is required to check if a protection was successful.
Without diving into details, the atomicity abstraction is required in Meyer and Wolff's type system because this assumption is at risk of a harmless ABA.
A type check does not pass on a harmless ABA.
(This is a detail we left out in the overview in \Cref{sec:preliminaries:overview}.)
The atomicity abstraction circumvents this problem.

In the type system, dereferences and checks for equality can only be performed on pointers that are safe to access, i.e.\ pointers that possess the active, local, or safe guarantee.
Since the $\code{next}$ pointer is never accessed or compared, it does not need any protection.
We therefore omit it in the following.
As mentioned before, we assume the shared pointer $\code{TOS}$ to always be active, therefore a protection is not needed because having the active guarantee $\activeGuarantee{}$ is sufficient for safe dereferences and comparisons.
However, signaling the type system that the pointer is indeed active is still required and is done using invariant annotations.
Observe that $\code{TOS}$ needs to be active at least in Line \ref{line:CAS} as it is compared to $\code{top}$ there.
The pointer $\code{top}$ needs to be protected at multiple locations.
It is accessed and compared to another pointer in Lines \ref{line:nextEqualsTopNext}, \ref{line:CAS} and \ref{line:resultEqualsTopData}.
Knowing which pointers need to be protected, we restrict nonterminal to the definition depicted in \Cref{code:insertionsHPTreibers:appendix}.
(This optimization is not required.)

\begin{figure}
    \caption{Insertions for HP in Treiber's Stack.}
    \label{code:insertionsHPTreibers:appendix}
\begin{lstlisting}
AC ::= skip; $\color{black}\angelicChoice{}$
atomic {\@inv active(TOS);} $\color{black}\angelicChoice{}$
(in:protect(top); re:protect(top);)
\end{lstlisting}
    \end{figure}

We call the resulting program sketch $\asketch$.
Using verification conditions, the realizability triple
$
    \angelicHoareTripletHolds{\demonicConcFuncOf{(\code{TOS}: \nothingGuarantee{}, \code{top}: \nothingGuarantee{})}}{\asketch}{\demonicConcFuncOf{(\code{TOS}: \nothingGuarantee{}, \code{top}: \nothingGuarantee{})}}   
$
is proven.
In order to generate verification conditions, the loop of $\asketch$ needs to be annotated with a loop invariant.
We choose the invariant $I = \set{(\code{TOS}: \nothingGuarantee{}, \code{top}: \nothingGuarantee{})}$.
In total, the verification conditions function generates 27 unique comparisons. 
The result is that every comparison holds.
Therefore, the above realizability triple is true.
Inserting the selections used for generating the verification conditions into the program yields a proof outline $\apo$ with $\proofToProgFuncOf{\apo} = \asketch$.
Having shown that there is a way to concretize this program sketch, we now use the synthesis algorithm to eliminate the nonterminals.
Thus, we call the synthesis function with the predicate  
$(\code{TOS}: \nothingGuarantee{}, \code{top}: \nothingGuarantee{})$
as the target for condition.
In Lines \ref{line:invActiveForProtection}, \ref{line:protections} and \ref{line:invActiveBeforeCAS} the nonterminals resolve to commands other than $\code{skip}$. 
In the following, we discuss the invocations of the synthesis function where the nonterminal does not resolve to $\code{skip}$.
\begin{figure}
    \caption{Recursive calls of decision function}
\smaller
\begin{gather}
\begin{multlined}
\synof{
    \hoaretriplet
    {\angelicConcFuncOf{\set{(\code{TOS}: \nothingGuarantee{}, \code{top}: \nothingGuarantee{}),
    (\code{TOS}: \nothingGuarantee{}, \code{top}: \EinvGuarantee{} \wedge \EisuGuarantee{})
    }}}
    {\code{AC}}
    {\angelicConcFuncOf{\set{
        (\code{TOS}: \activeGuarantee{}, \code{top}: \nothingGuarantee{}),
        \\
        \qquad \qquad \quad
        (\code{TOS}: \activeGuarantee{}, \code{top}: \EinvGuarantee{} \wedge \EisuGuarantee{}), 
        (\code{TOS}: \nothingGuarantee{}, \code{top}: \EinvGuarantee{} \wedge \EisuGuarantee{}),
        (\code{TOS}: \nothingGuarantee{}, \code{top}: \nothingGuarantee{})
        }}}
}
{\\ \demonicConcFuncOf{(\code{TOS}: \activeGuarantee{}, \code{top}: \nothingGuarantee{})}}
\\
=
(\demonicConcFuncOf{(\code{TOS}: \nothingGuarantee{}, \code{top}: \nothingGuarantee{})} ,\code{@inv active(TOS);})
\end{multlined}
\label{eq:invActiveForProtection}
\\
\begin{multlined}
\label{eq:protections}
 \synof
 {
    \hoaretriplet
    {
    \angelicConcFuncOf{
        \set{
            (\code{TOS}: \nothingGuarantee{}, \code{top}: \nothingGuarantee{}),
            (\code{TOS}: \activeGuarantee{}, \code{top}: \activeGuarantee{})
            }}
    }
    {
        \code{AC}
    }
    {
    \angelicConcFuncOf{\set{
        (\code{TOS}: \activeGuarantee{}, \code{top}: \activeGuarantee{}),
        \\
        \qquad \qquad \qquad \qquad \qquad \qquad \quad \:
        (\code{TOS}: \activeGuarantee{}, \code{top}: \EinvGuarantee{} \wedge \EisuGuarantee{} \wedge \activeGuarantee{} \wedge \safeGuarantee{}),
        (\code{TOS}: \activeGuarantee{}, \code{top}: \nothingGuarantee{}),
        \\
        \qquad \qquad \qquad \qquad \qquad \qquad \qquad \qquad
        (\code{TOS}: \nothingGuarantee{}, \code{top}: \nothingGuarantee{}),
        (\code{TOS}: \nothingGuarantee{}, \code{top}: \EinvGuarantee{} \wedge \EisuGuarantee{})
        }}
    }
 }
 {
   \\
    \demonicConcFuncOf{(\code{TOS}: \activeGuarantee{}, \code{top}: \EinvGuarantee{} \wedge \EisuGuarantee{} \wedge \activeGuarantee{} \wedge \safeGuarantee{})}
 }
 \\
=
 (
  (\code{TOS}: \activeGuarantee{}, \code{top}: \activeGuarantee{})
 ,(\code{in:protect(top); re:protect();}))
\end{multlined}
\\
\begin{multlined}
\label{eq:invActiveBeforeCAS}
\synof{
\hoaretriplet
{
\angelicConcFuncOf{\set{
(\code{TOS}: \nothingGuarantee{}, \code{top}: \EinvGuarantee{} \wedge \EisuGuarantee{} \wedge \safeGuarantee{}),
\fail{} 
}
}}
{
\code{AC}
}
{
\angelicConcFuncOf{\set{
(\code{TOS}: \activeGuarantee{}, \code{top}: \EinvGuarantee{} \wedge \EisuGuarantee{} \wedge \safeGuarantee{}),
\\
\qquad \qquad \qquad \qquad \qquad \qquad \qquad \qquad \qquad \qquad \quad
(\code{TOS}: \nothingGuarantee{}, \code{top}: \EinvGuarantee{} \wedge \EisuGuarantee{} \wedge \safeGuarantee{}),
\fail{}
}}
}
}
{
\\
\demonicConcFuncOf{(\code{TOS}: \activeGuarantee{}, \code{top}: \EinvGuarantee{} \wedge \EisuGuarantee{} \wedge \safeGuarantee{})}
}
\\
=
( 
(\code{TOS}: \nothingGuarantee{}, \code{top}: \EinvGuarantee{} \wedge \EisuGuarantee{} \wedge \safeGuarantee{})
,\code{@inv active(TOS);})
\end{multlined}
\end{gather}
\end{figure}
In \Cref{eq:invActiveForProtection}, the active invariant annotation for $\code{TOS}$ is chosen.
This insertion can be seen in Line \ref{line:invActiveForProtection}.
Here, executing the active invariant annotation on the first and second predicate in the pre condition results in the first and second predicate of the post condition, respectively.
The third predicate of the post condition can be reached from both of the predicates in the pre condition.
From the first predicate, a protection to $\code{top}$ needs to be issued.
From the second predicate, a $\code{skip}$ is sufficient.
The last predicate of the post condition is reached through executing $\code{skip}$ on the first predicate of the pre condition.
Thus, the first, third and fourth predicates of the post condition are reachable from the target pre condition.
The first predicate of the post condition matches.
Therefore, $\code{@inv active}$ needs to be inserted and the first predicate of the precondition is returned.
Next, in \Cref{eq:protections}, the protections for $\code{top}$ are synthesized.
The result is inserted in Line \ref{line:protections}.
There are 5 predicates in the post condition of the nonterminal.
The first two predicates are results of executing the nonterminal on the second predicate of the precondition.
The first predicate of the postcondition stems from calling $\code{skip}$ or $\code{@inv active(TOS)}$.
The second predicate is the result of issuing the protection for $\code{top}$.
The third, fourth and fifth predicate are results of executing the nonterminal on the first predicates of the pre condition.
The third predicate is the product of inserting the protection for $\code{top}$.
The fourth predicate comes from a $\code{skip}$. 
The last predicate is the result of calling $\code{@inv active(TOS)}$.
The second predicate of the post condition matches target post condition.
Therefore, the protections for $\code{top}$ must be synthesized here and the second predicate of the precondition is returned.

Lastly, in \Cref{eq:invActiveBeforeCAS}, the active annotation in Line \ref{line:invActiveBeforeCAS} before the compare and swap is synthesized.
There are two predicates in the pre condition, one of which is $\fail{}$.
Executing the nonterminal on $\fail{}$ leads to $\fail{}$ in the post condition.
Using the invariant annotation on the first predicate of the pre condition yields the first predicate of the post condition.
Inserting $\code{skip}$ or a protection to $\code{top}$ leads to the second predicate.
The first predicate of the post condition matches the target post condition, thus $\code{@inv active(TOS)}$ is inserted and the first predicate from the precondition is returned.

We call the version of Treiber's stack with the resolved nonterminals from \Cref{code:exampleVCTreibers} $\aprog$.
We know that the Hoare triple
$
    \demonicHoareTripletHoldsSemantically{\demonicConcFuncOf{(\code{TOS}: \nothingGuarantee{}, \code{top}: \nothingGuarantee{})}}{\aprog}{\demonicConcFuncOf{(\code{TOS}: \nothingGuarantee{}, \code{top}: \nothingGuarantee{})}}
$
holds due to \Cref{Theorem:syn}.
Applying \Cref{th:connectionToSeb}, we conclude that the resulting program type checks:
$
\sebHoareTripletHolds{(\code{TOS}: \nothingGuarantee{}, \code{top}: \nothingGuarantee{})}{\aprog}{(\code{TOS}: \nothingGuarantee{}, \code{top}: \nothingGuarantee{})}
$
is true.
That means, we successfully synthesized a program $\aprog$ that passes Meyer and Wolff's type check and therefore is memory safe (if the invariant annotations can be discharged).

\end{document}